
\documentclass[
  preprint,
  10pt,
  3p,
  fleqn,
]{elsarticle}

\usepackage{main}



\usepackage[english]{babel}




\begin{document}

\begin{frontmatter}
  \title{Probabilistic unifying relations for modelling epistemic and aleatoric uncertainty: semantics and automated reasoning with theorem proving\tnoteref{t1,t2}}
  \tnotetext[t1]{This document is the results of the research project RoboTest (\url{https://robostar.cs.york.ac.uk/}) funded by EPSRC.}
  \author[1]{Kangfeng Ye\corref{cor1}}
  \ead{kangfeng.ye@york.ac.uk}
  
  \author[1]{Jim Woodcock}
  \ead{jim.woodcock@york.ac.uk}
  
  \author[1]{Simon Foster}
  \ead{simon.foster@york.ac.uk}
  
  \cortext[cor1]{Corresponding author}
  
  \affiliation[1]{organization={Department of Computer Science, University of York}, 
    addressline={Deramore Lane, Heslington},
    postcode={YO10 5GH},
    city={York},
    country={United Kingdom}}
  
  \begin{abstract}
    Probabilistic programming combines general computer programming, statistical inference, and formal semantics to help systems make decisions when facing uncertainty. Probabilistic programs are ubiquitous, including having a significant impact on machine intelligence. While many probabilistic algorithms have been used in practice in different domains, their automated verification based on formal semantics is still a relatively new research area. In the last two decades, it has attracted much interest. Many challenges, however, remain. The work presented in this paper, probabilistic unifying relations (ProbURel), takes a step towards our vision to tackle these challenges.
    
    Our work is based on Hehner's predicative probabilistic programming, but there are several obstacles to the broader adoption of his work. Our contributions here include (1) the formalisation of its syntax and semantics by introducing an Iverson bracket notation to separate relations from arithmetic; (2) the formalisation of relations using Unifying Theories of Programming (UTP) and probabilities outside the brackets using summation over the topological space of the real numbers; (3) the constructive semantics for probabilistic loops using Kleene's fixed-point theorem; (4) the enrichment of its semantics from distributions to subdistributions and superdistributions to deal with the constructive semantics; (5) the unique fixed-point theorem to simplify the reasoning about probabilistic loops; and (6) the mechanisation of our theory in Isabelle/UTP, an implementation of UTP in Isabelle/HOL, for automated reasoning using theorem proving.
    
    We demonstrate our work with six examples, including problems in robot localisation, classification in machine learning, and the termination of probabilistic loops.
  \end{abstract}
  
  \begin{keyword}
    formal semantics \sep 
    fixed-point theorems \sep 
    predicative programming \sep
    UTP \sep
    probabilistic models \sep 
    probabilistic programs \sep 
    probability distributions \sep 
    formal verification \sep 
    quantitative verification \sep 
    automated reasoning \sep 
    theorem proving \sep 
    Isabelle/HOL \sep 
    robotics \sep 
    machine learning \sep 
    classification
  \end{keyword}
\end{frontmatter}


\section{Introduction}
\label{sec:intro}

\paragraph{\textbf{Motivations}}
Probabilistic programming combines general computer programming, statistical inference, and formal semantics to help systems make decisions when facing uncertainty. Probabilistic programs are ubiquitous but are particularly important in machine intelligence applications. Probabilistic algorithms have been used in practice for a long time in autonomous robots, self-driving cars, and artificial intelligence. There are tools for automated formal verification, particularly model checkers. However, many challenges remain, such as the following. What is the mathematical meaning of a probabilistic program? Are probabilistic programming languages expressive enough to capture rich features in real-world applications, such as epistemic and aleatoric uncertainty, discrete and continuous distributions, and real-time? How can we compare two programs? How can we implement a probabilistic specification as a probabilistic program? Can formal verification be largely automated and scaled to large systems without sacrificing accuracy? Does a probabilistic program almost surely terminate? What is the expected runtime of this program? The work presented in this paper, probabilistic unifying relations (ProbURel), takes a step towards our vision to tackle these challenges.

Uncertainty is essential to cyber-physical systems, particularly in autonomous robotics where we work. Such systems are subjected to various uncertainties, including real-world environments and physical robotic platforms, which present significant challenges for robots. Robots are usually equipped with probabilistic control algorithms to deal with these uncertainties. For example, a modern robot may use SLAM for localisation and mapping, the value iteration algorithm for probabilistic planning~\cite{Thrun2005}.

Probabilistic algorithms are intrinsically more difficult to program and analyse than non-probabilistic algorithms. For a specific input, the output of a probabilistic algorithm might be a distribution of possible outputs, not just a single output. Outputs with low probabilities are rare, which makes testing difficult. Probabilistic behaviour may be challenging to capture and model correctly because assumptions may not be obvious and may be left implicit. We need more precision in understanding an autonomous robot that may impose safety-critical issues on humans and their environments. One approach to addressing these challenges is providing probabilistic programs with formal syntax, semantics, and verification techniques to ensure they behave as expected in a real-world environment.

{
According to Gordon et al.~\cite{Gordon2014}, probabilistic programming includes two basic constructs to draw values from probabilistic distributions such as uniform distributions and condition values of variables. 
Probabilistic inference is the problem in probabilistic programming to compute explicit probability distributions or to compute relevant probabilities for particular events from probabilistic programs. 
}

\paragraph{\textbf{{Examples}}}
We illustrate a few examples and informally discuss their modelling and inference.

\begin{example}[The (forgetful) Monty Hall problem]
    \label{ex:monty}
The problem~\cite{Wikipedia2023} is a puzzle based on an American television game show \emph{Let's Make a Deal}. It is named after its original host, Monty Hall. Suppose you are the contestant and are given a choice of opening one of three doors. Behind one door is a car, and behind the others are goats. You pick a door, say No. 1, and the host, who knows what is behind each door, opens another door, say No. 3 and reveals a goat. He then asks, ``Do you want to pick door No. 2 instead of your original choice?'' The problem is simple: should you change your choice to maximise your chance to win a car?

To model this problem, we define three variables $p$, $c$, and $m$ for the door number having the prize (car), the contestant chooses, and the Monty chooses. We model the problem below. 
\begin{lstlisting}[language=PPL,]
p := rand({0..2}); c := rand({0..2});//Prize and contestant's choice are random
if(p=c) { // If the contestant's choice is the prize, 
  m := (c+1)%3 pc{1/2} m := (c+2)%3; // Monty randomly chooses other two doors
} else { 
  m := 3-c-p; // Monty chooses another door which has no prize
}/*
c := c // (without change of choice) */
c := 3-c-m; // If the contestant changes the choice
\end{lstlisting}
We use two constructs: \lstinppl{rand(S)} to draw a random value from set $S$ and \lstinppl{P pc\{r\} Q}, a binary probabilistic choice, to choose $P$ with probability $r$ and $Q$ with probability $1-r$.

The last line corresponds to the strategy for the contestant to change the initial choice. The question becomes ``which strategy will have the higher winning (c=p) probability?''

Suppose now that Monty forgets which door has the prize behind it. He opens either of the doors not chosen by the contestant. The contestant switches their choice to that door if the prize is revealed ($m=p$).  So the contestant will surely win. However, should the contestant switch if the prize is not revealed ($m \neq p$)? In this forgetful Monty, the new knowledge ($m \neq p$) is learned. 

Accordingly, we model the forgetful Monty problem below. 
\begin{lstlisting}[language=PPL,]
{
  p := rand({0..2)}; c := rand({0..2}); 
  m := (c+1)%3 pc{1/2} m := (c+2)%3; // Monty has no knowledge of p
} || (m != p) 
\end{lstlisting}
We introduce another construct \lstinppl{P || Q} to model the new knowledge (encoded in $Q$, e.g. \lstinppl{m != p} denoting the prize is not revealed) learned after $P$ is executed. So the question becomes how the distribution is updated after learning the new evidence? What is the winning probability?  
\end{example}

\begin{example}[Classification - COVID-19 diagnosis]
    \label{ex:covid}
We consider people using a COVID-19 test to diagnose if they may or may not have contracted COVID-19. The test result is binary and could be positive or negative. The test, however, is imperfect. It doesn't always give a correct result. 

We model the prior, the test, and the first test result positive below.
\begin{lstlisting}[language=PPL,]
{
  c := True pc{p1} c := False; // Prior probability of a person having COVID 
  // A test 
  if c { ct := Pos pc{p2} ct := Neg; } // True positive and True negative
  else { ct := Pos pc{p3} ct := Neg; } // False positive and False negative
} || (ct = Pos) // Learn the result is positive 
\end{lstlisting}
In the program, \lstinppl{c} and \lstinppl{ct} denotes if a person has COVID or not, and the test result; and \lstinppl{p1}, \lstinppl{p2}, and \lstinppl{p3} are parameters of this program.
We are interested in several questions. How likely is a randomly selected person to have COVID-19 if the first test result is positive?  Is it necessary to have the second test to reassure the result? 

Taken the second test into account, the new program is as follows.
\begin{lstlisting}[language=PPL,]
{
  { ... } // The previous program
  // The second test 
  if c { ct := Pos pc{p2} ct := Neg; }
  else { ct := Pos pc{p3} ct := Neg; }
} || (ct = Pos) // Learn the result is positive again 
\end{lstlisting}

But how much can the second test contribute to the diagnosis? How the result changes if the parameters are changed? 
\end{example}

\begin{example}[Robot localisation (RL)]
    \label{ex:robot}
A circular room has two doors and a wall. A robot with a noisy door sensor maps position to $\clz door$ or $\clz wall$. Doors are at positions 0 and 2; position 1 is a blank wall. 
We introduce a program variable $bel \in \{0 \upto 2\}$ to denote the position of the robot that we believe. 
When the reading of the door sensor is $\clz door$, it {is} four times more likely to be right than wrong {and likewise when the reading is wall}. 

The following program models two sensor readings and one movement in between the readings.
\begin{lstlisting}[language=PPL,]
{
  {
    bel := rand({0..2}); // Prior for bel is uniformly distributed 
  } || (3*door(bel)+1) ; // Likelihood function for the sensor result door
  bel := (bel + 1) % 3; // Move to the right
} || (3*wall(bel)+1) ; // Likelihood function for the sensor result wall 
\end{lstlisting}
The \lstinppl{door(bel)} (or \lstinppl{wall(bel)}) is a function returning 1 if the \lstinppl{bel} is 0 or 2 (or 1) and returning 0 otherwise.

We are interested in questions like how many measurements and moves are necessary to estimate the robot's location accurately.
\end{example}

\begin{example}[Flip a coin till heads]
    \label{ex:coin}
We consider the simplest probabilistic program with a loop: flip a coin until the outcome is heads, defined as follows.
\begin{lstlisting}[language=PPL,]
c := heads;
while (c=tail) { 
  c := heads pc{p} c:= tail; 
}
\end{lstlisting}
The \lstinppl{p} above is a parameter denoting the probability of getting a heads for a coin flip. It is 1/2 for a fair coin.
Does this program terminate? What is the probability distribution on termination? How is the distribution related to \lstinppl{p}? What is the semantics of this program? What is its expected runtime?
\end{example}

\begin{example}[Throw a pair of dice]
    \label{ex:dice}
This example~\cite{Hehner2011} is about throwing a pair of dice till they have the same outcome. We model it below.
\begin{lstlisting}[language=PPL,]
while (d1 != d2) {
  d1 := rand({1..6}); 
  d2 := rand({1..6}); 
}
\end{lstlisting}
This is slightly complex than the coin program in Example~\ref{ex:coin} because two variables are declared. 
Does this program terminate? What is the probability distribution on termination? What is the semantics of this program? Is it still as simple as the coin example? What is its expected runtime?
\end{example}

\begin{example}[One-dimensional simple random walk (SRW)]
    \label{ex:srw}
Grimmett and Welsh~\cite{Grimmett1986} defined various random walks. A random walk is \emph{simple} if at each time step it can move only to its next (or neighbouring) positions randomly in one of the lattice directions. A \emph{symmetric} simple random walk has the equal probability for each direction. 
It is also the Gambler's Ruin Problem with an absorbing barrier at 0. 
We model it as a probabilistic program below.
\begin{lstlisting}[language=PPL,]
x := m; // m is the initial position of $x$ 
while (x > 0) { 
  x := x + 1 pc{p} x := x - 1 
}
\end{lstlisting}
In the program, \lstinppl{m} and \lstinppl{p} are parameters.
The program with \lstinppl{p=1/2} (that is, symmetric) is widely studied, for example, in \cite{Hurd2003,McIver2005an,McIver2017,Chatterjee2020}. Unlike the coin and dice examples where each experiment (flip a coin or throw a pair of dice) is independent, each experiment in this example is not independent because the value of $x$ is updated. 

Does this program terminate? How does the termination relate to the parameter \lstinppl{p}? What is the probability distribution on termination? What is the semantics of this program? What is its expected runtime?
\end{example}

{
There are several challenges to model and answer the questions of these programs: 
\begin{enumerate*}[label={(\arabic*)}]
    \item the capability to model the learning process using conditional probability and joint probability as used in the Bayesian approach,
    \item the reasoning about probabilistic loops to give them a precise semantics (probability invariant) and their termination,
    \item the inference to get exact probability distributions and exact expected runtime, especially for programs with loops, and
    \item the guarantee of the correctness of the analysis. 
\end{enumerate*}
A considerable amount of literature has been published on addressing these problems, but none of them can address all these challenges.

McIver and Morgan's weakest pre-expectation~\cite{McIver2005bn} is based on the \emph{pGCL}~\cite{Morgan1999,McIver2005bn}, an extension of Dijkstra's Guarded Command Language (GCL)~\cite{Dijkstra1976} with a probabilistic choice construct. It is mechanised in High-Order Logic (HOL)~\cite{Gordon1993} by Hurd et al.~\cite{Hurd2005}, enabling verification of partial correctness of probabilistic programs. However, pGCL does not support conditional probability, so it cannot model the examples: the forgetful Monty, COVID, and the robot localisation, presented in Examples~\ref{ex:monty} to~\ref{ex:robot}. 

Based on the weakest pre-expectation semantics, Kaminski~\cite{Kaminski2019a} developed an advanced weakest precondition calculus which supports conditioning in cpGCL~\cite{Olmedo2018} using an \textbf{observe} statement and expected runtimes in the calculus~\cite{Kaminski2016}. The observe statement, however, conditions only boolean expressions. It cannot support the general likelihood functions as discussed in Example~\ref{ex:robot} where the two functions are real-valued expressions. The expected runtime analysis~\cite{Kaminski2018} is based on upper-bounds. It cannot reason about the exact expected runtimes which requires the reasoning of exact probability distributions.

Barthe et al.~\cite{Barthe2018} presented ELLORA, an assertion-based logic for probabilistic programs, implemented in the EasyCrypt theorem prover~\cite{Barthe2014}. However, pGCL does not support conditional probability, so it cannot model the examples: the forgetful Monty, COVID, and the robot localisation.

Schr\"oer et al.~\cite{Schroeer2023} developed expectation-based reasoning using a deductive verification infrastructure, based on the weakest pre-expectation semantics. Similarly, it cannot reason about the exact probability distributions and the expected runtimes. For example, the random walk is verified to be almost-surely terminated, but without its semantics or exact distributions. It is also not able to model the general likelihood functions.

Hehner's probabilistic predicate programming (PPP)~\cite{Hehner2004,Hehner2011} can model and reason about all these examples. But PPP is not formalised and implemented in any tool for automated verification.
}

{
The work presented in this paper aims to support modelling and analysis of these probabilistic programs and answer the questions which we are interested in. Additionally, as discussed later in Sect.~\ref{sec:concl:vision}, we aim to pursue a probabilistic semantic framework 
\begin{enumerate*}[label={(\arabic*)}]
    \item having an expressive language with rich semantics to model systems not only from abstract specification level but also concrete implementation level;
    \item able to unify different probabilistic models and programmings, so their tools can be integrated;
    \item extendible to support more features like more discrete distributions, nondeterminism, continuous distributions, time, communication and concurrency, because these features are essential in modelling robotic applications;
    \item providing a practical and decidable method to approximate the semantics for probabilistic loops because it is non-trivial to construct an invariant and prove it for a loop; and
    \item most importantly, supporting theorem proving because these programs usually have unbounded variables and infinite state space.
\end{enumerate*}
}

\paragraph{\textbf{Our approach}}
Our previous work probabilistic RoboChart~\cite{Ye2022} and probabilistic designs~\cite{Ye2021} model aleatoric uncertainty describing the natural randomness of physical processes. Another category is epistemic uncertainty due to the lack of knowledge of information, which is reducible by gaining more knowledge. {Usual probabilistic choice can model aleatoric uncertainty but not epistemic uncertainty because it requires the capability to update distributions or beliefs after learning new knowledge. To model this process, for example, in the Bayesian approach, conditional and joint probabilities should be supported.} 
In this paper, we present a probabilistic programming language, called \emph{probabilistic unifying relations} (ProbURel), based on Hehner's probabilistic predicative programming~\cite{Hehner2004,Hehner2011}, to model both aleatoric and epistemic uncertainty. This programming uses the subjective Bayesian approach to reason about epistemic uncertainty.

In Hehner's original work~\cite{Hehner2011}, a probabilistic program is given relational semantics, and its syntax is a mixture of relations and arithmetic. The presentation of syntax and semantics in the paper is not formal. For example, the semantics of a probabilistic \emph{ok} (skip) is given as $ok = \left(x' = x\right) \times \left(y' = y\right)$. There is a benefit to introducing semantics using examples, but it lacks formalisation. The operators like $=$ and $\times$ are not formally defined, and the types for variables and expressions are not given. The lack of this information makes the paper not easily accessible to readers, particularly for researchers aiming to use the work for automated reasoning of probabilistic programs. Therefore, our first contribution to this paper is formalising its syntax and semantics. We introduce a notation called Iverson brackets, such as $\ibracket{r}$, to establish a correspondence between relations $r$ and arithmetic ($0$ or $1$). For \emph{ok}, we could formalise it as $\ibracket{v' = v}$ where $v$ denotes the state space (composed of all variables) of a program. This notation separates relations ($v'=v$) with arithmetic, so expressions and operators in a program all have clear meanings or definitions depending on their contexts (relations or arithmetic) where the contexts can be easily derived because of the separation. 

In addition to syntax and semantics, the semantics for probabilistic loops are not formally presented and argued. Hehner proposed a more straightforward but more potent (than total correctness) approach~\cite{Hehner1999}: partial correctness + time to deal with the termination of loops (for conventional programs) with extra information about run time. His approach introduces a time variable $t$ with a healthiness condition, strict incremental for each iteration. The variable $t$ can be discrete or continuous and is an extended natural or real number to have $\infty$ for nontermination. The same approach is also applied to probabilistic programs~\cite{Hehner2011}, but the partial correctness of probabilistic loops is not formally reasoned about. 
For this reason, our second contribution in this paper is to bridge the semantic gap for probabilistic loops by establishing its semantics using fixed-point theorems: specifically Kleene's fixed-point theorem, to construct the fixed points using iterations. The advantage of having an iterative (or constructive) fixed point includes both theoretical semantics and practical computation or approximation. A hint of this would be possible to verify probabilistic loops using both theorem proving and model checking (based on approximation). 

To give the semantics using the fixed-point theorem, we define a complete lattice $\left([0,1], \leq\right)$ over the real unit interval (the real numbers between 0 and 1 inclusive). We restrict to the unit interval simply because probability values are between 0 and 1. To apply Kleene's theorem, we prove the loop function is Scott-continuous for the state space in which only finite states have positive probabilities. Then we define the semantics of loops as the least fixed point (\emph{lfp}) of the function where \emph{lfp} can be calculated iteratively as the supremum of the ascending Kleene chain (from the bottom of the complete lattice). This bridges the semantics gap, but it is still challenging to calculate \emph{lfp} because the chain is infinite. We, therefore, also present the strongest fixed point (\emph{gfp}) of the function, calculated iteratively as the infimum of the descending Kleene chain (from the top of the complete lattice) of the function. We prove a unique fixed point theorem where \emph{lfp} and \emph{gfp} are the same based on particular assumptions. The unique fixed point theorem makes reasoning about loops much more accessible because it is unnecessary to calculate \emph{lfp}. Instead, a fixed point must be constructed and proved with the unique fixed point theorem. This, eventually, is consistent with the loop semantics using Hehner's more straightforward approach. In particular, our semantics can be mechanised and automated.

In Kleene's theorem, the ascending and descending chains start from a pointwise constant function 0 and 1 (the bottom and the top of the complete lattice). The two pointwise functions are not distributions (where probabilities of the state space sum to 1). Indeed, the pointwise function 0 is a subdistribution (the probabilities sum to less than or equal to 1), and the pointwise function 1 is a superdistribution (the probabilities sum to larger than 1).  For this reason, our third contribution is to extend the semantic domain of the probabilistic programming language from distributions to subdistributions and superdistributions. Eventually, constructs like conditional, probabilistic choice, and sequential composition will not be restricted to programs that are distributions. This brings us to the required semantics to use Kleene iterations for the semantics of loops. 

The introduction of Iverson brackets also has another benefit: the relations inside the brackets can be easily characterised using alphabetised relations in UTP because relations in both Hehner's work and UTP are of the predicative style. Relations in our probabilistic programs are indeed UTP alphabetised relations, which allows us to reason about probabilistic programs using the existing theorem prover Isabelle/UTP for UTP. Our final contribution, therefore, is to mechanise the semantics of the probabilistic programming language in Isabelle/UTP. Our reasoning is primarily automated thanks to the various relation tactics in Isabelle/UTP. Six examples presented in this paper are all verified.
All definitions and theorems in this paper are mechanised, and accompanying icons (\isalogo) {encoding a hyperlink (available only in the electronic version of this paper)} to corresponding repository artefacts.

\paragraph{\textbf{Paper structure}}
The remainder of this paper is organised as follows. We review related work in Sect.~\ref{sec:relwork}. Section~\ref{sec:prelim} provides the necessary background for further presentation of our work in the subsequent sections. In Sect.~\ref{sec:ureal}, we define the complete lattice over the unit interval and then lift it to a complete lattice over pointwise functions. Section~\ref{sec:program} formalises our probabilistic programming with Iverson brackets defined. We also present proven algebraic laws for each construct. In Sect.~\ref{sec:rec}, we present the semantics of probabilistic loops and the fixed point theorem. Afterwards, we illustrate our reasoning approach using six examples. Two are classification problems in machine learning, and two contain probabilistic loops (see Sect.~\ref{sec:ex_cases}). Finally, we discuss future work in Sect.~\ref{sec:concl}. 

\section{Related work}
\label{sec:relwork}

\paragraph{\textbf{Imperative probabilistic sequential programming languages and their semantics}}
Imperative probabilistic programs are the extension of conventional imperative programs with the capability to model randomness (typically from \emph{random number generators}), usually using a binary probabilistic choice construct~\cite{McIver2005bn,Hehner2011} or a random number generator (\emph{rand})~\cite{Hehner2004,Dahlqvist2020} sampling from the uniform distribution over a set (either finite or infinite). 
McIver and Morgan~\cite{McIver2020} show any discrete distribution, including uniform distributions, can be achieved through a binary probabilistic choice using a fair coin. Our work presented in this paper, ProbURel, uses both a probabilistic choice and a construct to draw a discrete uniform distribution from a finite set (similar to \emph{rand}).

The ``predicate style'' semantics (for example, weakest precondition~\cite{Dijkstra1976}, Hoare logic~\cite{Hoare1969}, and predicative programming~\cite{Hehner1984a}) for conventional imperative programs are boolean functions over state space. The semantics is sufficient to reason about these programs in the qualitative aspect, including termination or total correctness. Still, reasoning about probabilistic programs with natural quantitative measurements is insufficient. For this reason, boolean functions are generalised to real-valued functions over state space~\cite{Kozen1981,Kozen1985,McIver2005bn,Hehner2004}. 
One exception is the relational semantics~\cite{He2004} that embeds standard programs into the probabilistic world. In this semantics, probability distributions over states are captured in a special program variable ($prob$, a total function) in a standard program, and, therefore, the semantics of such probabilistic programs are still boolean functions over state space. Programs in our ProbURel are real-valued functions over state space.

Kozen's extension~\cite{Kozen1981,Kozen1985} replaced nondeterministic choice in conventional imperative programs with probabilistic choice, while McIver and Morgan~\cite{Morgan1999} added a probabilistic choice construct (and so it has both nondeterministic and probabilistic choice). McIver and Morgan's weakest pre-expectation or expectation transformer semantics~\cite{McIver2005bn}, the real-valued expressions over state space are called expectations (indeed random variables). The weakest pre-expectation expressed as $pre=wp\left(P, [post]\right)$ (where the square bracket $[\varg]$ converts a boolean-valued predicate to an arithmetic value, especially $[\ptrue]=1$ and $[\pfalse]=0$), is the least pre-expectations (evaluated in the initial state of $P$) to ensure that the probabilistic program $P$ terminates with post-expectation $[post]$ in its final state. 
For example,  
\begin{align*}
    wp(x := x+1 _{(1/3)}\oplus x := x-1, [x \geq 0]) 
= (1/3)*[x = -1 \lor x = 0] + [x \geq 1] 
\end{align*}
means that in order for the program to establish $x \geq 0$, the probability of $x$ being $-1$ or $0$ (or $x \geq 1$, or $x < -1$) in its initial state is at least $1/3$ (or $1$, or $0$).
An extensive set of algebraic laws has been presented for reasoning about probabilistic programs, including loops and termination in~\cite[Appendix B]{McIver2005an}. 
Using this semantics, Kaminski~\cite{Kaminski2019a} developed an advanced weakest precondition calculus. {Based on the work, Schr\"oer et al.~\cite{Schroeer2023} developed a deductive verification infrastructure for verifying discrete probabilistic programs in terms of bounded expectations and expected runtimes, and termination probabilities using an intermediate verification language from which verification conditions are generated and verified in SMT solvers.} 
Our ProbURel uses a similar notation to the square bracket, called Iverson brackets. The predicates in ProbURel are UTP's alphabetised relations which have been used to establish program correctness using the weakest precondition calculus~\cite{Woodcock2004}. For this reason, our ProbURel can also describe the weakest pre-expectation semantics. 

The \emph{pGCL}~\cite{Morgan1999,McIver2005bn}, an extension of Dijkstra's Guarded Command Language (GCL)~\cite{Dijkstra1976} with a probabilistic choice construct, is a widely studied imperative probabilistic programming language. 
The weakest pre-expectation semantics is based on \emph{pGCL}. 
It is formalised in High-Order Logic (HOL)~\cite{Gordon1993} by Hurd et al.~\cite{Hurd2005} (based on the quantitative logic~\cite{Morgan1996a}), enabling verification of partial correctness of probabilistic programs, and also formalised in Isabelle/HOL by Cock~\cite{Cock2012} using shallow embedding (where probabilities are just primitive real numbers) to achieve improved proof automation.
The \emph{pGCL} has simple operational semantics~\cite{Gretz2014} using (parametric) Markov Decision Processes (MDPs) to establish a semantic connection with the weakest pre-expectation semantics and has relational semantics~\cite{Jifeng1997,He2004,Woodcock2019}, which is based on the theory of designs in UTP and mechanised in Isabelle/UTP~\cite{Ye2021}. The \emph{pGCL} contains a nondeterministic choice construct, but ProbURel in this paper does not include it. It is part of our future work to introduce nondeterminism. We also use Isabelle/UTP for automated verification, but the theory we use here is the theory of relations in UTP, which is more general than the theory of designs.

Probabilistic programs can be modelled as functions using a monadic interpretation. Hurd~\cite{Hurd2003} developed a formal HOL framework for modelling and verifying probabilistic algorithms using theorem proving. The work uses mathematical measure theory to represent probability space to model a random bit generator (an infinite stream of independent coin-flips). It uses a monadic state transformer to model probabilistic programs with higher-order logic functions. A probabilistic program consumes some bits from the front of the stream for randomisation and returns the remains. Audebaud et al.~\cite{Audebaud2009} use the monadic interpretation of randomised programs for probabilistic distributions (instead of measure theory) and mechanise their work in the Coq theorem prover~\cite{coq}. They consider probabilistic choice (without nondeterminism) in a functional language with recursion (instead of an imperative language). Programs in ProbURel are interpreted in imperative instead of monadic, and probabilistic loops are reasoned using fixed-point theories.

Dahlqvist et al.'s simple imperative probabilistic language~\cite{Dahlqvist2020} uses two constructs, $coin()$ and $rand()$, to introduce discrete and continuous uniform distributions, and both operational and denotational semantics are presented. In its operational semantics, a probabilistic program is assumed to start in two fixed infinite streams (one for $coin$ and one for $rand$), and the execution of each random sampling reads and removes the head from its corresponding stream. Eventually, the program is deterministic, and randomness is present in the infinite streams. This is similar to Hehner's probabilistic predicative programming~\cite{Hehner2004} where each call to $rand$ is stored in a mathematical variable (not a program variable). The denotational semantics of the simple language is given in terms of probability distributions. In ProbURel, we use a similar notation to $rand$ to draw a discrete uniform distribution from a finite set. The semantics of ProbURel are denotational.

Hehner~\cite{Hehner2004,Hehner2011} also generalises boolean functions for predicative programming to real-valued functions for probabilistic predicative programming. In his language, conditional and joint probability are modelled through sequential and parallel composition. One unique feature of the language is its capability to model epistemic uncertainty, due to the lack of knowledge of information and reducible after gaining more knowledge, and aleatoric uncertainty, due to the natural randomness of physical processes. Epistemic uncertainty is modelled through parallel composition using the subjective Bayesian approach. Our work, presented here, is based on Hehner's work. We formalise the syntax and semantics of the work, introduce UTP's alphabetised relations, bridge the semantics gap in dealing with probabilistic loops and mechanise it in Isabelle/UTP for automated reasoning.

Researchers also use Hoare logic to reason about probabilistic programs, such as the work presented in \cite{Ramshaw1979, 
denHartog2002, 
Chadha2007}, and 
VPHL~\cite{Rand2015} uses a weighted tree structure to represent probability distributions in its semantics and can reason about the partial correctness of probabilistic programs. The probabilistic relational Hoare logic (pRHL)~\cite{Barthe2009} is a Hoare quadruple that establishes the equivalence of two programs and the usual Hoare logic to relate programs as pre- and post-conditions. {ELLORA~\cite{Barthe2018} is an assertion-based program logic for probabilistic programs and mechanised in the EasyCrypt theorem prover~\cite{Barthe2014}. The logic is presented in both abstract and concrete. The abstract logic is used for reasoning about loops and adversaries while the concrete logic facilitates formal verification. The logic features the reasoning of the broad class of loops for absolute termination, AST, and general termination using different assertions.} ProbURel uses UTP's alphabetised relations, which have been used to establish program correctness using Hoare logic~\cite{Woodcock2004}. For this reason, our ProbURel can also describe probabilistic Hoare logic semantics. The semantics of ProbURel are denotational, and two programs are equivalent if they are equal functions over the same state space.

\paragraph{\textbf{Recursion and Almost-sure termination}}
Reasoning about recursion is usually hard and is especially harder~\cite{Kaminski2019} for probabilistic programs because semantically probabilistic programs associate states with probability distributions other than merely boolean information for conventional programs. 
From this aspect, conventional programs can be regarded as a particular case of probabilistic programs where probability is always 1 or 0, so probabilistic programs are a more general paradigm. 

Morgan and McIver's early work~\cite{Morgan1996b,McIver2005an} uses general techniques invariants and variants for reasoning about loops. Invariants for probabilistic loops are now expectations. Variants {($V$)} are still integer-valued expressions but 
\begin{inparaenum}[a)]
\item {they are bounded below and above ($L \leq V < H$) by fixed integer constants ($L$ and $H$) if the states that satisfy the conjunction ($G \land Inv$) of the loop guard condition $G$ and the invariant $Inv$, are infinite}; 
\item for every iteration, there is a fixed non-zero probability $\varepsilon$ that the invariants are strictly decreased. 
\end{inparaenum}
A variant is not required always to be strictly decreased now; it could also be increased.
The variant rule is strengthened later by McIver et al. to remove the need to bound above, allow (quasi-) variants to be real-valued expressions, and $\varepsilon$ to vary~\cite[Theorem~4.1]{McIver2017}. Their new variant rule relies on a \emph{supermartingale}, a sequence of random variables (RVs) for which the expected value of the current random variable is larger than or equal to that of the subsequent random variable. Two parametric antitone functions characterise the quasi-variant called $p$ and $d$ for a lower bound $d$ on how much a program must decrease the variant with at least probability $p$. The new rule enables them to reason about the two-dimensional random walk, which is believed to be hard. 
Chakarov et al.'s \emph{expectation invariants}~\cite{Chakarov2014} and Kaminski's \emph{sub-} and superinvariants~\cite{Kaminski2019a} are similar to Morgan and McIver's probabilistic invariants to use expectations for invariants. 
Our approach to reasoning about probabilistic loops is based on fixed-point theories, and the semantics of a loop is its unique fixed point when additional conditions are satisfied. We also use an iterative way to construct the fixed point, which is the supremum of an ascending chain.

As a consequence of the generality of probabilistic programs from conventional programs, it is much harder~\cite{Kaminski2015} to analyse the termination of probabilistic programs because now the program terminates with probability (instead of absolute termination~\cite{McIver2005an} for conventional programs). 
The usual knowledge for conventional programs, such as termination or nontermination, finite run-time, and compositionality, is not valid for probabilistic programs. 
This problem has attracted a lot of interest in recent decades, such as~\cite{McIver2005an,Bournez2005,Esparza2012,Chakarov2013,FerrerFioriti2015,McIver2017,Agrawal2017,Huang2018,Chatterjee2018}. The research area of interest for probabilistic programs is a weakened termination, called \emph{almost-sure termination} (AST), or termination with probability one. In other words, a probabilistic program may not always terminate, but the probability of divergence is 0. For example, flipping a fair coin until the outcome is heads is such a program. 
As Esparza et al.~\cite{Esparza2012} pointed out, (conventional) termination is a purely topological property, namely the absence of cycles, while AST requires arithmetic reasoning. 
Some recent studies have investigated the \emph{positive} almost-sure termination~\cite{Bournez2005} where probabilistic programs terminate in the finite expected time, and several studies have also assessed the \emph{null} almost-sure termination where probabilistic programs terminate almost-surely but not in the finite expected time. Hehner inspires our approach to reason about termination.

In Hehner's semantics~\cite{Hehner2011}, a time variable $t$ is introduced and is strictly increased in each iteration of a loop. For example, it can be an extended integer number (including $\infty$ for nontermination) and is used to count iterations. His approach~\cite{Hehner1999} for termination of (probabilistic) loops is stronger than total correctness (equal to partial correctness plus termination) because the time variable allows reasoning about not only whether a loop terminates but also when it terminates (run-time analysis). One example from his paper is a probabilistic loop about throwing a pair of dice till they have the same outcome.
The invariant (or hypothesis) of this loop, shown below and proved in Sect.~\ref{sec:ex_cases:dice}, gives the distribution of final states (primed variables).
\begin{align*}
    & H \defs {\ibracket{d_1' = d_2'} * \ibracket{t' \geq t + 1} * {\left(\frac{5}{6}\right)^{\left(t' - t - 1\right)}} * \left(\frac{1}{36}\right)}
\end{align*} 
We are interested in the probability distribution in terms of iterations or $t$, and so we substitute $\ibracket{d_1' = d_2'}$ with 6 because there are 6 possible combinations of $d_1'$ and $d_2'$ in each experiment to have them equal. 
\begin{align*}
    & Ht \defs {6 * \ibracket{t' \geq t + 1} * {\left(\frac{5}{6}\right)^{\left(t' - t - 1\right)}} * \left(\frac{1}{36}\right)} = {\ibracket{t' \geq t + 1} * {\left(\frac{5}{6}\right)^{\left(t' - t - 1\right)}} * \left(\frac{1}{6}\right)}
\end{align*} 
Provided the initial value of $t$ is 0, we plot the program's termination probability using this invariant in Fig.~\ref{fig:dice_t_prob}. Because $t$ counts iterations, the diagram also shows the probability of the termination in the exact iteration $t$. The probability for $t'=1$ is $1/6$ ($\approx 0.167$), so the program terminates in the first iteration with probability $1/6$ as expected ($6$ among total $6*6=36$ combinations).

\begin{figure}[!ht]
    \begin{center}
\begin{tikzpicture}
\begin{axis}[
    xlabel={$t'$},
    ylabel={Probability of termination},
    y tick label style={/pgf/number format/.cd,fixed,fixed zerofill,precision=2},
]
    \addplot table[header=false,col sep=&,row sep=\\,y expr={\thisrowno{1}}] {
        1 & 1/6\\
        2 & (1/6)*(5/6)\\
        3 & (1/6)*(5/6)^2\\
        4 & (1/6)*(5/6)^3\\
        5 & (1/6)*(5/6)^4\\
        6 & (1/6)*(5/6)^5\\
        7 & (1/6)*(5/6)^6\\
        8 & (1/6)*(5/6)^7\\
        9 & (1/6)*(5/6)^8\\
        10& (1/6)*(5/6)^9\\
        11& (1/6)*(5/6)^10\\
        12& (1/6)*(5/6)^11\\
        13& (1/6)*(5/6)^12\\
        14& (1/6)*(5/6)^13\\
        15& (1/6)*(5/6)^14\\
        16& (1/6)*(5/6)^15\\
        17& (1/6)*(5/6)^16\\
        18& (1/6)*(5/6)^17\\
        19& (1/6)*(5/6)^18\\
        20& (1/6)*(5/6)^19\\
        21& (1/6)*(5/6)^20\\
        22& (1/6)*(5/6)^21\\
        23& (1/6)*(5/6)^22\\
        24& (1/6)*(5/6)^23\\
        25& (1/6)*(5/6)^24\\
        26& (1/6)*(5/6)^25\\
        27& (1/6)*(5/6)^26\\
        28& (1/6)*(5/6)^27\\
        29& (1/6)*(5/6)^28\\
        30& (1/6)*(5/6)^29\\
    };
  \end{axis}
\end{tikzpicture}
    \end{center}
    \caption{Termination probability over $t'$ for dice.}
    \label{fig:dice_t_prob}
\end{figure}
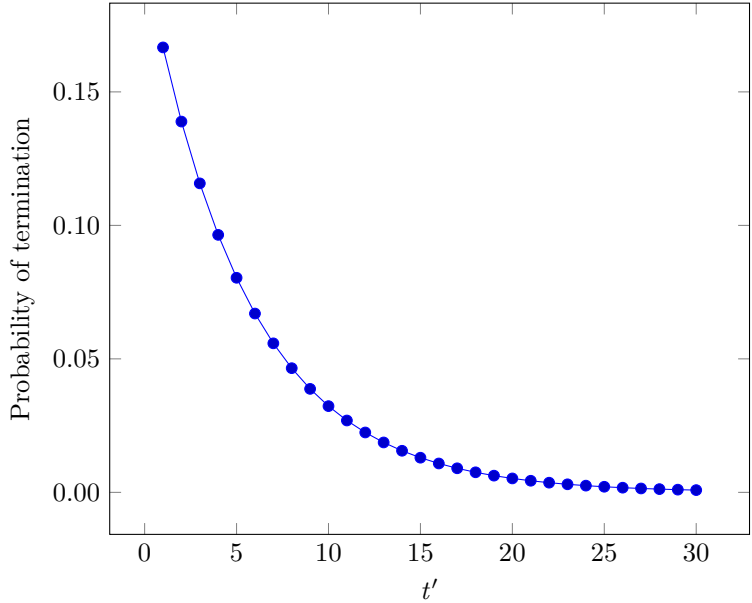

Reasoning about almost-sure termination becomes an arithmetic summation of this distribution, as shown below, which sums to 1.
\begin{align*}
\displaystyle\sum_{t'=0}^{\infty} \ibracket{t' \geq 1} * {\left(\frac{5}{6}\right)^{\left(t' - 1\right)}} * \left(\frac{1}{6}\right) 
    = \sum_{t'=1}^{\infty} {\left(\frac{5}{6}\right)^{\left(t' - 1\right)}} * \left(\frac{1}{6}\right) 
    =  \left(\frac{1}{6}\right) * \sum_{t'=1}^{\infty} {\left(\frac{5}{6}\right)^{\left(t' - 1\right)}}
    = 1.0
\end{align*}
\noindent
The expected run-time is the expectation of $t'$, simply the sequential composition of $\left(Ht ; t'\right) = t + 6$, denoting, on average, it takes six throws to have their outcomes equal.

%
%
{
\paragraph{\textbf{Summary}}
\begin{table*}[ht]
    \caption{{Comparison of different probabilistic semantics}}
    \label{tab:related_work}\centering
\bgroup
\def\arraystretch{1.0}
\setlength\tabcolsep{.5mm}
    \begin{tabular}{@{}l c cccccc c ccccc c cccc c c c @{}}
        \toprule
        \multirow{2}{*}{Approach} & 
        & \multicolumn{6}{c}{Modelling features} & \phantom{a} & \multicolumn{3}{c}{Advanced} & \phantom{a} & \multicolumn{5}{c}{Reasoning and Verification} & 
        & \multirow{2}{*}{Auto} \\
        \cmidrule{3-8} \cmidrule{10-12} \cmidrule{14-18}
        &
        & PrCh & UniD 
        & Cond
        & Jnt & NonD 
        & Expr
        & 
        & Para & Unf & Refn & 
        & WP & ExaPD & ExpRT 
        & Inv & 
        Para &
        \\
        \midrule
        PPDL\cite{Kozen1985} & 
             & \checkmark & \checkmark & &  &  & Bsc & & &  &  & 
             & & \checkmark & \checkmark & \checkmark & & & \\
        WPE\cite{McIver2005} & 
             & \checkmark & & &  & \checkmark & Bsc & 
             & &  & \checkmark &  
             & \checkmark & & & \checkmark & & & \autotp \\
        AWPE\cite{Kaminski2019a} & 
             & \checkmark & \checkmark & \checkmark &  & \checkmark & Bsc & 
             & &  & \checkmark &  
             & \checkmark & & \checkmark & \checkmark & & & \checkmark \\
        DVWPE\cite{Schroeer2023} & 
             & \checkmark & \checkmark & \checkmark &  & \checkmark & Bsc & 
             & &  & &  
             & \checkmark & & \checkmark & \checkmark & & & \checkmark \\
        ELLORA\cite{Barthe2018} & 
             & \checkmark & \checkmark & & & & Bsc & 
             & & & & 
             & ASS & \checkmark & \checkmark & \checkmark & & & \autotp \\
        PDs\cite{He2004} & 
             & \checkmark & & &  & \checkmark & Rich & 
             & \checkmark & \checkmark & \supportnoimpl  & 
             & CTr & \checkmark &  & \checkmark  & \checkmark & & \autotp \\
        PPP\cite{Hehner2004} & 
            & \checkmark & \checkmark & \checkmark & \checkmark&  & Bsc & 
            & & & \supportnoimpl & 
            & & \checkmark & \checkmark & \checkmark & & &  \\ 
        Our & 
            & \checkmark & \checkmark & \checkmark & \checkmark&  & Rich & 
            & \checkmark & \checkmark & \supportnoimpl & 
            & \supportnoimpl & \checkmark & \checkmark & \checkmark & \checkmark & & \autotp \\ 
        \midrule
        \multicolumn{20}{p{1.0\linewidth}}{\textbf{Acronym}: 
            ASS: assertion-based;
            Auto: automation;
            AWPE: advanced WPE;
            Bsc: basic;
            CTr: contract-based;
            Cond: conditional probability;
            DVWPE: WPE-based deductive verification infrastructure;
            ExaPD: exact probability distributions;
            Exp: expression and type system; 
            ExpRT: expected runtime;
            Inv: loop invariant; 
            Jnt: joint probability, especially supporting general likelihood functions; 
            NonD: nondeterministic choice;
            Para: parametric models or reasoning;
            PDs: probabilistic designs; 
            PrCh: probabilistic choice; 
            Refn: refinement; 
            TP: theorem proving;
            Unf: unification; 
            UniD: construct for uniform distributions;
            \supportnoimpl: supported but not yet developed;
            \autotp: automation through theorem proving; 
        } \\
        \bottomrule
\end{tabular}
\egroup
\end{table*}

In Table~\ref{tab:related_work}, we summarise the comparison of different probabilistic semantics in terms of four perspectives: modelling features, advanced features, reasoning and verification, and automation. In modelling features, we consider the support of constructs for usual probabilistic choice, uniform distributions, conditioning, joint probability, nondeterministic choice, and expression and type systems. Our work supports all but nondeterminism. Notably, our language has a rich expression and type system which is based on the Z notation~\cite{Spivey1992,Woodcock1996} and the mechanised Z mathematical toolkit\footnote{\url{https://github.com/isabelle-utp/Z_Toolkit}.} in Isabelle/HOL, which entitles us to model abstract probabilistic programs and capture rich semantics. Our language and semantic framework can support parametric models, is able to unify other semantics and support refinement thanks to our UTP relations. Probabilistic designs are also based on UTP and so the theory supports similar features as ours. In terms of verification, our work can be used to unify the WPE semantics (but this needs the new development), and support parametric verification. 
}

\section{Preliminaries}
\label{sec:prelim}
\subsection{Unifying Theories of Programming}
In UTP~\cite{Hoare1998,Woodcock2004}, the meaning (denotational semantics) of programs is given as predicates, called ``programs-as-predicates''~\cite{Hoare1984}. In this approach, the alphabetised relational calculus~\cite{Woodcock2004}, a combination of standard predicate calculus operators and Tarski's relation algebra~\cite{Tarski1941}, is used as the basis for its semantic model to denote programs as binary relations between initial observations and their subsequent observations. An alphabetised relation is an alphabet-predicate pair $\left(\alpha P, P\right)$ where the accompanying alphabet of $P$, $\alpha P$, is composed of undashed variables ($x$) and dashed variables ($x'$), representing observations made initially and subsequently. 
For example, a program $\left(x' := x + 1\right)$ with two observable variables $x$ and $y$ can be modelled as a relational predicate $\left(x' = x + 1 \land y' = y\right)$ with its alphabet $\{x,x',y,y'\}$.

Alphabetised predicates are presented in this representation of UTP through alphabetised expressions $[V,S] \uexpr$, parametric over the value type $V$ and the observation space $S$ and defined as total functions $S \fun V$. Predicates\footnote{In the rest of the paper, we use expressions, predicates, and relations to refer to alphabetised counterparts for simplicity.} are boolean expressions: $[S] \upred \defs [\bool, S] \uexpr$, an expression whose value type is boolean. Relations are predicates over a product space: $[S_1, S_2] \urel \defs [S_1 \cross S_2] \upred$, where $S_1$ and $S_2$ are the initial and final observation space, corresponding to undashed variables (input alphabet) and dashed variables (output alphabet). {Here $\urel$ means the UTP relations.} \emph{Homogeneous} relations have the same initial and final observation space: $[S] \hrel \defs [S, S] \urel$.

The denotational semantics of a sequential program is given as relations by the composition of constructors, including conditional, assignment, skip, sequential composition, and nondeterministic choice. These constructors are defined below.
\begin{definition}[Constructs of sequential programs]
    \label{def:utp_relation}
    \begin{align*}
        & \left(P \lhd b \rhd Q\right) \defs \left(b \land P\right) \lor \left(\lnot b \land Q\right)\tag*{(conditional)} \label{def:ucond} \\
        & \left(x :=_A e\right) \defs \left(x' = e \land w' = w \right)\tag*{(assignment)} \label{def:uassign} \\
        & \II \defs \left({v' = v}\right) \tag*{(skip)} \label{def:uskip} \\
        & P ; Q \defs \left(\exists s_0 @ P[s_0/s'] \land Q[s_0/s]\right)\tag*{(sequential composition)} \label{def:useq} \\
        & P \sqcap Q \defs \left(P \lor Q\right) \qquad{\text{ if } \alpha P = \alpha Q}\tag*{(nondeterminism)} \label{def:unondeter}
    \end{align*}
\end{definition}

Particularly, we need to emphasise the type of programs and their alphabets. In \ref{def:ucond}, {$b$ is type $[S_1]\upred$, $P$ and $Q$ are of type $[S_1,S_2]\urel$}, and $\alpha b \subseteq \alpha P = \alpha Q$. In \ref{def:uassign}, $A$ is the observation space of a program, including variable $x$ and a set $w$ of other variables. We also use a simple syntax $w'=w$ here to denote conjunctions of equations over each variable in $w$. For example, if $w = \{y, z\}$, then $\left(w' = w\right) \defs \left(y' = y \land z' = z\right)$. The subscript $A$ in $:=_A$ is usually omitted because it can automatically be derived from its context. Skip $\II$ \ref{def:uskip} is a special assignment where no variable changes {(here $v$ denotes the set of all variables)}, so the observation space stays the same. In sequential composition $P ; Q$ \ref{def:useq}, two relations $P$ of type $[S_1, S_2]\urel$ and $Q$ of type $[S_2, S_3]\urel$ are composed because the output alphabet $S_2$ of $P$ is the same as the input alphabet $S_2$ of $Q$. The relational composition gives a program of type $[S_1, S_3]\urel$ with $s_0:S_2$ denotes the entire state, $P[s_0/s']$ for the substitution of the final observation $s'$ of $P$ by $s_0$, and $Q[s_0/s]$ for the substitution of the initial observation $s$ of $Q$ by $s_0$. Nondeterministic choice $P \sqcap Q$ \ref{def:unondeter} is simply a disjunction of relations {if they have the same alphabets. The introduction of the new notation $\sqcap$ emphasises this condition.}

UTP uses refinement to deal with program development or correctness. A specification $S$ is refined by a program $P$, denoted as $S \refinedby P$, if and only if that $P$ implies $S$ is universally closed. For example, $\left(x := x + 1\right)$ is a refinement of $\left(x' > x\right)$ because for any $x$ and $x'$, $\left(x'=x+1\right) \implies \left(x' > x\right)$. Relations of type $[S_1, S_2]\urel$, for any given $S_1$ and $S_2$, are partially ordered by $\refinedby$ where $\ufalse$ and $\utrue$, special relations whose predicates are $\pfalse$ and $\ptrue$, are at its extremes: $\utrue \refinedby P \refinedby \ufalse$.

\subsection{Isabelle/UTP}
Isabelle/UTP~\cite{Foster2020} provides a shallow embedding of UTP's alphabetised relational calculus on top of Isabelle/HOL. In Isabelle/UTP, variables are modelled as algebraic structures using Lenses~\cite{Foster2009,Foster2016} to represent observations. Each observable variable $x$ is a lens ($\lens{\mathcal{V}}{\mathcal{S}}$), equipped with a pair of functions $\lget{x}: \mathcal{S} \fun \mathcal{V}$ and $\lput{x}: \mathcal{S} \fun \mathcal{V} \fun \mathcal{S}$, to query and update a view (of type $\mathcal{V}$) of an observation space (of type $\mathcal{S}$).
In this model, an alphabetised predicate $\left(x = y + 1\right)$ with two observable variables $x$ and $y$ is expressed as $\left(\lambda s. \lget{x} s = \lget{y} s + 1\right)$. A relation is a predicate over a product space, manipulated through a product lens $S_1 \lprod S_2$. The early relational example $\left(x' := x + 1\right)$, therefore, can be expressed as 
\begin{align*}
\lambda s. \lget{x} \left(\lget{\lsnd} s\right) = \lget{x}\left(\lget{\lfst} s\right) + 1 \land \lget{y}\left(\lget{\lsnd} s\right) = \lget{y}\left(\lget{\lfst} s\right) \tag*{[Lens representation]} \label{eqn:lens_representation} 
\end{align*}
where $\lfst$ and $\lsnd$ are the lenses to project the first and the second element of a product space. By substituting the state $s$ with a pair $(s, s')$, this expression can be simplified to 
\begin{align*}
    \lambda (s, s'). \lget{x} s' = \lget{x} s + 1 \land \lget{y} s' = \lget{y} s \tag*{[Simplified lens representation]} \label{eqn:simp_lens_representation}
\end{align*}
However, writing UTP expressions this way is tedious and not very useful and intuitive for good programming practice because too many implementation details are presented. For this reason, Isabelle/UTP implemented a lifted parser to provide a transparent conversion between the lens's representation and the programming syntax like $\left(x' := x + 1\right)$. We denote this representation of UTP expressions as $\usexpr{expr}$, such as $\usexpr{x' : = x + 1}$, which is converted to \ref{eqn:lens_representation}.

In Isabelle/UTP, we use $\vv$ to denote the universe alphabet of a program. In other words, it is the set of all observable variables. We also use $\vv'$ to denote the set of all dashed observable variables. For the previous example, $\vv$ denotes $\{x,y\}$ and $\vv'$ denotes $\{x', y'\}$.

\subsection{Probabilistic predicative programming}
Predicative programming~\cite{Hehner1984a,Hehner1984}, {or programs-as-predicates~\cite{Hehner1993}}, describes programs using first-order semantics or relational semantics as boolean expressions (predicates). A program has its input denoted by undashed variables and output denoted by dashed variables. Predicative programming also uses refinement for program correctness.

Probabilistic predicative programming~\cite{Hehner2004,Hehner2011} generalises predicative programming from boolean to probabilistic. Notations are introduced for probabilistic programming, such as skip, assignment, conditional choice, probabilistic choice, sequential composition (conditional probability), parallel composition (joint probability), and recursion. Except for parallel composition, these constructors deal with probabilistic programs whose outputs are distributions or distribution programs. Parallel composition can deal with probabilistic programs whose outputs might not be distributions (non-distribution programs), and uses normalisation to give a distribution program. 

This programming supports the subjective Bayesian approach through parallel composition. From a given distribution program, we can learn a new fact by placing the fact in parallel with the distribution program to allow beliefs to be updated. 

To reason about the termination of loops, a time variable is introduced to count iterations. This gives more information (time) than just termination~\cite{Hehner1999}. In this programming, the expected value of a number expression $e$ according to a distribution program $P$ is just the sequential composition of $P$ and $e$. If $e$ is a boolean expression, the sequential composition gives the probability that $e$ is valid after the execution of $P$. 
With the time variable, the programming allows reasoning about the average termination time. For example, on average, it takes 
two flips of a fair coin to see heads or tails. 
The termination probability of a loop (\isopbf{while} $b$ \isopbf{do} $P$) can be specified using the sequential composition of the solution (of the loop) and the negation of the loop condition ($\lnot b$). If the result is 1, it means the loop almost surely terminates. If the result is not 1, the loop may diverge.

\subsection{Complete lattices and fixed-point theorems}

A partially ordered set (poset) $(X, \leq)$ is a complete lattice if every subset of $X$ has a supremum and an infimum.
\begin{align*}
	& \forall A \subseteq X \bullet \left(\thinf{} A\right) \in X  \tag*{(Inf exists)} \label{law:inf_exists} \\ 
	& \forall A \subseteq X \bullet \left(\thsup{} A\right) \in X  \tag*{(Sup exists)} \label{law:sup_exists}
\end{align*}

We use a tuple $\left(X, \leq, <, \bot, \top, \tinf, \tsup, \thninf, \thnsup\right)$ to represent a complete lattice $\left(X, \leq\right)$ with a strict binary relation $<$, the bottom element $\bot$, the top element $\top$, the infimum $\tinf$ of two elements, the supremum $\tsup$ of two elements, the infimum $\thninf$ of a (finite or infinite) set of elements, and the supremum $\thnsup$ of a (finite or infinite) set of elements. 

A complete lattice satisfies more laws below. 
\begin{align*}
	& x \leq x \tag*{(reflexive)} \label{law:reflexive} \\
	& x \leq y \land y \leq z \implies x \leq z \tag*{(transitive)} \label{law:transitive} \\
	& x \leq y \land y \leq x \implies x = y	 \tag*{(antisym)} \label{law:antisym}\\
	& x \sqcap y \leq x \tag*{(inf\_le1)} \label{law:inf_le1} \\
	& x \sqcap y \leq y \tag*{(inf\_le2)} \label{law:inf_le2} \\
	& x \leq y \land x \leq z \implies x \leq y \sqcap z \tag*{(inf\_greatest)} \label{law:inf_greatest} \\
	& \left(x \leq y\right) \equiv \left(x \sqcap y = x\right) \tag*{(inf\_iff)} \label{law:inf_iff} \\
	& x \leq x \sqcup y \tag*{(sup\_ge1)} \label{law:sup_ge1} \\
	& y \leq x \sqcup y \tag*{(sup\_ge2)} \label{law:sup_ge2} \\
	& y \leq x \land z \leq x \implies y \sqcup z \leq x \tag*{(sup\_least)} \label{law:sup_least} \\
	& \left(x \leq y\right) \equiv \left(x \sqcup y  = y\right) \tag*{(sup\_iff)} \label{law:sup_iff} \\
	& x \in A \implies \thinf{}A \leq x \tag*{(Inf\_lower)} \label{law:Inf_lower} \\
	& (\forall x. x \in A \implies z \leq x) \implies z \leq \thinf{} A \tag*{(Inf\_greatest)} \label{law:Inf_greatest} \\
	& x \in A \implies x \leq \thsup{}A  \tag*{(Sup\_upper)} \label{law:Sup_upper} \\
	& (\forall x. x \in A \implies x \leq z) \implies \thsup{} A \leq z\tag*{(Sup\_least)} \label{law:Sup_least} \\
	& \thinf{} \{\} = \top  \tag*{(Inf\_empty)} \label{law:Inf_empty} \\
	& \thsup{} \{\} = \bot  \tag*{(Sup\_empty)} \label{law:Sup_empty}
\end{align*}

%
%
%
%
%


Monotonic and antimonotonic functions in order theory are characterised using $\mono$ and $\antimono$ defined below.
\begin{definition}[Monotone and anti-monotone]
    Provided $(X, \leq)$ and $(X', \leq')$ are posets and $f$ is a function of type $X \fun X'$, then
    \begin{align*}
        & \mono(f) \defs \forall x @ \forall y @ x \leq y \implies f(x) \leq' f(y) \tag*{(monotone)} \label{def:mono}\\ 
        & \antimono(f) \defs \forall x @ \forall y @ x \leq y \implies f(y) \leq' f(x) \tag*{(anti-monotone)} \label{def:antimono}
    \end{align*}
\end{definition}

Ascending and descending chains are monotonic and antimonotonic functions whose domain is natural numbers. 
\begin{definition}[Chains]
    Provided $(X, \leq)$ is a {complete lattice} and $f$ is a function of type $\nat \fun X$, then
    \begin{align*}
        & \incseq(f) \defs \mono(f) \tag*{(ascending chain)} \label{def:incseq}\\ 
        & \decseq(f) \defs \antimono(f) \tag*{(descending chain)} \label{def:decseq}
    \end{align*}
\end{definition}
{We particularly define $\incseq$ and $\decseq$ to be over complete lattices which we are interested in this paper. This is to simplify the specification of premises in lemmas and theorems because $\incseq$ and $\decseq$ impose a type restriction to complete lattices directly. Otherwise, we need additional premises if we use the more general $\mono$ and $\antimono$.}

We show the application of a monotonic function $f$ to an ascending chain $c$ is also an ascending chain.
\begin{thm}
    We fix $c:\nat \fun X$ and $f: X \fun Y$, then 
    $\incseq(c) \land \mono(f) \implies \incseq(\lambda n @ f(c(n)))$
\end{thm}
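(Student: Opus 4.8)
The plan is to prove this directly from the definitions of $\incseq$ and $\mono$, since the statement is essentially a composition lemma. Recall that $\incseq(c)$ unfolds to $\mono(c)$, which says that for all $m, n \in \nat$, if $m \leq n$ then $c(m) \leq c(n)$ in the complete lattice $X$. Likewise $\mono(f)$ says for all $x, y \in X$, if $x \leq y$ then $f(x) \leq f(y)$ in $Y$. The goal $\incseq(\lambda n @ f(c(n)))$ unfolds to $\mono(\lambda n @ f(c(n)))$, i.e.\ for all $m, n \in \nat$ with $m \leq n$, we need $f(c(m)) \leq f(c(n))$.

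First I would fix arbitrary $m, n \in \nat$ and assume $m \leq n$. Then I would apply $\incseq(c)$ (unfolded to $\mono(c)$) to the pair $m, n$ to obtain $c(m) \leq c(n)$ in $X$. Next, instantiating $\mono(f)$ at $x \defs c(m)$ and $y \defs c(n)$, the hypothesis $c(m) \leq c(n)$ just established discharges its premise, yielding $f(c(m)) \leq f(c(n))$ in $Y$. Since $(\lambda n @ f(c(n)))(m) = f(c(m))$ and similarly for $n$ by beta-reduction, this is exactly the required inequality, so $\mono(\lambda n @ f(c(n)))$ holds, hence $\incseq(\lambda n @ f(c(n)))$.

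There is really no main obstacle here: the result is a one-line consequence of transitivity of function composition with respect to monotonicity, and in the Isabelle/UTP mechanisation it should follow by unfolding the definitions and applying the monotonicity assumptions (essentially an \texttt{auto} or \texttt{blast} call after unfolding $\incseq$ and $\mono$). The only point requiring a small amount of care is the implicit type-alignment remark made just before the theorem: because $\incseq$ and $\decseq$ are defined only over complete lattices, one should check that $X$ here is indeed (assumed to be) a complete lattice so that $\incseq(c)$ typechecks, while $Y$ need only be a poset for $\mono(f)$ and for the conclusion $\incseq(\lambda n @ f(c(n)))$ to make sense — though for full consistency with the stated definition of $\incseq$ one would want $Y$ to be a complete lattice as well, which the ambient setting of the paper supplies.
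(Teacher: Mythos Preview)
Your proposal is correct and is exactly the natural unfolding-of-definitions argument; the paper itself states this theorem without giving an explicit proof (it is a routine lemma discharged directly in the mechanisation), so there is nothing further to compare.
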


We show the application of a monotonic function $f$ to a descending chain $c$ is also a descending chain.
\begin{thm}
    We fix $c:\nat \fun X$ and $f: X \fun Y$, then 
    $\decseq(c) \land \mono(f) \implies \decseq(\lambda n @ f(c(n)))$
\end{thm}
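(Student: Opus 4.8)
The plan is to unfold the two occurrences of $\decseq$ and reduce the claim to the statement that the composite $\lambda n \bullet f(c(n))$ is anti-monotone on $\nat$. Since $\decseq$ is \emph{defined} to be $\antimono$ (specialised to domain $\nat$), the goal becomes: for all $m, n \in \nat$ with $m \leq n$, we must show $f(c(n)) \leq f(c(m))$ in the lattice $Y$. The proof is then a one-step chase: descending-ness of $c$ pushes the inequality into $X$ with the order reversed, and monotonicity of $f$ transports it into $Y$ without a further reversal.

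Concretely, I would proceed as follows. First, fix arbitrary $m, n \in \nat$ with $m \leq n$. Second, apply the hypothesis $\decseq(c)$, i.e.\ $\antimono(c)$: from $m \leq n$ we obtain $c(n) \leq c(m)$ in $X$. Third, apply the hypothesis $\mono(f)$ to this inequality: $c(n) \leq c(m)$ gives $f(c(n)) \leq f(c(m))$ in $Y$. Fourth, discharge the outer universal quantifiers to conclude $\antimono(\lambda n \bullet f(c(n)))$, which by definition is $\decseq(\lambda n \bullet f(c(n)))$.

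This statement is the order-theoretic dual of the preceding theorem ($\incseq(c) \land \mono(f) \implies \incseq(\lambda n \bullet f(c(n)))$), and the argument above is literally that proof with the single inner inequality flipped; in the Isabelle/UTP development it should follow from the same monotone/antimonotone composition lemmas or be closed by the relevant order simplification set. The only point that needs attention is typing rather than mathematics: by our convention $\decseq$ is restricted to complete lattices, so $X$ and $Y$ must both carry complete-lattice (hence poset) structure for the statement even to be well formed, and then $\mono(f)$ is exactly monotonicity between the underlying orders used above. I therefore do not expect a genuine obstacle here — the content is just the observation that an anti-monotone map followed by a monotone map is anti-monotone; any friction will be purely in arranging the lattice type-class instances so that $\decseq$ applies to the composite on $\nat \fun Y$.
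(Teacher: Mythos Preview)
Your proposal is correct and matches the intended argument: the paper states this theorem without an explicit proof (it is immediate from the definitions of $\decseq$, $\antimono$, and $\mono$, and is discharged automatically in the Isabelle/UTP mechanisation). Your unfolding---$m \leq n \Rightarrow c(n) \leq c(m)$ by $\antimono(c)$, then $f(c(n)) \leq f(c(m))$ by $\mono(f)$---is exactly the expected one-line justification.
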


If $f$ is an ascending or descending chain, its limit is the supremum or infimum of the chain.
\begin{thm}[Limit as supremum and infimum]
    \label{thm:limit_as_sup_inf}
    Provided $(X, \leq)$ is a complete lattice and also totally ordered, and we fix $f:\nat \fun X$, then 
    \begin{align*}
        & incseq(f) \implies f \tendsto \left({\thnsup n @ f\left(n\right)} \right) \tag*{(limit as supremum)}\label{thm:limit_as_sup}\\
    & decseq(f) \implies f \tendsto \left({\thninf n @ f\left(n\right)} \right)\tag*{(limit as infimum)}\label{thm:limit_as_inf}
    \end{align*}
\end{thm}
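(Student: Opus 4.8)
The plan is to carry out the argument in the \emph{order topology} on $X$, which is well-defined precisely because $X$ is totally ordered — this is exactly the role of that hypothesis, since ``$\tendsto$'' is only meaningful once a topology is fixed. I would prove the supremum statement in detail and obtain the infimum statement by the order-reversal duality of complete lattices. Write $L \defs \thnsup n @ f(n)$; this element exists and lies in $X$ by completeness \ref{law:sup_exists}. The goal is to show that every open neighbourhood of $L$ eventually contains $f(n)$. First I would reduce to \emph{subbasic} neighbourhoods: in the order topology every open set containing $L$ contains a finite intersection of rays $\{x \mid x < b\}$ with $L < b$ and $\{x \mid a < x\}$ with $a < L$, and the ``eventually'' filter on $\nat$ is closed under finite intersection, so it suffices to treat these two shapes. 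The boundary cases are handled for free: when $L = \top$ there is no $b > L$ and when $L = \bot$ there is no $a < L$, so the corresponding quantifier becomes vacuous.

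For an upper ray $\{x \mid x < b\}$ with $L < b$: since $L$ is an upper bound of the range of $f$ by \ref{law:Sup_upper}, we get $f(n) \leq L < b$ for every $n$, so $f(n) < b$ holds for all $n$ and hence eventually. For a lower ray $\{x \mid a < x\}$ with $a < L$: if $a$ were an upper bound of the range of $f$, then \ref{law:Sup_least} would give $L \leq a$, contradicting $a < L$; hence there is some $N$ with $a < f(N)$, and by $\incseq(f)$, i.e.\ $\mono(f)$ (\ref{def:incseq}, \ref{def:mono}), we have $f(N) \leq f(n)$ and thus $a < f(n)$ for all $n \geq N$, so eventually. Combining the two cases using closure of ``eventually'' under finite intersection gives $f \tendsto L$.

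For the infimum case I would invoke duality: the opposite order $(X, \geq)$ is again a totally ordered complete lattice, its supremum operator is the original $\thninf$, the order topology is self-dual, and $\decseq(f)$ in the original order is $\incseq(f)$ in the opposite order; applying the part just proved in $(X,\geq)$ yields $f \tendsto \thninf n @ f(n)$. Alternatively one simply mirrors the case analysis above, replacing \ref{law:Sup_upper} and \ref{law:Sup_least} by \ref{law:Inf_lower} and \ref{law:Inf_greatest} and reversing the inequalities.

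I do not expect a deep obstacle. The one point needing care is pinning down the meaning of ``$\tendsto$'' as convergence in the order topology and recognising that this — not completeness alone — is why the total-order hypothesis appears. In the Isabelle/UTP mechanisation the real work is on the bridging side: discharging the type-class or locale obligations so that the standard library fact that a monotone sequence in a conditionally complete linear order with the order topology converges to its supremum (and dually for infimum) becomes directly applicable; once that bridge is in place the theorem is essentially immediate.
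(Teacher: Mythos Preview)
Your argument is correct and is the standard route via the order topology. The paper, however, does not actually prove this theorem: it sits in the preliminaries as a background fact, and in the mechanisation it is presumably discharged by the Isabelle/HOL library (monotone sequences in a conditionally complete linear order converge to their supremum/infimum in the order topology), which is exactly the bridge you anticipate in your last paragraph.

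One point worth flagging: the paper's own Definition of $\tendsto$, stated immediately after the theorem, is the $\varepsilon$--$N$ formulation $\forall \varepsilon > 0 \bullet \exists N \bullet \forall n \geq N \bullet |f(n) - v| < \varepsilon$, which presupposes a subtraction and an absolute value on $X$ rather than a bare order topology. Your reading of $\tendsto$ as order-topology convergence is the one that matches the generality of the theorem \emph{as stated} (an arbitrary totally ordered complete lattice), and it is the reading under which your proof goes through cleanly. For the lattices the paper actually uses --- $\ureal$ and pointwise $\ureal$-valued functions --- the metric and order-topology notions coincide, so there is no real conflict; but you are right that the total-order hypothesis is doing the work of making the topology well-defined, not of making the supremum exist.
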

Here we use $f \tendsto v$ to denote the limit of $f$ is $v$: ${\displaystyle \lim_{n \to \infty} f(n) = v}$. The definition of the limit of a sequence is given below.
\begin{definition}[Limit of a sequence]\label{def:tendsto}
    A sequence $f$ converges to $v$ if and only if
    \begin{align*}
        & \forall \varepsilon:\real > 0 \bullet \exists N:\nat \bullet \forall n \geq N \bullet |f(n) - v| < \varepsilon
    \end{align*}
\end{definition}

\begin{thm}[Knaster–Tarski fixed-point theorem]
    \label{thm:tarski_fixed_point}
    Provided $(X, \leq)$ is a complete lattice and $F: X \fun X$ is monotonic, the set of fixed points of $F$ also forms a complete lattice. The least fixed point is the infimum of the pre-fixed points. 
    \begin{align*}
        &\thlfp{}~F \defs \thninf{} \left\{u : X | F(u) \leq u\right\} \tag*{(least fixed point)} \label{def:tarski_lfp}
    \end{align*}

The greatest fixed point is the supremum of the post-fixed points. 
    \begin{align*}
        &\thgfp{}~F \defs \thnsup{} \left\{u : X | u \leq F(u)\right\} \tag*{(great fixed point)} \label{def:tarski_gfp}
    \end{align*}
\end{thm}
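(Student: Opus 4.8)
The plan is to establish the three claims separately: (i) $\thlfp{}~F$ as defined is a fixed point of $F$ lying below every fixed point; (ii) dually for $\thgfp{}~F$; and (iii) the poset $\mathrm{Fix}(F) \defs \{u:X \mid F(u) = u\}$, with $\leq$ restricted, is itself a complete lattice. Only the lattice laws already listed are needed --- chiefly \ref{law:Inf_lower}, \ref{law:Inf_greatest}, \ref{law:Sup_upper}, \ref{law:Sup_least}, \ref{law:antisym}, the existence of suprema and infima (\ref{law:inf_exists}, \ref{law:sup_exists}) --- together with \ref{def:mono}.

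For (i), write $P \defs \{u:X \mid F(u) \leq u\}$ for the pre-fixed points and set $a \defs \thninf{} P$, which exists by \ref{law:inf_exists}. First I would show $F(a) \leq a$: for each $u \in P$ we have $a \leq u$ by \ref{law:Inf_lower}, so $F(a) \leq F(u) \leq u$ by \ref{def:mono} and $u \in P$; thus $F(a)$ is a lower bound of $P$ and \ref{law:Inf_greatest} gives $F(a) \leq a$. Applying $F$ to this inequality and \ref{def:mono} yields $F(F(a)) \leq F(a)$, i.e.\ $F(a) \in P$, hence $a \leq F(a)$ by \ref{law:Inf_lower}; antisymmetry \ref{law:antisym} then gives $F(a) = a$. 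Leastness is immediate, since any fixed point $u$ has $F(u) \leq u$, so $u \in P$ and $a \leq u$. Claim (ii) is exactly dual, taking $Q \defs \{u:X \mid u \leq F(u)\}$ and $b \defs \thnsup{} Q$, and using \ref{law:Sup_upper}/\ref{law:Sup_least} in place of \ref{law:Inf_lower}/\ref{law:Inf_greatest}.

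For (iii), fix an arbitrary $S \subseteq \mathrm{Fix}(F)$; I must produce its supremum and infimum inside $\mathrm{Fix}(F)$. For the supremum, let $s \defs \thnsup{} S$ taken in $X$, and consider the interval $[s,\top] \defs \{x:X \mid s \leq x\}$. This is again a complete lattice: for nonempty $A \subseteq [s,\top]$ the $X$-supremum and the $X$-infimum of $A$ both lie above $s$, while the empty supremum and infimum are $s$ and $\top$ respectively --- in every case the bound stays in the interval. Moreover $F$ maps $[s,\top]$ into itself: if $s \leq x$ then for each $t \in S$ we get $t = F(t) \leq F(x)$ by \ref{def:mono}, so $F(x)$ is an upper bound of $S$ and hence $s \leq F(x)$ by \ref{law:Sup_least}. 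Applying part (i) to the monotone self-map $F$ on $[s,\top]$ yields a fixed point $m$ that is the least element of $\mathrm{Fix}(F) \cap [s,\top]$. Since $m \geq s \geq t$ for all $t \in S$, it is an upper bound of $S$ in $\mathrm{Fix}(F)$; and any fixed point $u$ bounding $S$ from above satisfies $u \geq \thnsup{} S = s$, hence $u \in \mathrm{Fix}(F) \cap [s,\top]$ and $m \leq u$ --- so $m$ is the supremum of $S$ within $\mathrm{Fix}(F)$. Dually, working in $[\bot,\thninf{} S]$ and using part (ii) produces the infimum of $S$ within $\mathrm{Fix}(F)$. Hence $\mathrm{Fix}(F)$ is a complete lattice.

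I expect the main obstacle to be part (iii): one has to be careful that the interval $[s,\top]$ genuinely satisfies \ref{law:inf_exists} and \ref{law:sup_exists} --- the empty-set cases and the placement of the new bottom element $s$ are the fiddly points --- that $F$ is really invariant on it, and that the fixed point obtained there coincides with the least upper bound taken \emph{within} $\mathrm{Fix}(F)$ rather than within $X$. In the mechanisation this step reduces to reusing Isabelle/HOL's existing \texttt{lfp}/\texttt{gfp} theory on the \texttt{complete\_lattice} class, for which Knaster--Tarski is already available; the substance of the work above is then simply checking that those library definitions match the ones stated in the theorem.
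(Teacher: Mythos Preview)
Your proof is correct and follows the standard Tarski argument. However, the paper does not actually supply a proof of this theorem: it is stated in the preliminaries section as a known background result (with the definitions of $\thlfp{}$ and $\thgfp{}$ recorded for later use), and the paper immediately moves on to Scott continuity and the Kleene theorem, which \emph{is} proved. So there is nothing to compare against; your argument simply fills in what the paper takes for granted, and your closing remark about the mechanisation reusing Isabelle/HOL's existing \texttt{lfp}/\texttt{gfp} theory is exactly how the paper treats it in practice.
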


\begin{definition}[Scott continuity~\cite{Abramsky1995}]
Suppose $(X, \leq$) and $(X', \leq'$) are complete lattices. A function $F: X \to X'$ is \textbf{Scott-continuous} or \textbf{continuous} if, for every non-empty chain $S \subseteq X$,
\begin{align*}
	F\left(\thsup{X} S\right) = \thsup{X'} F(S) \tag*{(continuous)} \label{law:continuity}
\end{align*}
Here we use $F(S)$ to denote the set $\left\{d \in S @ F(d)\right\}$, the relational image of $S$ under $F$ or the range of $F$ domain restricted to $S$.
\end{definition}

In the original definition of Scott continuity, both $X$ and $X'$ are directed-complete partial orders (dcpo). We fix them to be complete lattices because every complete lattice is a dcpo~\cite{Abramsky1995}, and we only consider complete lattices in this paper. If $X$ and $X'$ are identical lattices, the subscript of $\thsup{}$ in Definition~\ref{law:continuity} can be omitted.

\begin{thm}[Monotonicity]
    \label{thm:cont_mono}
   A continuous function is also monotonic. 
\end{thm}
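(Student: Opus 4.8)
The plan is to feed the continuity equation the smallest non-trivial chain, namely a two-element set, and read off monotonicity from the universal property of suprema. First I would fix arbitrary $x, y \in X$ with $x \leq y$ and put $S \defs \{x, y\}$. Because $x \leq y$ (and $\leq$ is reflexive, \ref{law:reflexive}), any two members of $S$ are comparable, so $S$ is a totally ordered, non-empty subset of $X$, i.e. a non-empty chain; hence the continuity hypothesis \ref{law:continuity} is applicable to $S$. A short calculation using \ref{law:Sup_upper} (each of $x,y$ lies below $\thsup{X} S$) and \ref{law:Sup_least} (the hypothesis $x \leq y$ makes $y$ an upper bound of $S$), together with antisymmetry, shows $\thsup{X} S = y$.

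Next I would apply \ref{law:continuity} to obtain
\[
  F(y) \;=\; F\!\left(\thsup{X} S\right) \;=\; \thsup{X'} F(S) \;=\; \thsup{X'} \{F(x), F(y)\}.
\]
Now $F(x) \in \{F(x), F(y)\}$, so \ref{law:Sup_upper} in the lattice $X'$ gives $F(x) \leq' \thsup{X'}\{F(x), F(y)\} = F(y)$, i.e. $F(x) \leq' F(y)$. Since $x$ and $y$ were arbitrary with $x \leq y$, this is exactly the statement $\mono(F)$ of \ref{def:mono}. The degenerate case $x = y$ needs no separate treatment: $S$ is then the singleton $\{x\}$, still a non-empty chain with $\thsup{X} S = x$, and the same computation yields $F(x) \leq' F(x)$ by \ref{law:reflexive}.

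There is no real mathematical obstacle here; the only thing demanding a moment's care — and in a mechanised development the only part that takes any work — is the bookkeeping that identifies the set-supremum $\thsup{}\{x,y\}$ with the join of $x$ and $y$ in both $X$ and $X'$, which is immediate from \ref{law:Sup_upper}/\ref{law:Sup_least}, and checking that the two-point (or one-point) set genuinely qualifies as a non-empty chain so that \ref{law:continuity} may be invoked. Once those are in place, the lattice laws discharge the rest automatically.
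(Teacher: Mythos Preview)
Your proposal is correct and is precisely the standard argument. The paper itself states this theorem without proof (it is recorded in the preliminaries as a well-known fact about Scott-continuous maps), so there is no alternative argument to compare against; your two-element-chain trick is exactly the canonical proof one would expect here.
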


\begin{thm}[Kleene fixed-point theorem]
    \label{thm:kleene_fixed_point_theorem}
    Provided $(X, \leq)$ is a complete lattice with a least element $\bot$ and a top element $\top$, and $F: X \fun X$ is continuous, then $F$ has a least fixed point $\thlfp{}~F$ and a greatest fixed point $\thgfp{}~F$.
    \begin{align*}
        \thlfp{}~F &= \thsup{$n \geq 0$} F^n(\bot) \tag*{(least fixed point)} \label{def:kleene_lfp}\\
        \thgfp{}~F &= \thinf{$n \geq 0$} F^n(\top)\tag*{(greatest fixed point)} \label{def:kleene_gfp}
    \end{align*}
    Here we use $\thsup{$n \geq 0$} F^n(\bot)$ to denote $\thnsup\left\{n : \nat @ F^n(\bot)\right\}$
\end{thm}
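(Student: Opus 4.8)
The plan is to prove the least fixed-point claim by the classical Kleene iteration argument and the greatest fixed-point claim by the order-dual argument, reconciling both with the Knaster--Tarski characterisations of Theorem~\ref{thm:tarski_fixed_point}; throughout I would use that continuity entails $\mono(F)$ (Theorem~\ref{thm:cont_mono}). For the least fixed point, I would first check that $\lambda n \bullet F^n(\bot)$ is an ascending chain ($\incseq$): the base case $\bot \leq F(\bot)$ holds because $\bot$ is least, and the step case follows by applying $\mono(F)$ to $F^n(\bot) \leq F^{n+1}(\bot)$. By completeness, $a \defs \thnsup\left\{n : \nat \bullet F^n(\bot)\right\}$ is then a well-defined element of $X$. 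Applying continuity to this non-empty chain yields $F(a) = \thnsup\left\{n : \nat \bullet F^{n+1}(\bot)\right\}$, which equals $a$ because adjoining $F^0(\bot) = \bot$ to the set does not change its supremum. For minimality, given any pre-fixed point $u$ (that is, $F(u) \leq u$), a short induction gives $F^n(\bot) \leq u$ for all $n$ (base $\bot \leq u$; step $F^{n+1}(\bot) \leq F(u) \leq u$), so $u$ bounds the chain and hence $a \leq u$. Since $a$ is itself a pre-fixed point, $a = \thninf\left\{u : X | F(u) \leq u\right\} = \thlfp{}~F$ by Theorem~\ref{thm:tarski_fixed_point}.

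For the greatest fixed point I would run the dual construction: $\lambda n \bullet F^n(\top)$ is a descending chain ($\decseq$: base $F(\top) \leq \top$, step by $\mono(F)$), completeness gives $b \defs \thninf\left\{n : \nat \bullet F^n(\top)\right\}$, and maximality is again a short induction (any post-fixed point $u$ satisfies $u \leq F^n(\top)$ for every $n$, hence $u \leq b$). By Theorem~\ref{thm:tarski_fixed_point} it then remains only to verify that $b$ is a fixed point. One of the two inequalities needs just $\mono(F)$: from $b \leq F^n(\top)$ we obtain $F(b) \leq F^{n+1}(\top)$ for all $n$, so $F(b) \leq \thninf\left\{n : \nat \bullet F^{n+1}(\top)\right\} = b$ (again because $F^0(\top) = \top$ is the top of the chain).

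The main obstacle is the reverse inequality $b \leq F(b)$, i.e.\ $F\!\left(\thninf\left\{n : \nat \bullet F^n(\top)\right\}\right) = \thninf\left\{n : \nat \bullet F(F^n(\top))\right\}$. This is the order-dual of Scott continuity --- preservation of infima of descending chains, or \emph{co-continuity} --- and it does not follow from Scott continuity of $F$ alone: a monotone, supremum-preserving self-map of $[0,1]$ can violate it, in which case the infimum of the descending Kleene chain need not even be a fixed point. I would therefore expect the proof either to read the continuity hypothesis as two-sided in this setting, or --- more consistent with how the theorem is used afterwards --- to establish the dual continuity separately for the loop functionals of interest, which are assembled from sequential composition, conditionals, and probabilistic choice over the finite-support subspace and so are well behaved in both directions. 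Once $b = F(b)$ is in hand, both Kleene characterisations follow and coincide with Knaster--Tarski, which is exactly what the subsequent unique-fixed-point development requires.
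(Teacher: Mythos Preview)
Your argument for the least fixed point is essentially the paper's: show the Kleene chain is ascending, apply continuity to get $F(a)=a$, then induct to show $a$ is below every fixed point. The only cosmetic difference is that you bound $a$ by arbitrary pre-fixed points and appeal to the Knaster--Tarski characterisation $\thlfp{}~F = \thninf\{u \mid F(u)\leq u\}$, whereas the paper bounds $a$ directly by an arbitrary fixed point $fb$; your version is marginally cleaner but the content is the same.

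Your worry about the greatest fixed point is well placed. The paper's proof of that half is the single line ``Similarly, we prove $\thgfp{}~F$ is a fixed point of $F$ and is also the greatest'', and the paper's definition of continuity is one-sided (preservation of suprema of chains), so as stated the dual step $F(\thninf n\, F^n(\top)) = \thninf n\, F^{n+1}(\top)$ is not justified by the hypothesis alone. You have correctly diagnosed both the gap and its resolution in context: the paper does not rely on Theorem~\ref{thm:kleene_fixed_point_theorem} in that generality for the descending direction, but instead proves the needed infimum-preservation separately for the concrete loop functional (Theorems~\ref{thm:lfun_limit_as_infimum} and~\ref{thm:continuity_lfun_top}) under the finiteness hypotheses $\finstatesdes$ or $\finitefinal$. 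So your reading --- that the statement should be understood either with a two-sided continuity hypothesis, or as a template whose descending half is discharged case by case --- matches how the theorem is actually used.
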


\begin{proof}
    We prove $\thlfp{}~F$ is a fixed point first and then prove $\thlfp{}~F$ is the least one.
\begin{align*}
& F \left(\thlfp{}~F\right) \\
 = & \quad \cmt{Definition~\ref{def:kleene_lfp}} \\
& F \left(\thsup{$n \geq 0$} F^n(\bot)\right) \\
 = & \quad \cmt{Continuity Definition~\ref{law:continuity}} \\
& \thsup{$n \geq 0$} F \left(F^n(\bot)\right) \\
= & \quad \cmt{Defintion of $F^n$: $F(F^m(x)) = F^{m+1}(x)$} \\
& \thsup{$n \geq 0$} \left( F^{n+1}(\bot)\right) \\
= & \quad \cmt{Rewrite index} \\
& \thsup{$m \geq 1$} \left( F^{m}(\bot)\right) \\
= & \quad \cmt{Law~\ref{law:sup_iff} and $\bot$ is the least element} \\
& \bot \sqcup \left(\thsup{$m \geq 1$} \left( F^{m}(\bot)\right)\right) \\
= & \quad \cmt{Definition of $F^0$: $F^0(\bot)=\bot$} \\
& F^{0}(\bot) \sqcup \left(\thsup{$m \geq 1$} \left( F^{m}(\bot)\right)\right) \\
= & \quad \cmt{Definition of $\thsup{$m \geq 0$}$} \\
& \thsup{$m \geq 0$} \left( F^{m}(\bot)\right) \\
= & \quad \cmt{Rewite index} \\
& \thsup{$n \geq 0$} \left( F^{n}(\bot)\right) \\
= & \quad \cmt{Definition~\ref{def:kleene_lfp}} \\
& \thlfp{}~F
\end{align*}
So $\thlfp{}~F$ is a fixed point of $F$.

Suppose $fb$ is also a fixed point of $F$ that is, $F (fb) = fb$.
\begin{align*}
& \quad \cmt{$\bot$ is the least element} \\
& \bot \leq fb \\
\implies & \quad \cmt{$F$ is continuous and so is monotonic by Theorem~\ref{thm:cont_mono}} \\
& F (\bot) \leq F (fb) \\
\implies & \quad \cmt{$F(fb)=fb$} \\
& F (\bot) \leq fb \\
\implies & \quad \cmt{$F^{2} (\bot) = F\left(F(\bot)\right)$ and $F$ is monotonic} \\
& F^{2} (\bot) \leq F (fb)\\
\implies & \quad \cmt{$F(fb)=fb$} \\
& F^2 (\bot) \leq fb \\
& ... \\
\implies & \quad \cmt{Induction} \\
& F^{n} (\bot) \leq fb \\
\implies & \quad \cmt{Law~\ref{law:Sup_least}} \\ 
& \thsup{$n \geq 0$}\left(F^n(\bot)\right) \leq fb \\
= & \quad \cmt{Definition~\ref{def:kleene_lfp}} \\
& \thlfp{}~F \leq fb
\end{align*}
%
So $\thlfp{}~F$ is the least fixed point.

Similarly, we prove $\thgfp{}~F$ is a fixed point of $F$ and is also the greatest.
\end{proof}

\subsection{Summation over topological space}
Summation considered in this paper could range over an infinite set, called infinite sums. We use corresponding theories in Isabelle/HOL to deal with convergence and infinite sums. 

We say a function $f$ is \emph{summable} on a (potentially infinite) set $A$, denoted as $\summable(f, A)$ if the sum of $f$ on $A$ converges to a particular value. The convergence is expressed as the existence of a limit of $f$ over finite subsets $B$ of $A$ when $B$ is approaching $A$. In Isabelle/HOL, the limit is generalised to arbitrary topological space using filters~\cite{Hoelzl2013}. Its definition is parametrised over two filters. To the infinite sums, they are the open neighbourhood filter, interpreted as ``for all points in some open neighbourhood of a point'' and the subset inclusion ordered at-top filter, interpreted as ``for sufficiently large finite subsets when it approaches its top $A$''. The infinite sums of $f$ over $A$, denoted as $\infsum x \in A @ f(x)$, is the limit if $\summable(f, A)$ and 0 otherwise. The definitions of summable and finite sums can be found in Isabelle/HOL.

We list some laws of summation below.
\begin{thm}
    \label{thm:summation}
    \begin{align*}
        & c \neq 0 \land \summable(f, A) \implies \summable\left(\lambda x @ f(x)/c, A\right) \tag*{(division by constant summable)} \label{thm:summation_cdiv_summable} \\ 
        & c \neq 0 \land \summable(f, A) \implies \infsum x \in A @ f(x)/c = \left(\infsum x \in A @ f(x)\right)/c  \tag*{(division by constant)} \label{thm:summation_cdiv} \\ 
        & \summable(f, A) \implies \summable\left(\lambda x @ f(x)*c, A\right) \tag*{(multiplication of constant summable)} \label{thm:summation_cmult_right_summable} \\ 
        & \summable(f, A) \implies \infsum x \in A @ f(x)*c = \left(\infsum x \in A @ f(x)\right)*c \tag*{(multiplication of constant)} \label{thm:summation_cmult_right} \\ 
        & \summable(f, A) \implies \summable\left(\lambda x @ c*f(x), A\right) \tag*{(multiplication of constant summable)} \label{thm:summation_cmult_left_summable} \\ 
        & \summable(f, A) \implies \infsum x \in A @ c*f(x) = c*\left(\infsum x \in A @ f(x)\right) \tag*{(multiplication of constant)} \label{thm:summation_cmult_left} \\ 
        & \summable(f, A) \land \summable(g, A) \implies \summable\left(\lambda x @ f(x)+g(x), A\right) \tag*{(addition summable)} \label{thm:summation_add_summable} \\ 
        & \summable(f, A) \land \summable(g, A) \implies \infsum x \in A @ f(x)+g(x) = \infsum x \in A @ f(x) + \infsum x \in A @ g(x) \tag*{(addition)} \label{thm:summation_add} \\ 
        & \summable(f, A) \land \summable(g, A) \implies \infsum x \in A @ f(x)-g(x) = \infsum x \in A @ f(x) - \infsum x \in A @ g(x) \tag*{(subtraction)} \label{thm:summation_minus} 
    \end{align*}
\end{thm}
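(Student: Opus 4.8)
The plan is to derive all nine laws from a small number of core facts about the net-based definition of infinite sums, using the algebraic structure of $\real$ to collapse the list. First I would observe that division by $c$ is multiplication by the reciprocal whenever $c \neq 0$, so the two division laws (\ref{thm:summation_cdiv_summable}) and (\ref{thm:summation_cdiv}) are instances of the right-multiplication laws (\ref{thm:summation_cmult_right_summable}) and (\ref{thm:summation_cmult_right}) with constant $1/c$; and since multiplication on $\real$ is commutative, the left-multiplication laws (\ref{thm:summation_cmult_left_summable}) and (\ref{thm:summation_cmult_left}) coincide with the right ones. Finally, subtraction is addition composed with negation, $f - g = f + (-1) * g$, so (\ref{thm:summation_minus}) follows from the multiplication law with $c = -1$ together with the addition laws (\ref{thm:summation_add_summable}) and (\ref{thm:summation_add}). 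This leaves exactly two genuine obligations: multiplication by a constant, and pointwise addition.

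For multiplication by a constant $c$, the key is that for every finite $B \subseteq A$ the finite distributive law gives $\sum_{x \in B} (f(x) * c) = \left(\sum_{x \in B} f(x)\right) * c$; that is, the net of finite partial sums of $\lambda x @ f(x)*c$ is the image under the continuous map $(\varg) * c : \real \fun \real$ of the net of finite partial sums of $f$. A continuous unary map preserves filter limits, so if the finite partial sums of $f$ converge to $v = \infsum x \in A @ f(x)$ along the ``sufficiently large finite subsets'' filter, then the finite partial sums of $\lambda x @ f(x)*c$ converge to $v * c$ along the same filter. Existence of that limit is exactly $\summable(\lambda x @ f(x)*c, A)$, proving (\ref{thm:summation_cmult_right_summable}), and uniqueness of limits in the Hausdorff space $\real$ pins the value to $\left(\infsum x \in A @ f(x)\right) * c$, proving (\ref{thm:summation_cmult_right}). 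For pointwise addition I would argue analogously: $\sum_{x \in B}(f(x)+g(x)) = \sum_{x \in B} f(x) + \sum_{x \in B} g(x)$ for finite $B$, so the partial-sum net of $\lambda x @ f(x)+g(x)$ is the componentwise sum of the partial-sum nets of $f$ and $g$ over the common directed index set of finite subsets of $A$; continuity of $+ : \real \times \real \fun \real$ then carries the two convergent nets to their sum, giving both summability (\ref{thm:summation_add_summable}) and the value (\ref{thm:summation_add}). Mechanically these are instantiations of the Isabelle/HOL \texttt{Infinite\_Sum} library lemmas (\texttt{summable\_on\_cmult\_right}, \texttt{infsum\_cmult\_right}, \texttt{summable\_on\_add}, \texttt{infsum\_add}) and their routine consequences.

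The main obstacle is conceptual rather than computational: the infinite sum here is a limit along the \emph{filter} of sufficiently large finite subsets of $A$ ordered by inclusion (a net), not a limit of a sequence in the sense of Definition~\ref{def:tendsto}, so I cannot reuse sequence-level reasoning directly and must instead rely on the generic topological facts that continuous maps commute with filter limits and that the componentwise sum of two nets converging along the same filter converges to the sum of the limits. Some care is also needed about degenerate constants: the multiplication laws hold for every $c$, including $c = 0$ where everything collapses to the zero function, and only the two division laws require $c \neq 0$ so that $1/c$ is well defined — which is precisely the stated hypothesis. No absolute-convergence or integrability side conditions are needed, since summability in this unconditional (net) sense is already closed under the linear operations involved.
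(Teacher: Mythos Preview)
Your proposal is correct, and in fact the paper gives no proof of this theorem at all: it is stated as a list of standard summation laws taken from Isabelle/HOL's \texttt{Infinite\_Sum} theory, with the surrounding text simply noting that ``the definitions of summable and finite sums can be found in Isabelle/HOL.'' Your identification of the relevant library lemmas (\texttt{infsum\_cmult\_right}, \texttt{infsum\_add}, etc.) is exactly what the paper relies on implicitly, and the mathematical justification you sketch via continuity of linear maps and preservation of filter limits is the standard argument underlying those library facts.
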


\section{Unit real interval (complete lattice)}
\label{sec:ureal}
Our probabilistic programs are real-valued functions over state space, and specifically, they are the functions from state space to real numbers between 0 and 1 inclusive, or the \emph{unit real interval} ($\ureal$). We call them $\ureal$-valued functions, denoted as $S \fun \ureal$. To deal with the semantics of probabilistic loops in Section~\ref{sec:rec} using the Knaster–Tarski and Kleene fixed-point theorems~\cite{Tarski1955}, we define a complete lattice containing a set of these functions together with a pointwise comparison relation $\leq$.

Section~\ref{ssec:ureal_def} defines $\ureal$ and constructs a complete lattice containing the set $\ureal$ with relation $\leq$. Then we define $\ureal$-valued functions and the pointwise comparison relation in Section~\ref{ssec:ureal_functions}. With these definitions, we construct the required complete lattice for characterising probabilistic loops. 

\subsection{Definition of \texorpdfstring{$\ureal$}{ureal}}
\label{ssec:ureal_def}
The $\ureal$ is defined below as a set of real numbers between $0$ and $1$. 
\begin{definition}[Unit real interval]
        $\ureal \defs \{0 \upto 1\}$
    $ $\isalink{https://github.com/RandallYe/probabilistic_programming_utp/blob/6a4419b8674b84988065a58696f15093d176594c/probability/probabilistic_relations/utp_prob_rel_lattice.thy\#L17}
\end{definition}

We also define two functions to get the smaller and larger value of two comparable numbers, such as real numbers and $ureal$ numbers. 
\begin{definition}[Maximum and minimum of real numbers]
    \label{def:min_max}
   \begin{align*}
       & \umax\left(x, y\right) \defs \left(\IF x \leq y \THEN y \ELSE x\right) \qquad 
       \umin\left(x, y\right) \defs \left(\IF x \leq y \THEN x \ELSE y\right)
   \end{align*}
\end{definition}

The two functions $\umin$ and $\umax$ entitle us to define conversions between real numbers and $\ureal$ numbers.
\begin{definition}[Conversion between $\ureal$ and $\real$]
    \label{def:u2r_r2u}
    We define functions $\urealreal$ (\isaref{https://github.com/RandallYe/probabilistic_programming_utp/blob/6a4419b8674b84988065a58696f15093d176594c/probability/probabilistic_relations/utp_prob_rel_lattice.thy\#L30}) and $\realureal$ (\isaref{https://github.com/RandallYe/probabilistic_programming_utp/blob/6a4419b8674b84988065a58696f15093d176594c/probability/probabilistic_relations/utp_prob_rel_lattice.thy\#L27}) (notations $\ur{x}$ and $\ru{y}$) to convert $x$ of $\ureal$ to $\real$, and $y$ of $\real$ to $\ureal$.
    \begin{align*}
        & \ur{x} \defs \left(x :: \real\right) \qquad
        \ru{y} \defs \umin\left(\umax\left(0, y\right), 1\right)
    \end{align*}
\end{definition}
The conversion of a $\ureal$ number $x$ to a real number, using the function $\urealreal$, is simply a type cast from $\ureal$ to $\real$. However, the conversion of a real number $y$ to $\ureal$ by the function $\realureal$ needs to deal with the cases when $y$ is out of the unit interval. We use $\umin$ and $\umax$ to bound it to 0 or 1 in these cases and keep its value if $y$ is between 0 and 1. {Based on the conversions, we define the comparison functions over $\ureal$.

\begin{definition}[Comparison functions of $\ureal$]
    Provided both $x$ and $y$ are of type $\ureal$.
\isalink{https://github.com/RandallYe/probabilistic_programming_utp/blob/6a4419b8674b84988065a58696f15093d176594c/probability/probabilistic_relations/utp_prob_rel_lattice.thy\#L46}
    \label{def:ureal_funcs}
    \begin{align*}
        & x = y \defs \ur{x} = \ur{y}\qquad  
        x < y \defs \ur{x} < \ur{y}\qquad  
        x \leq y \defs \ur{x} \leq \ur{y} \\
       & \umax\left(x, y\right) \defs \left(\IF x \leq y \THEN y \ELSE x\right) \qquad 
       \umin\left(x, y\right) \defs \left(\IF x \leq y \THEN x \ELSE y\right)
    \end{align*}
\end{definition}
From these comparisons, we show both conversion functions are monotonic.
}

\begin{lem}
    The function $\urealreal$ is strictly monotonic (\isaref{https://github.com/RandallYe/probabilistic_programming_utp/blob/6a4419b8674b84988065a58696f15093d176594c/probability/probabilistic_relations/utp_prob_rel_lattice_laws.thy\#L42}). That is, if $x < y$, then $\ur{x} < \ur{y}$. The function $\realureal$ is monotonic (\isaref{https://github.com/RandallYe/probabilistic_programming_utp/blob/6a4419b8674b84988065a58696f15093d176594c/probability/probabilistic_relations/utp_prob_rel_lattice_laws.thy\#L49}), but not strictly. 
    For example, $\ru{2} = \ru{3}$ (both equal to $\uone$) though $2 < 3$.
\end{lem}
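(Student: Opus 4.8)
The plan is to unfold the definitions of the two coercions (Definition~\ref{def:u2r_r2u}) and of the $\ureal$ comparisons (Definition~\ref{def:ureal_funcs}), reducing both claims to elementary monotonicity of $\umax$ and $\umin$ on $\real$ (Definition~\ref{def:min_max}). For $\urealreal$ this is immediate: by Definition~\ref{def:ureal_funcs} the strict order on $\ureal$ is \emph{defined} by $x < y \defs \ur{x} < \ur{y}$, and since $\ur{x}$ is just the subtype cast $\left(x :: \real\right)$, the implication $x < y \implies \ur{x} < \ur{y}$ is nothing more than an unfolding step. In the mechanisation this amounts to a single transfer/unfolding establishing that the order instance on $\ureal$ coincides with the order on $\real$ along the coercion.

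For $\realureal$, fix $y_1, y_2 : \real$ with $y_1 \leq y_2$. We must show $\ru{y_1} \leq \ru{y_2}$ in $\ureal$, which by Definition~\ref{def:ureal_funcs} is $\ur{\ru{y_1}} \leq \ur{\ru{y_2}}$ in $\real$, i.e.\ $\ru{y_1} \leq \ru{y_2}$ as plain reals since $\urealreal$ is the identity cast. Now $\ru{y} = \umin\left(\umax\left(0, y\right), 1\right)$ is the clamp of $y$ to $[0,1]$, and by Definition~\ref{def:min_max} each of the maps $y \mapsto \umax\left(0, y\right)$ and $z \mapsto \umin\left(z, 1\right)$ is monotone on $\real$: each is a two-branch conditional whose branches (a constant and the identity) are monotone and agree at the switch-over point. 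Composing these two monotone maps yields $\realureal$, so $y_1 \leq y_2$ gives $\umax\left(0, y_1\right) \leq \umax\left(0, y_2\right)$, hence $\umin\left(\umax\left(0, y_1\right), 1\right) \leq \umin\left(\umax\left(0, y_2\right), 1\right)$, which is the claim. Failure of \emph{strictness} is witnessed by $2 < 3$ with $\ru{2} = \umin\left(\umax\left(0, 2\right), 1\right) = \umin\left(2, 1\right) = 1 = \ru{3}$, both equal to $\uone$.

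This proof has no substantial obstacle; the only point that needs care is the interface between the order-class instance declared for the subtype $\ureal$ and the ambient order on $\real$. One should make sure the $\ureal$ comparisons occurring in the statement are precisely those of Definition~\ref{def:ureal_funcs}, so that both claims genuinely collapse to the $\real$-level facts above, rather than to some separately introduced order whose coherence with the coercion would itself require a preliminary argument.
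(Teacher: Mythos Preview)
Your proposal is correct and is exactly the natural unfolding argument: the paper defines the $\ureal$ order via $\urealreal$ (Definition~\ref{def:ureal_funcs}), so strict monotonicity of $\urealreal$ is definitional, and monotonicity of $\realureal$ reduces to monotonicity of the clamp $y \mapsto \umin(\umax(0,y),1)$ on $\real$. The paper does not spell out a textual proof for this lemma (it defers to the mechanisation), and your argument is the one the mechanised proof would track.
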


The function $\realureal$ is the inverse of $\urealreal$. 
\begin{lem}
    $\ru{\left(\ur{x}\right)} = x$ \isalink{https://github.com/RandallYe/probabilistic_programming_utp/blob/6a4419b8674b84988065a58696f15093d176594c/probability/probabilistic_relations/utp_prob_rel_lattice_laws.thy\#L330}
\end{lem}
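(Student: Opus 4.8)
The plan is to unfold the two conversion functions and reduce the claim to an elementary fact about real numbers, exploiting that $x \in \ureal = \{0 \upto 1\}$ forces $0 \le \ur{x} \le 1$. By Definition~\ref{def:ureal_funcs}, equality on $\ureal$ is equality of the underlying reals, so $\ru{\ur{x}} = x$ holds exactly when $\ur{\ru{\ur{x}}} = \ur{x}$ holds in $\real$. Since $\urealreal$ is, by Definition~\ref{def:u2r_r2u}, merely the coercion from $\ureal$ into $\real$, and $x$ inhabits $\ureal$, we have the interval bounds $0 \le \ur{x} \le 1$. Thus it suffices to prove $\ur{\ru{r}} = r$ for an arbitrary real $r$ with $0 \le r \le 1$.

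For this, I would unfold $\ru{r} = \umin\!\left(\umax(0, r), 1\right)$, where $\umax$ and $\umin$ are the real-number operators of Definition~\ref{def:min_max}. A two-step case split on the guarding conditionals does the rest: because $0 \le r$, the guard of $\umax(0, r)$ is true, so $\umax(0, r) = r$; then, because $r \le 1$, the guard of $\umin(r, 1)$ is true, so $\umin(r, 1) = r$. Hence $\ru{r}$ is the $\ureal$ value whose carrier real is $r$, and applying $\urealreal$ returns $r$; rewrapping via Definition~\ref{def:ureal_funcs} gives $\ru{\ur{x}} = x$. No induction or estimate is needed — the argument is purely a case analysis driven by the interval bounds.

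The only genuine subtlety — and the one the mechanised development actually has to discharge — is the bookkeeping between the type $\ureal$ and its real carrier: one must record that $\realureal$ indeed lands in $\ureal$, precisely because $\umin(\umax(0,\cdot),1)$ always yields a value in $[0,1]$, and that $\urealreal$ is a left inverse on that image. Mathematically this is immediate from the case split above; mechanically it is handled by the \texttt{typedef} abstraction/representation lemmas for $\ureal$, which is why the statement and its companion monotonicity lemma are grouped together in the formalisation.
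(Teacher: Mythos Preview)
Your proposal is correct and is essentially the only natural approach: unfold the definitions, use the interval bounds $0 \le \ur{x} \le 1$ guaranteed by the type $\ureal$, and observe that the clipping operators $\umax(0,\cdot)$ and $\umin(\cdot,1)$ act as the identity on $[0,1]$. The paper does not spell out a proof in the text but delegates to the Isabelle mechanisation, where the argument is exactly this unfolding plus the \texttt{typedef} representation/abstraction lemmas you mention.
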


The function $\urealreal$ is the inverse of $\realureal$ only if the real number to be converted is between 0 and 1.
\begin{lem}
    $\left(x \geq 0 \land x \leq 1\right) \implies \ur{\left(\ru{x}\right)} = x$ 
    \isalink{https://github.com/RandallYe/probabilistic_programming_utp/blob/6a4419b8674b84988065a58696f15093d176594c/probability/probabilistic_relations/utp_prob_rel_lattice_laws.thy\#L324}
\end{lem}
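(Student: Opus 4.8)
The plan is to unfold the definition of $\realureal$ and observe that, under the hypothesis $0 \leq x \leq 1$, both clamping operations act as the identity, so that $\ru{x}$ is precisely the element of $\ureal$ whose underlying real value is $x$; the roundtrip $\ur{(\ru{x})} = x$ then reduces to the fact that $\urealreal$ recovers the underlying real of such an element.

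Concretely, I would proceed in three steps. First, expand $\ru{x} = \umin(\umax(0, x), 1)$ using Definition~\ref{def:u2r_r2u}. Second, discharge the clamping: by Definition~\ref{def:min_max} and $x \geq 0$ we have $\umax(0, x) = (\IF 0 \leq x \THEN x \ELSE 0) = x$, and then by $x \leq 1$ we have $\umin(x, 1) = (\IF x \leq 1 \THEN x \ELSE 1) = x$. Hence $\ru{x}$ is the injection of $x$ into $\ureal$, which is well defined exactly because $x \in \{0 \upto 1\}$ --- i.e. because of the two bounds in the hypothesis. Third, apply the characterisation of $\urealreal$ as the representation map of the carrier $\ureal = \{0 \upto 1\}$: for any real $y \in \{0 \upto 1\}$, converting the corresponding $\ureal$ element back with $\urealreal$ yields $y$. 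Instantiating $y := x$ and combining with the previous step gives $\ur{(\ru{x})} = x$.

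The only real obstacle is bookkeeping around the subtype representation: one must line up the definition of $\realureal$ (phrased via the $\umin$/$\umax$ clamp) with the injection into $\ureal$ and then discharge the side condition $x \in \{0 \upto 1\}$ of the inverse law using $x \geq 0$ and $x \leq 1$. Once the clamp has been shown to be the identity, the remainder is mechanical --- in Isabelle essentially one invocation of the simplifier armed with Definitions~\ref{def:u2r_r2u} and~\ref{def:min_max} together with the subtype inverse lemmas. Note that the preceding lemma $\ru{(\ur{x})} = x$ is not directly reusable, since it travels in the opposite direction ($\ureal \to \real \to \ureal$), but it relies on the same subtype facts and so serves as a useful template.
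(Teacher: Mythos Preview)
Your proposal is correct. The paper does not provide an explicit written proof for this lemma (it is stated and accompanied only by a link to the Isabelle mechanisation), but your argument---unfolding Definition~\ref{def:u2r_r2u}, discharging the $\umax$/$\umin$ clamps via the hypotheses $x\geq 0$ and $x\leq 1$, and then appealing to the subtype representation/abstraction inverse---is exactly the natural route and is what the mechanised proof amounts to.
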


Infimum and supremum of $\ureal$ are defined using Hilbert's $\varepsilon$ operator, {an indefinite description, written $\varepsilon~x \bullet P(x)$ denoting some x such that P(x) is true.} {We note that $\varepsilon$ used below denotes the Hilbert's operator and a real number elsewhere in the paper.}
\begin{definition}[Infimum and supremum of $\ureal$]
    \begin{align*}
       & \thninf A \defs \left(\varepsilon~x \bullet \left(\forall y \in S @ x \leq y \right) \land \left(\forall z \bullet \left(\forall y\in A \bullet z \leq y\right) \implies z \leq x \right)\right)  \\
       & \thnsup A \defs \left(\varepsilon~x \bullet \left(\forall y \in S @ y \leq x \right) \land \left(\forall z \bullet \left(\forall y\in A \bullet y \leq z\right) \implies x \leq z \right)\right) 
    \end{align*}
\end{definition}
The infimum satisfies Laws~\ref{law:Inf_lower} and \ref{law:Inf_greatest}, and the supremum satisfies Laws~\ref{law:Sup_upper} and \ref{law:Sup_least}.

A complete lattice is now formed using these definitions.
\begin{thm}
    The poset $\left(\ureal, \leq\right)$ with the least element $\uzero$ $\left(\ru{0}\right)$, the greatest element $\uone$ $\left(\ru{1}\right)$, $\tinf$ $\left(\umin\right)$, $\tsup$ $\left(\umax\right)$, $\thninf$, and $\thnsup$ forms a complete lattice $\left(\ureal, \leq, <, \uzero, \uone, \tinf, \tsup, \thninf, \thnsup \right)$.
    $ $\isalink{https://github.com/RandallYe/probabilistic_programming_utp/blob/6a4419b8674b84988065a58696f15093d176594c/probability/probabilistic_relations/utp_prob_rel_lattice.thy\#L51}
\end{thm}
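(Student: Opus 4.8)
The plan is to discharge all the complete-lattice obligations by transferring everything through the conversion functions $\urealreal$ and $\realureal$ to the real numbers, exploiting that $\ureal$ is by definition the closed real interval $\{0 \upto 1\}$ and that, by Definition~\ref{def:ureal_funcs}, the order and the binary meet and join on $\ureal$ are literally the real ones restricted to this interval. First I would establish that $\left(\ureal, \leq\right)$ is a partial order: since $x \leq y$ unfolds to $\ur{x} \leq \ur{y}$, reflexivity (Law~\ref{law:reflexive}) and transitivity (Law~\ref{law:transitive}) are immediate from the reals, and antisymmetry (Law~\ref{law:antisym}) follows from real antisymmetry together with the injectivity of $\urealreal$, which is itself a consequence of $\ru{\left(\ur{x}\right)} = x$. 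The extreme elements are then easy: for any $x : \ureal$ we have $0 \le \ur{x} \le 1$ because $\ur{x} \in \{0 \upto 1\}$, so $\uzero \le x \le \uone$; combined with $\ur{\left(\ru{0}\right)} = 0$ and $\ur{\left(\ru{1}\right)} = 1$ this identifies $\uzero$ as the bottom and $\uone$ as the top.

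Next I would dispatch the binary-lattice laws \ref{law:inf_le1}--\ref{law:sup_iff} for $\umin$ and $\umax$. Because $\umin(x,y)$ and $\umax(x,y)$ are defined by the same case split on $x \le y$ as on $\real$, we get $\ur{\umin(x,y)} = \min(\ur{x},\ur{y})$ and $\ur{\umax(x,y)} = \max(\ur{x},\ur{y})$, so each of these laws unfolds to a trivial statement about $\min$ and $\max$ of two reals, settled by a short case analysis.

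The main work — and the step I expect to be the real obstacle — is showing that $\thninf A$ and $\thnsup A$ actually denote elements of $\ureal$ satisfying Laws~\ref{law:Inf_lower}, \ref{law:Inf_greatest}, \ref{law:Sup_upper}, \ref{law:Sup_least} (and the empty-set laws \ref{law:Inf_empty}, \ref{law:Sup_empty}), since both are specified via Hilbert's $\varepsilon$ and hence only behave correctly once the described greatest lower bound / least upper bound is shown to exist in $\ureal$. For a non-empty $A \subseteq \ureal$ I would invoke Dedekind completeness of $\real$: regarded as a subset of $\real$, $A$ is non-empty and bounded below by $0$ and above by $1$, so $\inf_\real A$ and $\sup_\real A$ exist, and monotonicity of bounds gives $0 \le \inf_\real A \le \sup_\real A \le 1$; hence $\ru{\left(\inf_\real A\right)}$ and $\ru{\left(\sup_\real A\right)}$ lie in $\ureal$ and, using $\ur{\left(\ru{x}\right)} = x$ for $x \in [0,1]$, correspond to the real infimum and supremum. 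Since the $\ureal$ order is the restricted real order, these are exactly the $\le$-greatest lower bound and $\le$-least upper bound of $A$ in $\ureal$, so the $\varepsilon$-description is satisfiable and the four bound laws hold. For $A = \emptyset$ every element is vacuously a lower bound (resp. upper bound), so the greatest lower bound is $\uone$ and the least upper bound is $\uzero$, giving \ref{law:Inf_empty} and \ref{law:Sup_empty}. Assembling the partial-order facts, the bottom and top, the binary meet/join laws, and the arbitrary infima and suprema then yields the complete lattice $\left(\ureal, \leq, <, \uzero, \uone, \tinf, \tsup, \thninf, \thnsup\right)$; in the mechanisation this reduces to instantiating Isabelle/HOL's \emph{complete\_lattice} type class for $\ureal$, where the same case split on emptiness and the transfer to $\{0 \upto 1\} \subseteq \real$ carry each class axiom.
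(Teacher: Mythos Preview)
Your proposal is correct and matches the approach the paper takes: the theorem is stated without a written proof in the text and is discharged in the Isabelle/HOL mechanisation by instantiating the \emph{complete\_lattice} type class for $\ureal$, which amounts exactly to the transfer-to-$[0,1]\subseteq\real$ argument you outline (partial order via Definition~\ref{def:ureal_funcs}, binary meet/join via $\umin$/$\umax$, and arbitrary infima/suprema via the $\varepsilon$-specified operators, discharged using completeness of the reals on the bounded interval). Your identification of the $\varepsilon$-definitions of $\thninf$ and $\thnsup$ as the only step requiring real work---namely exhibiting the witness via real completeness so that the Hilbert choice is well-behaved---is exactly right.
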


This complete lattice is illustrated in the left diagram of Fig.~\ref{fig:ureal_complete_lattices}. Indeed, it is a totally ordered set.
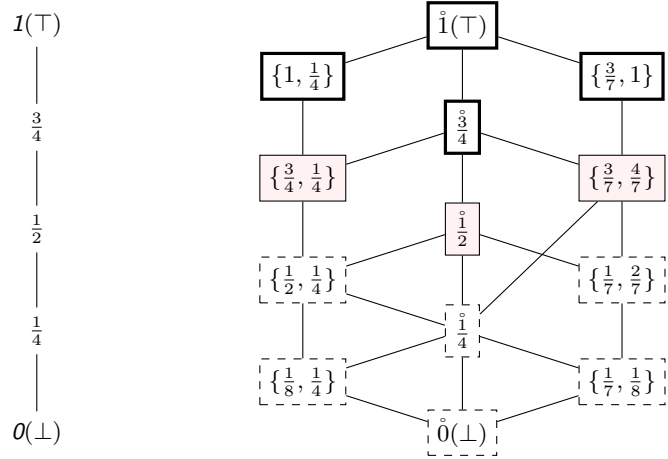
\begin{figure}[!ht]
    \begin{center}
\newcommand\SC{0.75}
\newcommand\SCC{0.75}
\begin{tikzpicture}[x=1.4cm,y=1.8cm,
        roundnode/.style={circle, draw=green!60, fill=green!5, very thick, minimum size=7mm},
        distnode/.style={rectangle, draw=black, fill=red!5},
        supdistnode/.style={rectangle, draw=black, very thick},
        subdistnode/.style={rectangle, draw=black, dashed},
    ]
\node at (0,0)    (UB)  {\small$\uzero(\bot)$};
\node at (0,1*\SC)    (U1)  {\small$\frac{1}{4}$};
\node at (0,2*\SC)   (U2)  {\small$\frac{1}{2}$};
\node at (0,3*\SC)   (U3)  {\small$\frac{3}{4}$};
\node at (0,4*\SC)   (UT)  {\small$\uone(\top)$};
\draw (UB) -- (U1) -- (U2) -- (U3) -- (UT);

\node[subdistnode] at (4,0)    (F0)  {\small$\ufzero(\bot)$};
\node[subdistnode] at (4,1*\SCC)    (F14)  {\small$\mathring{\frac{1}{4}}$};
\node[distnode] at (4,2*\SCC)    (F12)  {\small$\mathring{\frac{1}{2}}$};
\node[supdistnode] at (4,3*\SCC)    (F34)  {\small$\mathring{\frac{3}{4}}$};
\node[supdistnode] at (4,4*\SCC)    (F1)  {\small$\ufone(\top)$};

\node[subdistnode] at (2.5,0.5*\SCC)    (F1814)  {\small$\{ {\frac{1}{8}}, {\frac{1}{4}}\}$};
\node[subdistnode] at (2.5,1.5*\SCC)    (F1214)  {\small$\{ {\frac{1}{2}}, {\frac{1}{4}}\}$};
\node[distnode] at (2.5,2.5*\SCC)    (F3414)  {\small$\{ {\frac{3}{4}}, {\frac{1}{4}}\}$};
\node[supdistnode] at (2.5,3.5*\SCC)    (F0114)  {\small$\{ 1, {\frac{1}{4}}\}$};

\node[subdistnode] at (5.5,0.5*\SCC)    (F1718)  {\small$\{ {\frac{1}{7}}, {\frac{1}{8}}\}$};
\node[subdistnode] at (5.5,1.5*\SCC)    (F1727)  {\small$\{ {\frac{1}{7}}, {\frac{2}{7}}\}$};
\node[distnode] at (5.5,2.5*\SCC)    (F3747)  {\small$\{ {\frac{3}{7}}, {\frac{4}{7}}\}$};
\node[supdistnode] at (5.5,3.5*\SCC)    (F3701)  {\small$\{ {\frac{3}{7}}, 1\}$};

\draw (F0) -- (F14) -- (F12) -- (F34) -- (F1);
\draw (F0) -- (F1814) -- (F14) -- (F1214) -- (F3414) -- (F0114) -- (F1);
\draw (F1214) -- (F12);
\draw (F1814) -- (F1214);
\draw (F3414) -- (F34);
\draw (F0) -- (F1718) -- (F1727) -- (F3747) -- (F3701) -- (F1);
\draw (F1718) -- (F14);
\draw (F1727) -- (F12);
\draw (F3747) -- (F34);
\draw (F14) -- (F3747);
\end{tikzpicture}
    \end{center}
    \caption{Complete lattices: left $\left(\ureal, \leq\right)$ and right $\left(S \fun \ureal, \leq\right)$. {We use $\left\{{\frac{3}{4}}, {\frac{1}{4}}\right\}$ denotes a function $\left\{s_1 \mapsto {\frac{3}{4}}, s_2 \mapsto {\frac{1}{4}}\right\}$ whose domain contains two elements $s_1$ and $s_2$ and their corresponding probabilities are ${\frac{3}{4}}$ and ${\frac{1}{4}}$ respectively. {\small$\mathring{\frac{1}{4}}$} denotes a constant function which maps every element in its domain to ${\frac{1}{4}}$.} Dashed box: subdistributions; normal: distributions; thick: superdistributions.}
    \label{fig:ureal_complete_lattices}
\end{figure}

The addition, real numbers' subtraction, and multiplication operators are lifted for $\ureal$. 
\begin{definition}[Bounded plus and minus]
    \label{def:bounded_plus_minus}
    Provided both $x$ and $y$ are of type $\ureal$.
    $ $\isalink{https://github.com/RandallYe/probabilistic_programming_utp/blob/6a4419b8674b84988065a58696f15093d176594c/probability/probabilistic_relations/utp_prob_rel_lattice.thy\#L95}
    \begin{align*}
        & x + y \defs \ru{\left(\umin\left(1, \ur{x} + \ur{y}\right)\right)} \qquad 
        x - y \defs \ru{\left(\umax\left(0, \ur{x} - \ur{y}\right)\right)} \qquad
        x * y \defs \ru{\left(\ur{x} * \ur{y}\right)} 
    \end{align*}
\end{definition}
The addition $+$ and subtraction $-$ are bounded to $\ureal$ by using $\umin$ and $\umax$. For example, $0.5 + 0.7 = 1$ and $0.5 - 0.7 = 0$.

\subsection{The \texorpdfstring{$\ureal$}{ureal}-valued functions}
\label{ssec:ureal_functions}
We now consider $\ureal$-valued functions and define several constant functions for real-valued and $\ureal$-valued. 
\begin{definition}[Real- and $\ureal$-valued constant functions]
    \label{def:urf_const}
    $ $ \isalink{https://github.com/RandallYe/probabilistic_programming_utp/blob/6a4419b8674b84988065a58696f15093d176594c/probability/probabilistic_relations/utp_prob_rel_lattice.thy\#L207}
    \begin{align*}
        & \rfzero \defs \lambda s @ (0{::\real})  \qquad 
         \rfone \defs \lambda s @ (1::\real) \qquad  
         \ufzero \defs \lambda s @ (0::\ureal)\qquad  
         \ufone \defs \lambda s @ (1::\ureal)
    \end{align*}
\end{definition}
The $\rfzero$ and $\rfone$ are real-valued constant functions, and $\ufzero$ and $\ufone$ are $\ureal$-valued constant functions.

The addition and subtraction operators are also lifted to functions in a pointwise manner, {and relations $\leq$ and $<$ are also lifted to functions.}
\begin{definition}[Operators and relations on functions]
    \label{def:urf_pointwise}
    \begin{align*}
        & f - g \defs \left(\lambda x \bullet f(x) - g(x)\right) \qquad  
         f + g \defs \left(\lambda x \bullet f(x) + g(x)\right) \\
        & f \leq g \defs \left(\forall x \bullet f(x) \leq g(x)\right) \qquad 
         f < g \defs \left(\forall x \bullet f(x) < g(x)\right) 
    \end{align*}
\end{definition}

The complete lattice for $\ureal$-valued functions is formed.
\begin{thm}
    \label{thm:ureal_func_complete}
    The poset $\left(S \fun \ureal, \leq\right)$ with the least element $\ufzero$, the greatest element $\ufone$, the infimum and supremum in a pointwise manner forms a complete lattice $\left(S \fun \ureal, \leq, <, \ufzero, \ufone, \tinf, \tsup, \thninf, \thnsup\right)$. \isalink{https://isabelle.in.tum.de/library/HOL/HOL/Complete_Lattices.html}
\end{thm}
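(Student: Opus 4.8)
The plan is to reduce the claim to the standard order-theoretic fact that, for any set $S$ and any complete lattice $L$, the function space $S \fun L$ ordered pointwise is again a complete lattice, with all lattice operations computed pointwise. Since the preceding theorem establishes that $\left(\ureal, \leq, <, \uzero, \uone, \tinf, \tsup, \thninf, \thnsup\right)$ is a complete lattice, it suffices to instantiate $L \defs \ureal$. On the mechanisation side this amounts to registering $\ureal$ as an instance of the complete-lattice type class and then inheriting the existing instance for function types from the \texttt{Complete\_Lattices} library of Isabelle/HOL, which is exactly the artefact the accompanying hyperlink points to.

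To carry this out explicitly, I would first check that the pointwise relation $\leq$ of Definition~\ref{def:urf_pointwise} is a partial order on $S \fun \ureal$: reflexivity and transitivity follow by applying Laws~\ref{law:reflexive} and~\ref{law:transitive} for $\ureal$ at each point, and antisymmetry follows from Law~\ref{law:antisym} for $\ureal$ together with function extensionality. Next I would define the candidate infimum and supremum of an arbitrary set $A \subseteq S \fun \ureal$ pointwise, $\left(\thninf A\right)(x) \defs \thninf\{f(x) \mid f \in A\}$ and $\left(\thnsup A\right)(x) \defs \thnsup\{f(x) \mid f \in A\}$, which are well defined precisely because $\ureal$ is a complete lattice. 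That $\thnsup A$ is an upper bound of $A$ is Law~\ref{law:Sup_upper} applied at each $x$; that it is the least one follows from Law~\ref{law:Sup_least}, since any pointwise upper bound $h$ of $A$ satisfies, at each $x$, that $h(x)$ bounds $\{f(x) \mid f \in A\}$ above. The argument for $\thninf A$ is dual, using Laws~\ref{law:Inf_lower} and~\ref{law:Inf_greatest}. Finally, the least and greatest elements are $\ufzero$ and $\ufone$ of Definition~\ref{def:urf_const}, because $\uzero$ and $\uone$ are the extremes of $\ureal$ and the order is pointwise; equivalently they arise as $\thnsup\{\}$ and $\thninf\{\}$ via Laws~\ref{law:Sup_empty} and~\ref{law:Inf_empty}.

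I do not expect a genuine mathematical obstacle here: the content is entirely the transfer of completeness through a function-space construction. The only points requiring care are bookkeeping rather than insight — ensuring that the pointwise operations of Definition~\ref{def:urf_pointwise} really coincide with the lattice-theoretic $\tinf$, $\tsup$, $\thninf$, $\thnsup$ named in the conclusion (so that the stated tuple is the intended one), and, in Isabelle, that $\ureal$ has first been made an instance of the complete-lattice class so that the generic function-space instance fires automatically. Once that is in place the theorem is essentially immediate.
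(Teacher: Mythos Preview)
Your proposal is correct and matches the paper's approach exactly: the paper gives no explicit proof and simply links to Isabelle/HOL's \texttt{Complete\_Lattices} library, relying on the generic instance that function spaces into a complete lattice are again complete lattices pointwise once $\ureal$ is registered as a complete-lattice instance. Your explicit unwinding of the pointwise verification is a faithful elaboration of what that library instance does.
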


We illustrate an example of the complete lattice $\left(S \fun \ureal, \leq\right)$ in the right diagram of Fig.~\ref{fig:ureal_complete_lattices}. Here we consider $S$ containing two elements and use $\{\frac{1}{2}, \frac{1}{4}\}$ to denote probabilities over $S$: $\frac{1}{2}$ and $\frac{1}{4}$ respectively. Constant functions such as $\mathring{\frac{1}{2}}$ are on the central column. In the diagram, we only show a few functions where a dashed box, a normal box, or a thick box denotes a subdistribution, a distribution, or a superdistribution whose probabilities sum to less than or equal to 1, equal to 1, or larger than 1.

\section{Probabilistic programming}
\label{sec:program}
This section concerns our probabilistic programming language's syntax and denotational semantics. Probabilistic recursion is not considered here, and its syntax and semantics will be introduced in Sect.~\ref{sec:rec}. 

Before presenting the semantics, we define a notation of Iverson brackets in Sect.~\ref{ssec:prob_iverson} and introduce various expression types used in our language to constrain programs in Sect.~\ref{prob:type_abbr}. Our probabilistic programs are functions characterised as probabilistic distributions or subdistributions in Sect.~\ref{prob:dist_functs}. Our definition of Iverson brackets is real-valued functions, but probabilistic programs are $\ureal$-valued functions. We must convert between these functions to use Iverson brackets in our semantics. The conversion is defined in Sect.~\ref{prob:type_abbr}.

After the presentation of syntax and semantics, we show a collection of proved algebraic laws for each construct in Sects.~\ref{ssec:prog_top_bot} to \ref{ssec:prog_parallel_comp}. These laws are used in compositional reasoning to simplify probabilistic programs.

\subsection{Iverson brackets}
\label{ssec:prob_iverson}
Iverson brackets establish a correspondence between the predicate calculus and arithmetic, generalising the Kronecker delta.\footnote{Iverson brackets are a notation for the characteristic function on predicates. The convention was invented by Kenneth Eugene Iverson in 1962. Donald Knuth advocated using square brackets to avoid ambiguity in parenthesised logical expressions.}
\begin{definition}[Iverson bracket]
    The Iverson bracket of a predicate $P$ of type $[S]\upred$ defines a function $S \fun \real$, which gives a real number 0 or 1 if $P$ is false or true for a particular state $s$ (of type $S$). \isalink{https://github.com/RandallYe/probabilistic_programming_utp/blob/6a4419b8674b84988065a58696f15093d176594c/probability/probabilistic_relations/utp_iverson_bracket.thy\#L45}
   \begin{align*}
       & \ibracket{P} \defs \usexpr{\IF P \THEN 1 \ELSE 0}
   \end{align*}
\end{definition}

Several laws follow immediately from this definition.
\begin{thm}
    \label{thm:ib}
\isalink{https://github.com/RandallYe/probabilistic_programming_utp/blob/6a4419b8674b84988065a58696f15093d176594c/probability/probabilistic_relations/utp_iverson_bracket.thy\#L63}
    \begin{align}
        \ibracket{\pfalse} & = \rfzero \label{thm:ib_false} \\
        \ibracket{\ptrue} & = \rfone \label{thm:ib_true} \\
        Q \refinedby P & \implies \ibracket{P} \leq \ibracket{Q} \label{thm:ib_monotone} \\ 
        \ibracket{\lnot P} & = \usexpr{1 - \ibracket{P}} \label{thm:ib_neg}\\
        \ibracket{P \land Q} & = \usexpr{\ibracket{P} * \ibracket{Q}} \label{thm:ib_conj}  \\
        \ibracket{P \lor Q} & = \usexpr{\ibracket{P} + \ibracket{Q} - \ibracket{P} * \ibracket{Q}} \label{thm:ib_disj}\\
        \ibracket{\lambda s@ s \in A \cap B} & = \usexpr{\ibracket{\lambda s@ s \in A} * \ibracket{\lambda s@ s \in B} } \label{thm:ib_inter}\\
        \usexpr{\ibracket{\lambda s@ s \in A} + \ibracket{\lambda s@ s \in B}} & = \usexpr{\ibracket{\lambda s@ s \in A \cap B} + \ibracket{\lambda s@ s \in A \cup B}} \label{thm:ib_plus}\\ 
        \usexpr{\umax\left(x, y\right)} & = \usexpr{x * \ibracket{x > y} + y * \ibracket{x \leq y}} \label{thm:ib_max} \\
        \usexpr{\umin\left(x, y\right)} & = \usexpr{x * \ibracket{x \leq y} + y * \ibracket{x > y}} \label{thm:ib_min} \\
        {\sum_{P(k)} f(k)} & = {\sum_{k} \usexpr{f*\ibracket{P}} (k)} \label{thm:ib_summation} 
    \end{align}
\end{thm}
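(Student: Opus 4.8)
The plan is to prove each of the eleven equations by unfolding the definition $\ibracket{P} = \usexpr{\IF P \THEN 1 \ELSE 0}$, using that equality of these lifted expressions is pointwise equality of functions, so that it suffices to fix an arbitrary state $s$ and then discharge the resulting arithmetic claim by a finite case split on the truth values of the constituent predicates. Equations (\ref{thm:ib_false}) and (\ref{thm:ib_true}) are immediate: the conditional evaluates to the constant $0$ or $1$, giving $\rfzero$ and $\rfone$. For (\ref{thm:ib_monotone}), $Q \refinedby P$ unfolds to $\forall s.\ P(s) \implies Q(s)$; fixing $s$ and splitting on whether $P(s)$ holds gives either $0 \le \ibracket{Q}(s)$ (trivial, since the right-hand side lies in $\{0,1\}$) or $1 \le 1$. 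For (\ref{thm:ib_neg}), (\ref{thm:ib_conj}) and (\ref{thm:ib_disj}), fixing $s$ and splitting on the at most two constituent predicates reduces each equality to an arithmetic identity on values $a,b \in \{0,1\}$: $1-a$ is Boolean complement, $a \cdot b$ is Boolean conjunction, and $a + b - a\cdot b$ is Boolean disjunction.

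The two set-membership laws then come essentially for free. Equation (\ref{thm:ib_inter}) is the instance of (\ref{thm:ib_conj}) obtained with the predicates $\lambda s.\ s \in A$ and $\lambda s.\ s \in B$, using $s \in A \cap B \iff s \in A \land s \in B$; and (\ref{thm:ib_plus}) follows by rearranging the inclusion--exclusion identity implied by (\ref{thm:ib_disj}) and (\ref{thm:ib_inter}) (with $s \in A \cup B \iff s \in A \lor s \in B$), or directly by the four-way split on $s \in A$ and $s \in B$. The $\umax$ and $\umin$ laws (\ref{thm:ib_max}) and (\ref{thm:ib_min}) unfold the definitions from Definition~\ref{def:min_max} and split on the single comparison $x \le y$ at $s$: because $x > y$ and $x \le y$ are complementary, exactly one Iverson bracket evaluates to $1$ and the other to $0$, so the right-hand side selects the operand prescribed by the conditional defining $\umax$ (resp.\ $\umin$). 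All of these are goals that the predicate and relational tactics of Isabelle/UTP dispatch after unfolding the lens representation and applying the relevant case splits.

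The one step needing slightly more care is the summation law (\ref{thm:ib_summation}). Here $\sum_{P(k)} f(k)$ denotes summation of $f$ over the index set $\{k \mid P(k)\}$, and the claim is that it equals the sum of $\usexpr{f * \ibracket{P}}$ over the full index set. Pointwise the summand $f(k) \cdot \ibracket{P}(k)$ equals $f(k)$ when $P(k)$ holds and $0$ otherwise, so the two families agree up to zero terms; for finite sums this is an immediate relabelling. For the infinite sums of Section~\ref{sec:prelim} one additionally observes that $\lambda k.\ f(k) \cdot \ibracket{P}(k)$ is $\summable$ on the universe exactly when $f$ is $\summable$ on $\{k \mid P(k)\}$ (restricting the index set cannot destroy summability, and padding with zeros cannot create divergence), so the equality holds unconditionally given that the infinite-sum operator returns $0$ on non-summable families. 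This reduces (\ref{thm:ib_summation}) to the standard ``sum over a restricted set versus sum against an indicator'' lemma in the Isabelle/HOL summation library. I expect this summation bookkeeping to be the only non-mechanical part of the proof; every other equation is a pointwise Boolean or arithmetic identity that follows directly from the definition of the Iverson bracket.
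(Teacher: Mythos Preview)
Your proposal is correct and matches the paper's treatment: the paper states that these laws ``follow immediately from this definition'' and defers the details to the Isabelle mechanisation, which amounts to precisely the pointwise unfolding and Boolean case analysis you describe. Your extra care on~(\ref{thm:ib_summation}) is appropriate, though the paper does not spell this out either.
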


Laws~(\ref{thm:ib_false}) and (\ref{thm:ib_true}) show the arithmetic representations of {UTP predicates $\pfalse$ and $\ptrue$ of type $[S]\upred$} are simply constant functions $\rfzero$ and $\rfone$. Iverson brackets are monotone, as shown in Law~(\ref{thm:ib_monotone}). 
Laws~(\ref{thm:ib_neg}) to (\ref{thm:ib_plus}) establish direct correspondence between arithmetic, logic, and set operations. Laws~(\ref{thm:ib_max}) and (\ref{thm:ib_min}) show the maximum and minimum operations that can be implemented using the Iverson bracket. Law~(\ref{thm:ib_summation}) shows summation over a subset of indices characterised by $P(k)$ can be expressed as a summation over whole indices with the summation function $f(x)$ multiplied by the Iverson bracket of $P$. According to Donald E. Knuth~\cite{Knuth1992}, it is not easy to make a mistake when dealing with summation indices by using the notation of the right side of the law. 
We omit other properties of Iverson brackets here for simplicity.

\subsection{Type Abbreviations}
\label{prob:type_abbr}
We define several type abbreviations for real-valued and $\ureal$-valued functions used to type constructs in our language.
 \isalink{https://github.com/RandallYe/probabilistic_programming_utp/blob/6a4419b8674b84988065a58696f15093d176594c/probability/probabilistic_relations/utp_prob_rel_lattice.thy\#L175}
\begin{align*}
    [S] \rexpr & = [\real, S] \uexpr \tag*{(Real-valued expression)} \label{typeabb:rexpr}\\
    [S_1, S_2] \rvfun & = [\real, S_1 \times S_2] \uexpr \tag*{(Relational real-valued expression)} \label{typeabb:rvfun}\\
    [S] \rvhfun & = [S, S] \rvfun \tag*{(Homogeneous relational real-valued expression)} \label{typeabb:rvhfun}\\
    [S] \urexpr & = [\ureal, S] \uexpr \tag*{($\ureal$-valued expression)} \label{typeabb:urexpr}\\
    [S_1, S_2] \prfun & = [\ureal, S_1 \times S_2] \urexpr \tag*{(Relational $\ureal$-valued expression)} \label{typeabb:prfun}\\
    [S] \prhfun & = [S, S] \prfun \tag*{(Homogeneous relational $\ureal$-valued expression)} \label{typeabb:prhfun}
\end{align*}

We define two functions $\prrvfun$ (\isaref{https://github.com/RandallYe/probabilistic_programming_utp/blob/6a4419b8674b84988065a58696f15093d176594c/probability/probabilistic_relations/utp_prob_rel_lattice.thy\#L184}) and $\rvprfun$ (\isaref{https://github.com/RandallYe/probabilistic_programming_utp/blob/6a4419b8674b84988065a58696f15093d176594c/probability/probabilistic_relations/utp_prob_rel_lattice.thy\#L181}) to convert $P$ of type $[S_1, S_2] \prfun$ to an expression of type $[S_1, S_2] \rvfun$, and $f$ of type $[S_1, S_2] \rvfun$ to an expression of type $[S_1, S_2] \prfun$.
\begin{definition}[Conversion of relational real-valued and $\ureal$-valued functions]
    \label{def:rvfun2prfun}
    \begin{align*}
        & \prrvfun(P) \defs \usexpr{\ur{P}} \tag*{(Probabilistic programs to real-valued functions)} \label{def:rvfun_of_prfun}\\
        & \rvprfun(f) \defs \usexpr{\ru{f}} \tag*{(Real-valued functions to probabilistic programs)} \label{def:prfun_of_rvfun}
    \end{align*}
\end{definition}
The notations $\ur{p}$ and $\ru{r}$ (Definition~\ref{def:u2r_r2u}) used ablove convert a $\ureal$ number to a real number and a real number to a $\ureal$ number, respectively. In this paper, we also use symbols $\prrvfunsym{P}$ and $\rvprfunsym{f}$ for $\prrvfun(P)$ and $\rvprfun(f)$, the conversions of functions.

\begin{rmk}
   For two reasons, we define two types $\prfun$ and $\rvfun$.
   First, infinite summation and limits are defined over topological space, and so over real numbers, which form a Banach space, but not over $\ureal$. We, therefore, need to convert probabilistic programs into real-valued functions to calculate summation and limits. After calculation, the results are converted back to probabilistic programs. Second, two functions in parallel composition or joint probability, introduced later in Sect.~\ref{ssec:syntax_semantics}, are not necessary to be probabilistic, and they can be more general real-valued functions.
\end{rmk}

\subsection{Distribution functions}
\label{prob:dist_functs}

A real-valued expression $p$ is \emph{nonnegative} if its range is real numbers larger than or equal to 0. 
\begin{definition}[Nonnegative] \label{def:nonneg} 
    \isalink{https://github.com/RandallYe/probabilistic_programming_utp/blob/6a4419b8674b84988065a58696f15093d176594c/probability/probabilistic_relations/utp_distribution.thy\#L15}
\begin{align*}
    & \isnonneg(p) \defs \tautology{p \geq 0}
\end{align*}
where $\tautology{p}$ is a tautology on predicate $p$ and is expanded to $\forall s @ \usexpr{p}(s)$.
\end{definition}

A real-valued expression $p$ is \emph{probabilistic} if its range is real numbers between 0 and 1 inclusive. This expression is characterised by a function $\isprob$ of type $[S]\rexpr \fun \bool$.

\begin{definition}[Probability expression] \label{def:isprob} 
    \isalink{https://github.com/RandallYe/probabilistic_programming_utp/blob/6a4419b8674b84988065a58696f15093d176594c/probability/probabilistic_relations/utp_distribution.thy\#L18}
\begin{align*}
    & \isprob(p) \defs \tautology{p \geq 0 \land p \leq 1}
\end{align*}
\end{definition}

\begin{thm}[Iverson bracket is probabilistic]
    \label{thm:isprob_ibracket}
    $\isprob(\ibracket{p})$ 
    \isalink{https://github.com/RandallYe/probabilistic_programming_utp/blob/6a4419b8674b84988065a58696f15093d176594c/probability/probabilistic_relations/utp_distribution.thy\#L76}
\end{thm}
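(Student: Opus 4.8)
The plan is to reduce the statement to a pointwise check and then perform a two-way case analysis on the truth value of the predicate. First I would unfold $\isprob$ using Definition~\ref{def:isprob} together with the expansion of $\tautology{\ }$ recalled in Definition~\ref{def:nonneg}. This turns the goal $\isprob(\ibracket{p})$ into the universally quantified arithmetic statement asserting that for every state $s$ we have $0 \leq \ibracket{p}(s)$ and $\ibracket{p}(s) \leq 1$.

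Next I would fix an arbitrary state $s$ and unfold the Iverson bracket via its definition, so that $\ibracket{p}(s)$ becomes $\IF p \THEN 1 \ELSE 0$ evaluated at $s$. A case split on whether $p$ holds at $s$ then leaves two trivial obligations: in the ``true'' branch the value is $1$, and we need $0 \leq 1 \land 1 \leq 1$; in the ``false'' branch the value is $0$, and we need $0 \leq 0 \land 0 \leq 1$. Both are immediate facts about the reals. Alternatively, one could sidestep the case split entirely by appealing to Laws~(\ref{thm:ib_true}) and~(\ref{thm:ib_false}): in the degenerate situations the bracket rewrites to the constant functions $\rfone$ and $\rfzero$, whose values $1$ and $0$ lie in the unit interval; but since $p$ here is an arbitrary predicate, the direct pointwise argument is cleaner.

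There is no genuine mathematical obstacle; the only mild care needed concerns the shallow embedding. The goal is phrased in the lifted expression syntax $\usexpr{\ }$, so the unfolding of $\ibracket{p}$ and the ensuing case split must be threaded through the expression-lifting machinery and the semantics of the lifted $\IF\ \THEN\ \ELSE$. In the mechanisation this bookkeeping is discharged automatically by the predicate/relational tactics of Isabelle/UTP: unfolding the definitions of $\isprob$ and $\ibracket{\ }$ followed by a simplification that settles the arithmetic. I therefore expect the proof to collapse to a one- or two-line invocation once the definitions are expanded.
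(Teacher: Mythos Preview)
Your proposal is correct and matches the paper's approach: the paper does not spell out a proof in the text but defers to the mechanisation, where the result is discharged exactly as you describe---unfolding the definitions of $\isprob$ and $\ibracket{\ }$ and letting the automation handle the trivial pointwise case analysis on $p(s)$.
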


A probabilistic function is called a \emph{distribution} function if the probabilities of all states sum to 1, which is characterised by a function $\isdist$ of type $[S]\rexpr \fun \bool$.
\begin{definition}[Probabilistic distributions] \label{def:isdist} 
    \isalink{https://github.com/RandallYe/probabilistic_programming_utp/blob/6a4419b8674b84988065a58696f15093d176594c/probability/probabilistic_relations/utp_distribution.thy\#L29}
    \begin{align*}
        & \isdist(p) \defs \isprob(p) \land \infsum s @ p(s) = 1
    \end{align*}
\end{definition}
where $\infsum$ denotes a summation over possible infinite states. 

A probabilistic function is called a \emph{subdistribution} function if the probabilities of all states sum to less than or equal to 1, which is characterised by a function $\issubdist$.
\begin{definition}[Probabilistic subdistributions] \label{def:issubdist} 
    \isalink{https://github.com/RandallYe/probabilistic_programming_utp/blob/6a4419b8674b84988065a58696f15093d176594c/probability/probabilistic_relations/utp_distribution.thy\#L32}
    \begin{align*}
        \issubdist(p) \defs \isprob(p) \land \infsum s @ p(s) > 0 \land \infsum s @ p(s) \leq 1
    \end{align*}
\end{definition}
We note that a probabilistic distribution is also a subdistribution {but $\rfzero$ (probabilities are zero everywhere) is not. We exclude $\rfzero$ in subdistributions because 
    \begin{enumerate*}[label={(\arabic*)}]
        \item $\infsum$ in Isabelle/HOL is defined to be 0 when $p$ is not summable or divergent, and so we cannot differentiate this case from $\rfzero$ from the summation result; and 
        \item the probability summation is the denominator in the definitions of normalisation in Definitions~\ref{def:norm} and \ref{def:normf}, and so subdistributions allow us to characterise the non-zero result to deal with the division-by-zero error. 
    \end{enumerate*}
}
\begin{lem}
   $\isdist(p) \implies \issubdist(p)$ 
    \isalink{https://github.com/RandallYe/probabilistic_programming_utp/blob/6a4419b8674b84988065a58696f15093d176594c/probability/probabilistic_relations/utp_distribution.thy\#L79}
\end{lem}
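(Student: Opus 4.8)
The plan is to unfold the definitions of $\isdist$ and $\issubdist$ and discharge the three conjuncts required by $\issubdist(p)$ directly from the two conjuncts provided by $\isdist(p)$. So first I would assume $\isdist(p)$, which by Definition~\ref{def:isdist} gives both $\isprob(p)$ and the equation $\infsum s @ p(s) = 1$.

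Next I would dispatch the three goals in turn. The conjunct $\isprob(p)$ demanded by Definition~\ref{def:issubdist} is immediately available, being literally the first conjunct of $\isdist(p)$. For $\infsum s @ p(s) > 0$ I would rewrite the summation with the hypothesis $\infsum s @ p(s) = 1$, reducing the goal to $1 > 0$ in the reals. For $\infsum s @ p(s) \leq 1$ the same rewrite reduces the goal to $1 \leq 1$, i.e.\ reflexivity of $\leq$. Conjoining the three facts yields $\issubdist(p)$, and in Isabelle the whole lemma is discharged simply by unfolding the two predicates and simplifying with the distribution equation.

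There is no genuine obstacle here: the lemma is a direct logical consequence of the definitions. The only point meriting a moment's care is that the infinite sum $\infsum s @ p(s)$ occurring in both definitions ranges over the same (possibly infinite) state space $S$, so the value $1$ supplied by $\isdist(p)$ is exactly the quantity constrained in $\issubdist(p)$; once this is noted, the numeric facts $1 > 0$ and $1 \leq 1$ complete the argument. This is also consistent with the earlier remark motivating the exclusion of $\rfzero$ from subdistributions: a bona fide distribution has total mass $1$, hence strictly positive, so the $> 0$ clause holds automatically.
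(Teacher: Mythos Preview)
Your proposal is correct and is exactly the natural approach: unfold both definitions and observe that the summation equalling $1$ immediately gives both $>0$ and $\leq 1$. The paper does not include an explicit proof for this lemma (it is stated with only a link to the Isabelle mechanisation), and the mechanised proof would follow the same trivial unfolding you describe.
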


For relational real-valued expressions $p$ of type $[S_1, S_2]\rvfun$, we define three functions to specify if the final state of a program is characterised by the expression $p$ is probabilistic, distributions, or subdistributions.
To specify these functions, we define a curried operator $\curried{p}$ ($\defs \lambda s~s' @ p(s, s')$) to turn $p$ into lambda terms, and so $\curried{p}(s)$ is a function from the final state $s'$ to real numbers. 
\begin{definition}[Final states are probabilistic, distributions, and subdistributions]
    \isalink{https://github.com/RandallYe/probabilistic_programming_utp/blob/6a4419b8674b84988065a58696f15093d176594c/probability/probabilistic_relations/utp_distribution.thy\#L36}
    \begin{align*}
        \isfinalprob(p) & \defs \tautology{\isprob(\curried{p})} \tag*{(final probabilistic)} \label{def:isfinalprob} \\
        \isfinaldist(p) & \defs \tautology{\isdist(\curried{p})} \tag*{(final distributions)} \label{def:isfinaldist}\\
        \isfinalsubdist(p) & \defs \tautology{\issubdist(\curried{p})} \tag*{(final subdistributions)} \label{def:isfinalsubdist}
    \end{align*}
\end{definition}
For all initial states $s$, if such a curried expression $\curried{p}$ is probabilistic, a distribution, or a subdistribution, then we say $p$ is probabilistic, a distribution, or a subdistribution over the final states, characterised by functions $\isfinalprob$, $\isfinaldist$, and $\isfinalsubdist$.

For an expression $P$ of type $[S_1, S_2]\prfun$, if $\isfinalprob(\prrvfunsym{P})$, we say $P$ is probabilistic. Similarly, we say $P$ is a distribution or a subdistribution (over its final states), if $\isfinaldist(\prrvfunsym{P})$ or $\isfinalsubdist(\prrvfunsym{P})$.

Using the function $\summable$, we introduce convergence for relational expressions and for the product of relational expressions over final states.

\begin{definition}[Summable on final states]
    \isalink{https://github.com/RandallYe/probabilistic_programming_utp/blob/6a4419b8674b84988065a58696f15093d176594c/probability/probabilistic_relations/utp_prob_rel_lattice.thy\#L188}
    \label{def:summable_on_finals}
    \begin{align}
        \summableonfinal (p) &\defs \left(\forall s @ \summable\left(\curried{p}(s), \univ\right) \right) \label{def:summable_on_final}\\
        \summableonfinals (p, q) &\defs \left(\forall s @ \summable\left(\lambda s' @ p(s, s') * q(s, s'), \univ\right)\right) \label{def:summable_on_final2}
    \end{align}
\end{definition}
The function $\summableonfinal$ characterises the relational expression $p(s)$ over final states are summable on the universe $\univ$ of state space for any initial state $s$. The function $\summableonfinals$ characterises the product of the expressions $p(s)$ and $q(s)$ over final states that are summable on $\univ$.

We also define functions below to characterise if the final states of a relational expression are reachable and the final states of two relational expressions are reachable at the same states.
\begin{definition}[Reachable final states]
    \isalink{https://github.com/RandallYe/probabilistic_programming_utp/blob/6a4419b8674b84988065a58696f15093d176594c/probability/probabilistic_relations/utp_prob_rel_lattice.thy\#L194}
    \begin{align}
        \finalreachable(p) & \defs \left(\forall s @ \exists s' @ p(s, s') > 0 \right) \label{def:final_reachable}\\
        \finalreachables(p, q) & \defs \left(\forall s @ \exists s' @ p(s, s') > 0 \land q(s, s') > 0 \right) \label{def:final_reachable2}
    \end{align}
\end{definition}
The final states of $p$ are reachable, $\finalreachable(p)$, if from any initial state $s$ there exists at least one final state $s'$ such that $p(s,s')$ is larger than 0, or reaching $s'$ from $s$ is possible.
The $\finalreachables(p,q)$ characterises the possibility for $p(s)$ and $q(s)$ to reach the same state $s'$ from any initial state $s$.

Convergence and reachability of a relational expression $p$ can be derived from whether $p$ is a distribution or subdistribution over its final states, as shown below.
\begin{thm}\label{thm:final_distribtion}
    \isalink{https://github.com/RandallYe/probabilistic_programming_utp/blob/6a4419b8674b84988065a58696f15093d176594c/probability/probabilistic_relations/utp_prob_rel_lattice_laws.thy\#L375}
    \begin{align}
        \isfinaldist(p) & \implies \left(
        \begin{array}[]{l}
            \isprob(p) \land \left(\forall s @ \infsum s' @ p(s, s') = 1\right) \land \\
            \summableonfinal(p)  \land \finalreachable(p)
        \end{array}
        \right) \label{thm:final_distribtion_1} \\
        \isfinalsubdist(p) & \implies \left(
        \begin{array}[]{l}
            \isprob(p) \land \left(\forall s @ \infsum s' @ p(s, s') > 0\right) \land \left(\forall s @ \infsum s' @ p(s, s') \leq 1\right) \\
            \summableonfinal(p)  \land \finalreachable(p)
        \end{array}
        \right) \label{thm:final_subdistribtion}
    \end{align}
\end{thm}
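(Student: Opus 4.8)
The plan is to prove each conjunct on the right-hand side separately, by unfolding the definitions of $\isfinaldist$ and $\isfinalsubdist$ and reasoning pointwise in the initial state $s$. Recall that $\isfinaldist(p)$ is $\tautology{\isdist(\curried{p})}$, which expands to $\forall s @ \isdist(\curried{p}(s))$; by Definition~\ref{def:isdist} each instance yields $\isprob(\curried{p}(s))$ together with $\infsum s' @ p(s,s') = 1$. The first of these is exactly $\isprob(p)$ read over final states, and the second is precisely the conjunct $\forall s @ \infsum s' @ p(s,s') = 1$. For $\isfinalsubdist$ the same unfolding via Definition~\ref{def:issubdist} delivers $\isprob(p)$, $\forall s @ \infsum s' @ p(s,s') > 0$, and $\forall s @ \infsum s' @ p(s,s') \leq 1$ directly. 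So the only real work is in deriving $\summableonfinal(p)$ and $\finalreachable(p)$.

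For $\summableonfinal(p) = \forall s @ \summable(\curried{p}(s), \univ)$, I would fix $s$ and argue by contraposition: if $\curried{p}(s)$ were not summable on $\univ$, then by the Isabelle/HOL convention recorded earlier in the excerpt ($\infsum$ is $0$ on a non-summable family) we would obtain $\infsum s' @ p(s,s') = 0$. But in both cases the sum is nonzero — it equals $1$ for distributions and is strictly positive for subdistributions — a contradiction. Hence $\curried{p}(s)$ is summable. This is the one place where the chosen convention for divergent sums is essential, and it is exactly the reason $\rfzero$ was deliberately excluded from the subdistributions.

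For $\finalreachable(p) = \forall s @ \exists s' @ p(s,s') > 0$, I would again fix $s$ and argue by contradiction: suppose $p(s,s') \leq 0$ for all $s'$. Since $\isprob(\curried{p}(s))$ supplies nonnegativity ($p(s,s') \geq 0$ for all $s'$), this forces $p(s,s') = 0$ everywhere, so $\infsum s' @ p(s,s') = 0$, again contradicting that the sum is $1$ (respectively $> 0$). Therefore some final state carries positive probability.

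I expect the argument to be almost entirely routine; the only mild obstacle is the summability step, which hinges on the Isabelle/HOL treatment of non-convergent infinite sums rather than on any analytic estimate, so the mechanised proof will mainly need the right lemma linking a nonzero $\infsum$ to $\summable$ (the contrapositive of ``not summable implies the sum is zero''). Once that link is in place, both implications follow from the same pointwise unfolding, the distribution case being simply the special instance in which the sum is pinned to $1$ rather than merely bounded within $(0,1]$.
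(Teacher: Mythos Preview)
Your proposal is correct and matches what the paper relies on: the theorem is not proved in the text but deferred to the Isabelle mechanisation, and your unfolding of $\isfinaldist$/$\isfinalsubdist$ together with the contrapositive use of the $\infsum$-equals-zero-when-not-summable convention is exactly the argument that underlies the mechanised proof. In particular, you have correctly identified why $\rfzero$ is excluded from subdistributions---precisely so that a nonzero sum forces summability.
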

Law~\ref{thm:final_distribtion_1} shows if $p$ is a distribution over its final states, then $p$ is probabilistic, summable over its final states, and reachable. The second conjunct restates $p$ as a distribution over its final states.
Law~\ref{thm:final_subdistribtion} shows if $p$ is a subdistribution over the final states, then $p$ is probabilistic, the summation of $p$ over its final states is larger than 0, and less than or equal to 1, and $p$ is summable over its final states and reachable. The second and third conjuncts restate $p$ as a subdistribution over its final states.

Normalisation $\norm(p)$ is the distribution whose values are in the same proportion as the values of $p$. Here, $p$ is not required to be a distribution, but the result of normalisation {is a distribution.}

\begin{definition}[Normalisation]
    \label{def:norm}
    We fix $p$ of type $[S]\rexpr$,
    \isalink{https://github.com/RandallYe/probabilistic_programming_utp/blob/6a4419b8674b84988065a58696f15093d176594c/probability/probabilistic_relations/utp_distribution.thy\#L55}
\begin{align*}
    \norm(p) \defs \usexpr{p / \left(\infsum s: S @ p(s)\right)}
\end{align*}
\end{definition}
{In the definition, the division operator $/$ in Isabelle/HOL is implemented as $inverse\_divide$\footnote{\url{https://isabelle.in.tum.de/library/HOL/HOL/Fields.html}.} in the division ring. The result $a/b$ is 0 if either $a$ or $b$ is 0.}
The $\norm(p)$ gives the distribution of the state space (both the initial and final states for a relational expression) in $p$. For example, suppose that $x$ is in the $1..n$ range. 

\begin{example}
\begin{align*}
    & \norm\left(\ibracket{x'=x+1 \lor x'=x+2}\right) \\
   = & \cmt{~\text{Definition~\ref{def:norm}}~} \\
   & \usexpr{\ibracket{x'=x+1 \lor x'=x+2} / \left(\infsum (x,x'): (1..n)\cross (1..n) @     \ibracket{x'=x+1 \lor x'=x+2}\right)} \\ 
   = & \cmt{~\text{Theorem~\ref{thm:ib} Law~\ref{thm:ib_summation}}~} \\
   & \usexpr{\left(\ibracket{x'=x+1 \lor x'=x+2}\right) / \left(\infsum (x,x') @ \left(\ibracket{x'=x+1 \lor x'=x+2}\right)*\ibracket{(x,x') \in (1..n)\cross (1..n)}\right)} \\
   = & \cmt{~\text{Theorem~\ref{thm:ib} Law~\ref{thm:ib_disj}}~} \\
   & \usexpr{
       \begin{array}[]{l}
       \left(x'=x+1 \lor x'=x+2\right) / \\
       \left(\infsum (x,x') @ \left(
       \begin{array}[]{l}
        \ibracket{x'=x+1} + \ibracket{x'=x+2} - \\
        \ibracket{x'=x+1}*\ibracket{x'=x+2}
       \end{array}
    \right)*\ibracket{(x,x') \in (1..n)\cross (1..n)}\right)
       \end{array}
    } \\
   = & \cmt{~Theorem~\ref{thm:ib} Law~\ref{thm:ib_conj}, $\ibracket{x'=x+1}*\ibracket{x'=x+2}=\rfzero$, and summation distributes through sum~} \\
   & \usexpr{
   \begin{array}[]{l}
        \left(\ibracket{x'=x+1 \lor x'=x+2}\right) / \\
        \left(
       \begin{array}[]{l}
\infsum (x,x') @ \ibracket{x'=x+1}*\ibracket{(x,x') \in (1..n)\cross (1..n)} + \\
\infsum (x,x') @ \ibracket{x'=x+2}*\ibracket{(x,x') \in (1..n)\cross (1..n)}\\
       \end{array}
    \right)
       \end{array}
} \\
   = & \cmt{~Theorem~\ref{thm:ib} Law~\ref{thm:ib_conj} and Iverson bracket summation one-point rule~} \\
   & \usexpr{\left(\ibracket{x'=x+1 \lor x'=x+2}\right) / \left(
       \begin{array}[]{l}
\infsum x @ \ibracket{(x,x+1) \in (1..n)\cross (1..n)} + \\
\infsum x @ \ibracket{(x,x+2) \in (1..n)\cross (1..n)}\\
       \end{array}
    \right)} \\
   = & \cmt{~arithmetic~} \\
   & \usexpr{\left(\ibracket{x'=x+1 \lor x'=x+2}\right) / \left(
\infsum x @ \ibracket{x \in 1..n-1} + \infsum x @ \ibracket{x \in 1..n-2}
    \right)} \\
   = & \cmt{~Iverson summation~} \\
   & \usexpr{\left(\ibracket{x'=x+1 \lor x'=x+2}\right) / \left(
    n - 1 + n - 2
    \right)} \\
   = & \cmt{~arithmetic~} \\
   & \usexpr{\left({\ibracket{x'=x+1 \lor x'=x+2}}\right) / \left(2*n - 3\right)} 
\end{align*}

\end{example}

Often we want the distribution of just the final state if $p$ is a relational expression.

\begin{definition}[Normalisation of the final state]
    \label{def:normf}
    We fix $p$ of type $[S_1,S_2]\rvfun$,
    \isalink{https://github.com/RandallYe/probabilistic_programming_utp/blob/6a4419b8674b84988065a58696f15093d176594c/probability/probabilistic_relations/utp_distribution.thy\#L58}
\begin{align*}
    \normf(p) \defs \usexpr{p / \left(\infsum v_0: S_2 @ p[v_0/\vv']\right)}
\end{align*}
\end{definition}
The value $p(s,s')$ of $p$ for a pair $(s, s')$ of the initial state $s$ and the final state $s'$ is divided by the summation of $p(s, v_0)$ over the all final states $v_0$ for the initial state $s$.
The normalisation of $p$ is a distribution over its final states, given $p$ is nonnegative and reachable, and $\curried{p}(s)$ is convergent for any state $s$. 
\begin{thm}
    $\isnonneg(p) \land \finalreachable(p) \land \summableonfinal(p) \implies \isfinaldist \left(\normf(p)\right)$
    \isalink{https://github.com/RandallYe/probabilistic_programming_utp/blob/6a4419b8674b84988065a58696f15093d176594c/probability/probabilistic_relations/utp_prob_rel_lattice_laws.thy\#L2266}
\end{thm}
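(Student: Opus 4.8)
The plan is to unfold the definitions, reduce the goal to two pointwise facts about the normalised family, and discharge them using the summation laws of Theorem~\ref{thm:summation}. By Definition~\ref{def:isfinaldist}, establishing $\isfinaldist(\normf(p))$ amounts to showing, for every initial state $s$, that the curried function $\curried{\normf(p)}(s)$ --- the map $s' \mapsto \normf(p)(s,s')$ --- is a distribution, i.e.\ (by Definition~\ref{def:isdist}) that it is probabilistic and that $\infsum s' @ \normf(p)(s,s') = 1$. Writing $D_s \defs \infsum v_0 @ p(s,v_0)$ for the normalising denominator at $s$, Definition~\ref{def:normf} gives $\normf(p)(s,s') = p(s,s') / D_s$, so everything reduces to understanding $D_s$.

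The first key step is the lemma that $D_s > 0$ for every $s$. Since $\finalreachable(p)$ holds, there is some $s'_0$ with $p(s,s'_0) > 0$; since $\summableonfinal(p)$ holds, $\curried{p}(s)$ is summable on $\univ$, so $D_s$ is a genuine limit rather than the default value $0$; and since $\isnonneg(p)$ holds, every term $p(s,v_0)$ is nonnegative. Bounding the sum below by the single term at $s'_0$ then gives $D_s \geq p(s,s'_0) > 0$. The same ``single term $\le$ sum of a nonnegative summable family'' argument, applied at an arbitrary $s'$, also yields $p(s,s') \leq D_s$, which I use for the upper bound below.

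With $D_s > 0$ in hand, the two remaining obligations are routine. For the \emph{probabilistic} part (Definition~\ref{def:isprob}): $\normf(p)(s,s') = p(s,s')/D_s \geq 0$ because the numerator is nonnegative and the denominator positive, and $\normf(p)(s,s') \leq 1$ because $p(s,s') \leq D_s$. For the \emph{sum-to-one} part: $D_s$ is constant in $s'$ and nonzero, and $\curried{p}(s)$ is summable, so the division-by-constant law of Theorem~\ref{thm:summation} lets me pull $D_s$ out of the sum, giving $\infsum s' @ p(s,s')/D_s = \left(\infsum s' @ p(s,s')\right)/D_s = D_s/D_s = 1$.

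I expect the main obstacle to be the careful bookkeeping around Isabelle/HOL's $\infsum$ convention rather than any deep mathematics: the whole argument silently relies on $\curried{p}(s)$ being summable (otherwise $D_s$ collapses to $0$ by convention and strict positivity fails), so the $\summableonfinal(p)$ hypothesis must be threaded through both the reachability step and the bounding steps, and the two elementary ``term $\le$ sum'' facts for nonnegative summable families need to be located in the mechanisation (or re-proved) and applied with the correct side conditions. Everything else is arithmetic and an appeal to the division-by-constant law.
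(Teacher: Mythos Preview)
Your proposal is correct and follows exactly the natural route: establish $D_s > 0$ from nonnegativity, summability, and reachability; use the single-term bound against the nonnegative summable family to get $p(s,s') \le D_s$ and hence $\isprob$; then pull the constant denominator out of the sum to obtain $D_s/D_s = 1$. The paper does not spell out a proof in the text (it defers to the Isabelle mechanisation), but this is precisely the decomposition that the definitions and the summation laws of Theorem~\ref{thm:summation} dictate, so there is no meaningful divergence to report.
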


A probabilistic program $P$ is a relational $\ureal$-valued expression of type $[S_1, S_2] \prfun$. We show the conversion of $P$ to a real-valued function $\rvprfunsym{P}$ is probabilistic and pointwise subtraction of $\rvprfunsym{P}$ from the constant 1 function is also probabilistic.

\begin{thm}
    \label{thm:prrvfun_prob}
Provided $P$ is an expression of type $[S_1, S_2] \prfun$, then $\isprob\left(\prrvfunsym{P}\right)$ and $\isprob\left(\rfone - \prrvfunsym{P}\right)$.
\isalink{https://github.com/RandallYe/probabilistic_programming_utp/blob/6a4419b8674b84988065a58696f15093d176594c/probability/probabilistic_relations/utp_prob_rel_lattice_laws.thy\#L593}
\end{thm}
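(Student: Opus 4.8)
The plan is to reduce both claims to the single fact that the underlying real number of any $\ureal$ value lies in $[0,1]$, which is immediate from the definition $\ureal \defs \{0 \upto 1\}$ together with the observation (noted just after Definition~\ref{def:u2r_r2u}) that $\urealreal$ is merely the type cast extracting that underlying real. Everything else is pointwise unfolding and one elementary arithmetic step.

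First I would unfold $\isprob$ via Definition~\ref{def:isprob}, so that the goal $\isprob(\prrvfunsym{P})$ becomes the pointwise statement $\forall s @ 0 \leq \prrvfunsym{P}(s) \land \prrvfunsym{P}(s) \leq 1$, where $s$ ranges over the product space $S_1 \times S_2$. Unfolding $\prrvfun$ via Definition~\ref{def:rvfun2prfun} gives $\prrvfunsym{P}(s) = \ur{P(s)}$. Since $P$ has type $[S_1,S_2]\prfun$, the value $P(s)$ is of type $\ureal$, hence $\ur{P(s)} \in \{0\upto 1\}$, i.e.\ $0 \leq \ur{P(s)} \leq 1$; this is exactly the characteristic property of the $\ureal$ typedef. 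That discharges the first conjunct.

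For $\isprob(\rfone - \prrvfunsym{P})$, I would unfold the pointwise subtraction (Definition~\ref{def:urf_pointwise}) and the constant function $\rfone$ (Definition~\ref{def:urf_const}), reducing the goal to $\forall s @ 0 \leq 1 - \ur{P(s)} \land 1 - \ur{P(s)} \leq 1$. Both inequalities follow by elementary real arithmetic from the bound $0 \leq \ur{P(s)} \leq 1$ established above, since $y \mapsto 1 - y$ maps $[0,1]$ onto $[0,1]$. Note this is ordinary real subtraction at the level of $[S]\rexpr$, not the bounded minus of Definition~\ref{def:bounded_plus_minus}, so no truncation reasoning is needed.

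I do not expect any genuine obstacle: the result is a type-driven unfolding followed by a trivial arithmetic step, and in the mechanisation it should be dispatched by the relational tactics together with the defining lemma of the $\ureal$ typedef. The only point requiring a little care is keeping the two layers straight --- the $\ureal$-valued $P(s)$ versus its real image $\ur{P(s)}$ --- so that the subtraction in $\rfone - \prrvfunsym{P}$ is read as plain real subtraction and the $[0,1]$-membership fact is applied at the right place.
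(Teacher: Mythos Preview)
Your proposal is correct and is exactly the natural argument: unfold $\isprob$, $\prrvfun$, and the pointwise operations, then use that every $\ureal$ value has its real image in $[0,1]$, with the second claim following from the elementary fact that $y \mapsto 1-y$ preserves $[0,1]$. The paper does not spell out a proof in the text for this theorem (it only provides the Isabelle link), so there is nothing further to compare.
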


The conversion of $P$ to a real-valued function and then back to the $\ureal$-valued function is still $P$.
\begin{thm}
    \label{thm:rvprfun_inverse}
    $\rvprfun$ is the inverse of $\prrvfun$, that is, $\rvprfunsym{\left(\prrvfunsym{P}\right)} = P$. 
\isalink{https://github.com/RandallYe/probabilistic_programming_utp/blob/6a4419b8674b84988065a58696f15093d176594c/probability/probabilistic_relations/utp_prob_rel_lattice_laws.thy\#L613}
\end{thm}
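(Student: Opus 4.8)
The plan is to unfold the two conversion functions and reduce the statement to the pointwise scalar identity already established. By Definition~\ref{def:rvfun2prfun}, $\prrvfunsym{P} = \usexpr{\ur{P}}$ and $\rvprfunsym{f} = \usexpr{\ru{f}}$, so, reading the $\usexpr{\cdot}$ notation pointwise over the (lens-based) state space, $\rvprfunsym{\left(\prrvfunsym{P}\right)}$ is the function $\lambda s @ \ru{\left(\ur{P(s)}\right)}$. The goal is therefore to show this function equals $P$.

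First I would appeal to extensionality: two expressions of type $[S_1,S_2]\prfun$ are equal iff they agree at every state, so it suffices to prove $\ru{\left(\ur{P(s)}\right)} = P(s)$ for an arbitrary state $s$. Here the type discipline does the real work: since $P$ has type $[S_1,S_2]\prfun$, it is $\ureal$-valued, so $P(s) \in \ureal$, and the earlier lemma asserting $\ru{\left(\ur{x}\right)} = x$ applies directly with $x \defs P(s)$. Discharging the pointwise goal then closes the theorem. Note that it is precisely this direction of the scalar inverse that always holds — the reverse composition $\ur{\left(\ru{x}\right)}$ needs the side condition $x \in [0,1]$, as recorded by the preceding lemma — which is why the round trip $\rvprfun \circ \prrvfun$ is unconditionally the identity.

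I do not expect a genuine obstacle here; the content is essentially bookkeeping, since the only non-trivial ingredient (that $\realureal$ undoes $\urealreal$ on $\ureal$ values) is already proved. The one thing to be careful about is pushing the pointwise lifting through the $\usexpr{\cdot}$ notation and the lens/product-state representation so that the scalar lemma can be instantiated at $P(s)$; in the mechanisation this is handled by unfolding the definitions of $\prrvfun$ and $\rvprfun$, rewriting with the scalar inverse lemma, and invoking the relevant relational/simplification tactic to finish.
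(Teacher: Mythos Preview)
Your proposal is correct and matches the paper's approach: the paper does not spell out a proof for this theorem, but the argument it sets up is exactly the one you give---unfold Definition~\ref{def:rvfun2prfun}, reduce pointwise, and apply the scalar lemma $\ru{\left(\ur{x}\right)} = x$ established just before. Your remark distinguishing this direction from the conditional converse in Theorem~\ref{thm:prrvfun_inverse} is also on point.
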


The conversion of $p$ of type $[S_1, S_2] \prfun$ to a $\ureal$-valued function and then back to a {real-valued} function is still $p$ if $p$ is probabilistic. 
\begin{thm}
    \label{thm:prrvfun_inverse}
    $\prrvfun$ is the inverse of $\rvprfun$ if $p$ is a probabilistic, that is, $\isprob(p) \implies \prrvfunsym{\left(\rvprfunsym{p}\right)} = p$. 
\isalink{https://github.com/RandallYe/probabilistic_programming_utp/blob/6a4419b8674b84988065a58696f15093d176594c/probability/probabilistic_relations/utp_prob_rel_lattice_laws.thy\#L600}
\end{thm}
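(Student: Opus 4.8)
The plan is to strip off the two conversion layers and reduce the claim to a pointwise fact about $\urealreal$ and $\realureal$ that is already available. Recall from Definitions~\ref{def:u2r_r2u} and \ref{def:rvfun2prfun} that $\rvprfunsym{p}$ is the $\ureal$-valued function $\usexpr{\ru{p}}$ obtained by clamping the value of $p$ into the unit interval at every point, and that $\prrvfunsym{\cdot}$ is merely the pointwise cast $\usexpr{\ur{\cdot}}$. Composing the two, $\prrvfunsym{\left(\rvprfunsym{p}\right)}$ is, for every pair $(s,s')$ in the product state space $S_1 \times S_2$, the real number $\ur{\left(\ru{p(s,s')}\right)}$, so the goal is to show this equals $p(s,s')$ under the hypothesis $\isprob(p)$.

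First I would unfold the definitions of $\prrvfun$ and $\rvprfun$ and lift the goal to a pointwise statement by functional extensionality: it suffices to prove $\ur{\left(\ru{p(s,s')}\right)} = p(s,s')$ for an arbitrary $(s,s')$. Second, from $\isprob(p)$ — which by Definition~\ref{def:isprob} unfolds to the tautology $\tautology{p \geq 0 \land p \leq 1}$ — I would instantiate at $(s,s')$ to obtain the bound $0 \leq p(s,s') \leq 1$. Third, with this bound in hand I would apply the earlier lemma stating that $\ur{\left(\ru{x}\right)} = x$ whenever $x \geq 0 \land x \leq 1$ (the one noting that $\urealreal$ inverts $\realureal$ only on the unit interval), taking $x := p(s,s')$; this discharges the pointwise goal. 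Re-wrapping through the lifted-expression notation then yields $\prrvfunsym{\left(\rvprfunsym{p}\right)} = p$.

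There is no deep obstacle here; the only work is bookkeeping with the alphabetised-expression layer $\usexpr{\cdot}$ and with instantiating the tautology quantifier inside $\isprob$. The one conceptual point worth flagging is that the companion direction $\rvprfunsym{\left(\prrvfunsym{P}\right)} = P$ of Theorem~\ref{thm:rvprfun_inverse} needs \emph{no} side condition, since $\realureal \circ \urealreal$ is the identity everywhere, whereas here the $\isprob$ premise is genuinely necessary: without it the composite clamps any out-of-range values and the equality fails. In the mechanisation I expect this lemma to be discharged almost immediately by unfolding the conversion definitions together with $\isprob$ and then invoking a relational/predicate tactic that applies the clamping-inverse lemma pointwise.
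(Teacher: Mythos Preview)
Your proposal is correct and follows the natural approach: unfold the two conversion layers, reduce to a pointwise statement by extensionality, extract the bound $0 \leq p(s,s') \leq 1$ from $\isprob(p)$, and invoke the earlier lemma that $\ur{(\ru{x})} = x$ on the unit interval. The paper gives no in-text proof for this theorem (it defers to the Isabelle mechanisation), and your argument is exactly the unfolding-and-pointwise-inverse route one expects the mechanised proof to take.
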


A corollary of this theorem, given below, states that the conversion of an Iverson bracket expression to $\prfun$ and then back to $\rvfun$ gives the expression itself because Iverson bracket expressions are probabilistic (Theorem~\ref{thm:isprob_ibracket}).
\begin{thm}
    \label{thm:prrvfun_inverse_ibracket} 
    $\prrvfunsym{\left(\rvprfunsym{\ibracket{p}}\right)} = \ibracket{p}$.
\isalink{https://github.com/RandallYe/probabilistic_programming_utp/blob/6a4419b8674b84988065a58696f15093d176594c/probability/probabilistic_relations/utp_prob_rel_lattice_laws.thy\#L621}
\end{thm}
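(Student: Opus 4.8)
The plan is to read this off Theorem~\ref{thm:prrvfun_inverse} by instantiation. That theorem states $\isprob(p) \implies \prrvfunsym{\left(\rvprfunsym{p}\right)} = p$ for a (relational) real-valued expression $p$, so I would take $p$ to be the Iverson-bracket expression $\ibracket{p}$ itself, regarded as a relational real-valued expression of type $[S_1,S_2]\rvfun$ (with $S = S_1\times S_2$). The hypothesis of Theorem~\ref{thm:prrvfun_inverse} then becomes $\isprob\left(\ibracket{p}\right)$, which is exactly Theorem~\ref{thm:isprob_ibracket}. Discharging it gives $\prrvfunsym{\left(\rvprfunsym{\ibracket{p}}\right)} = \ibracket{p}$, as required.

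If the instantiation needed a little unfolding, I would fall back on the elementary pointwise argument that underlies Theorem~\ref{thm:prrvfun_inverse}. Since $\rvprfun(f) \defs \usexpr{\ru{f}}$ and $\prrvfun(P) \defs \usexpr{\ur{P}}$, the composite $\prrvfunsym{\left(\rvprfunsym{\ibracket{p}}\right)}$ evaluates at a state $s$ to $\ur{\left(\ru{\ibracket{p}(s)}\right)}$. By the definition of the Iverson bracket, $\ibracket{p}(s)$ is $0$ or $1$, hence lies in the unit interval, so the lemma $\left(x \geq 0 \land x \leq 1\right) \implies \ur{\left(\ru{x}\right)} = x$ applies and yields $\ur{\left(\ru{\ibracket{p}(s)}\right)} = \ibracket{p}(s)$. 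As this holds for every $s$, the two expressions coincide.

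I do not expect a genuine obstacle here. The only points needing care are bookkeeping: checking that $\ibracket{p}$ has the right type to be fed to $\rvprfun$ (it is real-valued) and that its image is $\ureal$-valued so that $\prrvfun$ applies, and that the probabilistic-ness side condition of Theorem~\ref{thm:prrvfun_inverse} is met --- which is precisely the content of Theorem~\ref{thm:isprob_ibracket}. In the Isabelle/UTP mechanisation I expect this to be a one-line proof: rewrite with Theorem~\ref{thm:prrvfun_inverse} and close the remaining goal with Theorem~\ref{thm:isprob_ibracket}, perhaps after a relational-simplification tactic.
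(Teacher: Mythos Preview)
Your proposal is correct and matches the paper's approach exactly: the paper presents this as a corollary of Theorem~\ref{thm:prrvfun_inverse}, with the $\isprob$ hypothesis discharged by Theorem~\ref{thm:isprob_ibracket}. Your fallback pointwise argument is also sound but unnecessary here.
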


\subsection{Syntax and semantics}
\label{ssec:syntax_semantics}

Our probabilistic programming language includes six constructs (except probabilistic recursions), and their semantics is given as follows.
\begin{definition}[Probabilistic programs]
    \label{def:prob_programs}
    {We define probabilistic programs, interpreted as $[S_1, S_2] \prfun$ (that is, $\ureal$-valued functions), constructed from the syntax below} where we fix $P$ and $Q$ as probabilistic programs of type $[S_1, S_2] \prfun$, $r$ as an expression of type $[S_1, S_2] \prfun$, $b$ as a relation of type $[S_1, S_2] \urel$, and $R$ and $T$ as relational real-valued expressions of type $[S_1, S_2]\rvfun$.
 \isalink{https://github.com/RandallYe/probabilistic_programming_utp/blob/6a4419b8674b84988065a58696f15093d176594c/probability/probabilistic_relations/utp_prob_rel_lattice.thy\#L225}
    \begin{align*}
        \pskip & \defs \rvprfunsym{\ibracket{\II}} \tag*{(skip)} \label{def:prog_skip} \\
        \left(\passign{x}{e}\right) & \defs \rvprfunsym{\ibracket{x := e}} \tag*{(assignment)} \label{def:prog_assign} \\
        \left(\ppchoice{r}{P}{Q}\right) & \defs \rvprfunsym{\usexpr{\prrvfunsym{r} * \prrvfunsym{P} + \left(\rfone - \prrvfunsym{r}\right) * \prrvfunsym{Q}}}\tag*{(probabilistic choice)} \label{def:prog_pchoice} \\ 
        \left(\pcchoice{b}{P}{Q}\right)& \defs \rvprfunsym{\usexpr{\IF b \THEN \prrvfunsym{P} \ELSE \prrvfunsym{Q}}}\tag*{(conditional choice)} \label{def:prog_condchoice} \\ 
        \pseq{P}{Q} & \defs \rvprfunsym{\fseq{\prrvfunsym{P}}{\prrvfunsym{Q}}} \mbox{ where }
        \fseq{R}{T} \defs {\usexpr{\infsum v_0 @ {R}[v_0/\vv'] * {T}[v_0/\vv]}} \tag*{(sequential composition)} \label{def:prog_seq} \\ 
        \pparallel{R}{T} & \defs \rvprfunsym{\normf \usexpr{R * T}}\tag*{(parallel composition)} \label{def:prog_parallel_ff}
    \end{align*}
\end{definition}
We note that the semantics of all these constructs are converting the corresponding real-valued expressions to the $\ureal$-valued expressions by $\rvprfun$. 

The probability skip $\pskip$ is the $\ureal$ version of the Iverson bracket of the relational skip $\II$. It changes no variables and terminates immediately. On termination, the final state equals the initial state with probability 1; all other assignments to the final state have probability 0.
Similarly, the probability assignment ($\passign{x}{e}$) is the $\ureal$ version of the Iverson bracket of the relational assignment ($x := e$). An assignment is a one-point distribution of the final state. 

The probabilistic choice $\left(\ppchoice{r}{P}{Q}\right)$, also denoted $\ppifchoice{r}{P}{Q}$, is the weighted sum of $\prrvfunsym{P}$ (the conversion of $P$ to the real-valued expression) and $\prrvfunsym{Q}$ based on their weights $\prrvfunsym{r}$ and $(\rfone-\prrvfunsym{r})$. Because of the type $[S_1, S_2] \prfun$ of $r$, both $\prrvfunsym{r}$ and $\rfone-\prrvfunsym{r}$ are probabilistic (or between 0 and 1) by Theorem~\ref{thm:prrvfun_prob}.

In the conditional choice $\left(\pcchoice{b}{P}{Q}\right)$, $b$ is a relation. If $b$ is evaluated to true, the choice is $\prrvfunsym{P}$. Otherwise, it is $\prrvfunsym{Q}$.

Sequential composition $(\pseq{P}{Q})$ is the serial composition of $\prrvfunsym{P}$ with $\prrvfunsym{Q}$ by $\fseq{}{}$ where both $P$ and $Q$ have the same type of $[S]\prhfun$. The $\fseq{R}{T}$, where both $R$ and $T$ are type of $[S]\prfun$, is the conditional probability of $T$ given $R$. Indeed, it is the summation of the product of $R[v_0/\vv']$, the substitution of $v_0$ for $\vv'$ in $R$, and $T[v_0/\vv]$, the substitution of $v_0$ for $\vv$ in $T$, over their intermediate states $v_0$. Its semantics can be interpreted as starting from an initial state $s$, the probability of $(\pseq{P}{Q})$ reaching a final state $s'$ is equal to the summation of the probability of $\prrvfunsym{Q}$ reaching state $s'$ from $v_0$, given the probability of $\prrvfunsym{P}$ reaching $v_0$ from $s$, over all intermediate states $v_0$. 

The $\pparallel{R}{T}$ is the parallel composition of ${R}$ with ${T}$, semantically as normalisation of the product of ${R}$ and ${T}$. It is the joint probability of $R$ and $T$. In its most general form, neither $R$ nor $T$ need to be proper probabilistic programs, but the result will be a probabilistic program.

In Bayesian inference, the posterior probability of $A$ given $B$ is computed based on a prior probability, which is estimated before $B$ is observed, a likelihood function over $B$ given fixed $A$, and model evidence $B$ according to Bayes' theorem given below.
\begin{align*}
    {\text{posterior} = \frac{\text{prior}*\text{likelihood}}{\text{evidence}}} \qquad \text{or}\qquad {\displaystyle P(A  \mid B)={\frac {P(A)P(B\mid A)}{P(B)}}} 
\end{align*}
where $B$ is a new observed data or evidence.
In our programming language, the update of the posterior probability is modelled using parallel composition for learning new facts and sequential composition for making actions, such as the movement of robots. This is illustrated in the forgetful Monty, the robot localisation, and the COVID diagnosis examples in Sects.~\ref{ssec:cases_forgetful_monty}, \ref{ssec:cases_robot_localisation}, and \ref{ssec:cancer_diagnosis}.

{In other work~\cite{Gordon2014,Kaminski2019a
}, this learning or conditioning is encoded using an \textbf{observe} statement such as $\textbf{observe}(\phi)$ where $\phi$ is a boolean expression or a predicate defined over program variables. This statement normalises all valid executions that satisfy $\phi$ with respect to the probability of total valid executions and blocks invalid executions (with probability 0). Comparatively, our parallel composition or Hehner's is not restricted to predicates. Indeed, $R$ and $T$ in $\pparallel{R}{T}$ can be any real-valued or $\ureal$-valued expressions, which are more general likelihood functions. To encode predicates similar to $\phi$ in the \textbf{observe} statement, we need to use $\ibracket{\phi}$ to convert it. As illustrated in the examples in Sects.~\ref{ssec:cases_forgetful_monty}, \ref{ssec:cases_robot_localisation}, and \ref{ssec:cancer_diagnosis}, three likelihood functions are ${{\ibracket{m' \neq p'}}}$, $\usexpr{3 * \ibracket{door(bel')} + 1}$, and $\ibracket{ct' = Pos}$.}

\subsection{Top and bottom}
\label{ssec:prog_top_bot}
According to Theorem~\ref{thm:ureal_func_complete}, the set of probabilistic programs is a complete lattice under $\leq$. The top and bottom elements of the lattice satisfy the properties below. 
\begin{thm}
    \label{thm:top_bot}
Provided that $P$ is a probabilistic program and $p$ is a real-valued function.
\isalink{https://github.com/RandallYe/probabilistic_programming_utp/blob/6a4419b8674b84988065a58696f15093d176594c/probability/probabilistic_relations/utp_prob_rel_lattice_laws.thy\#L906}
    \begin{align*}
        {\arraycolsep=10pt\def\arraystretch{1.0}
        \begin{array}[]{cccccccc}
        \top = \ufone & \bot = \ufzero & P \geq \ufzero & P \leq \ufone & \rvprfunsym{\rfone} = \ufone  & \rvprfunsym{\rfzero} = \ufzero & \prrvfunsym{\ufone} = \rfone  & \prrvfunsym{\ufzero} = \rfzero  \\
        p * \rfzero = \rfzero & p * \rfone = p & P * \ufzero = \ufzero & P * \ufone = P & \multicolumn{2}{c}{P - \ufzero = P} & \multicolumn{2}{c}{P + \ufzero = P}
        \end{array}
        }
    \end{align*}
\end{thm}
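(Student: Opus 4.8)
The plan is to reduce every equation and inequality in the statement to a \emph{pointwise} fact about the unit real interval and then discharge it by unfolding the relevant definitions. A probabilistic program $P$ of type $[S_1,S_2]\prfun$ is literally a total function into $\ureal$, and all the operators appearing here ($+$, $-$, $*$) and the orderings $\leq$, $\geq$ on such functions are defined pointwise (Definitions~\ref{def:urf_pointwise} and~\ref{def:bounded_plus_minus}). Hence each claimed identity $F(P)=G(P)$ follows from the corresponding identity $F(x)=G(x)$ for an arbitrary $x:\ureal$ via function extensionality, and similarly the identities involving the real-valued function $p$ reduce to pointwise real arithmetic. So the first step is a uniform ``extensionality plus pointwise-operator unfolding'' move, after which the eight-by-two grid of facts splits into a handful of elementary lemmas about $\ureal$.

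For the two lattice facts $\top = \ufone$ and $\bot = \ufzero$ and the bounds $\ufzero \leq P \leq \ufone$, I would cite Theorem~\ref{thm:ureal_func_complete}: it already identifies $\ufone$ and $\ufzero$ as the greatest and least elements of the complete lattice $\left(S_1 \cross S_2 \fun \ureal, \leq\right)$, so the $\top,\bot$ equalities are definitional once that lattice structure is in scope and the bounds are immediate. Alternatively the bounds unfold pointwise to $0 \leq \ur{x} \leq 1$, which holds because $\ureal \defs \{0 \upto 1\}$. For the conversion identities, unfold Definitions~\ref{def:u2r_r2u} and~\ref{def:rvfun2prfun}: $\rvprfunsym{\rfone}$ is pointwise $\ru{1} = \umin\left(\umax\left(0,1\right),1\right) = 1 = \uone$, hence equals $\ufone$, and symmetrically $\rvprfunsym{\rfzero} = \ufzero$; while $\prrvfunsym{\ufone}$ is pointwise $\ur{\uone} = 1$ (a type cast), so it equals $\rfone$, and dually for $\ufzero$. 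The identities $p*\rfzero = \rfzero$ and $p*\rfone = p$ are pointwise real arithmetic.

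The remaining cases use the bounded arithmetic of Definition~\ref{def:bounded_plus_minus} together with the inverse lemmas $\ru{\left(\ur{x}\right)} = x$ and $\left(x \geq 0 \land x \leq 1\right) \implies \ur{\left(\ru{x}\right)} = x$ stated just before the theorem. We have $x*\uzero = \ru{\left(\ur{x}*0\right)} = \ru{0} = \uzero$ and $x*\uone = \ru{\left(\ur{x}*1\right)} = \ru{\left(\ur{x}\right)} = x$; then $x - \uzero = \ru{\left(\umax\left(0,\ur{x}-0\right)\right)} = \ru{\left(\umax\left(0,\ur{x}\right)\right)}$, which equals $\ru{\left(\ur{x}\right)} = x$ because $\ur{x} \geq 0$ for every $x:\ureal$; dually $x + \uzero = \ru{\left(\umin\left(1,\ur{x}\right)\right)} = x$ using $\ur{x} \leq 1$. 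Lifting these pointwise equalities back through Definition~\ref{def:urf_pointwise} gives $P * \ufzero = \ufzero$, $P * \ufone = P$, $P - \ufzero = P$, and $P + \ufzero = P$.

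The routine parts dominate; the only thing that needs care is feeding the right side conditions into the $\umin$/$\umax$ simplifications — concretely, that $\ur{x} \in [0,1]$ for every $x:\ureal$ (which is exactly why $\ureal$ was defined as the closed interval) and that $\ur{\left(\ru{x}\right)} = x$ holds \emph{only} under that bound. In Isabelle/UTP these residual pointwise goals are closed by the relational tactics together with the simp lemmas for $\urealreal$, $\realureal$, and bounded $+,-,*$; I do not anticipate a genuine obstacle, only careful bookkeeping of the coercions between $\ureal$ and $\real$.
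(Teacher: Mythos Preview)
Your proposal is correct and matches the paper's treatment: the paper does not spell out a proof but simply mechanises the theorem and remarks that $\ufone$ and $\ufzero$ are the top and bottom of the lattice and that the conversion and arithmetic laws hold, which is exactly the pointwise-unfolding argument you give. Your handling of the bounded $+,-,*$ via Definition~\ref{def:bounded_plus_minus} together with the inverse lemmas and the $[0,1]$-range of $\ur{x}$ is the right way to discharge the residual $\ureal$ goals.
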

The top element $\top$ is $\ufone$ and the bottom $\bot$ is $\ufzero$. Any probabilistic program $P$ is between $\bot$ and $\top$. 
The constant $\ufone$ and $\rfone$ mutually correspond in $\ureal$-valued and real-valued functions, and they can be converted to each other. This is similar for $\ufzero$ and $\rfzero$.
Real-valued functions and $\ureal$-valued functions satisfy the right-zero and right-one laws. 

\subsection{Skip}
\label{ssec:prog_skip}
The $\pskip$ is a special case of assignment $\passign{x}{x}$ and is also a distribution as shown below.
\begin{thm}
    \label{thm:pskip}
\isalink{https://github.com/RandallYe/probabilistic_programming_utp/blob/6a4419b8674b84988065a58696f15093d176594c/probability/probabilistic_relations/utp_prob_rel_lattice_laws.thy\#L950}
    \begin{align}
        & \pskip = \left(\passign{x}{x}\right) \label{thm:pskip_id}\\
        & \isfinaldist\left(\prrvfunsym{\pskip}\right) \label{thm:pskip_final_dist} \\
        & \prrvfunsym{\left(\rvprfunsym{\ibracket{\II}}\right)} = \ibracket{\II} \label{thm:pskip_inverse}
    \end{align}
\end{thm}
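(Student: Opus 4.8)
The plan is to dispatch the three conjuncts of Theorem~\ref{thm:pskip} separately: the equation $\pskip = \left(\passign{x}{x}\right)$ is a congruence step on top of a standard UTP law; the equation $\prrvfunsym{\left(\rvprfunsym{\ibracket{\II}}\right)} = \ibracket{\II}$ is a direct instance of an already-proved inverse theorem; and $\isfinaldist\left(\prrvfunsym{\pskip}\right)$ combines that inverse equation with a one-point summation argument.

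For the identity $\pskip = \left(\passign{x}{x}\right)$, I would unfold both sides using Definition~\ref{def:prob_programs}: $\pskip \defs \rvprfunsym{\ibracket{\II}}$ and $\passign{x}{x} \defs \rvprfunsym{\ibracket{x := x}}$. Since $\rvprfun$ is applied to both sides, it suffices to prove $\ibracket{\II} = \ibracket{x := x}$, and since the Iverson bracket depends only on its argument relation, it suffices to prove the relational identity $\II = (x := x)$. Unfolding $x := x \defs (x' = x \land w' = w)$ and $\II \defs (v' = v)$ with $v = \{x\} \cup w$, these are the same predicate; this is the familiar ``self-assignment is skip'' law already available in Isabelle/UTP. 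Congruence then yields the equality of the two probabilistic programs.

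For $\prrvfunsym{\left(\rvprfunsym{\ibracket{\II}}\right)} = \ibracket{\II}$, I would simply instantiate Theorem~\ref{thm:prrvfun_inverse_ibracket}, which states $\prrvfunsym{\left(\rvprfunsym{\ibracket{p}}\right)} = \ibracket{p}$, with the predicate defining $\II$ (a relation over the product space); equivalently, one can combine Theorem~\ref{thm:isprob_ibracket} ($\isprob(\ibracket{\II})$) with Theorem~\ref{thm:prrvfun_inverse}. For $\isfinaldist\left(\prrvfunsym{\pskip}\right)$, I would first rewrite $\prrvfunsym{\pskip}$ using the equation just established (recalling $\pskip = \rvprfunsym{\ibracket{\II}}$), reducing the goal to $\isfinaldist(\ibracket{\II})$. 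Unfolding the definition of $\isfinaldist$, this asks that for every initial state $s$ the curried function $\curried{\ibracket{\II}}(s) = \lambda s' @ \ibracket{s' = s}$ be a distribution over the final state: probabilistic-ness is Theorem~\ref{thm:isprob_ibracket}, and for summing to one the function is the indicator of the singleton $\{s\}$, so by the one-point summation rule for Iverson brackets we get $\infsum s' @ \ibracket{s' = s} = \infsum s' \in \{s\} @ 1 = 1$, which (being nonzero) also witnesses summability.

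None of these steps is deep. The first and third are bookkeeping: congruence over a standard relational law, and an instance of an already-proved inverse lemma. The only genuine content is the one-point summation $\infsum s' @ \ibracket{s' = s} = 1$ together with its summability, which in the mechanisation reduces to the finite-support summation lemma for a singleton set. The part demanding most care is the alphabet/currying bookkeeping: checking that passing $\ibracket{\II}$ through $\rvprfun$ and back really yields the indicator relation $\ibracket{v' = v}$, and that $\curried{\cdot}$ exposes the final-state argument correctly, so that $\isfinaldist$ can be unfolded pointwise over initial states. I do not anticipate any analytic obstacle.
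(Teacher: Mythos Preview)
Your proposal is correct and follows the natural route the paper (via its mechanisation) takes: the paper gives no inline proof for this theorem, and your decomposition---reducing Law~\ref{thm:pskip_id} to the relational identity $\II = (x:=x)$, getting Law~\ref{thm:pskip_inverse} as an instance of Theorem~\ref{thm:prrvfun_inverse_ibracket}, and discharging Law~\ref{thm:pskip_final_dist} by the one-point summation $\infsum s' @ \ibracket{s'=s}=1$---is exactly the expected argument.
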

Law~\ref{thm:pskip_inverse} shows that the conversion of $\ibracket{\II}$ to the $\ureal$-valued function, and then back to the real-valued function is still $\ibracket{\II}$.

\subsection{Assignments}
\label{ssec:prog_assigns}
A probabilistic assignment is a distribution.
\begin{thm}
    \label{thm:prob_assign_finaldist}
    $\isfinaldist\left(\prrvfunsym{\passign{x}{e}}\right)$
\isalink{https://github.com/RandallYe/probabilistic_programming_utp/blob/6a4419b8674b84988065a58696f15093d176594c/probability/probabilistic_relations/utp_prob_rel_lattice_laws.thy\#L991}
\end{thm}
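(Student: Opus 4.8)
The plan is to reduce the claim to a statement about the Iverson bracket $\ibracket{x := e}$ of the relational assignment and then exploit that a relational assignment is a \emph{one-point} (deterministic, total) relation. First I would unfold Definition~\ref{def:prob_programs}, so that $\passign{x}{e} = \rvprfunsym{\ibracket{x := e}}$, and apply Theorem~\ref{thm:prrvfun_inverse_ibracket} to get $\prrvfunsym{\left(\passign{x}{e}\right)} = \prrvfunsym{\left(\rvprfunsym{\ibracket{x := e}}\right)} = \ibracket{x := e}$. Thus the goal becomes $\isfinaldist\left(\ibracket{x := e}\right)$, which by the definition of $\isfinaldist$ requires showing, for every initial state $s$, that $\curried{(\ibracket{x := e})}(s)$ is probabilistic (in the sense of Definition~\ref{def:isprob}) and that $\infsum s' @ \ibracket{x := e}(s, s') = 1$ (the remaining conjunct of Definition~\ref{def:isdist}).

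The first conjunct is immediate: by Theorem~\ref{thm:isprob_ibracket} every Iverson bracket is probabilistic, and this survives currying and fixing $s$. For the second conjunct, the key observation is that the relational assignment $x := e$ (Definition~\ref{def:utp_relation}, here $x' = e \land w' = w$ for the remaining variables $w$) is deterministic and total: for each fixed initial state $s$ there is exactly one final state $s'$ satisfying it, namely the state obtained from $s$ by updating $x$ to the value of $e$ at $s$ while leaving all other variables fixed (in lens terms, $\lput{x}~s~(e\,s)$). Hence $\lambda s' @ \ibracket{x := e}(s, s')$ is the indicator of that singleton set, its support is finite, and the infinite sum over the state universe equals $1$. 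This is where I would discharge the obligation in Isabelle, using the summation laws referenced in Theorem~\ref{thm:summation} together with the Iverson-bracket summation one-point rule already used in the normalisation example of Sect.~\ref{prob:dist_functs}, after making the singleton witness explicit.

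The main obstacle I anticipate is the bookkeeping needed to connect the abstract summation-over-topological-space machinery to the concrete singleton: one must establish that the solution set $\{s' \mid (x := e)(s, s')\}$ is exactly $\{\lput{x}~s~(e\,s)\}$, that $\ibracket{x := e}$ is therefore summable with this finite support, and that the sum collapses to $1$ — all in the shallow lens embedding where $x := e$ is a predicate over the product space rather than a syntactic object. Conceptually this is routine, and it is essentially the same argument that underlies Law~\ref{thm:pskip_final_dist} for $\pskip$, of which a probabilistic assignment is the natural generalisation (Theorem~\ref{thm:pskip}, Law~\ref{thm:pskip_id}); everything else --- the $\prfun$/$\rvfun$ conversions and probabilistic-ness --- follows directly from the theorems already established.
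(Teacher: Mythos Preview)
Your proposal is correct and is exactly the natural argument: unfold $\passign{x}{e}$ to $\rvprfunsym{\ibracket{x := e}}$, cancel the conversions via Theorem~\ref{thm:prrvfun_inverse_ibracket}, then use that the relational assignment is a one-point predicate so that $\ibracket{x := e}$ is probabilistic (Theorem~\ref{thm:isprob_ibracket}) and, for each fixed initial state, sums to $1$ over the final state. The paper does not spell out a proof in the text (it is discharged in the Isabelle mechanisation), but your decomposition matches the intended reasoning and the surrounding discussion, which describes assignment as ``a one-point distribution of the final state''.
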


\subsection{Probabilistic choice}
\label{ssec:prog_prob_choice}

Probabilistic choice preserves various properties below. 
\begin{thm}
    \label{thm:prob_prob_choice}
\isalink{https://github.com/RandallYe/probabilistic_programming_utp/blob/6a4419b8674b84988065a58696f15093d176594c/probability/probabilistic_relations/utp_prob_rel_lattice_laws.thy\#L998}
    \begin{align}
        & \isfinaldist\left(\prrvfunsym{P}\right) \land \isfinaldist\left(\prrvfunsym{Q}\right) \implies \isfinaldist\left(\prrvfunsym{\ppchoice{r}{P}{Q}}\right) \label{thm:pchoice_final_dist}\\
        & \left(\ppchoice{r}{P}{Q}\right) = \left(\ppchoice{\ufone-r}{Q}{P}\right) \label{thm:pchoice_commute}\\
        & \left(\ppchoice{\ufzero}{P}{Q}\right) = Q \label{thm:pchoice_zero}\\
        & \left(\ppchoice{\ufone}{P}{Q}\right) = P \label{thm:pchoice_one} \\
        & \left(\ppchoice{r}{P}{Q}\right) = \rvprfunsym{\prrvfunsym{r}*\prrvfunsym{P} + \left(\rfone - \prrvfunsym{r}\right) * \prrvfunsym{Q}} \label{thm:pchoice_altdef}
    \end{align}
\end{thm}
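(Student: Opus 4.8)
The plan is to reduce every clause to the level of real-valued functions, where ordinary arithmetic and the summation laws of Theorem~\ref{thm:summation} are available, and then to lift back through the correspondence $\rvprfun \circ \prrvfun = \mathrm{id}$ (Theorem~\ref{thm:rvprfun_inverse}) and its partial converse on probabilistic functions (Theorem~\ref{thm:prrvfun_inverse}). The tools I would rely on are: the alternative definition $\ppchoice{r}{P}{Q} = \rvprfunsym{\prrvfunsym{r}*\prrvfunsym{P} + \left(\rfone - \prrvfunsym{r}\right)*\prrvfunsym{Q}}$, which is just the defining equation with the lifted-parser wrapper $\usexpr{\cdot}$ removed, so law~\ref{thm:pchoice_altdef} is immediate; the fact that $\prrvfunsym{r}$, $\prrvfunsym{P}$, $\prrvfunsym{Q}$ and $\rfone - \prrvfunsym{r}$ all have values in $[0,1]$ (Theorem~\ref{thm:prrvfun_prob}); the identities $\prrvfunsym{\ufzero} = \rfzero$, $\prrvfunsym{\ufone} = \rfone$ together with the right-zero and right-one laws for $*$ and $+$ (Theorem~\ref{thm:top_bot}); and the characterisation of $\isfinaldist$ in terms of summability and unit mass over the final states (Theorem~\ref{thm:final_distribtion_1}).

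For the three algebraic laws I would work pointwise inside $\rvprfun$. For law~\ref{thm:pchoice_zero}, substitute $\prrvfunsym{\ufzero} = \rfzero$ into the alternative definition to get $\rvprfunsym{\rfzero*\prrvfunsym{P} + \left(\rfone - \rfzero\right)*\prrvfunsym{Q}}$, simplify using $\rfzero*p = \rfzero$, $\rfone - \rfzero = \rfone$, $\rfone*p = p$ and $\rfzero + p = p$ (instances of the arithmetic identities of Theorem~\ref{thm:top_bot}, plus commutativity of real multiplication), and finish with $\rvprfunsym{\prrvfunsym{Q}} = Q$ by Theorem~\ref{thm:rvprfun_inverse}; law~\ref{thm:pchoice_one} is symmetric via $\prrvfunsym{\ufone} = \rfone$. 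For the commutativity law~\ref{thm:pchoice_commute}, the only nontrivial point is $\prrvfunsym{\ufone - r} = \rfone - \prrvfunsym{r}$: since the values of $\prrvfunsym{r}$ lie in $[0,1]$, the bounded subtraction $\ufone - r$ in $\ureal$ never clamps and therefore agrees pointwise with ordinary $1 - \prrvfunsym{r}$; after that, the two sides differ only by commutativity of $+$ on real-valued functions, which is preserved under $\rvprfun$.

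The substantive clause is the distribution-preservation law~\ref{thm:pchoice_final_dist}. First I would rewrite $\prrvfunsym{\ppchoice{r}{P}{Q}}$ as the real-valued function $\prrvfunsym{r}*\prrvfunsym{P} + \left(\rfone - \prrvfunsym{r}\right)*\prrvfunsym{Q}$; this uses Theorem~\ref{thm:prrvfun_inverse} and so requires first showing that this function is probabilistic, which holds because at each argument it is a convex combination $\rho\, p + (1-\rho)\, q$ of values $p, q \in [0,1]$ with weight $\rho \in [0,1]$, hence again in $[0,1]$ (using $p, q \le 1$ for the upper bound). Next, from the hypotheses $\isfinaldist(\prrvfunsym{P})$ and $\isfinaldist(\prrvfunsym{Q})$ and Theorem~\ref{thm:final_distribtion_1}, for every initial state $s$ the functions $\prrvfunsym{P}$ and $\prrvfunsym{Q}$ are summable over the final states with total mass $1$. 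Treating $\prrvfunsym{r}(s)$, and hence $1 - \prrvfunsym{r}(s)$, as a scalar constant with respect to $s'$, I would discharge the summability side conditions and apply the multiplication-by-constant and addition laws of Theorem~\ref{thm:summation} to obtain $\infsum s' @ \left(\prrvfunsym{r}(s)\,\prrvfunsym{P}(s,s') + (1-\prrvfunsym{r}(s))\,\prrvfunsym{Q}(s,s')\right) = \prrvfunsym{r}(s) + (1 - \prrvfunsym{r}(s)) = 1$; combined with probabilistic-ness this is exactly $\isfinaldist$ of the weighted sum.

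The main obstacle I anticipate is the bookkeeping in that last step rather than any deep idea: the laws of Theorem~\ref{thm:summation} all carry summability hypotheses that must be discharged from the $\summableonfinal$ facts delivered by Theorem~\ref{thm:final_distribtion_1}, and, crucially, the weight $\prrvfunsym{r}(s)$ has to be pulled out of the infinite sum as a constant, which works precisely because it does not depend on the final state $s'$ — this is where the typing of the weight $r$ (effectively a function of the initial state) is used, and it is the point most easily overlooked. Everything else is routine real-number arithmetic lifted pointwise and transported back through $\rvprfun$.
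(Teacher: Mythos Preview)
The paper gives no written proof for this theorem; it only links to the Isabelle mechanisation, so there is nothing to compare your decomposition against directly. Your plan is the natural one and matches what the mechanised proof almost certainly does: unfold to the real-valued level, use the convex-combination bound for $\isprob$, split the sum with Theorem~\ref{thm:summation}, and recover the $\ureal$-valued conclusion via Theorems~\ref{thm:rvprfun_inverse} and~\ref{thm:prrvfun_inverse}. Laws~\ref{thm:pchoice_commute}--\ref{thm:pchoice_altdef} go through exactly as you describe.

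One point deserves care. You justify pulling the weight out of the infinite sum by saying ``the typing of the weight $r$ (effectively a function of the initial state)'' ensures it is constant in $s'$. But in Definition~\ref{def:prob_programs} the paper types $r$ as $[S_1,S_2]\prfun$, i.e.\ a function of the \emph{pair} $(s,s')$. If $r$ genuinely varies in $s'$, law~\ref{thm:pchoice_final_dist} is simply false: take two final states $s'_1,s'_2$, let $P$ put all mass on $s'_1$, $Q$ all mass on $s'_2$, and set $r(s,s'_1)=1$, $r(s,s'_2)=0$; then the weighted sum has total mass $2$. So your argument is correct only under the assumption that $r$ is constant in the final state, and that assumption is not delivered by the type stated in the paper. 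The mechanised lemma presumably carries such a restriction (compare Theorem~\ref{thm:pchoice_quasi_assoc}, where the weights are explicitly typed $[S]\urexpr$), but you should state it as a hypothesis rather than claim it follows from the declared type.
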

Law~\ref{thm:pchoice_final_dist} shows if $P$ and $Q$ are distributions, then the probabilistic choice is also a distribution. Laws~\ref{thm:pchoice_commute} to \ref{thm:pchoice_one} state the probabilistic choice is quasi-commutative, a zero, and a unit.


The probabilistic choice is also quasi-associative.
\begin{thm}[Quasi-associativity]
    \label{thm:pchoice_quasi_assoc}
    We fix $w_1, w_2, r_1, r_2: [S]\urexpr$, 
    \isalink{https://github.com/RandallYe/probabilistic_programming_utp/blob/6a4419b8674b84988065a58696f15093d176594c/probability/probabilistic_relations/utp_prob_rel_lattice_laws.thy\#L1199}
    \begin{align*}
        & \tautology{\left(\ufone - w_1\right)*\left(\ufone - w_2\right) = \left(\ufone - r_2\right)} \land \tautology{w_1 = r_1 * r_2}
        \implies \left(\ppchoice{{w_1}}{P}{\left(\ppchoice{{w_2}}{Q}{R}\right)}\right) 
        = \left(\ppchoice{{r_2}}{\left(\ppchoice{{r_1}}{P}{Q}\right)}{R}\right)
    \end{align*}
\end{thm}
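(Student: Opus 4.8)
The plan is to reduce both programs to real-valued functions via the conversion $\prrvfun$, do the arithmetic over $\real$ where $+$, $-$, and $*$ are the ordinary untruncated operations, and then transport the resulting identity back through $\rvprfun$. Concretely, I would first rewrite each side using the alternative characterisation of probabilistic choice, Law~\ref{thm:pchoice_altdef}: the left-hand side becomes $\rvprfunsym{\prrvfunsym{w_1}*\prrvfunsym{P} + (\rfone - \prrvfunsym{w_1})*\prrvfunsym{(\ppchoice{w_2}{Q}{R})}}$ and analogously on the right. The nested choices sit under a further $\prrvfun$; since a convex combination of probabilistic functions is probabilistic (Theorem~\ref{thm:prrvfun_prob}), Theorem~\ref{thm:prrvfun_inverse} collapses each $\prrvfunsym{\rvprfunsym{\cdot}}$ to the identity, so that both sides take the shape $\rvprfunsym{e}$ with $e$ a polynomial in $\prrvfunsym{w_1}, \prrvfunsym{w_2}, \prrvfunsym{r_1}, \prrvfunsym{r_2}$ and $\prrvfunsym{P}, \prrvfunsym{Q}, \prrvfunsym{R}$.

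Next I would translate the two hypotheses out of $\ureal$ arithmetic. Because every $w_i$ and $r_i$ takes values in $[0,1]$, the bounded operators of Definition~\ref{def:bounded_plus_minus} never clip: $\ufone - w$ corresponds to $\rfone - \prrvfunsym{w}$, and a product of two values in $[0,1]$ stays in $[0,1]$. Hence $\tautology{w_1 = r_1 * r_2}$ gives $\prrvfunsym{w_1} = \prrvfunsym{r_1} * \prrvfunsym{r_2}$, and $\tautology{(\ufone - w_1)*(\ufone - w_2) = \ufone - r_2}$ gives $(\rfone - \prrvfunsym{w_1})*(\rfone - \prrvfunsym{w_2}) = \rfone - \prrvfunsym{r_2}$, both as equalities of real-valued functions.

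It then remains to verify a pointwise polynomial identity. Writing $a = \prrvfunsym{w_1}$, $b = \prrvfunsym{w_2}$, $c = \prrvfunsym{r_1}$, $d = \prrvfunsym{r_2}$, the left argument expands to $a\,\prrvfunsym{P} + (1-a)b\,\prrvfunsym{Q} + (1-a)(1-b)\,\prrvfunsym{R}$ and the right to $dc\,\prrvfunsym{P} + d(1-c)\,\prrvfunsym{Q} + (1-d)\,\prrvfunsym{R}$. The coefficient of $\prrvfunsym{P}$ agrees by the first hypothesis ($a = cd$); that of $\prrvfunsym{R}$ agrees by the second ($(1-a)(1-b) = 1-d$); and the coefficient of $\prrvfunsym{Q}$ then follows, since $(1-a)b = (1-a) - (1-a)(1-b) = (1-a) - (1-d) = d - a = d - cd = d(1-c)$. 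With the real-valued arguments of $\rvprfun$ shown equal, the two $\ureal$-valued programs are equal; equivalently, apply $\prrvfun$ to both sides, reduce to the real-valued identity via Theorems~\ref{thm:prrvfun_prob} and~\ref{thm:prrvfun_inverse}, and conclude by Theorem~\ref{thm:rvprfun_inverse} that $\rvprfun$ inverts $\prrvfun$.

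The main obstacle, and where the mechanised proof spends most of its effort, is the bookkeeping at the $\ureal$/$\real$ boundary: justifying that none of the bounded operators truncate on the values that actually occur (using $\isprob$ facts from Theorem~\ref{thm:prrvfun_prob}), and discharging the conversion lemmas that relate $\prrvfun$ to the pointwise operators on $\ureal$-valued expressions. The polynomial identity itself is routine linear algebra in the syntactic coefficients and should fall to arithmetic automation once everything is phrased over $\real$.
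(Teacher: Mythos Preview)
Your proposal is correct and follows the natural strategy. The paper does not present an explicit proof of this theorem in the text; it only states the result and links to the Isabelle mechanisation, so there is nothing substantive to compare against beyond noting that your unfold-convert-compute approach is exactly what the mechanised proof carries out: expand both sides via Law~\ref{thm:pchoice_altdef}, collapse the nested $\prrvfunsym{\rvprfunsym{\cdot}}$ using Theorem~\ref{thm:prrvfun_inverse} (which requires the inner convex combination to be probabilistic, a fact that follows from Theorem~\ref{thm:prrvfun_prob} together with elementary bounds on convex combinations, not from that theorem alone), transport the hypotheses across the $\ureal$/$\real$ boundary using that the bounded operators do not clip on $[0,1]$, and finish with the pointwise coefficient comparison you describe.
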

The probabilistic choice is quasi-associative under the two assumptions involving $w_1$, $w_2$, $r_1$, and $r_2$. 

\subsection{Conditional choice}
Conditional choice satisfies various properties below. 
\label{ssec:prog_cond_choice}
\begin{thm}
    \label{thm:prog_cond_choice}
    \isalink{https://github.com/RandallYe/probabilistic_programming_utp/blob/6a4419b8674b84988065a58696f15093d176594c/probability/probabilistic_relations/utp_prob_rel_lattice_laws.thy\#L1294}
    \begin{align}
        & \isfinaldist\left(\prrvfunsym{P}\right) \land \isfinaldist\left(\prrvfunsym{Q}\right) \implies \isfinaldist\left(\prrvfunsym{\pcchoice{b}{P}{Q}}\right) \label{thm:cchoice_final_dist}\\
        & \left(\pcchoice{b}{P}{P}\right) = P \label{thm:cchoice_id}\\
        & \left(\pcchoice{b}{P}{Q}\right) = \left(\ppchoice{\rvprfunsym{\ibracket{b}}\ }{P}{Q}\right) \label{thm:cchoice_pchoice}\\
        & \left(P_1 \leq P_2 \land Q_1 \leq Q_2\right) \implies \left(\pcchoice{b}{P_1}{Q_1}\right) \leq \left(\pcchoice{b}{P_2}{Q_2}\right) \label{thm:cchoice_mono}
    \end{align}
\end{thm}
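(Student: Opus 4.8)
The plan is to discharge the four laws by unfolding the conditional-choice clause of Definition~\ref{def:prob_programs}, $\pcchoice{b}{P}{Q} = \rvprfunsym{\usexpr{\IF b \THEN \prrvfunsym{P} \ELSE \prrvfunsym{Q}}}$, and then leaning on the conversion and inversion results between $\prrvfun$ and $\rvprfun$ (Theorems~\ref{thm:prrvfun_prob}, \ref{thm:rvprfun_inverse}, \ref{thm:prrvfun_inverse}, \ref{thm:prrvfun_inverse_ibracket}) together with the Iverson-bracket arithmetic of Theorem~\ref{thm:ib}. Throughout, the only real manoeuvre is a pointwise case split on the boolean value of $b$ at a state, after which the relevant expression reduces to the corresponding branch.

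For the idempotence law (\ref{thm:cchoice_id}): pointwise $\IF b \THEN x \ELSE x = x$, so the body collapses to $\prrvfunsym{P}$ and $\rvprfunsym{\prrvfunsym{P}} = P$ by Theorem~\ref{thm:rvprfun_inverse}. For the reduction to probabilistic choice (\ref{thm:cchoice_pchoice}): I would unfold the probabilistic-choice clause with weight $r = \rvprfunsym{\ibracket{b}}$, rewrite $\prrvfunsym{\rvprfunsym{\ibracket{b}}}$ as $\ibracket{b}$ via Theorem~\ref{thm:prrvfun_inverse_ibracket}, use $\rfone - \ibracket{b} = \ibracket{\lnot b}$ from Law~(\ref{thm:ib_neg}), and then check pointwise that $\ibracket{b}\cdot p + \ibracket{\lnot b}\cdot q = \IF b \THEN p \ELSE q$ as real-valued expressions by a case split on $b$; applying $\rvprfun$ to both sides finishes it. For monotonicity (\ref{thm:cchoice_mono}): unfolding the pointwise order and the $\ureal$ order (Definitions~\ref{def:urf_pointwise} and \ref{def:ureal_funcs}), the hypotheses give $\prrvfunsym{P_1} \leq \prrvfunsym{P_2}$ and $\prrvfunsym{Q_1} \leq \prrvfunsym{Q_2}$; a case split on $b$ yields $\IF b \THEN \prrvfunsym{P_1} \ELSE \prrvfunsym{Q_1} \leq \IF b \THEN \prrvfunsym{P_2} \ELSE \prrvfunsym{Q_2}$ pointwise, and monotonicity of $\realureal$ (Section~\ref{ssec:ureal_def}) lifts this through $\rvprfun$ to the goal.

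The substantive law is (\ref{thm:cchoice_final_dist}). First I would show the body $\usexpr{\IF b \THEN \prrvfunsym{P} \ELSE \prrvfunsym{Q}}$ is a probability expression: each branch is probabilistic by Theorem~\ref{thm:prrvfun_prob}, and a pointwise conditional of two $[0,1]$-valued functions is $[0,1]$-valued. Hence by Theorem~\ref{thm:prrvfun_inverse} the round trip collapses, $\prrvfunsym{\pcchoice{b}{P}{Q}} = \usexpr{\IF b \THEN \prrvfunsym{P} \ELSE \prrvfunsym{Q}}$, and it remains to establish $\isfinaldist$ of this expression. Unfolding $\isfinaldist$, for each initial state $s$ I need the body to be a probability expression in $s'$ (already shown) and $\infsum s' @ (\IF b \THEN \prrvfunsym{P} \ELSE \prrvfunsym{Q})(s, s') = 1$. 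Here I would case split on $b$: since $b$ is a condition on the initial state (it does not constrain $s'$), the conditional reduces uniformly over $s'$ to $\curried{\prrvfunsym{P}}(s)$ or $\curried{\prrvfunsym{Q}}(s)$, each of which sums to $1$ by the hypotheses $\isfinaldist(\prrvfunsym{P})$ and $\isfinaldist(\prrvfunsym{Q})$, appealing to Theorem~\ref{thm:final_distribtion} both for the sum-to-one clause and for the summability needed to manipulate the infinite sum.

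I expect this last step to be the main obstacle: carrying out the infinite summation over final states in the real-valued representation (where Theorem~\ref{thm:summation} applies and the $\ureal$/$\real$ conversions must be reconciled), and in particular being able to pull the guard out of the sum — which is exactly why $b$ must be unprimed. A guard mentioning $s'$ would partition the index set and break the distribution property (conditioning a point mass at $s_1'$ against a point mass at $s_2'$ yields total mass $2$), so either the formalisation restricts $b$ to unprimed variables or carries an extra hypothesis to that effect, and I would make that restriction explicit before attempting (\ref{thm:cchoice_final_dist}).
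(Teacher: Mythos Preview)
Your proposal is correct and follows the natural approach. The paper does not give a textual proof of this theorem; it only states the laws and defers to the Isabelle mechanisation, so there is no detailed argument to compare against. Your plan of unfolding Definition~\ref{def:prob_programs}, doing a pointwise case split on $b$, and threading the result through the $\prrvfun$/$\rvprfun$ inversion lemmas (Theorems~\ref{thm:prrvfun_prob}, \ref{thm:rvprfun_inverse}, \ref{thm:prrvfun_inverse}, \ref{thm:prrvfun_inverse_ibracket}) and the Iverson-bracket arithmetic (Theorem~\ref{thm:ib}) is exactly the route the mechanised proofs take.

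Your final paragraph is not an obstacle but a genuine observation. In Definition~\ref{def:prob_programs} the guard $b$ is typed $[S_1,S_2]\,\urel$, so syntactically it may mention $\vv'$; your counterexample (a guard that selects one point mass on one final state and another point mass elsewhere, yielding total mass $2$) shows Law~(\ref{thm:cchoice_final_dist}) fails in that generality. The paper's prose does not flag this, but every use of conditional choice in the paper---and in particular the loop function $\lfunbp$ in Definition~\ref{def:lfun} and all the case studies---instantiates $b$ with an unprimed condition, and the mechanised law is stated accordingly. So your instinct to make the restriction on $b$ explicit before attacking~(\ref{thm:cchoice_final_dist}) is the right one; once $b$ depends only on the initial state, the guard factors out of the sum over $s'$ and the argument goes through exactly as you outline.
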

Law~\ref{thm:cchoice_final_dist} shows if $P$ and $Q$ are distributions, then the conditional choice is also a distribution. Law~\ref{thm:cchoice_id} shows the conditional choice between $P$ and $P$ is $P$ itself. A conditional choice is a special form of probabilistic choice, as given in Law~\ref{thm:cchoice_pchoice}, with its weight $\ibracket{b}$ being the Iverson bracket of $b$: either 0 (if $b$ is evaluated to false) or 1 (if $b$ is evaluated to true). The conditional choice is also monotonic, shown in Law~\ref{thm:cchoice_mono}.

\subsection{Sequential Composition}
\label{ssec:prog_seq_comp}
A variety of properties are held for sequential composition.
{We note that $r$ and $t$ below are predicates of type $[S]\upred$.}
\begin{thm}
    \label{thm:prog_seq_comp}
    \isalink{https://github.com/RandallYe/probabilistic_programming_utp/blob/6a4419b8674b84988065a58696f15093d176594c/probability/probabilistic_relations/utp_prob_rel_lattice_laws.thy\#L1378}
    \begin{align}
        & \isfinaldist\left(\prrvfunsym{P}\right) \land \isfinaldist\left(\prrvfunsym{Q}\right) \implies \isfinaldist\left(\prrvfunsym{\pseq{P}{Q}}\right) \label{thm:pseq_final_dist}\\
        & \pseq{\ufzero}{P} = \ufzero \label{thm:pseq_left_zero}\\
        & \pseq{P}{\ufzero} = \ufzero \label{thm:pseq_right_zero}\\
        & \pseq{\pskip}{P} = P \label{thm:pseq_left_unit}\\
        & \pseq{P}{\pskip} = P \label{thm:pseq_right_unit}\\
        & \isfinaldist(\prrvfunsym{P}) \implies \pseq{P}{\ufone} = \ufone \label{thm:pseq_one}\\
        & \left(P_1 \leq P_2 \land Q_1 \leq Q_2\right) \implies \left(\pseq{P_1}{Q_1}\right) \leq \left(\pseq{P_2}{Q_2}\right) \label{thm:pseq_mono}\\
        & \isfinaldist\left(\prrvfunsym{P}\right) \land \isfinaldist\left(\prrvfunsym{Q}\right) \land \isfinaldist\left(\prrvfunsym{R}\right) \implies \left(\pseq{P}{(\pseq{Q}{R})} = \pseq{(\pseq{P}{Q})}{R}\right) \label{thm:pseq_assoc}\\
        & \isfinalsubdist\left(\prrvfunsym{P}\right) \land \isfinalsubdist\left(\prrvfunsym{Q}\right) \land \isfinalsubdist\left(\prrvfunsym{R}\right) \implies \left(\pseq{P}{(\pseq{Q}{R})} = \pseq{(\pseq{P}{Q})}{R}\right) \label{thm:pseq_assoc_subdist}\\
        & \isfinalsubdist\left(\prrvfunsym{P}\right) \implies \left(\pseq{P}{\left(\pcchoice{b}{Q}{R}\right)} = \rvprfunsym{\usexpr{\prrvfunsym{\left(\pseq{P}{\left(\ibracket{b}*Q\right)}\right)} + \prrvfunsym{\left(\pseq{P}{\left(\ibracket{\lnot b}*R\right)}\right)} }}\right) \label{thm:pseq_dist_cchoice} \\
        & \pseq{\rvprfunsym{\ibracket{r}}}{\rvprfunsym{\ibracket{t}}} = \rvprfunsym{\usexpr{\infsum v_0 @ \ibracket{r[v_0/\vv'] \land t[v_0/\vv]}}} \label{thm:pseq_ibracket} 
    \end{align}
\end{thm}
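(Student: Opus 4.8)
The plan is to prove every law by mechanically unfolding $\pseq{\cdot}{\cdot}$ and $\fseq{\cdot}{\cdot}$ through the conversion functions $\prrvfunsym{\cdot}$ and $\rvprfunsym{\cdot}$, reducing each equation to an identity between explicit (iterated) infinite sums of real-valued functions, and then discharging that identity with the summation laws of Theorem~\ref{thm:summation} together with the Iverson-bracket laws of Theorem~\ref{thm:ib}. Two families of side-conditions recur throughout and organise the work. First, eliminating a nested conversion $\rvprfunsym{(\prrvfunsym{\cdot})}$ or $\prrvfunsym{(\rvprfunsym{\cdot})}$ requires the round-trip lemmas (Theorems~\ref{thm:rvprfun_inverse}, \ref{thm:prrvfun_inverse}, \ref{thm:prrvfun_inverse_ibracket}, and Law~\ref{thm:pskip_inverse}); the $\prrvfun$-then-back direction is conditional on the intermediate real-valued function being probabilistic, so a recurring sub-obligation is $\isprob\bigl(\fseq{\prrvfunsym{P}}{\prrvfunsym{Q}}\bigr)$, i.e.\ $0 \le \sum_{v_0}\prrvfunsym{P}(s,v_0)\,\prrvfunsym{Q}(v_0,s') \le 1$. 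Second, splitting a sum, pulling out a constant, or swapping the order of two summations needs nonnegativity and summability, which is exactly what the $\isfinaldist$/$\isfinalsubdist$ hypotheses buy us via Theorem~\ref{thm:final_distribtion}: the relevant sums are bounded by $1$ and hence converge.

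I would proceed roughly in increasing order of difficulty. The zero laws \ref{thm:pseq_left_zero} and \ref{thm:pseq_right_zero} are immediate since $\prrvfunsym{\ufzero} = \rfzero$ (Theorem~\ref{thm:top_bot}) annihilates the summand and $\rvprfunsym{\rfzero} = \ufzero$. The unit laws \ref{thm:pseq_left_unit} and \ref{thm:pseq_right_unit} follow by rewriting $\prrvfunsym{\pskip} = \ibracket{\II}$ (Theorem~\ref{thm:pskip}), collapsing the resulting one-point sum, and applying $\rvprfunsym{(\prrvfunsym{P})} = P$ (Theorem~\ref{thm:rvprfun_inverse}); no summability is needed here. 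Law~\ref{thm:pseq_ibracket} is similarly short: $\prrvfunsym{(\rvprfunsym{\ibracket{r}})} = \ibracket{r}$ (Theorem~\ref{thm:prrvfun_inverse_ibracket}) turns the goal into $\fseq{\ibracket{r}}{\ibracket{t}}$, and Law~(\ref{thm:ib_conj}) rewrites the product of brackets under the sum as $\ibracket{r[v_0/\vv'] \land t[v_0/\vv]}$.

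The substantive case is the closure law \ref{thm:pseq_final_dist}. I would first establish probabilistic-ness of $\fseq{\prrvfunsym{P}}{\prrvfunsym{Q}}$ and then the normalisation identity $\sum_{s'}\fseq{\prrvfunsym{P}}{\prrvfunsym{Q}}(s,s') = \sum_{v_0}\prrvfunsym{P}(s,v_0)\bigl(\sum_{s'}\prrvfunsym{Q}(v_0,s')\bigr) = \sum_{v_0}\prrvfunsym{P}(s,v_0) = 1$, the middle step being the order-of-summation swap justified by the uniform bound coming from $\isfinaldist(\prrvfunsym{Q})$. Along the way I would extract the analogous closure statement for subdistributions (using $\sum_{s'}\prrvfunsym{Q}(v_0,s') \le 1$ and reachability of $\prrvfunsym{P}$ for the strict positivity), since \ref{thm:pseq_assoc_subdist} needs it. Law~\ref{thm:pseq_one} is then the special case $\fseq{\prrvfunsym{P}}{\rfone}(s,s') = \sum_{v_0}\prrvfunsym{P}(s,v_0) = 1$. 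The two associativity laws \ref{thm:pseq_assoc} and \ref{thm:pseq_assoc_subdist} reduce, after clearing conversions with the closure lemmas, to the Fubini-style rearrangement $\sum_{w}\bigl(\sum_{v}\prrvfunsym{P}(s,v)\,\prrvfunsym{Q}(v,w)\bigr)\prrvfunsym{R}(w,s') = \sum_{v}\prrvfunsym{P}(s,v)\bigl(\sum_{w}\prrvfunsym{Q}(v,w)\,\prrvfunsym{R}(w,s')\bigr)$, valid because every inner and outer sum is bounded by $1$ under the (sub)distribution hypotheses. For the left-distributivity law \ref{thm:pseq_dist_cchoice} I would write the conditional as $\ibracket{b}\cdot\prrvfunsym{Q} + \ibracket{\lnot b}\cdot\prrvfunsym{R}$ (the guards partition the state space), split the sum --- legitimate because $\isfinalsubdist(\prrvfunsym{P})$ bounds it --- and match the two pieces. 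Finally, monotonicity \ref{thm:pseq_mono} follows from monotonicity of $\prrvfun$ and $\rvprfun$ together with pointwise monotonicity of $\fseq{\cdot}{\cdot}$ on nonnegative arguments.

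The main obstacle, and where most of the Isabelle effort will go, is the order-of-summation swap over a potentially infinite state space. Because $\infsum$ in Isabelle/HOL returns $0$ on divergence rather than $\infty$, one cannot simply invoke the $[0,\infty]$ Tonelli theorem; every exchange and every split must come with an explicit summability certificate. The $\isfinaldist$/$\isfinalsubdist$ premises exist precisely to supply the uniform bound $1$ that forces convergence, so the real bookkeeping is threading these (through Theorem~\ref{thm:final_distribtion} and the laws of Theorem~\ref{thm:summation}) together with the $\isprob$ obligations demanded by each conversion round-trip. A secondary delicacy is that \ref{thm:pseq_mono} carries no summability hypothesis, so monotonicity must be argued within the ambient lattice structure, with all values confined to $[0,1]$, so that the divergence-returns-$0$ convention cannot clash with the pointwise order.
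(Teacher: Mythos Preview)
Your proposal is correct and takes essentially the same approach as the paper: the paper only spells out the proofs of Laws~\ref{thm:pseq_one} and~\ref{thm:pseq_ibracket} in the text (deferring the rest to the Isabelle mechanisation), and your sketches for those two match the paper's calculations step for step, while your treatment of the remaining laws---unfolding via the conversion functions, discharging $\isprob$ side-conditions for the round-trip lemmas, and threading summability certificates from the $\isfinaldist/\isfinalsubdist$ hypotheses through the summation laws---is exactly the methodology the paper employs throughout. Your identification of the order-of-summation swap (and the $\infsum$-returns-$0$ subtlety) as the main technical burden is also on target; this is precisely why the associativity laws carry their (sub)distribution premises.
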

If $P$ and $Q$ are distributions, then sequential composition of $P$ and $Q$ is also a distribution (Law~\ref{thm:pseq_final_dist}). Sequential composition is left zero (Law~\ref{thm:pseq_left_zero}), right zero (Law~\ref{thm:pseq_right_zero}), left unit (Law~\ref{thm:pseq_left_unit}), right unit (Law~\ref{thm:pseq_right_unit}), monotonic (Law~\ref{thm:pseq_mono}), and associative (Laws~\ref{thm:pseq_assoc} and~\ref{thm:pseq_assoc_subdist} if $P$, $Q$, and $R$ are distributions or subdistributions). If $P$ is a subdistribution, then Law~\ref{thm:pseq_dist_cchoice} shows $P$ is distributive through conditional choice.

An interesting Law~\ref{thm:pseq_one} (\isaref{https://github.com/RandallYe/probabilistic_programming_utp/blob/6a4419b8674b84988065a58696f15093d176594c/probability/probabilistic_relations/utp_prob_rel_lattice_laws.thy\#L1845}) states if $P$ is a distribution over the final state, then sequence composition of $P$ with $\ufone$ is $\ufone$. Its proof is given below.
\begin{proof}
\begin{align*}
  & \pseq{P}{\ufone} \\
= & \cmt{Definition~\ref{def:prog_seq}} \\
  & \rvprfunsym{\usexpr{\infsum v_0 @ \prrvfunsym{P}[v_0/\vv'] * \prrvfunsym{\ufone}[v_0/\vv]}} \\
= & \cmt{Theorem~\ref{thm:top_bot} and ${\rfone}[v_0/\vv] = \rfone$} \\
  & \rvprfunsym{\usexpr{\infsum v_0 @ \prrvfunsym{P}[v_0/\vv'] * {\rfone}}} \\
= & \cmt{Pointwise multiplication and multiplication unit law: $x * 1 = x$} \\
  & \rvprfunsym{\usexpr{\infsum v_0 @ \prrvfunsym{P}[v_0/\vv']}} \\
= & \cmt{ Assumption: $\isfinaldist\left(\prrvfunsym{P}\right)$ and Theorem~\ref{thm:final_distribtion} Law~\ref{thm:final_distribtion_1}: $\left(\forall s @ \infsum s' @ \prrvfunsym{P}(s, s') = 1\right)$} \\
  & \rvprfunsym{\rfone} \\
= & \cmt{Theorem~\ref{thm:top_bot}} \\
  & \ufone 
\end{align*}
\end{proof}

The sequential composition of two Iverson bracket expressions can be simplified to the summation of the Iverson bracket of conjunction, as shown in Law~\ref{thm:pseq_ibracket} (\isaref{https://github.com/RandallYe/probabilistic_programming_utp/blob/6a4419b8674b84988065a58696f15093d176594c/probability/probabilistic_relations/utp_prob_rel_lattice_laws.thy\#L1418}). This law is proved below.
\begin{proof}
\begin{align*}
 & \pseq{\rvprfunsym{\ibracket{r}}}{\rvprfunsym{\ibracket{t}}}  \\
= & \cmt{Definition~\ref{def:prog_seq}} \\
& \rvprfunsym{\usexpr{\infsum v_0 @ \prrvfunsym{\left(\rvprfunsym{\ibracket{r}}\right)}[v_0/\vv'] * \prrvfunsym{\left(\rvprfunsym{\ibracket{t}}\right)}[v_0/\vv]}} \\
= & \cmt{Theorem~\ref{thm:prrvfun_inverse_ibracket}} \\
& \rvprfunsym{\usexpr{\infsum v_0 @ \ibracket{r}[v_0/\vv'] * \ibracket{t}[v_0/\vv]}} \\
= & \cmt{Substitution distributive through Iverson bracket} \\
& \rvprfunsym{\usexpr{\infsum v_0 @ \ibracket{r[v_0/\vv']} * \ibracket{t[v_0/\vv]}}} \\
= & \cmt{Theorem~\ref{thm:ib} Law~\ref{thm:ib_conj}} \\
 & \rvprfunsym{\usexpr{\infsum v_0 @ \ibracket{r[v_0/\vv'] \land t[v_0/\vv]}}} 
\end{align*}
\end{proof}

A corollary of this law, given below, states that if the two expressions cannot agree on an intermediate state $v_0$, the sequence is just $\ufzero$. 
\begin{thm}
    \label{thm:pseq_ibracket_contradictory}
   $c_1 \neq c_2 \implies \pseq{\rvprfunsym{\ibracket{x'=c_1}}}{\rvprfunsym{\ibracket{x=c_2}}} = \ufzero$ 
   \isalink{https://github.com/RandallYe/probabilistic_programming_utp/blob/6a4419b8674b84988065a58696f15093d176594c/probability/probabilistic_relations/utp_prob_rel_lattice_laws.thy\#L1428}
\end{thm}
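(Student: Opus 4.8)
The plan is to obtain this as an immediate corollary of Law~\ref{thm:pseq_ibracket} of Theorem~\ref{thm:prog_seq_comp}. Instantiating that law with $r \defs (x' = c_1)$ and $t \defs (x = c_2)$ gives
\[
\pseq{\rvprfunsym{\ibracket{x'=c_1}}}{\rvprfunsym{\ibracket{x=c_2}}} = \rvprfunsym{\usexpr{\infsum v_0 @ \ibracket{(x'=c_1)[v_0/\vv'] \land (x=c_2)[v_0/\vv]}}},
\]
so the whole task reduces to showing the right-hand side collapses to $\ufzero$ under the hypothesis $c_1 \neq c_2$.

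First I would analyse the substituted predicate inside the Iverson bracket. The substitution $[v_0/\vv']$ replaces the entire dashed state by the intermediate value $v_0$, so $(x'=c_1)[v_0/\vv']$ asserts that the $x$-component of $v_0$ equals $c_1$; dually, $(x=c_2)[v_0/\vv]$ asserts that the $x$-component of $v_0$ equals $c_2$. Since $v_0$ is a single state, its $x$-component is one value, so under $c_1 \neq c_2$ the conjunction is equivalent to $\pfalse$ for every $v_0$. Hence, by the definition of the Iverson bracket (equivalently, Law~(\ref{thm:ib_false}) of Theorem~\ref{thm:ib}), the summand is the constant function $\rfzero$ for each $v_0$. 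The infinite sum of a pointwise-$0$ family is again $\rfzero$ (the zero function is trivially summable with sum $0$), and applying $\rvprfun$ to $\rfzero$ yields $\ufzero$ by Theorem~\ref{thm:top_bot} ($\rvprfunsym{\rfzero} = \ufzero$), which closes the argument.

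The only delicate point is the substitution bookkeeping: one must check that $[v_0/\vv']$ and $[v_0/\vv]$ both refer to the same intermediate state $v_0$ supplied by the definition of sequential composition, and that after substitution the constraint coming from the dashed variable $x'$ and the constraint coming from the undashed variable $x$ are constraints on the \emph{same} $x$-lens applied to $v_0$. In the mechanisation this is discharged automatically by the substitution and relational tactics; on paper it needs one line of lens/alphabet justification. Everything else — summability of the constant-$0$ function and the $\rvprfun$/constant-conversion facts — is already available from Theorems~\ref{thm:summation} and~\ref{thm:top_bot}, so no further obstacle is expected.
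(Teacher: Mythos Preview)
Your proposal is correct and follows exactly the route the paper intends: the theorem is presented there as an immediate corollary of Law~\ref{thm:pseq_ibracket}, with the observation that the substituted conjunction forces the $x$-component of the intermediate state $v_0$ to equal both $c_1$ and $c_2$, hence is $\pfalse$ under $c_1\neq c_2$, collapsing the summand to $\rfzero$ and the result to $\ufzero$ via Theorem~\ref{thm:top_bot}.
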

The intermediate state can be ignored if the two expressions agree on one intermediate state $c_1$.
\begin{thm}
    \label{thm:pseq_ibracket_agree_1_final_unspecified}
   $\pseq{\rvprfunsym{\ibracket{x=c_0 \land x:=c_1}}}{\rvprfunsym{\ibracket{x=c_1}}} = {\rvprfunsym{\ibracket{x=c_0}}}$ 
   \isalink{https://github.com/RandallYe/probabilistic_programming_utp/blob/6a4419b8674b84988065a58696f15093d176594c/probability/probabilistic_relations/utp_prob_rel_lattice_laws.thy\#L1443}
\end{thm}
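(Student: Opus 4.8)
\noindent
The plan is to reduce the left-hand side to an Iverson-bracket summation by the sequential-composition law for Iverson brackets, Law~\ref{thm:pseq_ibracket}, and then collapse that summation with the Iverson summation one-point rule. Instantiating Law~\ref{thm:pseq_ibracket} with the two operand predicates $r \defs \usexpr{x = c_0 \land x := c_1}$ and $t \defs \usexpr{x = c_1}$ rewrites the left-hand side as $\rvprfunsym{\usexpr{\infsum v_0 @ \ibracket{r[v_0/\vv'] \land t[v_0/\vv]}}}$, so everything that follows is a calculation inside the Iverson bracket and the summation. (Alternatively one could unfold Definition~\ref{def:prog_seq} together with Theorem~\ref{thm:prrvfun_inverse_ibracket} directly, but reusing the already-proved Law~\ref{thm:pseq_ibracket} is the shorter route.)

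Next I would push the two substitutions through the relational predicates. Since $t = \usexpr{x = c_1}$ mentions no dashed variable, $t[v_0/\vv]$ is just the constraint that the intermediate state satisfies $x = c_1$. In $r$, the conjunct $x = c_0$ is likewise untouched by the $\vv'$-substitution, while the relational assignment $x := c_1$, whose semantics fixes $x' = c_1$ and leaves every other variable unchanged, becomes after replacing $\vv'$ by the fresh intermediate state $v_0$ a one-point predicate pinning $v_0$ completely down (its $x$-component to $c_1$, and any remaining components to their initial values). Using Theorem~\ref{thm:ib} Law~\ref{thm:ib_conj} and the idempotence of the repeated conjunct, $\ibracket{r[v_0/\vv'] \land t[v_0/\vv]}$ then factors as the product of $\ibracket{x = c_0}$, which does not depend on $v_0$, with the Iverson bracket of a predicate that holds for exactly one value of $v_0$.

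Finally I would evaluate $\infsum v_0$: the factor $\ibracket{x = c_0}$ comes out of the sum, and the remaining sum of the one-point Iverson bracket over all intermediate states is $1$ by the Iverson summation one-point rule (the same step used in the normalisation example). What is left is $\rvprfunsym{\usexpr{\ibracket{x = c_0} * 1}} = \rvprfunsym{\ibracket{x = c_0}}$, which is exactly the right-hand side, so the identity follows; this is the analogue, with a non-contradictory overlap, of Law~\ref{thm:pseq_ibracket_contradictory}.

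The main obstacle is bookkeeping rather than anything conceptual: one must treat the relational assignment $x := c_1$ correctly under the $\vv'$-substitution. In particular, if the program alphabet carries auxiliary variables $w$ besides $x$, the substituted assignment is a one-point predicate over the \emph{entire} intermediate state (with $w$-component equal to the initial $w$), so that the subsequent $\infsum v_0$ over the full state space still evaluates to $1$; in the single-variable case ($x$ being the whole state) this is immediate. In Isabelle/UTP this step is presumably discharged, after unfolding the relevant definitions, by the relational and simplification tactics together with the one-point summation lemma for Iverson brackets.
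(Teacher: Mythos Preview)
Your proposal is correct and follows precisely the natural route the paper sets up: apply Law~\ref{thm:pseq_ibracket} to collapse the sequential composition of two Iverson-bracket programs into a single summation of a conjoined bracket, then discharge the summation by the one-point rule coming from the assignment $x := c_1$ in the first operand. The paper does not spell out a textual proof for this theorem (it defers to the mechanisation), but your argument is exactly the intended corollary of Law~\ref{thm:pseq_ibracket}, and your remark about the auxiliary variables $w$ being pinned by the assignment semantics so that the one-point rule applies over the whole intermediate state is the only nontrivial bookkeeping step.
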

The intermediate state can be ignored if the second expression is also a point distribution, but its final state is still specified.
\begin{thm}
    \label{thm:pseq_ibracket_agree_1_point}
   $\pseq{\rvprfunsym{\ibracket{x=c_0 \land x:=c_1}}}{\rvprfunsym{\ibracket{x=c_1 \land x: = c_2}}} = {\rvprfunsym{\ibracket{x=c_0 \land x'=c_2}}}$ 
   \isalink{https://github.com/RandallYe/probabilistic_programming_utp/blob/6a4419b8674b84988065a58696f15093d176594c/probability/probabilistic_relations/utp_prob_rel_lattice_laws.thy\#L1464}
\end{thm}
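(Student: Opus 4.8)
The plan is to reduce the sequential composition to a single Iverson bracket using the already-proved Law~\ref{thm:pseq_ibracket}, and then evaluate the resulting infinite summation by a one-point (singleton-support) argument, in exactly the style of the companion result Theorem~\ref{thm:pseq_ibracket_agree_1_final_unspecified}. Concretely, I would instantiate Law~\ref{thm:pseq_ibracket} with $r \defs (x = c_0 \land x := c_1)$ and $t \defs (x = c_1 \land x := c_2)$, obtaining
\[
\pseq{\rvprfunsym{\ibracket{x=c_0 \land x:=c_1}}}{\rvprfunsym{\ibracket{x=c_1 \land x:=c_2}}}
= \rvprfunsym{\usexpr{\infsum v_0 @ \ibracket{r[v_0/\vv'] \land t[v_0/\vv]}}}.
\]
Everything then reduces to simplifying the summand and collapsing the sum over intermediate states $v_0$.

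Next I would push the two substitutions through the conjunctions and unfold each assignment $x := c_i$ to $x' = c_i \land w' = w$, using the frame convention of Definition~\ref{def:utp_relation} (the same unfolding already used in the proof of Law~\ref{thm:pseq_ibracket} and of Theorem~\ref{thm:pseq_ibracket_agree_1_final_unspecified}, together with the inverse Law~\ref{thm:prrvfun_inverse_ibracket}). After substitution, $r[v_0/\vv']$ asserts $x = c_0$ together with the constraint that the intermediate state $v_0$ agrees with $c_1$ on $x$ and with the initial state on the frame, while $t[v_0/\vv]$ asserts that $v_0$ agrees with $c_1$ on $x$, that $x' = c_2$, and that the final frame matches $v_0$'s; using Theorem~\ref{thm:ib} Law~\ref{thm:ib_conj} the product of the two brackets becomes the bracket of this conjunction. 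The conjunction pins $v_0$ down uniquely, so the summand $\ibracket{\cdots}$ is supported on a single $v_0$; applying the Iverson-bracket summation one-point rule (the singleton instance of Law~\ref{thm:ib_summation} combined with the standard fact that a sum whose summand is supported at one point evaluates to that summand) collapses $\infsum v_0 @ \ibracket{\cdots}$ to $\ibracket{x = c_0 \land x' = c_2}$ — the frame variables being threaded through unchanged by both assignments and cancelling. Wrapping in $\rvprfunsym{}$ then yields the right-hand side.

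The main obstacle is bookkeeping rather than conceptual: getting the substitutions into the two relational assignments and the frame condition $w' = w$ exactly right, so that the summand is provably supported on the single intermediate state and the collapsed bracket matches $\ibracket{x = c_0 \land x' = c_2}$ without a stray idle $w' = w$ conjunct (this is why the statement tacitly treats $x$ as carrying the relevant state, the other frame variables being carried unchanged through both operands). In the mechanisation this is discharged by the substitution and relational-calculus tactics of Isabelle/UTP plus a lemma about \texttt{infsum} over singletons, mirroring the proof of Theorem~\ref{thm:pseq_ibracket_agree_1_final_unspecified}; the only additional ingredient here is that the second operand also fixes $x' = c_2$, which simply survives the one-point evaluation and appears in the final bracket.
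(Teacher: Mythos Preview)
Your proposal is correct and follows the same approach as the paper: the theorem is presented there as an immediate corollary of Law~\ref{thm:pseq_ibracket} (together with the one-point collapse already used for Theorem~\ref{thm:pseq_ibracket_agree_1_final_unspecified}), and the mechanised proof discharges exactly the substitution and singleton-summation bookkeeping you describe. Your caution about the frame conjunct is apt; in the mechanisation the lens $x$ is taken so that the resulting $w'=w$ is absorbed, which is why the stated right-hand side is just $\ibracket{x=c_0 \land x'=c_2}$.
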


\subsection{Normalisation}
\label{ssec:prog_normalisation}

%


The $\normf$ in Definition~\ref{def:normf} gives a distribution of the final state, that is, over all variables in the state space. We also want the distribution of just one particular variable instead of all, for example, to define a uniform distribution. For this purpose, we define the alphabetised normalisation below.

\begin{definition}[Alphabetised normalisation]
    \label{def:norm_alpha}
    We fix $p$ of type $[S_1,S_2]\rvfun$ and a program variable $x$ of type $T_x$,
    \isalink{https://github.com/RandallYe/probabilistic_programming_utp/blob/6a4419b8674b84988065a58696f15093d176594c/probability/probabilistic_relations/utp_distribution.thy\#L63}
\begin{align*}
    \normal(x, p) \defs \usexpr{p / \left(\infsum x_0: T_x @ p[x_0/x']\right)}
\end{align*}
\end{definition}

Uniform distributions are defined using $\normal$.
\begin{definition}[Uniform distributions]
    \label{def:uniform_dist}
    We fix a program variable $x$ of type $T_x$ and a finite set $A$ of type $\power~T_x$,
    \isalink{https://github.com/RandallYe/probabilistic_programming_utp/blob/6a4419b8674b84988065a58696f15093d176594c/probability/probabilistic_relations/utp_distribution.thy\#L69}
\begin{align*}
    \uniformdist{x}{A} \defs \normal \left(x, \livbr \Inf v \in A @ x := v \rivbr\right)
\end{align*}
\end{definition}
A uniform distribution of $x$ from a finite set $A$ is an alphabetised normalisation of a program $\ibracket{\Inf v \in A @ x := v }$, nondeterministic choice $\Inf$ of the value of $x$ from $A$, over $x'$. Here, $\Inf$ is inside the Iverson bracket and is a UTP relation operator infimum, simply disjunction $\Union$.

The uniform distribution operator satisfies the properties below.
\begin{thm}
    \label{thm:uniform_dist}
    We fix $P$ of type $[S_1,S_2]\prfun$,
    \isalink{https://github.com/RandallYe/probabilistic_programming_utp/blob/6a4419b8674b84988065a58696f15093d176594c/probability/probabilistic_relations/utp_prob_rel_lattice_laws.thy\#L2222}
    \begin{align}
        & \uniformdist{x}{\emptyset} = \rfzero \label{thm:uniform_emptyset} \\
        & \finite(A) \implies \isprob\left(\uniformdist{x}{A}\right) \label{thm:uniform_prob} \\
        & \finite(A) \land A \neq \emptyset \implies \isfinaldist\left(\uniformdist{x}{A}\right)\label{thm:uniform_finaldist} \\
        & \finite(A) \land A \neq \emptyset \implies \left(\forall v \in A @ \pseq{\uniformdist{x}{A}}{\ibracket{x = v}} = \usexpr{1/\card (A)}\right) \label{thm:uniform_uniform} \\
        & \finite(A) \land A \neq \emptyset \implies \left(\uniformdist{x}{A} = \livbr \Union v \in A @ x := v \rivbr / \card (A)\right)\label{thm:uniform_form2} \\
        & \finite(A) \land A \neq \emptyset \implies \left(\pseq{\rvprfunsym{\uniformdist{x}{A}}}{P} = \rvprfunsym{\left(\infsum v \in A @ \prrvfunsym{P}[v/x]\right) / \card (A)}\right)\label{thm:uniform_pseq} 
    \end{align}
\end{thm}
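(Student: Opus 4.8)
The plan is to prove the six laws of Theorem~\ref{thm:uniform_dist} roughly in the stated order, treating the degenerate case Law~\ref{thm:uniform_emptyset} and the closed form Law~\ref{thm:uniform_form2} as foundational, then deriving the distribution laws \ref{thm:uniform_prob} and \ref{thm:uniform_finaldist} from the general fact that normalisation yields a final distribution, and finally deriving the two ``outcome'' laws \ref{thm:uniform_uniform} and \ref{thm:uniform_pseq} from the closed form. For Law~\ref{thm:uniform_emptyset}, I would unfold $\uniformdist{x}{\emptyset}$ via Definitions~\ref{def:uniform_dist} and \ref{def:norm_alpha}: the nondeterministic choice over the empty set is the UTP relation $\ufalse$, its Iverson bracket is $\rfzero$, and $\normal(x,\rfzero) = \rfzero / (\infsum x_0 @ \rfzero) = \rfzero/0 = \rfzero$ since division by zero is $0$ in Isabelle/HOL.

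For the closed form Law~\ref{thm:uniform_form2}, I would again unfold the definitions and make the implicit \emph{frame} explicit: as a relation, $\Union v \in A @ x := v$ is $x' \in A$ conjoined with $w' = w$ over all variables $w$ other than $x$, so its Iverson bracket is $\ibracket{x' \in A} \ast \ibracket{w' = w}$. The normalising constant $\infsum x_0 : T_x @ \ibracket{\Union v \in A @ x := v}[x_0/x']$ then evaluates, by Theorem~\ref{thm:ib} Law~\ref{thm:ib_summation} together with finiteness of $A$, to $\ibracket{w'=w} \ast \card(A)$. A case split on whether the frame condition $w'=w$ holds collapses the division: on the frame both sides equal $\ibracket{x'\in A}/\card(A)$, and off the frame both numerators are $0$, so both sides are $0$. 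This yields Law~\ref{thm:uniform_form2}.

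For Laws~\ref{thm:uniform_prob} and \ref{thm:uniform_finaldist} I would case split on $A = \emptyset$ (dispatched by Law~\ref{thm:uniform_emptyset}, and $\rfzero$ is trivially probabilistic) versus $A \neq \emptyset$. In the nonempty case the expression being normalised, $\ibracket{\Inf v \in A @ x := v}$, is nonnegative by Theorem~\ref{thm:isprob_ibracket}; it is reachable ($\finalreachable$) because any $v \in A$ witnesses a positive-probability final state from every initial state; and it is summable on final states ($\summableonfinal$) because $A$ is finite and the frame pins down every variable other than $x$, giving finite support per initial state. Since the frame makes $\normal$ over $x$ coincide with $\normf$ over the whole final state for this particular expression, the theorem following Definition~\ref{def:normf} ($\isnonneg \land \finalreachable \land \summableonfinal \implies \isfinaldist(\normf(\cdot))$) gives Law~\ref{thm:uniform_finaldist}, and a final distribution is in particular probabilistic (Definition~\ref{def:isdist}), giving Law~\ref{thm:uniform_prob}. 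Then Law~\ref{thm:uniform_pseq} follows from Law~\ref{thm:uniform_form2}: rewrite $\uniformdist{x}{A}$ to its closed form, unfold $\pseq{}{}$ (Definition~\ref{def:prog_seq}), use that $\ibracket{\Union v \in A @ x := v}/\card(A)$ is probabilistic so $\prrvfunsym{\rvprfunsym{\cdot}}$ is the identity (Theorem~\ref{thm:prrvfun_inverse}), and reduce the intermediate-state sum; writing each intermediate state $v_0$ as its $x$-component $x_0$ plus the frame, the Iverson factor forces the frame component to equal the current $w$ and restricts $x_0$ to $A$, so the one-point rule plus finiteness collapse the sum to $(\infsum v \in A @ \prrvfunsym{P}[v/x])/\card(A)$. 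Law~\ref{thm:uniform_uniform} is then the instance of Law~\ref{thm:uniform_pseq} with $P$ the (coerced) Iverson bracket $\ibracket{x=v}$: using $\prrvfunsym{\rvprfunsym{\ibracket{x=v}}} = \ibracket{x=v}$ (Theorem~\ref{thm:prrvfun_inverse_ibracket}), the sum $\infsum v' \in A @ \ibracket{v'=v}$ equals $1$ because $v \in A$, leaving $1/\card(A)$.

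The main obstacle throughout is the careful bookkeeping around the implicit frame variable $w$: the Iverson bracket of $\Inf v \in A @ x := v$ is \emph{not} simply $\ibracket{x'\in A}$ but carries the conjunct $w'=w$, so the normalising constant is state-dependent ($\card(A)$ on the frame, $0$ off it) and the division-by-zero convention must be invoked off the frame; likewise, in the sequential-composition laws the substitution $\prrvfunsym{P}[v_0/\vv]$ over a full intermediate state must be matched against $\prrvfunsym{P}[v/x]$ by exploiting that the frame component of $v_0$ is forced to the current value of $w$. Making $w$ explicit so that the infinite sums reduce via the Iverson one-point rule (Theorem~\ref{thm:ib} Law~\ref{thm:ib_summation}) and so that finiteness of $A$ discharges the $\summableonfinal$ and convergence side-conditions (via the laws of Theorem~\ref{thm:summation}) is the technically delicate step; everything else is routine rewriting with the established $\rvfun$/$\prfun$ conversion laws.
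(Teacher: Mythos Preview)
Your proposal is correct and well-organised; the paper itself gives no written proof of Theorem~\ref{thm:uniform_dist} beyond the prose explanation and the Isabelle link, so there is no competing argument to compare against. Your plan of establishing the closed form Law~\ref{thm:uniform_form2} early and deriving the remaining laws from it is the natural route, and your handling of the implicit frame $w'=w$ (including the division-by-zero case off the frame and the observation that $\normal(x,\cdot)$ and $\normf(\cdot)$ coincide on this particular expression) is exactly the delicate point that needs to be made explicit. One small remark: deriving Law~\ref{thm:uniform_uniform} as an instance of Law~\ref{thm:uniform_pseq} is valid but incurs an extra round-trip through the $\rvprfunsym{\cdot}$/$\prrvfunsym{\cdot}$ conversions, since Law~\ref{thm:uniform_uniform} is stated for the real-valued $\fseq{}{}$ whereas Law~\ref{thm:uniform_pseq} is stated for the $\ureal$-valued $\pseq{}{}$; it is marginally cleaner to compute Law~\ref{thm:uniform_uniform} directly from the closed form via $\fseq{}{}$, but your route works once you invoke Theorem~\ref{thm:prrvfun_inverse} and Theorem~\ref{thm:prrvfun_inverse_ibracket} to strip the coercions.
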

Law~\ref{thm:uniform_emptyset} shows the distribution over an empty set $\emptyset$ is just the constant function $\rfzero$. Provided $A$ is finite, then $\uniformdist{x}{A}$ is probabilistic (Law~\ref{thm:uniform_prob}). If $A$ is also not empty ($A \neq \emptyset$), $\uniformdist{x}{A}$ is also a distribution (Law~\ref{thm:uniform_finaldist}). Under the same assumptions about $A$, $\uniformdist{x}{A}$ is truly a uniform distribution, that is, $x$ being any value from $A$ is equally likely ($1/\card(A)$ where $\card(A)$ is the cardinality of $A$), as shown in Law~\ref{thm:uniform_uniform}, where we use sequential composition $\pseq{\uniformdist{x}{A}}{\ibracket{x = v}}$ to express the probability of $x$ being a particular value $v$. The distribution $\uniformdist{x}{A}$ can be simplified to another form, shown in Law~\ref{thm:uniform_form2}. The sequence of a uniform distribution and $P$ is a left one-point, shown in Law~\ref{thm:uniform_pseq}.

\subsection{Parallel composition}
The \ref{def:prog_parallel_ff} is defined over real-valued functions and specifies a $\ureal$-valued function. It satisfies the properties below.
\label{ssec:prog_parallel_comp}
\begin{thm}
    Fix $p$, $q$, and $r$ of type $[S_1, S_2] \rvfun$, and $P$, $Q$, and $R$ of type $[S_1, S_2] \prfun$.
    \isalink{https://github.com/RandallYe/probabilistic_programming_utp/blob/6a4419b8674b84988065a58696f15093d176594c/probability/probabilistic_relations/utp_prob_rel_lattice_laws.thy\#L2559}
    \begin{align}
        & \isnonneg(p*q)\implies \isprob\left(\normf \usexpr{p * q}\right) \label{thm:pparallel_norm_prob} \\
        & \left(
        \begin{array}[]{l}
            \isfinalprob(p) \land \isfinalprob(q) \land \\
            \left(\summableonfinal(p) \lor \summableonfinal(q)\right) \land \finalreachables(p,q) 
        \end{array}\right)
        \implies \isfinaldist\left(\pparallel{p}{q}\right)\label{thm:pparallel_dist} \\
        & \left(
        \begin{array}[]{l}
            \isnonneg(p) \land \isnonneg(q) \land \lnot\finalreachables(p,q) 
        \end{array}\right)
        \implies \pparallel{p}{q} = \ufzero \label{thm:pparallel_contradiction_zero} \\
        & \pparallel{\rfzero}{p} = \ufzero \label{thm:pparallel_left_zero} \\
        & \pparallel{p}{\rfzero} = \ufzero \label{thm:pparallel_right_zero} \\
        & c \neq 0 \land \isfinaldist(p) \implies \pparallel{(\lambda s @ c)}{p} = \rvprfunsym{p} \label{thm:pparallel_left_unit} \\
        & c \neq 0 \land \isfinaldist(p) \implies \pparallel{p}{(\lambda s @ c)} = \rvprfunsym{p} \label{thm:pparallel_right_unit} \\
        & \pparallel{p}{q} = \pparallel{q}{p} \label{thm:pparallel_commute} \\
        & \left(
        \begin{array}[]{l}
            \isnonneg(p) \land \isnonneg(q) \land \isnonneg(r)\land \\
            \summableonfinals(p, q) \land 
            \summableonfinals(q, r) \land  \\
            \finalreachables(p, q) \land 
            \finalreachables(q, r) 
        \end{array}\right)
        \implies \pparallel{\left(\pparallel{p}{q}\right)}{r} = \pparallel{p}{\left(\pparallel{q}{r}\right)} \label{thm:pparallel_assoc} \\
        & 
        \summableonfinal(\prrvfunsym{Q})
        \implies \pparallel{\left(\pparallel{\prrvfunsym{P}}{\prrvfunsym{Q}}\right)}{\prrvfunsym{R}} = \pparallel{\prrvfunsym{P}}{\left(\pparallel{\prrvfunsym{Q}}{\prrvfunsym{R}}\right)} \label{thm:pparallel_assoc2} \\
        & \finite(A) \land A \neq \emptyset \implies \left(\pparallel{\uniformdist{x}{A}}{p} = \rvprfunsym{\usexpr{\left(\infsum v \in A @ \ibracket{x := v} * p[v/x']\right) / \left(\infsum v \in A @ p[v/x']\right)}}\right) \label{thm:pparallel_uniform} 
    \end{align}
\end{thm}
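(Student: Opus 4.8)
The plan is to strip away the two layers of definition involved — parallel composition \ref{def:prog_parallel_ff}, $\pparallel{R}{T} \defs \rvprfunsym{\normf \usexpr{R * T}}$, and normalisation \ref{def:normf}, $\normf(p) = \usexpr{p / \left(\infsum v_0 @ p[v_0/\vv']\right)}$ — so that each claim reduces to a statement about real-valued functions, which is then discharged using the summation algebra of Theorem~\ref{thm:summation}, the distribution facts of Theorem~\ref{thm:final_distribtion} and the (unlabelled) theorem $\isnonneg(p) \land \finalreachable(p) \land \summableonfinal(p) \implies \isfinaldist(\normf(p))$, and the conversion lemmas (Theorems~\ref{thm:prrvfun_prob}, \ref{thm:rvprfun_inverse} and \ref{thm:prrvfun_inverse}). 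The observation used throughout is that the normalising denominator $D(s) \defs \infsum v_0 @ (p * q)[v_0/\vv']$ depends only on the initial state, so it is a constant both for the summation that $\normf$ itself performs and for any subsequent multiplication by an expression over $\vv'$; combined with the division-by-zero convention $a/0 = 0$ this makes $\normf$ total. The structural laws fall out immediately: \ref{thm:pparallel_commute} from commutativity of multiplication inside the brackets; \ref{thm:pparallel_left_zero} and \ref{thm:pparallel_right_zero} from $p * \rfzero = \rfzero$ (Theorem~\ref{thm:top_bot}) via $\normf(\rfzero) = \rfzero$ and $\rvprfunsym{\rfzero} = \ufzero$; and for the unit laws \ref{thm:pparallel_left_unit}, \ref{thm:pparallel_right_unit} the constant $c$ factors out of the $\normf$ numerator and, by the constant-multiplication law of Theorem~\ref{thm:summation}, out of its denominator where it cancels ($c \neq 0$), the remaining denominator $\infsum v_0 @ p[v_0/\vv']$ being $1$ by $\isfinaldist(p)$ and Theorem~\ref{thm:final_distribtion}.

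For \ref{thm:pparallel_norm_prob} I would argue pointwise in the initial state $s$: if $D(s) = 0$ then $\normf\usexpr{p*q}$ is $0$ there; if $D(s) > 0$ then $p*q$ is necessarily summable over final states (otherwise $D(s) = 0$), so each nonnegative term satisfies $0 \leq (p*q)(s,s') \leq D(s)$ and the quotient lies in $[0,1]$. Law~\ref{thm:pparallel_dist} then follows by feeding this into the $\normf$-distribution theorem — nonnegativity of $p*q$ from $\isfinalprob(p)$ and $\isfinalprob(q)$, reachability of $p*q$ from $\finalreachables(p,q)$, and summability of $p*q$ from summability of one factor plus boundedness of the other by $1$ (comparison test) — and then using Theorem~\ref{thm:prrvfun_inverse} to cancel the conversion round-trip on the (now known probabilistic) normalised function, so that the goal reduces to $\isfinaldist(\normf\usexpr{p*q})$. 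Law~\ref{thm:pparallel_contradiction_zero} is the same computation restricted to the initial states on which $p*q$ vanishes identically over final states.

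The real work, and the step I expect to be the main obstacle, is the two associativity laws \ref{thm:pparallel_assoc} and \ref{thm:pparallel_assoc2}; the strategy is to show that both sides equal $\rvprfunsym{\normf\usexpr{p*q*r}}$. Unfolding the left-hand side gives $\rvprfunsym{\normf\usexpr{\prrvfunsym{\rvprfunsym{\normf\usexpr{p*q}}} * r}}$, and since $\normf\usexpr{p*q}$ is probabilistic by \ref{thm:pparallel_norm_prob}, Theorem~\ref{thm:prrvfun_inverse} removes the inner round-trip to leave $\rvprfunsym{\normf\usexpr{(\normf\usexpr{p*q})*r}}$; writing $\normf\usexpr{p*q} = (p*q)/D$ with $D$ constant in $\vv'$, the numerator is $(p*q*r)/D$ and the outer $\normf$ divides it by $\infsum v_0 @ ((p*q*r)/D)[v_0/\vv'] = \left(\infsum v_0 @ (p*q*r)[v_0/\vv']\right)/D$ (constant-division law of Theorem~\ref{thm:summation}), so the two copies of $D$ cancel and the left-hand side collapses to $\rvprfunsym{\normf\usexpr{p*q*r}}$; the right-hand side collapses the same way using associativity of multiplication. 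The delicate points — precisely what the side-conditions are there for — are that pulling $D$ out of the infinite sums requires $\summableonfinals(p,q)$ (resp. $\summableonfinals(q,r)$) and the derived summability of the triple product, and that the cancellation $D/D$ is valid only when $D \neq 0$, which $\finalreachables(p,q)$ (resp. $\finalreachables(q,r)$) guarantees, the degenerate $D = 0$ case making both sides $\ufzero$ by the division convention. For \ref{thm:pparallel_assoc2} the same argument runs with $p, q, r$ instantiated to $\prrvfunsym{P}, \prrvfunsym{Q}, \prrvfunsym{R}$, whose nonnegativity and boundedness come free from Theorem~\ref{thm:prrvfun_prob}, so only $\summableonfinal(\prrvfunsym{Q})$ needs assuming. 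Finally, Law~\ref{thm:pparallel_uniform} is obtained by rewriting $\uniformdist{x}{A}$ via Law~\ref{thm:uniform_form2} as $\ibracket{\Union v \in A @ x := v} / \card(A)$, pulling the constant $1/\card(A)$ through $\normf$ so that it cancels, splitting $\ibracket{\Union v \in A @ x := v}$ into $\infsum v \in A @ \ibracket{x := v}$ (the assignments being mutually exclusive for distinct $v$), and applying the Iverson summation one-point rule $\ibracket{x := v}*p = \ibracket{x := v}*p[v/x']$ as in the worked example following Definition~\ref{def:norm}; collecting numerator and denominator then yields the stated closed form.
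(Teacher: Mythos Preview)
Your plan is sound. The paper gives no written proof for this theorem---the paragraph following it merely restates what each law asserts, with the actual arguments deferred to the Isabelle/UTP mechanisation---so there is nothing substantive to compare against beyond observing that your definitional-unfolding strategy (reduce via Definitions~\ref{def:prog_parallel_ff} and~\ref{def:normf} to real-valued arithmetic, discharge with the summation algebra of Theorem~\ref{thm:summation} and the conversion lemmas, and for associativity collapse both sides to $\rvprfunsym{\normf\usexpr{p*q*r}}$ by cancelling the inner normalising denominator) is exactly the natural route and is what the mechanised proofs must be doing.

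One refinement worth flagging: for Law~\ref{thm:pparallel_assoc2} you write that ``the same argument runs'' as for Law~\ref{thm:pparallel_assoc}, but in the latter the hypotheses $\finalreachables(p,q)$ and $\finalreachables(q,r)$ force the inner denominators to be strictly positive, so your $D/D$ cancellation is always licensed; in Law~\ref{thm:pparallel_assoc2} those hypotheses are absent, so the degenerate case $D=0$ is live at some initial states and needs a separate (easy) check that both sides vanish there. This goes through because $\prrvfunsym{P},\prrvfunsym{Q},\prrvfunsym{R}\in[0,1]$: if $\prrvfunsym{P}\cdot\prrvfunsym{Q}$ is identically zero over final states at some $s$, then at any final state where $\prrvfunsym{Q}\parallel\prrvfunsym{R}$ is positive one has $\prrvfunsym{Q}>0$ and hence $\prrvfunsym{P}=0$, so $\prrvfunsym{P}\cdot(\prrvfunsym{Q}\parallel\prrvfunsym{R})$ vanishes too, and symmetrically. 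This is not a gap in your strategy, just a case split you elided.
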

Law~\ref{thm:pparallel_norm_prob} shows the normalisation of the product $p*q$ of $p$, and $q$ is probabilistic if $p*q$ is nonnegative. 
If both $p$ and $q$ are probabilistic and summable on their final states, reachable on at least one same final state at the same time, then $\pparallel{p}{q}$ is also a distribution of the final state (Law~\ref{thm:pparallel_dist}). If both $p$ and $q$ are nonnegative and not reachable on at least one same final state at the same time (or a contradiction between $p$ and $q$), then $\pparallel{p}{q}$ is a zero (Law~\ref{thm:pparallel_contradiction_zero}). 

Parallel composition is a left zero (Law~\ref{thm:pparallel_left_zero}) and a right zero (Law~\ref{thm:pparallel_right_zero}), and a left unit (Law~\ref{thm:pparallel_left_unit}) and a right unit (Law~\ref{thm:pparallel_right_unit}) if a constant $c$ is not 0 and $p$ is a distribution. It is also commutative (Law~\ref{thm:pparallel_commute}).

Law~\ref{thm:pparallel_assoc} shows if $p$, $q$, and $r$ are nonnegative, both $p$ and $q$ are summable and reachable on their product, and both $q$ and $r$ are summable and reachable on their product, then the parallel composition is associative. If, however, $p$, $q$, and $r$ are converted from probabilistic programs $P$, $Q$, and $R$, and also $Q$ is summable on its final state, then the parallel composition is also associative (Law~\ref{thm:pparallel_assoc2}).

Law~\ref{thm:pparallel_uniform} shows if $A$ is finite and not empty, then the parallel composition of a uniform distribution over $x$ from $A$ and $p$ can be simplified to a division whose numerator denotes the value of $p$ reaching a final state with $x$ being a particular value $v$ and whose denominator represents the summation of the values of $p$ reaching final states with $x$ being any value $v$ from $A$.


\section{Recursion}
\label{sec:rec}
This section presents the syntax and semantics of probabilistic recursions. We use the Kleene fixed-point theorem to construct fixed points iteratively. 

\subsection{Probabilistic loops}
We introduce the syntax for the least and greatest fixed points based on the Knaster–Tarski fixed-point theorem~\ref{thm:tarski_fixed_point}. 

\begin{definition}[Least and greatest fixed points]
    \isalink{https://github.com/RandallYe/probabilistic_programming_utp/blob/6a4419b8674b84988065a58696f15093d176594c/probability/probabilistic_relations/utp_prob_rel_lattice.thy\#L384}
    \begin{align*}
        \lfp~X @ P &\defs \thlfp{}~\left(\lambda X @ P\right)\tag*{(least fixed point)} \label{def:lfp}\\ 
        \gfp~X @ P &\defs \thgfp{}~\left(\lambda X @ P\right)\tag*{(great fixed point)} \label{def:gfp}
    \end{align*}
\end{definition}

We define a loop function $\lfun$ below and use it to define probabilistic while loops later.
\begin{definition}[Loop function]
    We fix a homogeneous relation $b:[S]\hrel$, and homogeneous probabilistic programs $P$ and $X$ of type $[S]\prhfun$, then
    \isalink{https://github.com/RandallYe/probabilistic_programming_utp/blob/6a4419b8674b84988065a58696f15093d176594c/probability/probabilistic_relations/utp_prob_rel_lattice.thy\#L408}
    \begin{align*}
        &\lfunp{b}{P}{X} \defs~\pcchoice{b}{\left(\pseq{P}{X}\right)}{\pskip} \tag*{(loop function)} \label{def:lfun} 
    \end{align*}
\end{definition}
The function $\lfun$ is a conditional choice between the sequence $\left(\pseq{P}{X}\right)$ and the skip $\pskip$, depending on the relation $b$. We use $\lfunbp$ as a shorthand for $\lambda X @ \lfun(b, P, X)$. {Then $\lfunbp(X)$ can be expressed below.} 
\begin{align*}
   &\lfunbp(X)\\
 = & \cmt{Definition~\ref{def:lfun} } \\
   & \pcchoice{b}{\left(\pseq{P}{X}\right)}{\pskip} \\
 = & \cmt{Theorem~\ref{thm:prog_cond_choice} Law~\ref{thm:cchoice_pchoice} and Definition~\ref{def:prog_skip}} \\
 & \left(\ppchoice{\rvprfunsym{\ibracket{b}}}{\left(\pseq{P}{X}\right)}{\rvprfunsym{\ibracket{\II}}}\right) \\
 = & \cmt{Theorem~\ref{thm:prob_prob_choice} Law~\ref{thm:pchoice_altdef}} \\
 & \rvprfunsym{\prrvfunsym{\left(\rvprfunsym{\ibracket{b}}\right)}*\prrvfunsym{\left(\pseq{P}{X}\right)} + \left(\rfone - \prrvfunsym{\left(\rvprfunsym{\ibracket{b}}\right)}\right) * \prrvfunsym{\left(\rvprfunsym{\ibracket{\II}}\right)}} \\
 = & \cmt{Theorem~\ref{thm:prrvfun_inverse}, Theorem~\ref{thm:pskip} Law~\ref{thm:pskip_inverse}, Theorem~\ref{thm:prob_assign_finaldist}, and Theorem~\ref{thm:final_distribtion} Law~\ref{thm:final_distribtion_1}} \\
 & \rvprfunsym{{\ibracket{b}}*\prrvfunsym{\left(\pseq{P}{X}\right)} + \left(\rfone - {\ibracket{b}}\right) * {\ibracket{\II}}} \\
 = & \cmt{Theorem~\ref{thm:ib} Law~\ref{thm:ib_neg}} \\
 & \rvprfunsym{{\ibracket{b}}*\prrvfunsym{\left(\pseq{P}{X}\right)} + \ibracket{\lnot b} * {\ibracket{\II}}}\tag*{($\lfunbp$ altdef)} \label{thm:lfun_altdef}
\end{align*}

If a probabilistic program $P$ is a distribution, then $\lfunbp$ is monotonic.
\begin{thm}
    $\isfinaldist\left(\prrvfunsym{P}\right) \implies \mono\left(\lfunbp\right)$
    \isalink{https://github.com/RandallYe/probabilistic_programming_utp/blob/6a4419b8674b84988065a58696f15093d176594c/probability/probabilistic_relations/utp_prob_rel_lattice_laws.thy\#L4253}
\end{thm}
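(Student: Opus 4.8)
The plan is to unfold the definition of $\mono$ and strip off the two constructs composing $\lfunbp$ by invoking the monotonicity laws already proved for each of them, so that the goal collapses to reflexivity of $\leq$ together with the hypothesis $X_1 \leq X_2$.

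Concretely, I would fix $X_1, X_2 : [S]\prhfun$ with $X_1 \leq X_2$; the goal $\lfunbp(X_1) \leq \lfunbp(X_2)$ is, by Definition~\ref{def:lfun}, $\left(\pcchoice{b}{\left(\pseq{P}{X_1}\right)}{\pskip}\right) \leq \left(\pcchoice{b}{\left(\pseq{P}{X_2}\right)}{\pskip}\right)$. First I would apply monotonicity of conditional choice (Theorem~\ref{thm:prog_cond_choice}, Law~\ref{thm:cchoice_mono}), instantiating its two premises with $\pseq{P}{X_1} \leq \pseq{P}{X_2}$ and $\pskip \leq \pskip$; the latter is reflexivity, leaving the single subgoal $\pseq{P}{X_1} \leq \pseq{P}{X_2}$. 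Then I would apply monotonicity of sequential composition (Theorem~\ref{thm:prog_seq_comp}, Law~\ref{thm:pseq_mono}), keeping the first argument fixed at $P$ and moving the second from $X_1$ to $X_2$; its premises $P \leq P$ and $X_1 \leq X_2$ hold by reflexivity and by assumption, which closes the proof.

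The only place with genuine content is where the hypothesis $\isfinaldist\left(\prrvfunsym{P}\right)$ is consumed, namely the sequential-composition step. Since $\pseq{P}{X} = \rvprfunsym{\fseq{\prrvfunsym{P}}{\prrvfunsym{X}}}$ with $\fseq{R}{T}$ a summation $\infsum v_0 @ R[v_0/\vv'] * T[v_0/\vv]$ over intermediate states, transporting the pointwise inequality $\prrvfunsym{X_1} \leq \prrvfunsym{X_2}$ through the sum requires a summable dominating function: here $\prrvfunsym{P}[v_0/\vv']$ dominates both summands because $\prrvfunsym{X_i} \leq \rfone$ (Theorem~\ref{thm:prrvfun_prob}, Theorem~\ref{thm:top_bot}), and $\isfinaldist\left(\prrvfunsym{P}\right)$ supplies $\summableonfinal\left(\prrvfunsym{P}\right)$ via Theorem~\ref{thm:final_distribtion}, Law~\ref{thm:final_distribtion_1}, so that dominating function is summable and $\realureal$ cannot asymmetrically collapse a divergent sum to $0$. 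This is exactly the side-condition underpinning the instance of Law~\ref{thm:pseq_mono} we use; everything else (monotonicity of $\prrvfun$ carrying $X_1 \leq X_2$ to the real-valued side, nonnegativity of $\prrvfunsym{P}$ so that multiplication is order-preserving, and commuting substitution with these operations) is routine and already packaged in the cited laws. So I expect the proof to be short, and the main obstacle to be nothing more than confirming that this summability condition is available — which is precisely what the distribution assumption on $P$ guarantees.
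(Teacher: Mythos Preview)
Your approach is correct and is exactly the compositional argument the paper's framework supports: unfold $\lfunbp$ and appeal to the monotonicity laws for conditional choice (Law~\ref{thm:cchoice_mono}) and sequential composition (Law~\ref{thm:pseq_mono}), discharging the side conditions by reflexivity and the hypothesis $X_1\leq X_2$. The paper itself does not spell out a textual proof for this theorem (it only provides the Isabelle link), so there is nothing further to compare against; your decomposition matches the natural proof one would mechanise from the cited laws.

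One small remark on your final paragraph: note that the paper states Law~\ref{thm:pseq_mono} \emph{without} any distribution or summability side condition, so at the level of the paper's presented laws the hypothesis $\isfinaldist(\prrvfunsym{P})$ is not formally consumed in your chain of implications. Your analysis of why summability matters is nonetheless sound and explains why the mechanised version of the theorem carries the assumption: it is precisely the guarantee that the dominating series $\sum_{v_0}\prrvfunsym{P}(s,v_0)$ converges, so that the $\infsum$ in $\fseq{}{}$ is genuinely an order-preserving sum rather than collapsing to $0$ by the divergence convention. In the Isabelle development that summability condition is what makes the underlying instance of sequential-composition monotonicity go through.
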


With $\lfun$, we define two probabilistic loops using the least and the greatest fixed points. The reason we define two probabilistic loops is to establish the uniqueness theorem of fixed points. Based on Knaster–Tarski fixed-point Theorem~\ref{thm:tarski_fixed_point}, the uniqueness is equivalent to the equality of the least and greatest fixed points.

\begin{definition}[Probabilistic loops]
    \isalink{https://github.com/RandallYe/probabilistic_programming_utp/blob/6a4419b8674b84988065a58696f15093d176594c/probability/probabilistic_relations/utp_prob_rel_lattice.thy\#L411}
    {
    \begin{align*}
        \pwhile{b}{P} &\defs \lfp~X @ \lfunbp(X) \tag*{(while loop by least fixed point)} \label{def:pwhile}\\
        \pwhiletop{b}{P} &\defs \gfp~X @ \lfunbp(X) \tag*{(while loop by greatest fixed point)} \label{def:pwhile_top}
    \end{align*}
}
\end{definition}
{The denotational semantics of recursions in programming is usually defined on the lfp~\cite{GUNTER1990} or the weakest fixed point in UTP~\cite{Hoare1998}, as we do here for the probabilistic loop ($\pwhile{b}{P}$). But why the lfp is commonly used to give the denotation semantics for recursive, instead of the gfp? Gunter and Scott~\cite{GUNTER1990} stated that it is intuitively reasonable and lfp yields a canonical solution. Hoare and He~\cite{Hoare1998} argued that the weakest fixed point (wfp) is more implementable (but might be \changed[\C{3}]{non-terminating}) and the strongest fixed point (sfp) might be not implementable such as the miracle. The sfp is useful to prove the correctness of recursion programs, but subject to two problems: invalidate the implication in UTP to model correctness of designs, and difficult to implement nondeterminism. However, sfp can be used to establish the uniqueness of fixed points, and so wfp and sfp are the same and will not be subject to the problems for each of them. Our definition of ($\pwhiletop{b}{P}$) is for the same reason, merely for the proof of the unique fixed point theorem.
}

The $\pwhile{b}{P}$ satisfy several laws below.
\begin{thm}
    \label{thm:pwhile_bot}
    \isalink{https://github.com/RandallYe/probabilistic_programming_utp/blob/6a4419b8674b84988065a58696f15093d176594c/probability/probabilistic_relations/utp_prob_rel_lattice_laws.thy\#L4386}
\begin{align*}
    & \isfinaldist\left(\prrvfunsym{P}\right) \implies \pwhile{b}{P} = \lfunbp\left(\pwhile{b}{P}\right) \tag*{(unfold)} \label{thm:pwhile_unfold}\\
    & \pwhile{\ufalse}{P} = \pskip \tag*{(false)} \label{thm:pwhile_false}\\
    & \pwhile{\utrue}{P} = \ufzero \tag*{(true)} \label{thm:pwhile_true}
\end{align*}
\end{thm}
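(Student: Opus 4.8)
The statement bundles three separate facts, and each is a short corollary of machinery already in place, so the plan is to dispatch them one at a time. The (unfold) law is just the fixed-point property of $\thlfp$; the (false) and (true) laws are obtained by first simplifying the loop functional $\lfunbp$ for the two degenerate guards and then reading off its least fixed point directly. Throughout I work in the complete lattice $\bigl([S]\prhfun, \leq\bigr)$ supplied by Theorem~\ref{thm:ureal_func_complete}, with bottom $\ufzero$ and top $\ufone$ (Theorem~\ref{thm:top_bot}).

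For (unfold): by the definition of $\pwhile{b}{P}$ it equals the least fixed point of $\lfunbp$ in that lattice. From the hypothesis $\isfinaldist(\prrvfunsym{P})$, the preceding monotonicity theorem gives $\mono(\lfunbp)$, so the Knaster--Tarski Theorem~\ref{thm:tarski_fixed_point} applies and, in particular, $\thlfp{}~\lfunbp$ is a genuine fixed point of $\lfunbp$. Rewriting this equality with $\pwhile{b}{P} = \thlfp{}~\lfunbp$ on both sides yields $\pwhile{b}{P} = \lfunbp\bigl(\pwhile{b}{P}\bigr)$, which is exactly the unfold equation. (One could instead invoke the Kleene Theorem~\ref{thm:kleene_fixed_point_theorem}, but that would force me to establish Scott-continuity of $\lfunbp$, which is unnecessary here.)

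For (false): instantiate the loop functional at $b := \ufalse$, so $\lfunp{\ufalse}{P}{X} = \pcchoice{\ufalse}{\left(\pseq{P}{X}\right)}{\pskip}$. Unfolding the definition of conditional choice and using that the guard predicate is constantly false, the conditional collapses pointwise to its else-branch; together with Theorem~\ref{thm:rvprfun_inverse} ($\rvprfun$ inverse of $\prrvfun$) this gives $\lfunp{\ufalse}{P}{X} = \pskip$ independently of $X$, i.e.\ the functional is the constant map $\lambda X @ \pskip$. The least fixed point of a constant map $\lambda X @ c$ on a complete lattice is $c$ itself: its pre-fixed points are exactly the $u$ with $c \leq u$, $c$ is among them, hence the infimum of that set is $c$. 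Therefore $\pwhile{\ufalse}{P} = \pskip$, and note no distribution assumption on $P$ is needed. Symmetrically, for (true) instantiate at $b := \utrue$; the guard is constantly true, so the conditional selects its then-branch and $\lfunp{\utrue}{P}{X} = \pseq{P}{X}$. By the right-zero law for sequential composition (Theorem~\ref{thm:prog_seq_comp}, Law~\ref{thm:pseq_right_zero}), $\pseq{P}{\ufzero} = \ufzero$, so $\ufzero$ is a fixed point of $\lambda X @ \pseq{P}{X}$; since $\ufzero = \bot$ is the least element of the lattice, it is \emph{a fortiori} the least fixed point, giving $\pwhile{\utrue}{P} = \ufzero$, again unconditionally.

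I do not expect a genuine obstacle here: the only nontrivial input, monotonicity of $\lfunbp$, is already proved, and continuity is deliberately avoided. The finicky parts are purely bookkeeping --- carefully unfolding the $\rvprfun/\prrvfun$ wrappers in the definition of conditional choice so that the constant-guard collapse goes through, and spelling out the "least fixed point of a constant functional is that constant'' argument from the Knaster--Tarski characterisation --- but none of these is conceptually hard, and the mechanisation should follow the same two-move pattern (simplify the functional, then identify its $\thlfp$) in each case.
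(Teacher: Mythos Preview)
Your proposal is correct and follows essentially the same approach the paper's machinery is set up for: the (unfold) law is the fixed-point equation from Knaster--Tarski via the monotonicity theorem stated immediately before Theorem~\ref{thm:pwhile_bot}, and the degenerate-guard cases reduce $\lfunbp$ to a constant functional (for $\ufalse$) or to $\lambda X@\pseq{P}{X}$ (for $\utrue$), whose least fixed points are read off directly from the lattice structure and the right-zero law. The paper does not spell out a proof in the text but defers to the mechanisation, and your argument is exactly the natural one given the available lemmas.
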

If $P$ is a distribution, $\pwhile{b}{P}$ can be unfolded without its semantics changed. If $b$ is $\ufalse$, the loop is just $\pskip$. It is $\ufzero$ {if $b$ is $\utrue$}.

The $\pwhiletop{b}{P}$ satisfy similar laws below.
\begin{thm}
    \label{thm:pwhile_top}
    \isalink{https://github.com/RandallYe/probabilistic_programming_utp/blob/6a4419b8674b84988065a58696f15093d176594c/probability/probabilistic_relations/utp_prob_rel_lattice_laws.thy\#L4423}
\begin{align*}
    \isfinaldist\left(\prrvfunsym{P}\right) \implies & \pwhiletop{b}{P} = \lfunbp\left(\pwhiletop{b}{P}\right) \tag*{(unfold)} \label{thm:pwhiletop_unfold}\\
    & \pwhiletop{\ufalse}{P} = \pskip \tag*{(false)} \label{thm:pwhiletop_false}\\
    \isfinaldist\left(\prrvfunsym{P}\right) \implies & \pwhiletop{\utrue}{P} = \ufone \tag*{(true)} \label{thm:pwhiletop_true}
\end{align*}
\end{thm}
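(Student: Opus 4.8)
The plan is to dispatch the three laws of Theorem~\ref{thm:pwhile_top} separately, in each case reading $\pwhiletop{b}{P}$ as $\thgfp{}~\lfunbp$ and exploiting the Knaster--Tarski characterisation (Theorem~\ref{thm:tarski_fixed_point}) together with the monotonicity of $\lfunbp$ established earlier under the hypothesis $\isfinaldist(\prrvfunsym{P})$, and the sequential-composition law $\pseq{P}{\ufone} = \ufone$ (Theorem~\ref{thm:prog_seq_comp}, Law~\ref{thm:pseq_one}). The whole argument runs in close parallel to the proof of Theorem~\ref{thm:pwhile_bot}, with $\thgfp{}$ replacing $\thlfp{}$ and the unit $\ufone$ replacing the zero $\ufzero$ throughout.

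For the (unfold) law, I would note that when $\isfinaldist(\prrvfunsym{P})$ holds $\lfunbp$ is monotone, so Theorem~\ref{thm:tarski_fixed_point} applies and, in particular, $\thgfp{}~\lfunbp$ is a fixed point of $\lfunbp$; unfolding the definition of $\pwhiletop{b}{P}$ this is exactly the stated equation $\pwhiletop{b}{P} = \lfunbp(\pwhiletop{b}{P})$. For the (false) law I would specialise the guard to $\ufalse$: using Theorem~\ref{thm:prog_cond_choice} and the fact that the guard is identically false, $\lfunp{\ufalse}{P}{X}$ reduces to $\pskip$ for every $X$, so the loop functional is the constant function $\lambda X @ \pskip$, whose greatest fixed point is that constant; hence $\pwhiletop{\ufalse}{P} = \pskip$, and no distribution hypothesis on $P$ is needed, matching the corresponding law in Theorem~\ref{thm:pwhile_bot}.

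For the (true) law I would specialise the guard to $\utrue$, so that $\lfunp{\utrue}{P}{X}$ collapses to $\pseq{P}{X}$ and the functional is $\lambda X @ \pseq{P}{X}$; it then suffices to show that $\ufone$ is a fixed point of this functional, because $\ufone = \top$ is the greatest element of the complete lattice $(S \fun \ureal, \leq)$, so every post-fixed point lies below it and a fixed point equal to $\top$ is automatically the greatest. The required identity $\pseq{P}{\ufone} = \ufone$ is precisely Law~\ref{thm:pseq_one}, and it is exactly here that $\isfinaldist(\prrvfunsym{P})$ is consumed --- which is also why the two loops differ, since the dual identity $\pseq{P}{\ufzero} = \ufzero$ (Theorem~\ref{thm:prog_seq_comp}, Law~\ref{thm:pseq_right_zero}) holds unconditionally and yields $\pwhile{\utrue}{P} = \ufzero$ with no hypothesis. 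The only step with any genuine mathematical content is this (true) case, and even there the work is already done in Law~\ref{thm:pseq_one}; so I expect the main obstacle to be bookkeeping rather than difficulty: justifying the specialisations of $\lfunp{\ufalse}{P}{X}$ and $\lfunp{\utrue}{P}{X}$ through the conditional-choice laws rather than asserting them, and pinning down the lattice-theoretic fact ``a fixed point that equals $\top$ is the gfp'' at the right level of generality, which follows immediately from the post-fixed-point characterisation $\thgfp{}~F = \thnsup\left\{u \mid u \leq F(u)\right\}$ in Theorem~\ref{thm:tarski_fixed_point}.
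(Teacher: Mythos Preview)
Your proposal is correct and follows the natural line of argument: Knaster--Tarski for (unfold), the constant-function collapse for (false), and Law~\ref{thm:pseq_one} together with $\ufone=\top$ for (true). The paper itself does not give a textual proof of this theorem---it only states the laws and links to the Isabelle mechanisation---so there is no alternative route to compare against; your argument is exactly the standard one that the mechanised proof would unfold to.
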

We note that if $P$ is a distribution and $b$ is $\utrue$, the loop is just $\ufone$, instead of $\ufzero$ for $\pwhileop$. 
{In Theorems~\ref{thm:pwhile_bot} and \ref{thm:pwhile_top}, both loops are well defined if $P$ is a distribution. To reason about nested loops, we need to give the semantics (that is, the fixed point $fp$) to the innermost loop first, prove $fp$ is a distribution, and then move to the next loop closed to that loop.} \changed[\C{2}]{If, however, the inner loop does not almost surely terminate, currently we cannot give the semantics to the outer loop because this loop is not well defined. One line of our future work is to investigate the weakened condition of $P$ from distributions to subdistributions to allow us to give the semantics to nested loops if the inner loop does not terminate almost surely.}

As the semantics for probabilistic loops in our programming language is the \ref{def:tarski_lfp} and the \ref{def:tarski_gfp} in Knaster–Tarski fixed-point Theorem~\ref{thm:tarski_fixed_point}, this does not give information about how iterations can compute fixed points. We resort to Kleene fixed-point Theorem~\ref{thm:kleene_fixed_point_theorem} for iterations. Our use of iterations to compute fixed points is motivated by a simple probabilistic program: flip a coin until the outcome is heads, defined as follows.

\subsection{Motivation example}
This example~\cite{Kozen1985,Morgan1996a,Dahlqvist2020} is about flipping a coin till the outcome is heads. 
\begin{definition}[Flip a coin]
    \label{def:coin_flip}
    \isalink{https://github.com/RandallYe/probabilistic_programming_utp/blob/6a4419b8674b84988065a58696f15093d176594c/probability/probabilistic_relations/Examples/utp_prob_rel_lattice_coin.thy\#L15}
    \begin{align*}
        & Tcoin ::= hd | tl \\ 
        & \isakwmaj{alphabet}\ cstate = c::Tcoin \\
        & cflip \defs \ppchoice{1/2}{\passign{c}{hd}}{\passign{c}{tl}} \\
        & flip \defs \pwhile{c = tl}{cflip}
    \end{align*}
\end{definition}
$Tcoin$ is a free type in the Z notation or an enumerable data type in Isabelle, and it contains two constants $hd$ and $tl$ of type $Tcoin$. The keyword $\isakwmaj{alphabet}$ is used to declare the state space of a program, and it is $cstate$ in this case. This state space is composed of only one variable $c$ of type $Tcoin$, denoting the outcome of the coin flip experiment. The $cflip$ is a probabilistic choice denoting both $hd$ and $tl$ are equally likely, or have a uniform distribution over two outcomes. Finally, this program $flip$ is a while loop whose condition is $c = tl$, specifying that the outcome is $tl$ and whose body is $cflip$.
The $cflip$ is simplified as follows.
\begin{align*}
   &cflip \\
 = & \cmt{Definition of $cflip$ } \\
   & \ppchoice{1/2}{\passign{c}{hd}}{\passign{c}{tl}}  \\
 = & \cmt{Law~\ref{thm:pchoice_altdef} in Theorem~\ref{thm:prob_prob_choice} and Definition~\ref{def:prog_assign}} \\
 & \rvprfunsym{\prrvfunsym{1/2}*\prrvfunsym{\left(\rvprfunsym{\ibracket{\passign{c}{hd}}}\right)} + \left(\rfone - \prrvfunsym{1/2}\right) * \prrvfunsym{\left(\rvprfunsym{\ibracket{\passign{c}{tl}}}\right)}} \\
 = & \cmt{Conversion Definitions~\ref{def:rvfun2prfun} and~\ref{def:u2r_r2u}, and Theorem~\ref{thm:prrvfun_inverse}} \\
 & \rvprfunsym{{1/2}*{\ibracket{\passign{c}{hd}}} + {1/2} * {\ibracket{\passign{c}{tl}}}} \\
 = & \cmt{Definition~\ref{def:uassign}} \\
 & \rvprfunsym{{1/2}*{\ibracket{c' = hd}} + {1/2} * {\ibracket{c' = tl}}} \tag*{($cflip$ altdef)} \label{thm:cflip_altdef}
\end{align*}
The $cflip$ is also a distribution.
\begin{lem}
    \label{thm:cflip_distribution}
    $\isfinaldist(\prrvfunsym{cflip})$ 
    \isalink{https://github.com/RandallYe/probabilistic_programming_utp/blob/6a4419b8674b84988065a58696f15093d176594c/probability/probabilistic_relations/Examples/utp_prob_rel_lattice_coin.thy\#L40}
\end{lem}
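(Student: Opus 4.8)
The plan is to proceed compositionally, reusing the algebraic laws already established for probabilistic choice and assignment rather than unfolding to the summation level. First I would rewrite $cflip$ by its definition as the probabilistic choice $\ppchoice{1/2}{\passign{c}{hd}}{\passign{c}{tl}}$, noting that the weight here is the constant $\ureal$-valued function $1/2$, which is a legitimate expression of type $\prfun$ since $1/2 \in \ureal$. The goal $\isfinaldist(\prrvfunsym{cflip})$ then matches the conclusion of Theorem~\ref{thm:prob_prob_choice}, Law~\ref{thm:pchoice_final_dist}, instantiated with $P \defs \passign{c}{hd}$, $Q \defs \passign{c}{tl}$, and $r \defs 1/2$.

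The two remaining proof obligations are $\isfinaldist(\prrvfunsym{\passign{c}{hd}})$ and $\isfinaldist(\prrvfunsym{\passign{c}{tl}})$, and both are immediate instances of Theorem~\ref{thm:prob_assign_finaldist}, which states that every probabilistic assignment is a distribution over its final state. Discharging these two obligations closes the argument, so no genuine calculation over the state space is needed.

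As an alternative (and as a sanity check for the mechanisation), one could instead argue directly from the derived form ($cflip$ altdef), $cflip = \rvprfunsym{\usexpr{(1/2)*\ibracket{c' = hd} + (1/2)*\ibracket{c' = tl}}}$. Here $\prrvfunsym{cflip}$ simplifies to the real-valued function $(1/2)\ibracket{c' = hd} + (1/2)\ibracket{c' = tl}$ by Theorem~\ref{thm:prrvfun_inverse}, after first checking this function is probabilistic (each Iverson bracket is probabilistic by Theorem~\ref{thm:isprob_ibracket}, and a convex combination of probabilistic functions stays in $[0,1]$). One then observes that $cstate$ has exactly two inhabitants, so $\infsum s' @ \prrvfunsym{cflip}(s,s')$ is the finite sum $1/2 + 1/2 = 1$ for every initial state $s$, giving $\isfinaldist$ by Definition~\ref{def:isfinaldist}.

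I do not expect a real obstacle here: the statement is essentially a corollary of two earlier theorems. The only mild friction, on the mechanised side, is making sure the constant weight $1/2$ is parsed at the right $\ureal$-valued type so that Law~\ref{thm:pchoice_final_dist} unifies with the goal, and (for the alternative route) letting the relational tactics evaluate the finite summation over the two-element type $cstate$; neither is conceptually hard.
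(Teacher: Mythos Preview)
Your proposal is correct and matches the paper's proof exactly: the paper also invokes Theorem~\ref{thm:prob_assign_finaldist} to obtain $\isfinaldist(\prrvfunsym{\passign{c}{hd}})$ and $\isfinaldist(\prrvfunsym{\passign{c}{tl}})$, and then applies Theorem~\ref{thm:prob_prob_choice} Law~\ref{thm:pchoice_final_dist} to conclude $\isfinaldist(\prrvfunsym{cflip})$. Your alternative direct-summation argument is sound as well, but the paper does not take that route.
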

\begin{proof}
   \begin{align*}
    & \cmt{Theorem~\ref{thm:prob_assign_finaldist}} \\
    & \isfinaldist(\prrvfunsym{\passign{c}{hd}})  \land  \isfinaldist(\prrvfunsym{\passign{c}{tl}}) \\
    \implies & \cmt{Theorem~\ref{thm:prob_prob_choice} Law~\ref{thm:pchoice_final_dist}} \\
    & \isfinaldist(\prrvfunsym{cflip}) 
   \end{align*}
\end{proof}

Then the loop function $\lfun_{cflip}^{c = tl}(X)$, denoted as $\lfun_c$, is further simplified.
\begin{align*}
   & \lfun_c(X) \\
 = & \cmt{Law~\ref{thm:lfun_altdef}} \\
 & \rvprfunsym{{\ibracket{c=tl}}*\prrvfunsym{\left(\pseq{cflip}{X}\right)} + {\ibracket{\lnot c=tl}} * {\ibracket{\II}}} \\ 
 = & \cmt{Law~\ref{thm:cflip_altdef}, Theorem~\ref{def:utp_relation} Law~\ref{def:uskip}, and Thoerem~\ref{thm:ib} Law~\ref{thm:ib_conj}} \\
 & \rvprfunsym{{\ibracket{c=tl}}*\prrvfunsym{\left(\pseq{\left(\rvprfunsym{{1/2}*{\ibracket{c' = hd}} + {1/2} * {\ibracket{c' = tl}}} \right)}{X}\right)} + {\ibracket{\lnot c=tl}*\ibracket{c'=c}}} \\ 
 = & \cmt{Definition~\ref{def:coin_flip} of $Tcoin$: $(\lnot c = tl) = (c = hd)$} \\
 & \rvprfunsym{{\ibracket{c=tl}}*\prrvfunsym{\left(\pseq{\left(\rvprfunsym{{1/2}*{\ibracket{c' = hd}} + {1/2} * {\ibracket{c' = tl}}} \right)}{X}\right)} + {\ibracket{c=hd}*\ibracket{c'=c}}} \tag*{(loop body of $flip$)} \label{thm:F_cflip_tl_altdef} 
\end{align*}
Now consider $F^n(\bot)$ in the Kleene fixed-point theorem, and here $F$ is $\lfun_c$ and $\bot$ is $\ufzero$.
\begin{align*}
\left(\lfun_c\right)^0(\ufzero) =& \ufzero \\
\left(\lfun_c\right)^1(\ufzero) 
 =&\cmt{Law~\ref{thm:F_cflip_tl_altdef}} \\
& \rvprfunsym{{\ibracket{c=tl}}*\prrvfunsym{\left(\pseq{\left(\rvprfunsym{{1/2}*{\ibracket{c' = hd}} + {1/2} * {\ibracket{c' = tl}}} \right)}{\ufzero}\right)} + {\ibracket{c=hd}*\ibracket{c'=c}}}\\
 =&\cmt{Right Zero Theorem~\ref{thm:prog_seq_comp} Law~\ref{thm:pseq_right_zero}} \\
 & \rvprfunsym{{\ibracket{c=hd}*\ibracket{c'=c}}} \\
\left(\lfun_c\right)^2 (\ufzero) 
 = &\cmt{$F^2(\ufzero) = F(F^1(\ufzero))$ and Law~\ref{thm:lfun_altdef}} \\
 & \rvprfunsym{{\ibracket{c=tl}}*\prrvfunsym{\left(\pseq{cflip}{\left(F_{cflip}^{c=tl}\right)^1(\ufzero) }\right)} + {\ibracket{\lnot c=tl}} * {\ibracket{\II}}} \\ 
 = & \cmt{Law~\ref{thm:cflip_altdef}, ${\left(F_{cflip}^{c=tl}\right)^1(\ufzero)}$, Theorem~\ref{def:utp_relation} Law~\ref{def:uskip}, and Thoerem~\ref{thm:ib} Law~\ref{thm:ib_conj}} \\
 & \rvprfunsym{{\ibracket{c=tl}}*\prrvfunsym{\left(\pseq{\left(\rvprfunsym{{1/2}*{\ibracket{c' = hd}} + {1/2} * {\ibracket{c' = tl}}} \right)}{\rvprfunsym{{\ibracket{c=hd}*\ibracket{c'=c}}}}\right)} + {\ibracket{c=hd}*\ibracket{c'=c}}} \\ 
 = & \cmt{Definition~\ref{def:prog_seq}, Thoerem~\ref{thm:ib}, Theorem~\ref{thm:pseq_ibracket_contradictory}, $\cdots$} \\
 & \rvprfunsym{{\ibracket{c=tl}}*{\ibracket{c' = hd}}/2 + {\ibracket{c=hd}*\ibracket{c'=c}}} \\ 
\left(\lfun_c\right)^3 (\ufzero)
= & \cmt{{Same as previous proof}} \\
 & \rvprfunsym{{\ibracket{c=tl}}*{\ibracket{c' = hd}}/2 + {\ibracket{c=tl}}*{\ibracket{c' = hd}}/2^2 + {\ibracket{c=hd}*\ibracket{c'=c}}} \\ 
= & \cmt{{Same as previous proof}} \\
 & \rvprfunsym{{\ibracket{c=tl}}*{\ibracket{c' = hd}}*\left(1/2 + 1/2^2\right) + {\ibracket{c=hd}*\ibracket{c'=c}}} \\ 
 & \cdots \\
\left(\lfun_c\right)^n (\ufzero) 
 = & \cmt{Induction} \\
 & \rvprfunsym{{\ibracket{c=tl}}*{\ibracket{c' = hd}}*\left(1/2 + 1/2^2 + \cdots + 1/2^{(n-1)}\right) + {\ibracket{c=hd}*\ibracket{c'=c}}} \\ 
 = & \cmt{Summation} \\
 & \rvprfunsym{{\ibracket{c=tl}}*{\ibracket{c' = hd}}* \sum_{i=1}^{n-1} 1/2^{i} + {\ibracket{c=hd}*\ibracket{c'=c}}} \tag*{(iteration from bot)} \label{thm:F_b_P_clip_bot} 
\end{align*}
The $\left(\lfun_c\right)^n (\ufzero)$ corresponds to the termination probability after up to $n-1$ iterations of the loop body of $flip$ in Definition~\ref{def:coin_flip}. For example, $\left(\lfun_c\right)^1 (\ufzero)$ corresponds to zero iterations or immediate termination. Its semantics is $\rvprfunsym{{\ibracket{c=hd}*\ibracket{c'=c}}}$ which means if the initial value of $c$ is $hd$, then the final value is also $hd$ with probability 1. 
For example, $\left(\lfun_c\right)^3 (\ufzero)$ corresponds to the termination after up to two iterations, including the immediate termination, the termination after exact one iteration (${\ibracket{c=tl}}*{\ibracket{c' = hd}}/2$, meaning the initial value of $c$ is $tl$, and the outcome of the flip is $hd$ with probability $1/2$), and exact two iterations (${\ibracket{c=tl}}*{\ibracket{c' = hd}}/2^2$).  

Now consider $F^n(\top)$ in the Kleene fixed-point theorem, and here $\top$ is $\ufone$.
\begin{align*}
\left(\lfun_c\right)^0(\ufone) =& \ufone\\
\left(\lfun_c\right)^1(\ufone) 
 =&\cmt{Law~\ref{thm:F_cflip_tl_altdef}} \\
& \rvprfunsym{{\ibracket{c=tl}}*\prrvfunsym{\left(\pseq{\left(\rvprfunsym{{1/2}*{\ibracket{c' = hd}} + {1/2} * {\ibracket{c' = tl}}} \right)}{\ufone}\right)} + {\ibracket{c=hd}*\ibracket{c'=c}}}\\
 =&\cmt{Lemma~\ref{thm:cflip_distribution} and Theorem~\ref{thm:prog_seq_comp} Law~\ref{thm:pseq_one}} \\
 & \rvprfunsym{{\ibracket{c=tl}} + {\ibracket{c=hd}*\ibracket{c'=c}}} \\
\left(\lfun_c\right)^2 (\ufone) 
 = &\cmt{$F^2(\ufone) = F(F^1(\ufone))$ and Law~\ref{thm:lfun_altdef}} \\
 & \rvprfunsym{{\ibracket{c=tl}}*\prrvfunsym{\left(\pseq{cflip}{\left(F_{cflip}^{c=tl}\right)^1(\ufone) }\right)} + {\ibracket{\lnot c=tl}} * {\ibracket{\II}}} \\ 
 = & \cmt{Law~\ref{thm:cflip_altdef}, ${\left(F_{cflip}^{c=tl}\right)^1(\ufone)}$, Theorem~\ref{def:utp_relation} Law~\ref{def:uskip}, and Theorem~\ref{thm:ib} Law~\ref{thm:ib_conj}} \\
 & \rvprfunsym{{\ibracket{c=tl}}*\prrvfunsym{\left(\pseq{\left(\rvprfunsym{
     \begin{array}[]{l}
     {1/2}*{\ibracket{c' = hd}} + \\
     {1/2} * {\ibracket{c' = tl}}
     \end{array}
 } \right)}{\left(\rvprfunsym{
     \begin{array}[]{l}
        {\ibracket{c=tl}} + \\
        {\ibracket{c=hd}*\ibracket{c'=c}}
     \end{array}
 }\right)}\right)} + {\ibracket{c=hd}*\ibracket{c'=c}}} \\ 
 = & \cmt{Definition~\ref{def:prog_seq}, Theorems~\ref{thm:ib}, Theorem~\ref{thm:pseq_ibracket_contradictory}, $\cdots$} \\
 & \rvprfunsym{{\ibracket{c=tl}}*\prrvfunsym{\left({\rvprfunsym{
     \begin{array}[]{l}
     {1/2} * {\ibracket{c' = hd}} + {1/2}
     \end{array}
 } }\right)} + {\ibracket{c=hd}*\ibracket{c'=c}}} \\ 
 = & \cmt{ Theorem~\ref{thm:prrvfun_inverse} } \\
 & \rvprfunsym{{\ibracket{c=tl}/2}+ {\ibracket{c=tl}} * {\ibracket{c' = hd}}/2 + {\ibracket{c=hd}*\ibracket{c'=c}}} \\ 
\left(\lfun_c\right)^3 (\ufone)
= & \cmt{{Same as previous proof}} \\
 & \rvprfunsym{{\ibracket{c=tl}/2^2} + {\ibracket{c=tl}}*{\ibracket{c' = hd}}*\left(1/2 + 1/2^2\right) + {\ibracket{c=hd}*\ibracket{c'=c}}} \\ 
 & \cdots \\
\left(\lfun_c\right)^n (\ufone) 
 = & \cmt{Induction} \\
 & \rvprfunsym{{\ibracket{c=tl}/2^{n-1}} + {\ibracket{c=tl}}*{\ibracket{c' = hd}}* \sum_{i=1}^{n-1} 1/2^{i} + {\ibracket{c=hd}*\ibracket{c'=c}}} \tag*{(iteration from top)} \label{thm:F_b_P_clip_top} 
\end{align*}
We notice that $\left(\lfun_c\right)^n (\ufone)$ above is an addition of three operands of which the last two are the same as $\left(\lfun_c\right)^n (\ufzero)$ in \ref{thm:F_b_P_clip_bot}. The extra operand ${\ibracket{c=tl}/2^{n-1}}$ converges to 0 when $n$ approaches $\infty$, and so eventually $\left(\lfun_c\right)^n (\ufzero)$ and $\left(\lfun_c\right)^n (\ufone)$ coincide at $\infty$. This is illustrated in Fig.~\ref{fig:coin_t_prob_iteration} where the common part ${\ibracket{c=hd}*\ibracket{c'=c}}$ in $\left(\lfun_c\right)^n (\ufzero)$ and $\left(\lfun_c\right)^n (\ufone)$ is omitted.
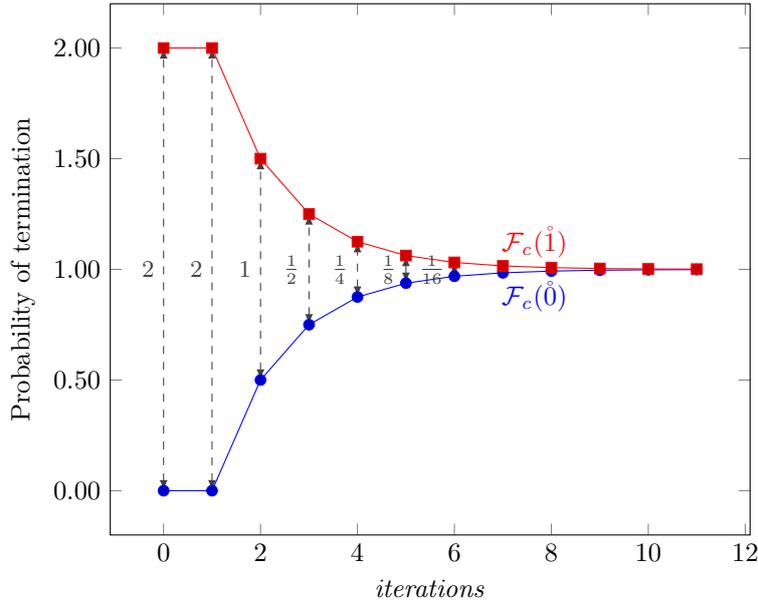
\begin{figure}[!ht]
    \begin{center}
\begin{tikzpicture}
\begin{axis}[
    xlabel={$iterations$},
    ylabel={Probability of termination},
    y tick label style={/pgf/number format/.cd,fixed,fixed zerofill,precision=2},
]
    \addplot table[header=false,col sep=&,row sep=\\,y expr={\thisrowno{1}}] {
        0 & 0\\
        1 & 0\\
        2 & 1 - 1/2^1\\
        3 & 1 - 1/2^2\\
        4 & 1 - 1/2^3\\
        5 & 1 - 1/2^4\\
        6 & 1 - 1/2^5\\
        7 & 1 - 1/2^6\\
        8 & 1 - 1/2^7\\
        9 & 1 - 1/2^8\\
        10& 1 - 1/2^9\\
        11& 1 - 1/2^10\\
    } node[below,pos=0.7] {$\lfun_c(\ufzero)$};

    \addplot table[header=false,col sep=&,row sep=\\,y expr={\thisrowno{1}}] {
        0 & 2\\
        1 & 2\\
        2 & 2/2^1 + 1 - 1/2^1 \\
        3 & 2/2^2  + 1 - 1/2^2 \\
        4 & 2/2^3  + 1 - 1/2^3 \\
        5 & 2/2^4  + 1 - 1/2^4 \\
        6 & 2/2^5  + 1 - 1/2^5 \\
        7 & 2/2^6  + 1 - 1/2^6 \\
        8 & 2/2^7  + 1 - 1/2^7 \\
        9 & 2/2^8  + 1 - 1/2^8 \\
        10& 2/2^9  + 1 - 1/2^9 \\
        11& 2/2^10 + 1 - 1/2^10\\
    } node[above,pos=0.7] {$\lfun_c(\ufone)$};

    \draw[latex-latex, darkgray, dashed] (axis cs:0,0) -- node[left]{\small 2} (axis cs:0,2);
    \draw[latex-latex, darkgray, dashed] (axis cs:1,0) -- node[left]{\small 2} (axis cs:1,2);
    \draw[latex-latex, darkgray, dashed] (axis cs:2, 1 - 1/2^1) -- node[left]{\small 1} (axis cs:2, 2/2^1 + 1 - 1/2^1);
    \draw[latex-latex, darkgray, dashed] (axis cs:3, 1 - 1/2^2) -- node[left]{\small $\frac{1}{2}$} (axis cs:3, 2/2^2 + 1 - 1/2^2);
    \draw[latex-latex, darkgray, dashed] (axis cs:4, 1 - 1/2^3) -- node[left]{\small $\frac{1}{4}$} (axis cs:4, 2/2^3 + 1 - 1/2^3);
    \draw[latex-latex, darkgray, dashed] (axis cs:5, 1 - 1/2^4) -- node[left]{\small $\frac{1}{8}$} (axis cs:5, 2/2^4 + 1 - 1/2^4);
    \draw[latex-latex, darkgray, dashed] (axis cs:6, 1 - 1/2^5) -- node[left]{\small $\frac{1}{16}$} (axis cs:6, 2/2^5 + 1 - 1/2^5);
  \end{axis}
\end{tikzpicture}
    \end{center}
    \caption{Termination probability over iterations from bottom and top for coin flip.}
    \label{fig:coin_t_prob_iteration}
\end{figure}
As shown in the diagram, along with the increasing iterations, $\left(\lfun_c\right)^n (\ufzero)$ increases towards 1 from the initial 0, while $\left(\lfun_c\right)^n (\ufone)$ decreases towards 1 from the initial 2. Their differences, marked with dashed lines, are becoming smaller and smaller.
From this example, we observe that there is one unique fixed point where the least fixed point and the greatest fixed point are the same. The precondition for this uniqueness is their iteration differences converging to 0. If this is the case, we must find a fixed point and prove it to reason about a probabilistic loop. Then the fixed point is the semantics of the loop. We do not need to compute it by iterations. 

This example motivates us to give semantics to probabilistic loops as follows.
First, we need to prove the loop function is continuous. Then if the differences of iterations from top and bottom converge to 0, there is a unique fixed point. Otherwise, we use the Kleene fixed-point Theorem~\ref{thm:kleene_fixed_point_theorem} to compute the least and greatest fixed points by iterations.

\subsection{Fixed point theorems}
We define the function $\iter$ below recursively for iterations $(\lfun_P^b)^n(X)$, and the function $\iterdiff$ for the differences between iterations from top and bottom.
\begin{definition}[Iteration and iteration difference]
    \isalink{https://github.com/RandallYe/probabilistic_programming_utp/blob/6a4419b8674b84988065a58696f15093d176594c/probability/probabilistic_relations/utp_prob_rel_lattice.thy\#L417}
    \begin{align*}
        &\iter\left(n, b, P, X\right) \defs \left(\IF n = 0 \THEN X \ELSE \lfunbp\left(\iter\left(n-1, b, P, X\right)\right)\right)\tag*{(iteration)} \label{def:iter}\\
        &\lfundiff(b,P,X) \defs~\pcchoice{b}{\left(\pseq{P}{X}\right)}{\ufzero} \\
        &\iterdiff\left(n, b, P\right) \defs \left(\IF n = 0 \THEN {\ufone} \ELSE \lfundiff\left(b, P, \iterdiff\left(n-1, b, P\right)\right)\right)\tag*{(iteration difference)} \label{def:iterdiff}
    \end{align*}
\end{definition}
The $\lfundiff$ is similar to $\lfun$ in Definition~\ref{def:lfun} except that if the condition $b$ does not hold, it is $\ufzero$ instead of $\pskip$ in $\lfun$.

{We show that $\iterdiff(n, b, P)$ indeed captures the difference between iterations from top and bottom for any $n$.
\begin{thm}
    \label{thm:iterdiff}
    Provided $P$ is a distribution, that is, $\isfinaldist(P)$. 
    \begin{align*}
    \forall n:\nat \bullet {\lfunbp}^n(\ufone) - {\lfunbp}^n(\ufzero) = \iterdiff(n, b, P)
    \end{align*}
\end{thm}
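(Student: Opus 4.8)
The natural route is induction on $n$. For the base case $n=0$, Definition~\ref{def:iter} gives ${\lfunbp}^0(\ufone) = \ufone$ and ${\lfunbp}^0(\ufzero) = \ufzero$, so the left-hand side is $\ufone - \ufzero$, which is $\ufone$ by Theorem~\ref{thm:top_bot} ($P - \ufzero = P$); the right-hand side is $\iterdiff(0,b,P) = \ufone$ by Definition~\ref{def:iterdiff}, so both sides coincide.

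For the inductive step, assume ${\lfunbp}^n(\ufone) - {\lfunbp}^n(\ufzero) = \iterdiff(n,b,P)$. Write ${\lfunbp}^{n+1}(\ufone) = \lfunbp\left({\lfunbp}^n(\ufone)\right)$ and likewise for $\ufzero$, and expand each occurrence of $\lfunbp$ by \ref{thm:lfun_altdef}, obtaining ${\lfunbp}^{n+1}(\ufone) = \rvprfunsym{\ibracket{b} * \prrvfunsym{\left(\pseq{P}{{\lfunbp}^n(\ufone)}\right)} + \ibracket{\lnot b} * \ibracket{\II}}$ and the same with $\ufzero$. Since $\ufzero \leq \ufone$ and $\lfunbp$ is monotone — this is where $P$ being a distribution is first used — we have ${\lfunbp}^n(\ufzero) \leq {\lfunbp}^n(\ufone)$, and hence by monotonicity of sequential composition (Law~\ref{thm:pseq_mono}) also $\pseq{P}{{\lfunbp}^n(\ufzero)} \leq \pseq{P}{{\lfunbp}^n(\ufone)}$. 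The two real-valued operands inside the $\rvprfunsym{\cdot}$ are pointwise in $[0,1]$ (on states satisfying $b$ they equal some $\prrvfunsym{\pseq{P}{\cdot}} \in [0,1]$, elsewhere they equal $\ibracket{\II}$), so $\rvprfunsym{\cdot}$ acts there as the identity, and the ordering just established lets us compute the bounded $\ureal$ difference as the ordinary real difference. The $\ibracket{\lnot b} * \ibracket{\II}$ summands then cancel, leaving $\ibracket{b}$ multiplied by $\prrvfunsym{\left(\pseq{P}{{\lfunbp}^n(\ufone)}\right)} - \prrvfunsym{\left(\pseq{P}{{\lfunbp}^n(\ufzero)}\right)}$.

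It remains to push this subtraction inside sequential composition, i.e. to show it equals $\prrvfunsym{\left(\pseq{P}{\left({\lfunbp}^n(\ufone) - {\lfunbp}^n(\ufzero)\right)}\right)}$. Unfolding $\pseq{P}{\cdot}$ as $\infsum v_0 @ \prrvfunsym{P}[v_0/\vv'] * (\cdot)[v_0/\vv]$ (Definition~\ref{def:prog_seq}) and using the subtraction law for infinite sums~\ref{thm:summation_minus}, this reduces to summability of each family over $\univ$, which follows from $P$ being a distribution (so $\prrvfunsym{P}$ is summable over its final states) together with ${\lfunbp}^n(\ufone), {\lfunbp}^n(\ufzero) \leq \ufone$, so the products are dominated by a summable function. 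Applying the induction hypothesis replaces ${\lfunbp}^n(\ufone) - {\lfunbp}^n(\ufzero)$ by $\iterdiff(n,b,P)$, and converting back through $\rvprfunsym{\cdot}$ we recognise $\ibracket{b} \cdot \prrvfunsym{\left(\pseq{P}{\iterdiff(n,b,P)}\right)}$ as $\pcchoice{b}{\left(\pseq{P}{\iterdiff(n,b,P)}\right)}{\ufzero}$ (by the definitions of conditional choice and of the Iverson bracket), which is exactly $\lfundiff(b,P,\iterdiff(n,b,P)) = \iterdiff(n+1,b,P)$ by Definition~\ref{def:iterdiff}. This closes the induction.

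The main obstacle I anticipate is not the induction skeleton but the coercion bookkeeping: $\ureal$ subtraction is truncated at $0$, so every step that moves a subtraction past $\rvprfunsym{\cdot}$ or $\prrvfunsym{\cdot}$, past multiplication by $\ibracket{b}$, or past the infinite sum in sequential composition is only valid under the pointwise ordering ${\lfunbp}^n(\ufzero) \leq {\lfunbp}^n(\ufone)$, which therefore has to be threaded through the whole argument (it is itself a consequence of the monotonicity of $\lfunbp$, hence of $P$ being a distribution). The attendant summability side conditions for the linearity step are the other point needing care, but they are of the same kind as those already discharged for the distributivity laws of $\pseq{}{}$ such as Law~\ref{thm:pseq_dist_cchoice}.
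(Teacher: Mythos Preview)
Your proof is correct, and it takes a genuinely different route from the paper's own argument.

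The paper does not argue by a single induction pushing the subtraction through the recursion. Instead it first establishes three auxiliary closed-form results: explicit nested-sum formulas for $(\lfunbp)^n(\ufzero)$, for $(\lfunbp)^n(\ufone)$, and for $\iterdiff(n,b,P)$ (Theorems~\ref{thm:iterdiff_bot}, \ref{thm:iterdiff_top}, \ref{thm:iterdiff_eq}), each proved separately by induction on $n$. The formula for $(\lfunbp)^n(\ufone)$ turns out to be the formula for $(\lfunbp)^n(\ufzero)$ plus one extra nested-sum term, and that extra term is precisely the closed form of $\iterdiff(n,b,P)$; the theorem then follows by literal subtraction of the two expressions and matching against the third. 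Your argument bypasses all three closed forms and works directly with the recursive definitions, using monotonicity to justify that the truncated $\ureal$ subtraction coincides with real subtraction and linearity of $\infsum$ to commute subtraction past sequential composition. Your route is shorter and more structural; the paper's route is longer but delivers the explicit iterate formulas as a by-product, which may be useful elsewhere even though the paper does not exploit them further. The technical obligations you flag (ordering to untruncate the $\ureal$ subtraction, summability for the $\infsum$ linearity step) are exactly the ones that need to be discharged, and they are indeed available from the hypothesis $\isfinaldist(\prrvfunsym{P})$ just as you describe.
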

The proof of this theorem is shown in \ref{appendix:proof_iterdiff}.
}

We show that the iteration from the bottom is an ascending chain and the iteration from the top is a descending chain if $P$ is a distribution.
\begin{thm}
    \label{thm:iter_bot_ascending}
   $\isfinaldist\left(\prrvfunsym{P}\right) \implies \incseq\left(\lambda n @ \iter\left(n, b, P, \ufzero\right)\right)$ 
   \isalink{https://github.com/RandallYe/probabilistic_programming_utp/blob/6a4419b8674b84988065a58696f15093d176594c/probability/probabilistic_relations/utp_prob_rel_lattice_laws.thy\#L4520}
\end{thm}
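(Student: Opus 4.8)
The plan is to prove this by a straightforward induction on $n$, after reducing the ascending-chain condition to a statement about consecutive iterates. Since $\incseq(f)$ unfolds to $\mono(f)$, and a function on $\nat$ is monotone exactly when $f(n) \leq f(n+1)$ for every $n$, it suffices to establish $\iter\left(n, b, P, \ufzero\right) \leq \iter\left(n+1, b, P, \ufzero\right)$ for all $n$; the passage from this to full monotonicity $m \leq n \implies f(m) \leq f(n)$ is a routine secondary induction on the gap $n - m$ using transitivity of $\leq$ (Law~\ref{law:transitive}). Note that by the recursive Definition~\ref{def:iter}, $\iter\left(n, b, P, \ufzero\right)$ is just ${\lfunbp}^n(\ufzero)$, so these are precisely the Kleene iterates from the bottom.

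For the base case $n = 0$, we have $\iter\left(0, b, P, \ufzero\right) = \ufzero$ by Definition~\ref{def:iter}, and $\ufzero = \bot$ is the least element of the complete lattice $\left(S \fun \ureal, \leq\right)$ (Theorem~\ref{thm:ureal_func_complete} and Theorem~\ref{thm:top_bot}), so $\ufzero \leq \iter\left(1, b, P, \ufzero\right)$ holds trivially. For the inductive step, assume $\iter\left(n, b, P, \ufzero\right) \leq \iter\left(n+1, b, P, \ufzero\right)$. Unfolding Definition~\ref{def:iter} twice, the goal $\iter\left(n+1, b, P, \ufzero\right) \leq \iter\left(n+2, b, P, \ufzero\right)$ becomes $\lfunbp\left(\iter\left(n, b, P, \ufzero\right)\right) \leq \lfunbp\left(\iter\left(n+1, b, P, \ufzero\right)\right)$, which follows immediately by applying the monotonicity of $\lfunbp$ (the earlier theorem $\isfinaldist\left(\prrvfunsym{P}\right) \implies \mono\left(\lfunbp\right)$, whose hypothesis is exactly what we are assuming) to the inductive hypothesis.

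The argument is short, and the only genuine ingredient is the monotonicity of the loop functional $\lfunbp$, which has already been proved (relying on $P$ being a distribution, so that $\pseq{P}{-}$ is monotone by Law~\ref{thm:pseq_mono} and the conditional choice with $\pskip$ preserves the order by Law~\ref{thm:cchoice_mono}). The main point requiring a little care is the bookkeeping in the reduction from the chain condition $\incseq$ to the consecutive-step inequality; everything else is a direct induction. No convergence or summability reasoning is needed here — that enters only later when the limit of this chain is identified with the least fixed point.
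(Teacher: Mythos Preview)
Your proposal is correct and follows the standard Kleene-chain argument that the paper's mechanisation implements (the paper itself does not spell out a proof in the text, deferring to the Isabelle source). The reduction of $\incseq$ to the step inequality, the base case via $\ufzero = \bot$, and the inductive step via the already-established monotonicity of $\lfunbp$ are exactly the expected ingredients; nothing is missing.
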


\begin{thm}
    \label{thm:iter_top_descending}
   $\isfinaldist\left(\prrvfunsym{P}\right) \implies \decseq\left(\lambda n @ \iter\left(n, b, P, \ufone\right)\right)$ 
   \isalink{https://github.com/RandallYe/probabilistic_programming_utp/blob/6a4419b8674b84988065a58696f15093d176594c/probability/probabilistic_relations/utp_prob_rel_lattice_laws.thy\#L4564}
\end{thm}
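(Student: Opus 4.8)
The plan is to prove this by induction on the index, exploiting the fact that a descending chain over $\nat$ is determined by its consecutive steps. Writing $X_n$ for $\iter(n, b, P, \ufone)$, Definition~\ref{def:iter} gives $X_0 = \ufone$ and $X_{n+1} = \lfunbp(X_n)$, so $X_n = {\lfunbp}^n(\ufone)$. Since $\decseq$ is just $\antimono$ restricted to functions on $\nat$, it suffices to establish $X_{n+1} \leq X_n$ for every $n$: $\antimono$ on $\nat$ then follows from $f(n+1) \leq f(n)$ by a routine induction on the gap $m - n$ together with transitivity of $\leq$ (Law~\ref{law:transitive}).

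For the base case, $X_1 = \lfunbp(\ufone)$ is again a probabilistic program of type $\prhfun$ (it is a conditional choice between a sequential composition and $\pskip$), hence $X_1 \leq \ufone = X_0$ by Theorem~\ref{thm:top_bot} (which gives $P \leq \ufone$ for every probabilistic program $P$). For the inductive step, assume $X_{n+1} \leq X_n$. The hypothesis $\isfinaldist(\prrvfunsym{P})$ yields $\mono(\lfunbp)$ by the monotonicity theorem for the loop function, so applying $\lfunbp$ to both sides gives $\lfunbp(X_{n+1}) \leq \lfunbp(X_n)$, i.e. $X_{n+2} \leq X_{n+1}$. This closes the induction and establishes $\decseq(\lambda n @ X_n)$.

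The argument is the order-dual of Theorem~\ref{thm:iter_bot_ascending}: there the base case uses that $\ufzero$ is the bottom element and the chain ascends, here it uses that $\ufone$ is the top element and the chain descends, while both steps rely on the same monotonicity fact. Consequently there is no genuine obstacle specific to this statement: all the real content sits in the already-proved monotonicity of $\lfunbp$, which is precisely where the distribution assumption on $P$ is indispensable (it is needed so that $\pseq{P}{X}$ is well behaved and the conditional-choice combination stays monotone). The only points requiring care in the mechanisation are unfolding $\iter$ correctly at $0$ and at $n+1$, and discharging the ``$\antimono$ from consecutive steps'' reduction, both of which are standard.
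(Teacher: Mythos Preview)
Your proposal is correct and follows the standard Kleene-chain argument that the paper's mechanisation also uses: the base case $X_1 \leq \ufone$ comes from $\ufone$ being the top of the lattice (Theorem~\ref{thm:top_bot}), and the inductive step is exactly one application of the monotonicity of $\lfunbp$ under the distribution hypothesis. The paper does not spell out a textual proof for this theorem (only the Isabelle link), but your argument is precisely the order-dual of Theorem~\ref{thm:iter_bot_ascending} as you note, and there is nothing to add.
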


For an ascending or descending chain $f$, we define $\finstatesasc$ and $\finstatesdes$ to denote there are only finite states to have their suprema or infima different from their initial values $f(0)$.
\begin{definition}[Finite states]
    \label{def:finstates}
    We fix $f: \nat \fun [S]\prhfun$, then define
    \isalink{https://github.com/RandallYe/probabilistic_programming_utp/blob/6a4419b8674b84988065a58696f15093d176594c/probability/probabilistic_relations/utp_prob_rel_lattice.thy\#L401}
    \begin{align*}
        \finstatesasc(f) & \defs \finite\left(\left\{s : S \mid \left(\left(\thnsup n @ f(n,s)\right) > f(0,s) \right)\right\}\right) \\
        \finstatesdes(f) & \defs \finite\left(\left\{s : S \mid \left(\left(\thninf n @ f(n,s)\right) < f(0,s) \right)\right\}\right)
    \end{align*}
\end{definition}
The intuition behind the definitions of $\finstatesasc$ and $\finstatesdes$ is that if an ascending or descending chain $f$ has its supremum or infimum not equal to $f(0)$ for a particular state $s$, then {for any $\varepsilon:\real > 0$, there exists a $m:\nat$ such that} $\left(\thnsup n @ f(n,s) - f(m,s) < \varepsilon\right)$ or $\left(f(m,s) - \thninf n @ f(n,s)  < \varepsilon\right)$. 
%

We show that if $f$ is an ascending chain and {$\finstatesasc(f)$ also holds}, then there exists a $N:\nat$ such that for any $n\geq N$, $f(n,s)$ is close to its supremum in a given distance $\varepsilon : \real > 0$ for any $s$. 
\begin{thm}
    \label{thm:incseq_limit_is_lub_all}
    We fix $f:\nat \fun [S_1,S_2]\prfun$, then 
    \isalink{https://github.com/RandallYe/probabilistic_programming_utp/blob/6a4419b8674b84988065a58696f15093d176594c/probability/probabilistic_relations/utp_prob_rel_lattice_laws.thy\#L3717}
    \begin{align*}
        & \incseq(f) \land \finstatesasc(f) \implies \forall \varepsilon:\real > 0 \bullet \exists N:\nat \bullet \forall n \geq N \bullet \forall s \bullet \left(\thnsup m @ f(m, s)\right) - f(n,s) < \varepsilon 
    \end{align*}
\end{thm}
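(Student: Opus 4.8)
The plan is to reduce the claimed \emph{uniform} convergence to \emph{pointwise} convergence at each state and then use the finiteness hypothesis $\finstatesasc(f)$ to combine the finitely many relevant pointwise bounds into a single $N$. Fix $\varepsilon:\real > 0$ and write $A$ for the set $\left\{s \mid \left(\thnsup n @ f(n,s)\right) > f(0,s)\right\}$, which is finite by $\finstatesasc(f)$. I would first record two routine facts: since the order on $[S_1,S_2]\prfun$ is pointwise (Definition~\ref{def:urf_pointwise}), $\incseq(f)$ gives, for every fixed $s$, that $n \mapsto f(n,s)$ is an ascending chain in $\ureal$; and $\left(\thnsup n @ f(n)\right)(s) = \thnsup n @ f(n,s)$ because suprema in the function lattice are computed pointwise (Theorem~\ref{thm:ureal_func_complete}).

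For a state $s \notin A$ the bound is immediate: ascendingness gives $f(0,s) \leq f(n,s) \leq \thnsup m @ f(m,s)$, and $s \notin A$ forces $\thnsup m @ f(m,s) = f(0,s)$, so $f(n,s) = \thnsup m @ f(m,s)$ for all $n$, whence $\left(\thnsup m @ f(m,s)\right) - f(n,s) = \uzero < \varepsilon$ for every $n$. For a state $s \in A$, I would apply Theorem~\ref{thm:limit_as_sup_inf} (limit as supremum) to the sequence $n \mapsto f(n,s)$: this is legitimate because $\left(\ureal, \leq\right)$ is a complete lattice that is additionally totally ordered. Hence $n \mapsto f(n,s)$ converges to $\thnsup m @ f(m,s)$, so by Definition~\ref{def:tendsto} there is some $N_s:\nat$ with $\left| f(n,s) - \thnsup m @ f(m,s)\right| < \varepsilon$ for all $n \geq N_s$; since $f(n,s) \leq \thnsup m @ f(m,s)$, the bounded subtraction of Definition~\ref{def:bounded_plus_minus} coincides here with the ordinary difference, so this is exactly $\left(\thnsup m @ f(m,s)\right) - f(n,s) < \varepsilon$.

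Finally I would take $N \defs \max\left(\{0\} \cup \{N_s \mid s \in A\}\right)$, which is well defined because $A$ is finite. Then for every $n \geq N$ and every $s$: if $s \in A$ we have $n \geq N \geq N_s$, and the chosen $N_s$ bound applies; if $s \notin A$ the trivial bound from the previous paragraph applies. In both cases $\left(\thnsup m @ f(m,s)\right) - f(n,s) < \varepsilon$, which is the claim.

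I expect the main obstacle to be the bookkeeping around the $\ureal$/$\real$ coercions rather than anything conceptually deep: one must check that $\thnsup$ over the function lattice agrees pointwise with $\thnsup$ over $\ureal$, that $\incseq$ descends to each fibre, and --- crucially --- that the bounded subtraction behaves like ordinary real subtraction in this situation (which it does precisely because each $f(n,s)$ lies below the supremum, so the truncating $\umax$ is inert). The only genuinely non-trivial ingredient is the pointwise convergence of a monotone $\ureal$-sequence to its supremum, and that is already supplied by Theorem~\ref{thm:limit_as_sup_inf} together with the fact that $\ureal$ is totally ordered; the finiteness hypothesis $\finstatesasc(f)$ then does the remaining work of passing from ``for each $s$ some $N_s$'' to ``one $N$ for all $s$''.
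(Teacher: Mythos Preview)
Your proposal is correct and follows essentially the same approach as the paper: the paper's informal explanation (accompanying Fig.~\ref{fig:finitestateasc}) likewise obtains a pointwise $N_s$ from convergence to the supremum, observes that $N_s = 0$ for states outside the finite set, and then takes $N$ to be the maximum of the finitely many nonzero $N_s$. Your additional remarks about pointwise suprema and the inertness of the bounded subtraction are exactly the bookkeeping the mechanised proof handles.
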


\begin{figure}[!ht]
    \begin{center}
\begin{tikzpicture}
\begin{axis}[
    xlabel={$n$},
    xmin=0,
    xmax=12,
    x=1.0cm,
    ymax=0.6,
    ylabel={$f(n,s)$},
    y tick label style={/pgf/number format/.cd,fixed,fixed zerofill,precision=2},
    legend pos=outer north east,
]
    \addplot table[header=false,col sep=&,row sep=\\,y expr={\thisrowno{1}}] {
        0 & 0\\
        1 & 0\\
        2 & 1/2 - (2/3)^2\\
        3 & 1/2 - (2/3)^3\\
        4 & 1/2 - (2/3)^4\\
        5 & 1/2 - (2/3)^5\\
        6 & 1/2 - (2/3)^6\\
        7 & 1/2 - (2/3)^7\\
        8 & 1/2 - (2/3)^8\\
        9 & 1/2 - (2/3)^9\\
        10& 1/2 - (2/3)^10\\
        11& 1/2 - (2/3)^11\\
    } node[below right,pos=0.7] {};
    \addlegendentry{$f(n,s_1)$};

    \addplot table[header=false,col sep=&,row sep=\\,y expr={\thisrowno{1}}] {
        0 & 0\\
        1 & 0\\
        2 & 1/3 - (3/5)^2 + 0.1\\
        3 & 1/3 - (3/5)^3\\
        4 & 1/3 - (3/5)^4\\
        5 & 1/3 - (3/5)^5\\
        6 & 1/3 - (3/5)^6\\
        7 & 1/3 - (3/5)^7\\
        8 & 1/3 - (3/5)^8\\
        9 & 1/3 - (3/5)^9\\
        10& 1/3 - (3/5)^10\\
        11& 1/3 - (3/5)^11\\
    } node[below right,pos=0.7] {};
    \addlegendentry{$f(n,s_2)$};

    \addplot table[header=false,col sep=&,row sep=\\,y expr={\thisrowno{1}}] {
        0 & 0\\
        1 & 0\\
        2 & 0.15\\
        3 & 0.15\\
        4 & 0.15\\
        5 & 0.15\\
        6 & 0.15\\
        7 & 0.15\\
        8 & 0.15\\
        9 & 0.15\\
        10& 0.15\\
        11& 0.15\\
    } node[below right,pos=0.7] {};
    \addlegendentry{$f(n,s_3)$};

    \addplot table[header=false,col sep=&,row sep=\\,y expr={\thisrowno{1}}] {
        0 & 0\\
        1 & 0\\
        2 & 0\\
        3 & 0\\
        4 & 0\\
        5 & 0\\
        6 & 0\\
        7 & 0\\
        8 & 0\\
        9 & 0\\
        10& 0\\
        11& 0\\
    } node[above right,pos=0.7] {};
    \addlegendentry{$f(n,s_4)$};

    \draw[blue,densely dashed] (axis cs:-1,0.5) -- node[above left]{\small $\thnsup m @ f(m,s_1)$} (axis cs:12,0.5);
    \draw[darkgray,loosely dashed] (axis cs:-1,0.45) -- node[above left]{\small $\left(\thnsup m @ f(m,s_1)\right) - \varepsilon$} (axis cs:12,0.45);

    \draw[red,densely dashed] (axis cs:-1,1/3) -- node[above left]{\small $\thnsup m @ f(m,s_2)$} (axis cs:12,1/3);
    \draw[darkgray,loosely dashed] (axis cs:-1,1/3-0.05) -- node[above left]{\small $\left(\thnsup m @ f(m,s_2)\right) - \varepsilon$} (axis cs:12,1/3-0.05);

    \draw[brown,densely dashed] (axis cs:-1,0.15) -- node[above right]{\small $\thnsup m @ f(m,s_3)$} (axis cs:12,0.15);
    \draw[darkgray,loosely dashed] (axis cs:-1,0.1) -- node[below right]{\small $\left(\thnsup m @ f(m,s_3)\right) - \varepsilon$} (axis cs:12,0.1);

    \draw[black,dashdotted] (axis cs:8,-1) -- node[above left]{\small $N$} (axis cs:8,0.6);
  \end{axis}
\end{tikzpicture}
    \end{center}
    \caption{Illustration of limits of an increasing chain for various states.{}}
    \label{fig:finitestateasc}
\end{figure}
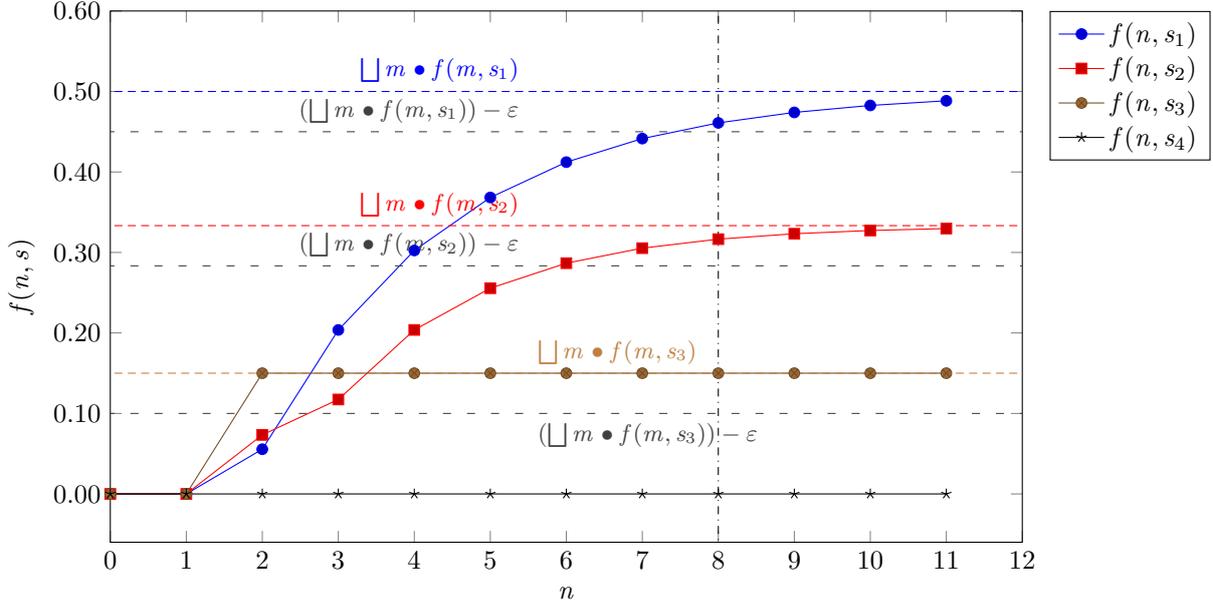

This is explained and illustrated in Fig.~\ref{fig:finitestateasc} where $f$ is a monotonic function, such as $\left(\lambda n @ \iter\left(n, b, P, \ufzero\right)\right)$, whose domain is a complete lattice, and so its limit exists and is the supremum ($\thnsup n@ f(n,s_i)$) of the increasing chain of this function for a particular state $s_i$. In this diagram, we show there are four states (four combinations of the product states $(s,s')$, indeed) in the observation space, denoted as $s_1$,$s_2$,$s_3$, and $s_4$. We draw the function $f$ of the four states for $n$ up to 11, as shown in the diagram as $f(n,s_1)$, $f(n, s_2)$, $f(n, s_3)$, and $f(n, s_4)$. The function for each state has a corresponding supremum, such as $\thnsup n @ f(n,s_1)$ for $s_1$, and a densely dashed line denotes the supremum. We also consider a real number $\varepsilon > 0$, and so {$\varepsilon$ regions (that is, areas between two parallel lines whose distance is $\varepsilon$)} are formed between the densely dashed lines (the supremum) and the loosely dashed line ($\left(\thnsup n @ f(n,s_i)\right) - \varepsilon$). The theorem above shows there is always a $N$ for all $m \geq N$ such that the closeness of $f(m,s_i)$ to its supremum ($\thnsup n@ f(n,s_i)$) is less than $\varepsilon$ for any $s_i$. Because the limit of $f(n,s_i)$ is $\left(\thnsup n @ f(n,s_i)\right)$, there always exists a $N_i$ to satisfy this closeness. For constant zero functions, such as $f(n,s_4)$, $N_i$ is just 0. According to the assumption of finiteness, there are finite states to have their $N_i$ larger than 0, the states $\{s_1, s_2, s_3\}$, in this example, whose $N_i$ are 8, 6, and 2 respectively. We can choose $N$ as the maximum number from this $N_i$ set, and it is 8 (illustrated as a dotted dash line at $x=8$) here. Now for all $n \geq N$ and any state $s$, the function $f(n,s)$ is close to its supremum within the given $\varepsilon$. This theorem is necessary to prove Theorem~\ref{thm:lfun_limit_as_supremum} below and eventually the continuity theorem~\ref{thm:continuity_lfun_bot}.

A descending chain $f$ satisfies the similar theorem below.
\begin{thm}
    \label{thm:decseq_limit_is_glb_all}
    We fix $f:\nat \fun [S_1,S_2]\prfun$, then 
    \isalink{https://github.com/RandallYe/probabilistic_programming_utp/blob/6a4419b8674b84988065a58696f15093d176594c/probability/probabilistic_relations/utp_prob_rel_lattice_laws.thy\#L4081}
    \begin{align*}
        & \decseq(f) \land \finstatesdes(f) \implies \forall \varepsilon:\real > 0 \bullet \exists N:\nat \bullet \forall n \geq N \bullet \forall s \bullet f(n,s) - \left(\thnsup m @ f(m, s)\right) < \varepsilon
    \end{align*}
\end{thm}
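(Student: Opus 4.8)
The plan is to prove Theorem~\ref{thm:decseq_limit_is_glb_all} as the mirror image of Theorem~\ref{thm:incseq_limit_is_lub_all}: argue state-by-state that the tail of $f$ gets within $\varepsilon$ of the pointwise infimum $\thninf m @ f(m,s)$ — which, for a descending chain, is the limit of $n \mapsto f(n,s)$ — and then merge the finitely many states on which this infimum is strictly below $f(0,s)$ by taking a maximum. Concretely, I would fix $\varepsilon : \real > 0$ and split the state space into the ``nontrivial set'' $B \defs \{s \mid \thninf n @ f(n,s) < f(0,s)\}$, finite by the hypothesis $\finstatesdes(f)$, and its complement. For $s \notin B$, the descending-chain property together with $\thninf$ being a lower bound (Law~\ref{law:Inf_lower}) gives $f(0,s) \geq f(n,s) \geq \thninf n' @ f(n',s) = f(0,s)$, so $f(n,s) = \thninf n' @ f(n',s)$ for every $n$ and the difference is $\uzero < \varepsilon$, with no constraint on $n$ needed.

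For each $s \in B$, I would observe that $n \mapsto f(n,s)$ is a descending chain in the complete, totally ordered lattice $\ureal$ — this follows from $\decseq(f)$ and the pointwise definition of $\leq$ on $\ureal$-valued functions (Definition~\ref{def:urf_pointwise}) — so Theorem~\ref{thm:limit_as_sup_inf} (limit as infimum) yields $f(\cdot,s) \tendsto \thninf n @ f(n,s)$; unwinding Definition~\ref{def:tendsto} then produces an $N_s : \nat$ with $f(n,s) - \thninf n' @ f(n',s) < \varepsilon$ for all $n \geq N_s$. Finally I would set $N \defs \max\{N_s \mid s \in B\}$, and $N \defs 0$ when $B = \emptyset$; this is well-defined precisely because $B$ is finite. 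For $n \geq N$ and any $s$, either $s \in B$, in which case $n \geq N \geq N_s$ and the bound holds by the previous step, or $s \notin B$ and the difference is $\uzero$. This is the informal argument accompanying Fig.~\ref{fig:finitestateasc} for the ascending case, read upside down.

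The step I expect to be the main obstacle is bookkeeping around the two arithmetics rather than anything conceptual: the limit and absolute-value reasoning behind Theorem~\ref{thm:limit_as_sup_inf} and Definition~\ref{def:tendsto} lives over $\real$ (a Banach space), whereas $f$ is $\ureal$-valued with \emph{bounded} subtraction, so the application must be mediated by the conversion $\urealreal$, checking that no truncation occurs — which it does not, since $\thninf n @ f(n,s) \leq f(n,s) \leq 1$. A cleaner alternative that sidesteps much of this is to reduce outright to the ascending case: put $h(n) \defs f(0) - f(n)$ pointwise, note that $h$ is an ascending chain with $\finstatesasc(h)$ equivalent to $\finstatesdes(f)$ and with $\thnsup m @ h(m,s) = f(0,s) - \thninf m @ f(m,s)$ (again no truncation, since all values lie in $[0, f(0,s)]$), apply Theorem~\ref{thm:incseq_limit_is_lub_all} to $h$, and unfold the resulting inequality $\thnsup m @ h(m,s) - h(n,s) < \varepsilon$ into $f(n,s) - \thninf m @ f(m,s) < \varepsilon$.
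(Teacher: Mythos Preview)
Your proposal is correct and follows the same route the paper indicates: the paper gives no written proof for this theorem, only remarks that it is the descending-chain analogue of Theorem~\ref{thm:incseq_limit_is_lub_all} (whose informal argument around Fig.~\ref{fig:finitestateasc} you have dualised faithfully) and defers to the Isabelle mechanisation. One remark: the printed statement has $\thnsup m @ f(m,s)$ where $\thninf m @ f(m,s)$ is clearly intended --- with $\thnsup$ the claim is vacuous for a descending chain since bounded subtraction gives $\uzero$ --- and you have (rightly) proved the intended version.
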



From Theorems~\ref{thm:iter_bot_ascending} and~\ref{thm:incseq_limit_is_lub_all}, we prove the following theorem stating that for any state $s$ the limit of the application of $\lfun$ to $\iter$ is the application of $\lfun$ to the supremum of iterations.
\begin{thm}[Limit as supremum]
    \label{thm:lfun_limit_as_supremum}
    \isalink{https://github.com/RandallYe/probabilistic_programming_utp/blob/6a4419b8674b84988065a58696f15093d176594c/probability/probabilistic_relations/utp_prob_rel_lattice_laws.thy\#L4599}
    \begin{align*}
        \begin{array}[]{l}
        \left(
            \isfinaldist\left(\prrvfunsym{P}\right) \land 
            \finstatesasc\left(\lambda n @ \iter\left(n, b, P, \ufzero\right) \right) 
        \right) \implies \\
        \forall s @ \left(\lambda n:\nat @ \lfunbp\left({\iter\left(n, b, P, \ufzero\right)}\right)(s)\right) \tendsto \left(\lfunbp\left({\thnsup n @ \iter\left(n, b, P, \ufzero\right)}\right)(s)\right)
        \end{array}
    \end{align*}
\end{thm}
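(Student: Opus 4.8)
The plan is to fix an arbitrary state $s$ of the homogeneous relation and prove the convergence directly from the $\varepsilon$--$N$ definition of a limit (Definition~\ref{def:tendsto}). Write $X_n \defs \iter(n, b, P, \ufzero)$; by Theorem~\ref{thm:iter_bot_ascending} (using $\isfinaldist(\prrvfunsym{P})$) the map $\lambda n @ X_n$ is an ascending chain, and write $X_\infty \defs \thnsup n @ X_n$, the supremum in the complete lattice $S \fun \ureal$ (Theorem~\ref{thm:ureal_func_complete}), which is taken pointwise so that $X_\infty(t) = \thnsup n @ X_n(t)$. First I would rewrite $\lfunbp$ via the derivation~\ref{thm:lfun_altdef}, so that $\lfunbp(Y)(s)$ is the value at $s$ of $\rvprfunsym{\ibracket{b}*\prrvfunsym{(\pseq{P}{Y})} + \ibracket{\lnot b}*\ibracket{\II}}$, and case-split on whether $b$ holds at $s$. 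If $b$ does not hold at $s$, the value is independent of $Y$, so the sequence $\lambda n @ \lfunbp(X_n)(s)$ is constant and converges to $\lfunbp(X_\infty)(s)$ trivially. In the remaining case, write $s = (s_0, s')$; since $P$ is a distribution and every $X_n$ (and $X_\infty$) is $\ureal$-valued, hence probabilistic, $\fseq{\prrvfunsym{P}}{\prrvfunsym{Y}}$ takes values in $[0,1]$, so by Theorem~\ref{thm:prrvfun_inverse} the inner conversions cancel and the outer bounding map $\realureal$ restricts to the identity, giving $\lfunbp(Y)(s) = \infsum v_0 @ \prrvfunsym{P}(s_0, v_0) * \prrvfunsym{Y}(v_0, s')$ (identifying $\ureal$ values with the corresponding reals) for every $\ureal$-valued $Y$, the $\ibracket{\lnot b}$ summand vanishing because $b$ holds at $s$.

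The core estimate proceeds as follows. Fix $\varepsilon : \real > 0$. The chain $\lambda n @ X_n$ is ascending and satisfies $\finstatesasc$ by hypothesis, so Theorem~\ref{thm:incseq_limit_is_lub_all} supplies an $N : \nat$ with $X_\infty(t) - X_n(t) < \varepsilon$ for all $n \geq N$ and all states $t$; transporting this through the monotone inclusion $\urealreal$ (and using $X_n \leq X_\infty$, so the bounded subtraction in $\ureal$ is ordinary subtraction) yields $0 \leq \prrvfunsym{X_\infty}(t) - \prrvfunsym{X_n}(t) < \varepsilon$ for all such $n$ and $t$. Since $\prrvfunsym{X_n} \leq \prrvfunsym{X_\infty}$ pointwise we have $\lfunbp(X_n)(s) \leq \lfunbp(X_\infty)(s)$, and for $n \geq N$,
\begin{align*}
\bigl|\lfunbp(X_n)(s) - \lfunbp(X_\infty)(s)\bigr|
  & = \infsum v_0 @ \prrvfunsym{P}(s_0, v_0) * \bigl(\prrvfunsym{X_\infty}(v_0, s') - \prrvfunsym{X_n}(v_0, s')\bigr) \\
  & \leq \varepsilon * \left(\infsum v_0 @ \prrvfunsym{P}(s_0, v_0)\right) = \varepsilon ,
\end{align*}
where splitting the infinite sum uses the subtraction law of Theorem~\ref{thm:summation} (each summand is dominated by the summable function $\prrvfunsym{P}(s_0, \cdot)$, whose summability comes from $P$ being a final distribution, Theorem~\ref{thm:final_distribtion}), the inequality uses termwise monotonicity of the sum together with the uniform bound and the constant-multiplication law of Theorem~\ref{thm:summation}, and the last equality uses $\infsum v_0 @ \prrvfunsym{P}(s_0, v_0) = 1$ (Theorem~\ref{thm:final_distribtion}, Law~\ref{thm:final_distribtion_1}). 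As $\varepsilon > 0$ was arbitrary, Definition~\ref{def:tendsto} gives the claim. (Equivalently, since $\lfunbp(X_n) = X_{n+1}$, monotone convergence in the totally ordered lattice $\ureal$, Theorem~\ref{thm:limit_as_sup_inf}, already identifies the limit of the left-hand sequence with $X_\infty(s)$, so the theorem reduces to the pointwise fixed-point identity $\lfunbp(X_\infty)(s) = X_\infty(s)$, which the estimate above establishes.)

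The main obstacle is precisely this interchange of the infinite summation over $v_0$ with the limit in $n$ --- a dominated/monotone convergence argument carried out by hand. It works only because the convergence of the iteration chain $X_n$ is \emph{uniform in the state}, which is the content of Theorem~\ref{thm:incseq_limit_is_lub_all} and the point at which the finite-support hypothesis $\finstatesasc$ is consumed, and because $\prrvfunsym{P}(s_0, \cdot)$ sums to exactly $1$, which turns the pointwise $\varepsilon$-bound on the summands into an $\varepsilon$-bound on the whole sum. A secondary, more clerical difficulty is keeping the $\ureal$/$\real$ bookkeeping honest: that $\urealreal$ carries suprema of chains to suprema, that the bounding map $\realureal$ and the bounded $+$, $-$ in the definitions of $\pseq{}{}$ and $\lfunbp$ collapse to ordinary arithmetic here because every quantity stays within $[0,1]$, and that the $\ureal$-valued inequality produced by Theorem~\ref{thm:incseq_limit_is_lub_all} transfers to the real inequality needed to apply the summation laws. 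This theorem then feeds the continuity result~\ref{thm:continuity_lfun_bot}.
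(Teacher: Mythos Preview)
Your proposal is correct and follows essentially the same approach as the paper: the paper states that this theorem is proved ``from Theorems~\ref{thm:iter_bot_ascending} and~\ref{thm:incseq_limit_is_lub_all}'', and the appendix (\ref{appendix:necessity_finstatesasc}) spells out exactly your dominated-convergence estimate---reduce to the sequential-composition part, obtain the uniform-in-state bound $X_\infty(t)-X_n(t)<\varepsilon$ from Theorem~\ref{thm:incseq_limit_is_lub_all}, and then bound the difference of sums by $\varepsilon \cdot \infsum v_0 @ \prrvfunsym{P}(s_0,v_0)=\varepsilon$ using that $P$ is a final distribution. The only cosmetic point is that your final inequality gives $\leq \varepsilon$ rather than $<\varepsilon$, which is handled in the usual way by starting from $\varepsilon/2$.
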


From Theorems~\ref{thm:iter_top_descending} and~\ref{thm:decseq_limit_is_glb_all}, we prove the following similar theorem stating that for any state $s$ the limit of the application of $\lfun$ to $\iter$ is the application of $\lfun$ to the infimum of iterations.
\begin{thm}[Limit as infimum]
    \label{thm:lfun_limit_as_infimum}
    \isalink{https://github.com/RandallYe/probabilistic_programming_utp/blob/6a4419b8674b84988065a58696f15093d176594c/probability/probabilistic_relations/utp_prob_rel_lattice_laws.thy\#L5308}
    \begin{align*}
        \begin{array}[]{l}
        \left(
            \isfinaldist\left(\prrvfunsym{P}\right) \land 
            \finstatesdes\left(\lambda n @ \iter\left(n, b, P, \ufone\right) \right) 
        \right) \implies \\ 
        \forall s @ \left(\lambda n:\nat @ \lfunbp\left({\iter\left(n, b, P, \ufone\right)}\right)(s)\right) \tendsto \left(\lfunbp\left({\thninf n @ \iter\left(n, b, P, \ufone\right)}\right)(s)\right)
        \end{array}
    \end{align*}
\end{thm}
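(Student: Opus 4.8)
The plan is to mirror the proof of Theorem~\ref{thm:lfun_limit_as_supremum}, systematically replacing ascending chains, suprema and $\ufzero$ by descending chains, infima and $\ufone$, and invoking Theorems~\ref{thm:iter_top_descending} and~\ref{thm:decseq_limit_is_glb_all} in place of Theorems~\ref{thm:iter_bot_ascending} and~\ref{thm:incseq_limit_is_lub_all}. Write $c_n$ for $\iter(n, b, P, \ufone)$ and $c_\infty$ for $\thninf n @ c_n$; by Theorem~\ref{thm:ureal_func_complete} this infimum is computed pointwise, so $\prrvfunsym{c_\infty}(v_0, \sigma') = \thninf n @ \prrvfunsym{c_n}(v_0, \sigma')$ for every $v_0, \sigma'$. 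From $\isfinaldist(\prrvfunsym{P})$, Theorem~\ref{thm:iter_top_descending} gives $\decseq(\lambda n @ c_n)$, and the monotonicity of $\lfunbp$ (which also follows from $\isfinaldist(\prrvfunsym{P})$) gives $c_\infty \leq c_n$, hence $\lfunbp(c_\infty) \leq \lfunbp(c_n)$, for every $n$. Thus for each fixed $s$ the sequence $n \mapsto \lfunbp(c_n)(s)$ is non-increasing and bounded below by $\lfunbp(c_\infty)(s)$, so in Definition~\ref{def:tendsto} the absolute value may be dropped and it suffices to bound the nonnegative quantity $\lfunbp(c_n)(s) - \lfunbp(c_\infty)(s)$.

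Next I would fix a state $s = (\sigma, \sigma')$ and split on whether $b$ holds at $\sigma$. Using the expanded form \ref{thm:lfun_altdef} of $\lfunbp$ together with Theorem~\ref{thm:ib}, when $b$ is false at $\sigma$ the value $\lfunbp(X)(s) = \ibracket{\II}(s)$ does not depend on $X$ at all, so the sequence is constant and the convergence is immediate. When $b$ holds at $\sigma$, the conversions $\rvprfunsym{}/\prrvfunsym{}$ in the body round-trip away (the summand being probabilistic, by Theorems~\ref{thm:prrvfun_inverse} and~\ref{thm:prrvfun_inverse_ibracket}), and $\prrvfunsym{\lfunbp(X)}(s)$ reduces to $\prrvfunsym{(\pseq{P}{X})}(\sigma, \sigma') = \infsum v_0 @ \prrvfunsym{P}(\sigma, v_0) * \prrvfunsym{X}(v_0, \sigma')$. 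Hence the remaining goal in this branch is that $\infsum v_0 @ \prrvfunsym{P}(\sigma, v_0) * \left(\prrvfunsym{c_n}(v_0, \sigma') - \prrvfunsym{c_\infty}(v_0, \sigma')\right)$ tends to $0$ as $n \to \infty$.

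This is where the finiteness hypothesis $\finstatesdes$ enters, through Theorem~\ref{thm:decseq_limit_is_glb_all} (uniform convergence of a descending chain to its infimum): given $\varepsilon : \real > 0$, apply it with $\varepsilon/2$ to obtain an $N$ such that $\prrvfunsym{c_n}(v_0, \sigma') - \prrvfunsym{c_\infty}(v_0, \sigma') < \varepsilon/2$ for all $n \geq N$ and \emph{all} states $v_0$ (the difference being nonnegative by the descending property and the pointwise infimum noted above). Since $P$ is a distribution, $\prrvfunsym{P}(\sigma, \cdot)$ is nonnegative, summable, and sums to $1$ (Theorem~\ref{thm:final_distribtion}, Law~\ref{thm:final_distribtion_1}); therefore the summand $\prrvfunsym{P}(\sigma, v_0) * \left(\prrvfunsym{c_n}(v_0, \sigma') - \prrvfunsym{c_\infty}(v_0, \sigma')\right)$ is dominated by $\prrvfunsym{P}(\sigma, \cdot)$, hence summable, and for every $n \geq N$ the sum is bounded by $(\varepsilon/2) * \infsum v_0 @ \prrvfunsym{P}(\sigma, v_0) = \varepsilon/2 < \varepsilon$, using the constant-factor laws of Theorem~\ref{thm:summation}. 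This is exactly the closeness required by Definition~\ref{def:tendsto}.

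The main obstacle is not any single deep estimate — the uniform convergence of the descending iteration over all states has already been packaged into Theorem~\ref{thm:decseq_limit_is_glb_all}, the dual of Theorem~\ref{thm:incseq_limit_is_lub_all} — but rather the careful interchange of the limit with the infinite summation. The delicate points are: (i) identifying $\prrvfunsym{c_\infty}$ evaluated at a state with the pointwise infimum of the $\prrvfunsym{c_n}$ values, so that the uniform bound of Theorem~\ref{thm:decseq_limit_is_glb_all} applies summand-wise; (ii) summability of the dominated family and the validity of pulling the $\varepsilon/2$ bound out of the sum — the step that genuinely uses $P$ being a distribution rather than merely a subdistribution, since it is the dominating sum equalling $1$ (not just being finite) that closes the bound; and (iii) discharging the $\rvprfunsym{}/\prrvfunsym{}$ conversions so that the $b$-holds branch of $\lfunbp(X)$ is literally $\pseq{P}{X}$. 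Of these, (ii) is the one most likely to need attention in the mechanisation.
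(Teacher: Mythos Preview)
Your proposal is correct and follows essentially the same approach as the paper: the paper states that this theorem is derived from Theorems~\ref{thm:iter_top_descending} and~\ref{thm:decseq_limit_is_glb_all}, exactly as you propose, and the detailed argument you give (case-splitting on $b$, reducing the $b$-true branch to the summation, and using the uniform bound from Theorem~\ref{thm:decseq_limit_is_glb_all} together with the fact that $\prrvfunsym{P}(\sigma,\cdot)$ sums to $1$) matches the structure sketched in the appendix discussion of the dual supremum case. One minor remark on your commentary in~(ii): summing to at most $1$ would already suffice, so a subdistribution would work for this step as well.
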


Continuity of $\lfun$ for iterations from bottom and top is derived from Theorems~\ref{thm:lfun_limit_as_supremum} and~\ref{thm:lfun_limit_as_infimum} and presented below.
\begin{thm}[Continuity - iteration from bottom]
    \label{thm:continuity_lfun_bot}
    \isalink{https://github.com/RandallYe/probabilistic_programming_utp/blob/6a4419b8674b84988065a58696f15093d176594c/probability/probabilistic_relations/utp_prob_rel_lattice_laws.thy\#L4848}
    \begin{align*}
        \begin{array}[]{l}
        \left(
            \isfinaldist\left(\prrvfunsym{P}\right) \land 
            \finstatesasc\left(\lambda n @ \iter\left(n, b, P, \ufzero\right) \right) 
        \right) 
        \implies \lfunbp{\left(\thnsup n @ \iter\left(n, b, P, \ufzero\right)\right)} = \left(\thnsup n @ \iter\left(n, b, P, \ufzero\right)\right)
        \end{array}
    \end{align*}
\end{thm}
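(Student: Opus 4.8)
The plan is to show that $g \defs \thnsup n @ \iter(n, b, P, \ufzero)$ is a fixed point of $\lfunbp$ by computing the limit of the tail of the iteration sequence in two different ways and then appealing to uniqueness of limits. Write $c_n \defs \iter(n, b, P, \ufzero)$ for brevity. From the recursive Definition~\ref{def:iter} we have $c_{n+1} = \lfunbp(c_n)$, so the sequence $n \mapsto \lfunbp(c_n)$ is literally the shifted chain $n \mapsto c_{n+1}$. Under the hypothesis $\isfinaldist(\prrvfunsym{P})$, Theorem~\ref{thm:iter_bot_ascending} gives $\incseq(\lambda n @ c_n)$, and hence the shifted sequence $\lambda n @ c_{n+1}$ is again an ascending chain in the complete lattice $(S \fun \ureal, \leq)$ of Theorem~\ref{thm:ureal_func_complete}, whose suprema are computed pointwise.

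Next I would fix an arbitrary state $s$ and establish the two limit computations. On the one hand, $n \mapsto c_{n+1}(s)$ is an ascending chain in the complete lattice $(\ureal, \leq)$, which is moreover totally ordered; by Theorem~\ref{thm:limit_as_sup_inf} (limit as supremum) this sequence converges to $\thnsup n @ c_{n+1}(s)$, and since discarding the first element of a chain does not change its supremum and suprema in $S \fun \ureal$ are pointwise, this value equals $g(s)$. On the other hand, the hypotheses $\isfinaldist(\prrvfunsym{P})$ and $\finstatesasc(\lambda n @ c_n)$ are exactly the premises of Theorem~\ref{thm:lfun_limit_as_supremum}, which states that $\lambda n @ \lfunbp(c_n)(s) = \lambda n @ c_{n+1}(s)$ converges to $\lfunbp(g)(s)$.

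By uniqueness of limits of sequences (Definition~\ref{def:tendsto}) the two limits coincide, so $\lfunbp(g)(s) = g(s)$; since $s$ was arbitrary, function extensionality yields $\lfunbp\left(\thnsup n @ \iter(n, b, P, \ufzero)\right) = \thnsup n @ \iter(n, b, P, \ufzero)$, which is the claim.

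The substantive work has already been done in Theorem~\ref{thm:lfun_limit_as_supremum}, so this argument is mostly plumbing. Within this proof the points needing care are (i) that shifting the index leaves the supremum unchanged and that the supremum in the function lattice agrees with the pointwise supremum used implicitly when moving between $g$ and $g(s)$; and (ii) that Theorem~\ref{thm:limit_as_sup_inf} may legitimately be invoked, which relies on $(\ureal, \leq)$ being not merely a complete lattice but a totally ordered one. I expect (ii), together with the bookkeeping required to reconcile the ``limit as supremum'' statements with Isabelle/HOL's filter-based formulation of convergence, to be the main — though fairly modest — obstacle.
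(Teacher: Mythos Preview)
Your proposal is correct and follows essentially the same approach the paper indicates: the paper states that this theorem ``is derived from Theorem~\ref{thm:lfun_limit_as_supremum}'', and your argument does precisely that, combining it with Theorem~\ref{thm:limit_as_sup_inf} (applied pointwise in the totally ordered lattice $(\ureal,\leq)$), the index-shift observation $\lfunbp(c_n)=c_{n+1}$, and uniqueness of limits. The caveats you flag about pointwise suprema and the total order on $\ureal$ are the right ones and are handled by the paper's setup (Theorem~\ref{thm:ureal_func_complete} and the remark that $(\ureal,\leq)$ is totally ordered).
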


\begin{thm}[Continuity - iteration from top]
    \label{thm:continuity_lfun_top}
    \isalink{https://github.com/RandallYe/probabilistic_programming_utp/blob/6a4419b8674b84988065a58696f15093d176594c/probability/probabilistic_relations/utp_prob_rel_lattice_laws.thy\#L5537}
    \begin{align*}
        \begin{array}[]{l}
        \left(
            \isfinaldist\left(\prrvfunsym{P}\right) \land 
            \finstatesdes\left(\lambda n @ \iter\left(n, b, P, \ufone\right) \right) 
        \right) 
        \implies \lfunbp{\left(\thninf n @ \iter\left(n, b, P, \ufone\right)\right)} = \left(\thninf n @ \iter\left(n, b, P, \ufone\right)\right)
        \end{array}
    \end{align*}
\end{thm}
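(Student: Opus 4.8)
The plan is to derive this continuity statement from the already-established limit-as-infimum result, Theorem~\ref{thm:lfun_limit_as_infimum}, together with uniqueness of limits, mirroring exactly how Theorem~\ref{thm:continuity_lfun_bot} follows from Theorem~\ref{thm:lfun_limit_as_supremum}. The crucial algebraic fact is that, by the recursive definition of $\iter$, applying $\lfunbp$ to the $n$-th iterate yields the $(n+1)$-th one: $\lfunbp\left(\iter(n, b, P, \ufone)\right) = \iter(n+1, b, P, \ufone)$. Hence the sequence $\lambda n @ \lfunbp\left(\iter(n, b, P, \ufone)\right)$ is simply the tail $\lambda n @ \iter(n+1, b, P, \ufone)$ of the chain $\lambda n @ \iter(n, b, P, \ufone)$, which under the hypothesis $\isfinaldist\left(\prrvfunsym{P}\right)$ is descending by Theorem~\ref{thm:iter_top_descending}.

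First I would fix an arbitrary point $s$ of the observation space. Since infima in $S \fun \ureal$ are computed pointwise (Theorem~\ref{thm:ureal_func_complete}), the sequence $\lambda n @ \iter(n, b, P, \ufone)(s)$ is a descending chain in the totally ordered complete lattice $\left(\ureal, \leq\right)$, so by Theorem~\ref{thm:limit_as_sup_inf} it converges to $\left(\thninf n @ \iter(n, b, P, \ufone)\right)(s)$; and since a tail of a convergent sequence has the same limit, the tail $\lambda n @ \lfunbp\left(\iter(n, b, P, \ufone)\right)(s)$ also converges to $\left(\thninf n @ \iter(n, b, P, \ufone)\right)(s)$. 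On the other hand, Theorem~\ref{thm:lfun_limit_as_infimum} — applicable precisely because the present hypotheses are $\isfinaldist\left(\prrvfunsym{P}\right)$ and $\finstatesdes\left(\lambda n @ \iter(n, b, P, \ufone)\right)$ — asserts that this same sequence converges to $\lfunbp\left(\thninf n @ \iter(n, b, P, \ufone)\right)(s)$. As limits in $\ureal$ are unique, the two limits coincide: $\lfunbp\left(\thninf n @ \iter(n, b, P, \ufone)\right)(s) = \left(\thninf n @ \iter(n, b, P, \ufone)\right)(s)$. Because $s$ was arbitrary, the two functions are equal, which is the claim (and, incidentally, exhibits $\thninf n @ \iter(n, b, P, \ufone)$ as a fixed point of $\lfunbp$, as needed for the later Kleene-style reasoning).

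The routine parts above — the index shift, the tail-versus-chain limit identity, and uniqueness of limits — are straightforward; the real work has already been discharged in Theorem~\ref{thm:lfun_limit_as_infimum} and its supporting Theorem~\ref{thm:decseq_limit_is_glb_all}, and that is where the main obstacle lies. There one must push the limit through the potentially infinite summation hidden inside the sequential composition $\pseq{P}{X}$, which requires a single index $N$ that works \emph{uniformly} across all points of the observation space rather than merely pointwise. This uniformity is exactly what the $\finstatesdes$ finiteness hypothesis delivers — only finitely many points drift below the top value $\ufone$, so their individual convergence rates can be amalgamated into one $N$ — and absorbing the continuous-domain subtleties there, rather than in the bookkeeping of the present statement, is the crux of the whole development.
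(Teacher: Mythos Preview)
Your proposal is correct and follows the same route the paper indicates: the paper states that the two continuity theorems ``are derived from Theorems~\ref{thm:lfun_limit_as_supremum} and~\ref{thm:lfun_limit_as_infimum}'', and your argument---shifting the index via the recursive definition of $\iter$, invoking Theorem~\ref{thm:lfun_limit_as_infimum} for one limit, Theorem~\ref{thm:limit_as_sup_inf} applied pointwise in the totally ordered lattice $\ureal$ for the other, and concluding by uniqueness of limits---is exactly that derivation spelled out. Your closing remarks on where the real analytic work lives (Theorems~\ref{thm:lfun_limit_as_infimum} and~\ref{thm:decseq_limit_is_glb_all}, and the role of $\finstatesdes$) are also accurate.
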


{We show that in \ref{appendix:necessity_finstatesasc} to establish the continuity above, $\finstatesasc\left(\lambda n @ \iter\left(n, b, P, \ufzero\right) \right)$, indeed, is required. The standard non-probabilistic continuity theorem~\cite[Section~5.3]{Nielson2007} does not have the similar requirement because the semantics of sequential composition in the language is \emph{the functional composition over one (deterministic) intermediate state}. The semantics of sequential composition in our language, however, is \emph{the infinite summation over all possible intermediate states.} We discuss this required premise in detail in \ref{appendix:necessity_finstatesasc}.}
The Kleene fixed-point theorem~\ref{thm:kleene_fixed_point_theorem} states the least (or greatest) fixed point of a continuous function $\lfun$ is the supremum (or infimum) of the ascending (or descending) chain of the function. This is just the semantics of while loops. 
\begin{thm}[Least fixed point by construction]
    \label{thm:rec_least_fixed_point}
    \isalink{https://github.com/RandallYe/probabilistic_programming_utp/blob/6a4419b8674b84988065a58696f15093d176594c/probability/probabilistic_relations/utp_prob_rel_lattice_laws.thy\#L5883}
    \begin{align*}
        \begin{array}[]{l}
        \left(
            \isfinaldist\left(\prrvfunsym{P}\right) \land 
            \finstatesasc\left(\lambda n @ \iter\left(n, b, P, \ufzero\right) \right) 
        \right)  
        \implies \pwhile{b}{P} = \left(\thnsup n @ \iter\left(n, b, P, \ufzero\right)\right)
        \end{array}
    \end{align*}
\end{thm}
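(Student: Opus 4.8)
The plan is to recognise this statement as the Kleene fixed-point characterisation (Theorem~\ref{thm:kleene_fixed_point_theorem}) specialised to the loop function $\lfunbp$, but to prove it directly rather than by quoting that theorem: what is available is not full Scott-continuity of $\lfunbp$ on the lattice $\left(S \fun \ureal, \leq\right)$ of Theorem~\ref{thm:ureal_func_complete}, only the chain-specific consequence packaged as Theorem~\ref{thm:continuity_lfun_bot}. First I would unfold $\pwhile{b}{P}$ to $\thlfp{}~\lfunbp$ using the definitions of $\pwhile{}{}$ and of $\lfp$ (a $\beta$/$\eta$ step collapses $\lambda X @ \lfunbp(X)$ back to $\lfunbp$). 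Next, by Definition~\ref{def:iter} together with $\ufzero = \bot$ (Theorem~\ref{thm:top_bot}), identify $\iter(n, b, P, \ufzero) = {\lfunbp}^{n}(\bot)$ for every $n$, so that writing $d \defs \thnsup n @ \iter(n, b, P, \ufzero)$ we have $d = \thnsup n @ {\lfunbp}^{n}(\bot)$; by Theorem~\ref{thm:iter_bot_ascending} this is the supremum of an ascending chain. Recall from Theorem~\ref{thm:tarski_fixed_point} that $\thlfp{}~\lfunbp = \thninf \left\{u \mid \lfunbp(u) \leq u\right\}$, the least pre-fixed point, so the goal becomes $d = \thninf \left\{u \mid \lfunbp(u) \leq u\right\}$.

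For the inequality $\thlfp{}~\lfunbp \leq d$: Theorem~\ref{thm:continuity_lfun_bot}, applicable precisely because $\isfinaldist(\prrvfunsym{P})$ and $\finstatesasc\left(\lambda n @ \iter(n, b, P, \ufzero)\right)$ both hold, gives $\lfunbp(d) = d$; in particular $d$ is a pre-fixed point, so $\thninf \left\{u \mid \lfunbp(u) \leq u\right\} \leq d$ by Law~\ref{law:Inf_lower}. For the reverse inequality $d \leq \thlfp{}~\lfunbp$: let $u$ be any pre-fixed point, i.e.\ $\lfunbp(u) \leq u$. Starting from $\bot \leq u$ and using monotonicity of $\lfunbp$ (which holds because $\isfinaldist(\prrvfunsym{P})$; recall $\isfinaldist(\prrvfunsym{P}) \implies \mono(\lfunbp)$), a straightforward induction gives ${\lfunbp}^{n}(\bot) \leq u$ for all $n$, the step being ${\lfunbp}^{n+1}(\bot) = \lfunbp({\lfunbp}^{n}(\bot)) \leq \lfunbp(u) \leq u$. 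Hence $d = \thnsup n @ {\lfunbp}^{n}(\bot) \leq u$ by Law~\ref{law:Sup_least}, and since this holds for every pre-fixed point $u$, Law~\ref{law:Inf_greatest} yields $d \leq \thninf \left\{u \mid \lfunbp(u) \leq u\right\} = \thlfp{}~\lfunbp$. Antisymmetry (Law~\ref{law:antisym}) then gives $d = \thlfp{}~\lfunbp = \pwhile{b}{P}$.

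I expect the present theorem itself to be essentially routine once the earlier results are in place — it is the last-mile assembly of the specialised Kleene argument — so the only points needing care are (i) using the chain-specific continuity Theorem~\ref{thm:continuity_lfun_bot} in place of full Scott-continuity, which is exactly why the ``least'' direction is argued directly rather than by citing Theorem~\ref{thm:kleene_fixed_point_theorem}, and (ii) checking that the ``$d$ below every pre-fixed point'' induction needs only monotonicity of $\lfunbp$, so that $\isfinaldist(\prrvfunsym{P})$ alone suffices there. The genuine difficulty lies upstream, in Theorem~\ref{thm:continuity_lfun_bot} and the limit-as-supremum results (Theorems~\ref{thm:incseq_limit_is_lub_all} and~\ref{thm:lfun_limit_as_supremum}): there the finiteness hypothesis $\finstatesasc\left(\lambda n @ \iter(n, b, P, \ufzero)\right)$ is unavoidable because sequential composition here is an infinite summation over all intermediate states rather than a composition over a single deterministic one, and showing that $\lfunbp$ nonetheless commutes with the chain's supremum is the real content — already discharged by the cited theorems.
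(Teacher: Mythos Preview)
Your proposal is correct and matches the paper's approach: the paper does not spell out a textual proof here but indicates that the result follows from the continuity statement (Theorem~\ref{thm:continuity_lfun_bot}) together with the Kleene argument, which is exactly the two-inequality assembly you give. Your observation that only the chain-specific continuity of Theorem~\ref{thm:continuity_lfun_bot} is available (rather than full Scott-continuity), so that Theorem~\ref{thm:kleene_fixed_point_theorem} cannot simply be quoted and the ``least'' direction must be re-run from monotonicity, is precisely the point and is handled the same way in the mechanisation.
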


\begin{thm}[Greatest fixed point by construction]
    \label{thm:rec_great_fixed_point}
    \isalink{https://github.com/RandallYe/probabilistic_programming_utp/blob/6a4419b8674b84988065a58696f15093d176594c/probability/probabilistic_relations/utp_prob_rel_lattice_laws.thy\#L5917}
    \begin{align*}
        \begin{array}[]{l}
        \left(
            \isfinaldist\left(\prrvfunsym{P}\right) \land
            \finstatesdes\left(\lambda n @ \iter\left(n, b, P, \ufone\right) \right) 
        \right)  
        \implies \pwhiletop{b}{P} =\left(\thninf n @ \iter\left(n, b, P, \ufone\right)\right)
        \end{array}
    \end{align*}
\end{thm}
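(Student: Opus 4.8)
The plan is to recognise the claim as the greatest-fixed-point half of Kleene's theorem specialised to the loop functional, and to re-derive that half from the restricted continuity result already in hand. Unfolding Definition~\ref{def:pwhile_top} via Definition~\ref{def:gfp} gives $\pwhiletop{b}{P} = \thgfp{}~\lfunbp$, where $\lfunbp$ is the shorthand of Definition~\ref{def:lfun}; and a routine induction on $n$ over the recursion in Definition~\ref{def:iter} shows $\iter(n,b,P,\ufone) = \lfunbp^{\,n}(\ufone)$ (with $\lfunbp^{\,0} = \mathrm{id}$). So the goal reduces to $\thgfp{}~\lfunbp = \thninf n @ \lfunbp^{\,n}(\ufone)$, i.e. to the greatest-fixed-point equation of Theorem~\ref{thm:kleene_fixed_point_theorem} with $F := \lfunbp$ and $\top = \ufone$. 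We cannot invoke Theorem~\ref{thm:kleene_fixed_point_theorem} directly, however, because so far only the \emph{restricted} continuity of $\lfunbp$ at the descending chain from $\ufone$ has been proved (Theorem~\ref{thm:continuity_lfun_top}), not full Scott-continuity; the plan is therefore to replay the greatest-fixed-point half of the proof of Theorem~\ref{thm:kleene_fixed_point_theorem} using only that restricted continuity together with monotonicity of $\lfunbp$.

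Concretely, set $d := \thninf n @ \iter(n, b, P, \ufone)$. The first step is that $d$ is a fixed point of $\lfunbp$: this is exactly Theorem~\ref{thm:continuity_lfun_top}, whose premises hold because $P$ is a distribution by hypothesis and $\finstatesdes(\lambda n @ \iter(n,b,P,\ufone))$ is the second hypothesis (with Theorem~\ref{thm:iter_top_descending} confirming that the sequence is a genuine descending chain). The second step is that $d$ lies above every post-fixed point: given $e$ with $e \leq \lfunbp(e)$, from $e \leq \ufone$ and monotonicity of $\lfunbp$ (valid since $P$ is a distribution) an easy induction yields $e \leq \lfunbp^{\,n}(e) \leq \lfunbp^{\,n}(\ufone) = \iter(n,b,P,\ufone)$ for all $n$, whence $e \leq d$ by the greatest-lower-bound property of $\thninf$ (Law~\ref{law:Inf_greatest}). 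Applying this to $e := \thgfp{}~\lfunbp$, which by Knaster--Tarski (Theorem~\ref{thm:tarski_fixed_point}) is a fixed point and hence a post-fixed point, gives $\thgfp{}~\lfunbp \leq d$; and since $d$ is a fixed point, it is a post-fixed point, so $d \leq \thgfp{}~\lfunbp$ by the Knaster--Tarski characterisation of $\thgfp{}~\lfunbp$ as the supremum of the post-fixed points (Theorem~\ref{thm:tarski_fixed_point}). Antisymmetry of $\leq$ then closes the argument: $\pwhiletop{b}{P} = \thgfp{}~\lfunbp = d = \thninf n @ \iter(n,b,P,\ufone)$.

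The substantive difficulties have all been dispatched upstream: the fixed-point step relies on Theorem~\ref{thm:continuity_lfun_top}, which rests on the limit-as-infimum results (Theorems~\ref{thm:lfun_limit_as_infimum} and~\ref{thm:decseq_limit_is_glb_all}) and, critically, on the $\finstatesdes$ hypothesis that tames the infinite summation hidden inside sequential composition. Relative to those, the present theorem is a short order-theoretic corollary, and the only real subtlety is the packaging remark above: Kleene's theorem as stated demands full continuity, so the proof must be written to consume only the restricted continuity at the top chain plus monotonicity of $\lfunbp$. In the mechanisation I would develop it as the exact dual of Theorem~\ref{thm:rec_least_fixed_point}, swapping $\thlfp{}$ for $\thgfp{}$, $\ufzero$ for $\ufone$, ascending for descending chains, and pre-fixed for post-fixed points.
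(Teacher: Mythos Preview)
Your proposal is correct and follows essentially the same route as the paper: show that $d = \thninf n @ \iter(n,b,P,\ufone)$ is a fixed point via Theorem~\ref{thm:continuity_lfun_top}, then show it dominates every (post-)fixed point by the standard induction-plus-monotonicity argument dual to the one given explicitly for the least-fixed-point half in the proof of Theorem~\ref{thm:kleene_fixed_point_theorem}. Your remark that Theorem~\ref{thm:continuity_lfun_top} gives only continuity at the particular descending chain from $\ufone$, not full Scott-continuity, is a genuinely useful clarification: the paper glosses this by saying Theorem~\ref{thm:rec_great_fixed_point} follows from ``the Kleene fixed-point theorem'', but as you correctly note, one must replay the greatest-fixed-point half of that proof using only the restricted continuity (which is all that proof actually needs) rather than invoking Theorem~\ref{thm:kleene_fixed_point_theorem} as a black box.
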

There are several benefits in having the semantics of probabilistic loops constructed by iterations, as shown in Theorems~\ref{thm:rec_least_fixed_point} and~\ref{thm:rec_great_fixed_point}. Essentially, the theorems give the semantics to probabilistic loops theoretically. In practice, they also enable us to compute the semantics by approximation or iterations, for example, in model checking.
Another benefit is facilitating the proof of the unique fixed point theorem.

\begin{thm}[Unique fixed point - finite final states] \label{thm:rec_unique}
    \isalink{https://github.com/RandallYe/probabilistic_programming_utp/blob/6a4419b8674b84988065a58696f15093d176594c/probability/probabilistic_relations/utp_prob_rel_lattice_laws.thy\#L6321}
    \begin{align*}
        & \left(
        \begin{array}[]{ll}
            \isfinaldist\left(\prrvfunsym{P}\right) &\land 
            \finstatesasc\left(\lambda n @ \iter\left(n, b, P, \ufzero\right) \right) \land \\ 
            \left(\forall s @ \left(\lambda n @ \prrvfunsym{\iterdiff\left(n, b, P\right)}(s)\right) \tendsto 0\right) & \land 
            \lfunbp\left(fp\right) = fp
        \end{array} \right) 
        \\
        & \implies \left(\pwhile{b}{P} = fp\right) \land \left(\pwhiletop{b}{P} = fp\right) 
    \end{align*}
\end{thm}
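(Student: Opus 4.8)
The plan is to sandwich $fp$ between the two loop semantics and then close the sandwich by showing those semantics coincide under the convergence hypothesis. Since $\isfinaldist(\prrvfunsym{P})$ gives $\mono(\lfunbp)$, the Knaster--Tarski theorem (Theorem~\ref{thm:tarski_fixed_point}) applies to the complete lattice of probabilistic programs: as $\lfunbp(fp) = fp$, $fp$ is both a pre-fixed and a post-fixed point of $\lfunbp$, so the least fixed point lies below it and the greatest above it, i.e.\ $\pwhile{b}{P} \leq fp \leq \pwhiletop{b}{P}$. It therefore suffices to prove $\pwhiletop{b}{P} \leq \pwhile{b}{P}$, since that squeezes the three values together.

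For the left endpoint I would invoke Theorem~\ref{thm:rec_least_fixed_point}, whose premises are precisely $\isfinaldist(\prrvfunsym{P})$ and $\finstatesasc(\lambda n @ \iter(n,b,P,\ufzero))$ --- both available --- to obtain $\pwhile{b}{P} = \thnsup n @ \iter(n,b,P,\ufzero)$. For the right endpoint I would deliberately \emph{not} use Theorem~\ref{thm:rec_great_fixed_point}, because its premise $\finstatesdes(\lambda n @ \iter(n,b,P,\ufone))$ can fail on an infinite state space; instead I only need the one-sided bound $\pwhiletop{b}{P} \leq {\lfunbp}^{n}(\ufone) = \iter(n,b,P,\ufone)$ for every $n$. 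This is a routine induction: $\pwhiletop{b}{P} \leq \ufone$ because $\ufone$ is the top element, and applying the monotone map $\lfunbp$, which fixes $\pwhiletop{b}{P}$, carries the inequality from $n$ to $n+1$. Hence $\pwhiletop{b}{P} \leq \thninf n @ \iter(n,b,P,\ufone)$.

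The crux is then the equality $\thninf n @ \iter(n,b,P,\ufone) = \thnsup n @ \iter(n,b,P,\ufzero)$. I would argue this pointwise at each state $s$, transporting to the reals through $\prrvfunsym{}$ (pointwise the cast $\urealreal$, monotone, with inverse $\rvprfunsym{}$). By Theorems~\ref{thm:iter_bot_ascending} and~\ref{thm:iter_top_descending} the sequence $b_n = \prrvfunsym{\iter(n,b,P,\ufzero)}(s)$ is nondecreasing in $[0,1]$, hence convergent to $\sup_n b_n$, and $a_n = \prrvfunsym{\iter(n,b,P,\ufone)}(s)$ is nonincreasing, hence convergent to $\inf_n a_n$; moreover $a_n \geq b_n$ since ${\lfunbp}^{n}(\ufone) \geq {\lfunbp}^{n}(\ufzero)$ by monotonicity from $\ufone \geq \ufzero$. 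By Theorem~\ref{thm:iterdiff}, ${\lfunbp}^{n}(\ufone) - {\lfunbp}^{n}(\ufzero) = \iterdiff(n,b,P)$, and because $a_n \geq b_n$ the bounded $\ureal$-subtraction agrees with ordinary subtraction, so $a_n - b_n = \prrvfunsym{\iterdiff(n,b,P)}(s) \to 0$ by hypothesis. Algebra of limits forces $\inf_n a_n = \sup_n b_n$, and translating back (the cast $\urealreal$ commutes with suprema and infima) gives $\thninf n @ \iter(n,b,P,\ufone)(s) = \thnsup n @ \iter(n,b,P,\ufzero)(s)$; since $s$ was arbitrary and suprema/infima in $S \fun \ureal$ are computed pointwise (Theorem~\ref{thm:ureal_func_complete}), the functions are equal.

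Chaining the pieces, $\pwhiletop{b}{P} \leq \thninf n @ \iter(n,b,P,\ufone) = \thnsup n @ \iter(n,b,P,\ufzero) = \pwhile{b}{P} \leq fp \leq \pwhiletop{b}{P}$, so $\pwhile{b}{P}$, $fp$, and $\pwhiletop{b}{P}$ all coincide, which is the claim. The main obstacle I expect is the third paragraph: making the real-analytic argument watertight --- that the bounded arithmetic defining $\iterdiff$ really is subtraction here, that $\prrvfunsym{}$ (equivalently $\urealreal$) transports $\thnsup$ and $\thninf$, and that a vanishing gap between a monotone-up and a monotone-down sequence forces their limits to meet --- while keeping track of which premise ($\finstatesasc$, $\isfinaldist$, or the convergence of $\iterdiff$) is needed at each step and confirming that nothing secretly requires the $\finstatesdes$ condition on the descending chain.
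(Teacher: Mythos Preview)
Your proof is correct and follows the same sandwich strategy the paper sketches informally (the paper gives only a one-paragraph explanation plus an Isabelle link, not a written-out argument). The paper's prose says the least and greatest fixed points coincide ``by Theorem~\ref{thm:rec_least_fixed_point} and Theorem~\ref{thm:rec_great_fixed_point}'', i.e.\ by identifying both loops with the Kleene supremum and infimum and then letting the convergence hypothesis collapse the gap. Your route differs in one useful respect: you notice that Theorem~\ref{thm:rec_great_fixed_point} requires the premise $\finstatesdes(\lambda n @ \iter(n,b,P,\ufone))$, which is \emph{not} among the stated hypotheses and does not follow from them in general --- on an infinite state space, if the supremum from below is $0$ at cofinitely many states (as $\finstatesasc$ allows) then the infimum from above is also $0<1$ there, so $\finstatesdes$ fails. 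You sidestep this by using only the elementary bound $\pwhiletop{b}{P}\le(\lfunbp)^n(\ufone)$, obtained by a one-line induction from monotonicity, which is all that is needed once you have $\thninf n @ \iter(n,b,P,\ufone)=\thnsup n @ \iter(n,b,P,\ufzero)$ pointwise. This is a cleaner decomposition than the paper's prose suggests; the mechanised proof presumably also avoids invoking Theorem~\ref{thm:rec_great_fixed_point} in full, or derives what it needs by other means.

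Your real-analytic step is sound and needs no additional premise: Theorem~\ref{thm:iterdiff} together with $a_n\ge b_n$ gives $a_n-b_n=\prrvfunsym{\iterdiff(n,b,P)}(s)$ in ordinary real arithmetic (the bounded $\ureal$-subtraction agrees with real subtraction when the minuend dominates), monotone convergence with a vanishing gap forces $\inf_n a_n=\sup_n b_n$, and the cast $\urealreal$ is an order-embedding of $\ureal$ into $[0,1]$, hence commutes with pointwise suprema and infima.
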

There are four assumptions in the theorem. The third one corresponds to the differences between iterations from top and bottom, illustrated as dashed lines in Fig.~\ref{fig:coin_t_prob_iteration}. If for any state $s$, the difference tends to 0, then the least fixed point by Theorem~\ref{thm:rec_least_fixed_point} and the greatest fixed point by Theorem~\ref{thm:rec_great_fixed_point} coincide. The fourth assumption states $fp$ is a fixed point of $\lfun$. The conclusion states that both the least fixed point and the greatest fixed point are just $fp$.  
This theorem largely simplifies the proof obligation for reasoning about loops to establish these assumptions.

The second assumption $\finstatesasc\left(\lambda n @ \iter\left(n, b, P, \ufzero\right) \right)$ restricts the application of the theorem above and previous theorems to a limited subset of probabilistic programs. For example, the program with a time variable $t$ to model the dice example, described in Sect.~\ref{sec:relwork}, does not satisfy the assumption because {the set of final states with positive probabilities is infinite}. We have proved more general theorems to support wider probabilistic programs, including those with a time variable. First, we define $\finitefinal$ to characterise a program that always produces {finitely many final states}.
\begin{align*}
    \finitefinal(P) \defs \forall s @ \finite \left\{s':S | P(s, s') > 0\right\}
\end{align*}

$P$ is $\finitefinal$ if for any initial state $s$, $P$ has {finitely many reachable states.} The assumption $\finstatesasc$ or $\finstatesdes$ in Theorems~\ref{thm:lfun_limit_as_supremum} to~\ref{thm:rec_unique} is now replaced by $\finitefinal(P)$. The conclusions of these theorems are still valid. We present the updated unique fixed theorem {below} and omit others here for {brevity}.

\begin{thm}[Unique fixed point - finite final states for each iteration] \label{thm:rec_unique_fin}
    \isalink{https://github.com/RandallYe/probabilistic_programming_utp/blob/6a4419b8674b84988065a58696f15093d176594c/probability/probabilistic_relations/utp_prob_rel_lattice_laws.thy\#L6417}
    \begin{align*}
        & \left(
        \begin{array}[]{l}
            \isfinaldist\left(\prrvfunsym{P}\right) \land 
            \finitefinal(P) \land 
            \left(\forall s @ \left(\lambda n @ \prrvfunsym{\iterdiff\left(n, b, P\right)}(s)\right) \tendsto 0\right) \land 
            \lfunbp\left(fp\right) = fp
        \end{array} \right) 
        \\
        & \implies \left(\pwhile{b}{P} = fp\right) \land \left(\pwhiletop{b}{P} = fp\right) 
    \end{align*}
\end{thm}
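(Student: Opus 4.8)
The plan is to follow the same three-stage route as the proof of Theorem~\ref{thm:rec_unique} — establish continuity of $\lfunbp$ on the two Kleene chains, invoke the Kleene fixed-point theorem~\ref{thm:kleene_fixed_point_theorem} to identify $\pwhile{b}{P}$ and $\pwhiletop{b}{P}$ with the supremum and infimum of the iterations, and then collapse the two fixed points using hypothesis~3 on $\iterdiff$ — but with all the finiteness bookkeeping rerouted through $\finitefinal(P)$ in place of $\finstatesasc$/$\finstatesdes$. The distribution hypothesis $\isfinaldist(\prrvfunsym{P})$ is used throughout, exactly as before, to keep the $\ureal$/real conversions $\prrvfunsym{\cdot}$ and $\rvprfunsym{\cdot}$ transparent (no capping in the real-valued computations) and to make $\lfunbp$ monotone and the iterations $\incseq$/$\decseq$ (Theorems~\ref{thm:iter_bot_ascending} and~\ref{thm:iter_top_descending}).

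First I would show that $\finitefinal(P)$ propagates along the iterations. Since $\iter(0,b,P,\ufzero)=\ufzero$ has empty reachable support, and $\iter(n{+}1,\cdot)=\lfunbp(\iter(n,\cdot))=\pcchoice{b}{(\pseq{P}{\iter(n,\cdot)})}{\pskip}$, where $\pseq{P}{\iter(n,\cdot)}$ reaches from an initial state only those $s'$ for which some intermediate $v_0$ satisfies $\prrvfunsym{P}(\ldots,v_0)>0$ and $\prrvfunsym{\iter(n,\cdot)}(v_0,s')>0$, a straightforward induction (finite union of finite supports) shows each $\iter(n,b,P,\ufzero)$ and each $\iter(n,b,P,\ufone)$ is again $\finitefinal$. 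The payoff is that, for a fixed initial state, the summation $\infsum v_0 @ \prrvfunsym{P}[v_0/\vv'] * \prrvfunsym{\iter(n,b,P,\ufzero)}[v_0/\vv]$ occurring inside $\lfunbp$ is in fact a \emph{finite} sum whose index set (the reachable states of $P$ from that initial state) does not depend on $n$.

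Next I would re-derive the continuity statements corresponding to Theorems~\ref{thm:continuity_lfun_bot} and~\ref{thm:continuity_lfun_top} under this hypothesis. Working pointwise in the totally ordered lattice $\ureal$: the iterations form an ascending chain, so $n\mapsto\iter(n,b,P,\ufzero)$ converges pointwise to its supremum; since the sum over $v_0$ inside $\lfunbp$ is finite, the limit commutes with it, yielding $\forall s @ \left(\lambda n@\lfunbp(\iter(n,b,P,\ufzero))(s)\right)\tendsto\lfunbp\left(\thnsup n@\iter(n,b,P,\ufzero)\right)(s)$ — the $\finitefinal$ analogue of Theorem~\ref{thm:lfun_limit_as_supremum}. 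Combining this with $\lfunbp(\iter(n,\cdot))=\iter(n{+}1,\cdot)$ and the fact that a shifted ascending chain has the same limit gives $\lfunbp\left(\thnsup n@\iter(n,b,P,\ufzero)\right)=\thnsup n@\iter(n,b,P,\ufzero)$, i.e.\ the supremum is a fixed point; the infimum case is symmetric. This finite-sum interchange of limit and summation is the heart of the argument and is exactly where $\finitefinal(P)$ earns its keep: it replaces the uniform $\varepsilon$--$N$ reasoning of Theorem~\ref{thm:incseq_limit_is_lub_all}, which is \emph{not} available here because the total reachable set across all iterations may be infinite (as in the dice example with a time variable), so I expect this to be the main technical obstacle.

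Finally I would assemble the conclusion. The Kleene fixed-point theorem applied to the continuous map $\lfunbp$, together with the identifications above, gives $\pwhile{b}{P}=\thnsup n@\iter(n,b,P,\ufzero)=\thlfp{}~\lfunbp$ and $\pwhiletop{b}{P}=\thninf n@\iter(n,b,P,\ufone)=\thgfp{}~\lfunbp$ (the $\finitefinal$ analogues of Theorems~\ref{thm:rec_least_fixed_point} and~\ref{thm:rec_great_fixed_point}). By Theorem~\ref{thm:iterdiff}, ${\lfunbp}^n(\ufone)-{\lfunbp}^n(\ufzero)=\iterdiff(n,b,P)$, that is $\iter(n,b,P,\ufone)-\iter(n,b,P,\ufzero)=\iterdiff(n,b,P)$; since $\lfunbp$ is monotone the left side is an ordinary (non-capped) difference of two convergent chains, so taking pointwise limits and using hypothesis~3 ($\prrvfunsym{\iterdiff(n,b,P)}(s)\tendsto 0$ for all $s$) forces $\thninf n@\iter(n,b,P,\ufone)=\thnsup n@\iter(n,b,P,\ufzero)$, hence $\thgfp{}~\lfunbp=\thlfp{}~\lfunbp$. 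Since $\lfunbp(fp)=fp$ by hypothesis~4, Knaster--Tarski~\ref{thm:tarski_fixed_point} places $fp$ between the least and greatest fixed points, which now coincide, so $fp=\thlfp{}~\lfunbp=\thgfp{}~\lfunbp$ and therefore $\pwhile{b}{P}=fp=\pwhiletop{b}{P}$, as required.
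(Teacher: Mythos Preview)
Your proposal is correct and matches the paper's approach, which simply states that the conclusions of Theorems~\ref{thm:lfun_limit_as_supremum}--\ref{thm:rec_unique} remain valid with $\finitefinal(P)$ replacing $\finstatesasc$/$\finstatesdes$ and defers the details to the mechanisation; your identification of the finite-sum/limit interchange as the replacement for the uniform-$N$ argument of Theorem~\ref{thm:incseq_limit_is_lub_all} is exactly the point.

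One harmless slip: the iterates $\iter(n,b,P,\ufone)$ are \emph{not} $\finitefinal$ --- already $\iter(0,\cdot,\ufone)=\ufone$ has full support, and for states where $b$ holds $\iter(1,\cdot,\ufone)$ equals $\pseq{P}{\ufone}=\ufone$ by Law~\ref{thm:pseq_one} --- so your induction fails on the top chain. But you never actually use that claim: as you say in the very next sentence, the finiteness of the sum over $v_0$ inside $\lfunbp$ comes directly from $\finitefinal(P)$ (only finitely many $v_0$ have $\prrvfunsym{P}(s,v_0)>0$), independently of whatever $X$ is plugged in, so the continuity argument for the descending chain goes through unchanged.
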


\section{Examples and case studies}
\label{sec:ex_cases}

\subsection{Doctor Who's Tardis Attack}

Two robots, the Cyberman C and the Dalek D, attack Doctor Who's Tardis once a day between them. C has a probability of 1/2 of a successful attack, while D has a probability of 3/10 of a successful attack. C attacks more often than D, with a probability of 3/5 on a particular day (and so D attacks with a probability of 2/5 on that day). What is the probability that there will be a successful attack today?

We model the problem in our probabilistic programming in the definition below.
\begin{definition}[Doctor Who's Tardis Attack]
  \label{def:dwta}
  \isalink{https://github.com/RandallYe/probabilistic_programming_utp/blob/6a4419b8674b84988065a58696f15093d176594c/probability/probabilistic_relations/Examples/utp_prob_rel_lattice_dwta.thy\#L22}
  \begin{align*}
    & Attacker ::= C | D \\ 
    & Status :: = S | F \\ 
    & \isakwmaj{alphabet}\ dwtastate = r::Attacker \qquad a::Status \\
    & dwta \defs 
      \ppchoice{3/5} 
      {\left(\pseq{\left(\passign{r}{C}\right)}{\left(\ppchoice{1/2}{\passign{a}{S}}{\passign{a}{F}}\right)}\right)}
      {\left(\pseq{\left(\passign{r}{D}\right)}{\left(\ppchoice{3/10}{\passign{a}{S}}{\passign{a}{F}}\right)} \right) }
  \end{align*}
\end{definition}
We define the attackers $C$ and $D$ of type $Attacker$, and $S$ and $F$ of type $Status$ for a successful or failed attacker. The observation space of this program is $dwtastate$ containing two variables $r$ and $a$ to record the attacker and the attack status. The problem is modelled as a program $dwta$, composed of probabilistic choice, assignment, and sequential composition.

Using the reasoning framework and the algebraic laws mechanised in Isabelle, $dwta$ is simplified and proved semantically equal to a probabilistic program shown below.
\begin{thm}[Simplified program]
  \isalink{https://github.com/RandallYe/probabilistic_programming_utp/blob/6a4419b8674b84988065a58696f15093d176594c/probability/probabilistic_relations/Examples/utp_prob_rel_lattice_dwta.thy\#L129}
  \begin{align*}
    & dwta = \rvprfunsym{\usexpr{
      \begin{array}[]{l}
        3/10 * \ibracket{r' = C \land a' = S} + 
        3/10 * \ibracket{r' = C \land a' = F} + \\
        6/50 * \ibracket{r' = D \land a' = S} + 
        14/50 * \ibracket{r' = D \land a' = F} 
      \end{array}
    }} 
  \end{align*}
\end{thm}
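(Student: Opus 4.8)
The plan is to unfold every construct in $dwta$ down to its real-valued semantics via $\rvprfun$, simplify with the algebraic laws of Section~\ref{sec:program}, and then collect coefficients; this mirrors the $cflip$ calculation used earlier for the coin example.

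First I would record the distribution bookkeeping that the rewriting laws require. The assignments $\passign{r}{C}$, $\passign{a}{S}$ and $\passign{a}{F}$ are distributions over their final states by Theorem~\ref{thm:prob_assign_finaldist}; the two inner probabilistic choices $\ppchoice{1/2}{\passign{a}{S}}{\passign{a}{F}}$ and $\ppchoice{3/10}{\passign{a}{S}}{\passign{a}{F}}$ are distributions by Theorem~\ref{thm:prob_prob_choice} Law~\ref{thm:pchoice_final_dist}; the two sequential compositions $\pseq{(\passign{r}{C})}{(\cdots)}$ and $\pseq{(\passign{r}{D})}{(\cdots)}$ are distributions by Theorem~\ref{thm:prog_seq_comp} Law~\ref{thm:pseq_final_dist}; and $dwta$ itself is a distribution again by Law~\ref{thm:pchoice_final_dist}. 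These $\isfinaldist$ facts are the side-conditions of the inverse laws (Theorems~\ref{thm:rvprfun_inverse}, \ref{thm:prrvfun_inverse}, \ref{thm:prrvfun_inverse_ibracket}) and of the one-point sequence laws below, so they must be in hand from the start.

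Next, using Law~\ref{thm:pchoice_altdef}, the assignment definition~\ref{def:prog_assign} (so $\passign{a}{S} = \rvprfunsym{\ibracket{a' = S \land r' = r}}$, and likewise for $F$) and the Iverson-bracket inverse law Theorem~\ref{thm:prrvfun_inverse_ibracket}, each inner choice becomes $\rvprfunsym{q * \ibracket{a' = S \land r' = r} + (1-q) * \ibracket{a' = F \land r' = r}}$ with $q = 1/2$ in the $C$-branch and $q = 3/10$ in the $D$-branch. The crux is then the two sequential compositions. Expanding $\passign{r}{C} = \rvprfunsym{\ibracket{r' = C \land a' = a}}$ and Definition~\ref{def:prog_seq}, the $C$-branch body is an infinite summation over the intermediate state $v_0$ of a product whose second factor is itself a sum; I would push the summation through the sum (Theorem~\ref{thm:summation} addition law) and apply Law~\ref{thm:pseq_ibracket} (and its one-point corollary Theorem~\ref{thm:pseq_ibracket_agree_1_final_unspecified}) term-by-term. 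The intermediate value of $r$ is pinned to $C$ and that of $a$ to the initial $a$, but the second factor overwrites $a$ while leaving $r$ fixed, so the dependence on the initial $a$ vanishes and the $C$-branch collapses to $\rvprfunsym{\frac{1}{2}\ibracket{r' = C \land a' = S} + \frac{1}{2}\ibracket{r' = C \land a' = F}}$; the $D$-branch is identical with $C$ replaced by $D$ and $q = 3/10$.

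Finally I would compose the outer probabilistic choice with weight $3/5$ by Law~\ref{thm:pchoice_altdef}, obtaining $\rvprfunsym{\frac{3}{5}(\frac{1}{2}\ibracket{r' = C \land a' = S} + \frac{1}{2}\ibracket{r' = C \land a' = F}) + \frac{2}{5}(\frac{3}{10}\ibracket{r' = D \land a' = S} + \frac{7}{10}\ibracket{r' = D \land a' = F})}$, and pure arithmetic on the coefficients ($\frac{3}{5}\cdot\frac{1}{2} = \frac{3}{10}$, $\frac{2}{5}\cdot\frac{3}{10} = \frac{6}{50}$, $\frac{2}{5}\cdot\frac{7}{10} = \frac{14}{50}$) yields the stated normal form. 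I expect the third step to be the only real work: correctly evaluating the infinite summation over intermediate states, keeping track of which variables a given subprogram preserves versus overwrites across the composition, and discharging the summability and one-point side-conditions so the sum collapses cleanly — exactly the bookkeeping that the summation-based definition of sequential composition forces on paper, and which in Isabelle/UTP is largely handled by the relational tactics. Steps one, two and four are routine once the distribution facts are established.
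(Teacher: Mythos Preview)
Your proposal is correct and follows essentially the same approach as the paper: the paper does not give an explicit pen-and-paper proof but states that the result is obtained by applying the mechanised algebraic laws in Isabelle/UTP, and your outline---establish the $\isfinaldist$ side-conditions, unfold the assignments and probabilistic choices via Law~\ref{thm:pchoice_altdef}, collapse the sequential compositions by the one-point argument on the intermediate state, and finish with arithmetic on the outer weights---is exactly the calculation that mechanisation carries out. The only work, as you note, is the intermediate-state summation in step three, and your identification of which variables each subprogram preserves versus overwrites is correct.
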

This law shows $C$ has a probability of 3/10 of a successful or failed attack, while $D$ has a probability of 6/50 of a successful attack and 14/50 of a failed attack. We note that this simplified program is a distribution of the final state because the sum of the probabilities of these combinations equals 1: $3/10+3/10+6/50+14/50=1$.

With this simplified program, we can use it to answer interesting quantitative queries using sequential composition. The answer to the question ``What is the probability of a successful attack?'', for example, is 21/50.

\begin{thm}
  \label{thm:dwta_success}
  $ \pseq{\prrvfunsym{dwta}}{{\ibracket{a = S}}} = {\usexpr{21/50}}$
  \isalink{https://github.com/RandallYe/probabilistic_programming_utp/blob/6a4419b8674b84988065a58696f15093d176594c/probability/probabilistic_relations/Examples/utp_prob_rel_lattice_dwta.thy\#L167}
\end{thm}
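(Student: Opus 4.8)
The plan is to reduce the claim to a short finite arithmetic computation by exploiting the closed form for $dwta$ established in the preceding (Simplified program) theorem. First I would rewrite $\prrvfunsym{dwta}$ as the four-term sum $\usexpr{3/10 * \ibracket{r' = C \land a' = S} + 3/10 * \ibracket{r' = C \land a' = F} + 6/50 * \ibracket{r' = D \land a' = S} + 14/50 * \ibracket{r' = D \land a' = F}}$, and note that, since Iverson brackets are probabilistic (Theorem~\ref{thm:isprob_ibracket}), we have $\prrvfunsym{\left(\rvprfunsym{\ibracket{a = S}}\right)} = \ibracket{a = S}$ by Theorem~\ref{thm:prrvfun_inverse_ibracket}. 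Unfolding the definition of sequential composition on real-valued functions ($\fseq{}{}$ in Definition~\ref{def:prog_seq}) then turns the left-hand side $\pseq{\prrvfunsym{dwta}}{\ibracket{a = S}}$ into $\usexpr{\infsum v_0 @ \prrvfunsym{dwta}[v_0/\vv'] * \ibracket{a = S}[v_0/\vv]}$.

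Next I would push the summation through the four summands using linearity of infinite sums (Theorem~\ref{thm:summation}, in particular Laws~\ref{thm:summation_add} and \ref{thm:summation_cmult_right}); each summand is summable because $dwtastate$ is a finite type, so every relation over it is trivially summable on the universe. For each summand I would then apply the Iverson-bracket one-point rule: the substitution $[v_0/\vv']$ applied to a bracket of the form $\ibracket{r' = c_r \land a' = c_a}$ forces $v_0$ to be the single state with $r$-component $c_r$ and $a$-component $c_a$, after which $\ibracket{a = S}[v_0/\vv]$ evaluates to $1$ exactly when $c_a = S$ and to $0$ otherwise. Hence the two $F$-terms vanish, the two $S$-terms contribute their coefficients, and the sum collapses to $3/10 + 6/50$.

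Finally, arithmetic gives $3/10 + 6/50 = 15/50 + 6/50 = 21/50$, and since this value depends on neither the initial nor the final state, the result is the constant expression $\usexpr{21/50}$, as required. I expect the only delicate point to be the bookkeeping of the nested substitutions $[v_0/\vv']$ and $[v_0/\vv]$ together with the one-point rule over the product state space --- routine but error-prone by hand, and precisely the kind of step that the relational and summation tactics in Isabelle/UTP discharge automatically. A slightly slicker alternative would be to first observe that $\isfinaldist\left(\prrvfunsym{dwta}\right)$ (already noted after the Simplified program theorem), so that $\infsum v_0 @ \prrvfunsym{dwta}[v_0/\vv'] = 1$, and then read off $\pseq{\prrvfunsym{dwta}}{\ibracket{a = S}}$ directly as the sub-sum of those point masses whose final $a$-component is $S$, making the $r$-component irrelevant from the outset.
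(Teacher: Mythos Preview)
Your proposal is correct and follows exactly the approach the paper intends: the paper does not spell out a proof for this theorem but simply remarks that ``with this simplified program, we can use it to answer interesting quantitative queries using sequential composition'' and then states the result, deferring the computation to the Isabelle mechanisation. Your unfolding of $\fseq{}{}$ on the four-term closed form, followed by the Iverson one-point collapse and the arithmetic $3/10 + 6/50 = 21/50$, is precisely the calculation that mechanisation performs.
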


\subsection{The Monty Hall problem}


We model the problem in the program below, where the doors are numbered as natural numbers: 0, 1, and 2.

\begin{definition}[Monty hall]
  \isalink{https://github.com/RandallYe/probabilistic_programming_utp/blob/6a4419b8674b84988065a58696f15093d176594c/probability/probabilistic_relations/Examples/utp_prob_rel_lattice_monty_hall.thy\#L14}
  \begin{align*}
    \isakwmaj{alphabet}\ &mhstate = p::\nat \qquad c::\nat \qquad m::\nat\\
    init & \defs \pseq{\uniformdist{p}{\{0 \upto 2\}}}{\uniformdist{c}{\{0 \upto 2\}}} \\
    mc & \defs {\left(\ppchoice{1/2}{\left(\passign{m}{(c+1) \mod 3}\right)}{\left(\passign{m}{(c+2) \mod 3}\right)}\right)} \\
    mha & \defs \pcchoice{c = p}{mc}{\passign{m}{3 - c - p}} \\
    mha\_nc & \defs \pseq{\pseq{init}{mha}}{\pskip}\tag*{(no change strategy)} \label{thm:monty_mha_nc}\\
    mha\_c & \defs \pseq{init}{\pseq{mha}{\passign{c}{3 - c -m}}}\tag*{(change strategy)} \label{thm:monty_mha_c}
  \end{align*}
\end{definition}
The observation space $mhstate$ contains three variables of type $\nat$: $p$ for the number of the prize door, $c$ for the contestant's choice, and $m$ for the door Monty opens. The $init$ is the initial configuration of the problem where the values of $p$ and $c$ follow a uniform distribution from an interval between 0 and 2 inclusive, so the prize and the contestant's choice are random.  The $mha$ models if the contestant's choice is the prize ($c=p$), the Monty randomly (with probability 1/2) chooses $m$ from the other two doors, denoted as $(c+1) \mod 3$ and $(c+2) \mod 3$. Otherwise, the prize is not revealed, and then Monty chooses the one that is not $p$ (he knows the value of $p$, the prize door) where $3-c-p$ guarantees that $m$ is different from both $c$ and $p$.  The $mha\_nc$ models a no-change strategy, and the $mha\_c$ models a change strategy after Monty reveals one.

We simplify $init$ according to the law below.
\begin{thm}[Initial]
  $init = \rvprfunsym{ \usexpr{ \ibracket{p' \in \{0 \upto 2\}} * \ibracket{c' \in \{0 \upto 2\}} * \ibracket{m' = m} / 9 }}$
  \isalink{https://github.com/RandallYe/probabilistic_programming_utp/blob/6a4419b8674b84988065a58696f15093d176594c/probability/probabilistic_relations/Examples/utp_prob_rel_lattice_monty_hall.thy\#L143}
\end{thm}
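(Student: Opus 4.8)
The plan is to avoid unfolding the raw definition of $\pseq{}{}$ and instead reduce $init = \pseq{\uniformdist{p}{\{0 \upto 2\}}}{\uniformdist{c}{\{0 \upto 2\}}}$ by the algebraic laws for $\uniformdist{}{}$ collected in Theorem~\ref{thm:uniform_dist}. Every side condition these laws require amounts to ``$\{0 \upto 2\}$ is finite and nonempty'' together with $\card\{0 \upto 2\} = 3$, all of which Isabelle discharges immediately.

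First I would apply Law~\ref{thm:uniform_pseq} with $x := p$, $A := \{0 \upto 2\}$, and $P := \rvprfunsym{\uniformdist{c}{\{0 \upto 2\}}}$ (the coercion $\rvprfunsym{}$ is the one implicit in the statement of $init$), turning $init$ into $\rvprfunsym{\usexpr{\left(\infsum v \in \{0 \upto 2\} @ \prrvfunsym{P}[v/p]\right) / 3}}$. Since $\uniformdist{c}{\{0 \upto 2\}}$ is probabilistic by Law~\ref{thm:uniform_prob}, Theorem~\ref{thm:prrvfun_inverse} collapses $\prrvfunsym{P} = \prrvfunsym{\left(\rvprfunsym{\uniformdist{c}{\{0 \upto 2\}}}\right)}$ to $\uniformdist{c}{\{0 \upto 2\}}$, so the summand is simply $\uniformdist{c}{\{0 \upto 2\}}[v/p]$.

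Next I would rewrite the inner uniform distribution by Law~\ref{thm:uniform_form2} as $\livbr \Union w \in \{0 \upto 2\} @ c := w \rivbr / 3$. The disjunction of assignments is the relation $c' \in \{0 \upto 2\} \land p' = p \land m' = m$: each $c := w$ carries the frame $p' = p \land m' = m$, and these common conjuncts factor out of the disjunction over $w$. Substituting $v$ for the undashed $p$ gives $\ibracket{c' \in \{0 \upto 2\} \land p' = v \land m' = m}/3$. Pulling the constant $1/3$ out of the outer sum, using Law~\ref{thm:ib_conj} to split the bracket into $\ibracket{c' \in \{0 \upto 2\}} * \ibracket{p' = v} * \ibracket{m' = m}$, and moving the two $v$-independent factors outside the summation, the remaining $\infsum v \in \{0 \upto 2\} @ \ibracket{p' = v}$ collapses to $\ibracket{p' \in \{0 \upto 2\}}$ by the Iverson summation one-point rule. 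Multiplying the two $1/3$ factors gives $1/9$, which is exactly $\rvprfunsym{\usexpr{\ibracket{p' \in \{0 \upto 2\}} * \ibracket{c' \in \{0 \upto 2\}} * \ibracket{m' = m}/9}}$.

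The routine parts (summability over the finite set $\{0 \upto 2\}$, the arithmetic $\card\{0 \upto 2\} = 3$, the lens-level bookkeeping of substitutions and of the frames attached to the assignments) are dispatched by the relational tactics of Isabelle/UTP. The main obstacle I anticipate is purely bureaucratic: keeping the coercions $\rvprfunsym{}$ and $\prrvfunsym{}$ straight so that the inverse law (Theorem~\ref{thm:prrvfun_inverse}) is invoked only where the underlying function is genuinely probabilistic, and normalising the disjunction $\Union w \in \{0 \upto 2\} @ c := w$ inside the Iverson bracket to the clean set-membership relation before the summation and Iverson-bracket rewrites are applied. Once the expression reaches that normal form, the remainder is a short chain of rewrites with Theorem~\ref{thm:ib}.
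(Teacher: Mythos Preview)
Your proposal is correct. The paper does not spell out a proof of this theorem in the text; it merely states the result and links to the Isabelle mechanisation, so your derivation via Law~\ref{thm:uniform_pseq}, Law~\ref{thm:uniform_form2}, and the Iverson-bracket algebra is exactly the kind of high-level reconstruction the paper's laws are set up to support, and it matches what one would expect the mechanised proof to do.
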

In the initial configuration, the combinations of $p$ and $c$ have an equal probability $1/9$ with $m$ unchanged. The $init$ is also a distribution: the summation over its final states equals 1.

The $mha\_nc$ is proved to be equal to the program below.

\begin{thm}[No change strategy]
  \isalink{https://github.com/RandallYe/probabilistic_programming_utp/blob/6a4419b8674b84988065a58696f15093d176594c/probability/probabilistic_relations/Examples/utp_prob_rel_lattice_monty_hall.thy\#L550}
  \begin{align*}
    & mha\_nc =  \rvprfunsym{
      \usexpr{
      \begin{array}[]{l}
        \ibracket{c' = p'} * \ibracket{p' \in \{0 \upto 2\}} * \ibracket{c' \in \{0 \upto 2\}} * \ibracket{m' = (c'+1)\mod 3} / 18\ + \\
        \ibracket{c' = p'} * \ibracket{p' \in \{0 \upto 2\}} * \ibracket{c' \in \{0 \upto 2\}} * \ibracket{m' = (c'+2)\mod 3} / 18\ + \\
        \ibracket{c' \neq p'} * \ibracket{p' \in \{0 \upto 2\}} * \ibracket{c' \in \{0 \upto 2\}} * \ibracket{m' = 3 - c' - p'} / 9 \\
      \end{array}
    }}
  \end{align*}
\end{thm}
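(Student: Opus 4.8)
The plan is to unfold the definition $mha\_nc \defs \pseq{\pseq{init}{mha}}{\pskip}$ and reduce it stepwise with the algebraic laws of Sect.~\ref{sec:program}. First I would eliminate the trailing skip with the right-unit law for sequential composition (Theorem~\ref{thm:prog_seq_comp}, Law~\ref{thm:pseq_right_unit}), which holds unconditionally, so $mha\_nc = \pseq{init}{mha}$. Then I would substitute $init$ by the closed form proved in the preceding \emph{Initial} lemma, namely $\rvprfunsym{\usexpr{\ibracket{p' \in \{0 \upto 2\}} * \ibracket{c' \in \{0 \upto 2\}} * \ibracket{m' = m}/9}}$, and record the fact (also established there) that $init$ is a distribution over its final state, hence $\isfinalsubdist(\prrvfunsym{init})$ — this is the side condition needed for the next step.

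Next I would unfold $mha = \pcchoice{c=p}{mc}{\passign{m}{3-c-p}}$ and push the sequential composition through the conditional. Since $init$ is a subdistribution, Theorem~\ref{thm:prog_seq_comp}, Law~\ref{thm:pseq_dist_cchoice} rewrites $\pseq{init}{mha}$ as the $\ureal$ version of the sum of $\pseq{init}{\left(\ibracket{c=p}*mc\right)}$ and $\pseq{init}{\left(\ibracket{c \neq p}*\left(\passign{m}{3-c-p}\right)\right)}$. In the first summand I would expand $mc$ with Theorem~\ref{thm:prob_prob_choice}, Law~\ref{thm:pchoice_altdef} into a $1/2$-weighted sum of the two point assignments $\passign{m}{(c+1)\bmod 3}$ and $\passign{m}{(c+2)\bmod 3}$, turning it into two sequential compositions of $init$ with point assignments; the second summand is already such a composition. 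Each of these reduces via the sequence-of-Iverson-bracket laws (Law~\ref{thm:pseq_ibracket} together with the point-distribution corollaries, Theorems~\ref{thm:pseq_ibracket_agree_1_final_unspecified}--\ref{thm:pseq_ibracket_agree_1_point}): the infinite summation over the intermediate state collapses by the one-point rule, because $init$ pins the intermediate value of $m$ to its initial value and constrains $p', c' \in \{0\upto 2\}$; the surviving factors are the guard $\ibracket{c'=p'}$ (respectively $\ibracket{c'\neq p'}$), the range brackets $\ibracket{p' \in \{0\upto 2\}}*\ibracket{c' \in \{0\upto 2\}}$, and $m'$ set equal to the relevant modular expression evaluated at the final state. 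Combining coefficients gives $1/9 \times 1/2 = 1/18$ for the two branches of $mc$ and $1/9$ for the deterministic branch, matching the claimed normal form.

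The final step is a routine collection of the three terms and a check that the three expressions $(c'+1)\bmod 3$, $(c'+2)\bmod 3$ (under $c'=p'$) and $3 - c' - p'$ (under $c'\neq p'$) all denote doors in $\{0\upto 2\}$, so that no extra Iverson constraint on $m'$ is silently dropped. I expect the main obstacle to be the bookkeeping inside the sequential composition: carrying the substitutions $[v_0/\vv']$ and $[v_0/\vv]$ through the nested Iverson brackets, discharging the summability side conditions (via Theorem~\ref{thm:final_distribtion}) so the summation genuinely reduces by the one-point rule, and the small modular-arithmetic case split on $c, p \in \{0,1,2\}$. In the Isabelle/UTP mechanisation this bookkeeping is largely automatic once the distribution and subdistribution facts for $init$ and each intermediate program are supplied to the relational and simplification tactics.
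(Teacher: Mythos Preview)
Your plan is sound and follows exactly the route one would expect given the paper's algebraic toolkit: drop the trailing $\pskip$ by the right-unit law, substitute the closed form of $init$, distribute the sequential composition through the conditional via Law~\ref{thm:pseq_dist_cchoice} (using that $init$ is a distribution, hence a subdistribution), expand $mc$ by Law~\ref{thm:pchoice_altdef}, and then collapse each resulting sequential composition of Iverson-bracket terms by the one-point rule. The paper itself does not spell out a textual proof of this theorem; it simply states the result and defers entirely to the Isabelle/UTP mechanisation (the \isalogo\ link). Your sketch is therefore a faithful reconstruction of what the mechanised proof does behind the scenes, and the obstacles you anticipate---substitution bookkeeping, summability side conditions discharged via Theorem~\ref{thm:final_distribtion}, and the small modular-arithmetic case analysis on $c',p' \in \{0,1,2\}$---are precisely the ones the relational tactics handle automatically once the relevant distribution lemmas are in scope.
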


The $mha\_c$ is also proved to be equal to the program below.

\begin{thm}[Change strategy]
  \isalink{https://github.com/RandallYe/probabilistic_programming_utp/blob/6a4419b8674b84988065a58696f15093d176594c/probability/probabilistic_relations/Examples/utp_prob_rel_lattice_monty_hall.thy\#L1273}
  \begin{align*}
    & mha\_c =  \rvprfunsym{
      \usexpr{
      \begin{array}[]{l}
        \ibracket{p' \in \{0 \upto 2\}} * \ibracket{c' = 3 - p' - m'} * \ibracket{m' = (p'+1)\mod 3} / 18\ + \\
        \ibracket{p' \in \{0 \upto 2\}} * \ibracket{c' = 3 - p' - m'} * \ibracket{m' = (p'+2)\mod 3} / 18\ + \\
        \ibracket{c' = p'} * \ibracket{p' \in \{0 \upto 2\}} * \ibracket{3 - p' - m' \neq p'} * \ibracket{3 - p' - m' \leq 2} *  \ibracket{3 - p' - m' \geq 0} / 9 \\
      \end{array}
    }}
  \end{align*}
\end{thm}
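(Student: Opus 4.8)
\section*{Proof proposal for the Change strategy theorem}

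The plan is to derive the \textbf{Change strategy} law from the already-proved \textbf{No change strategy} theorem, observing that $mha\_c$ and $mha\_nc$ differ only in that the trailing $\pskip$ of $mha\_nc$ is replaced by the assignment $\passign{c}{3 - c - m}$. Concretely, $mha\_nc = \pseq{\left(\pseq{init}{mha}\right)}{\pskip} = \pseq{init}{mha}$ by the right-unit law for sequential composition (Theorem~\ref{thm:prog_seq_comp} Law~\ref{thm:pseq_right_unit}), so the closed form supplied by the No change strategy theorem is in fact a closed form for $\pseq{init}{mha}$; and $mha\_c = \pseq{\left(\pseq{init}{mha}\right)}{\left(\passign{c}{3 - c - m}\right)}$ once sequential composition has been reassociated. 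The whole proof therefore reduces to composing a known distribution with a single assignment.

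First I would discharge the distribution side-conditions needed for that reassociation. The assignment $\passign{c}{3 - c - m}$ is a distribution by Theorem~\ref{thm:prob_assign_finaldist}; $mc$ is a distribution by Theorem~\ref{thm:prob_prob_choice} Law~\ref{thm:pchoice_final_dist} (a probabilistic choice of two assignments); $mha = \pcchoice{c = p}{mc}{\passign{m}{3 - c - p}}$ is a distribution by Theorem~\ref{thm:prog_cond_choice} Law~\ref{thm:cchoice_final_dist}; and $init$ is a distribution (the \emph{Initial} theorem). With these, associativity of sequential composition (Theorem~\ref{thm:prog_seq_comp} Law~\ref{thm:pseq_assoc}) gives
\begin{align*}
  mha\_c & = \pseq{init}{\pseq{mha}{\left(\passign{c}{3 - c - m}\right)}} \\
         & = \pseq{\left(\pseq{init}{mha}\right)}{\left(\passign{c}{3 - c - m}\right)} ,
\end{align*}
and I substitute the No change strategy closed form for $\pseq{init}{mha}$, the round-trip conversions $\prrvfunsym{\rvprfunsym{\cdot}}$ cancelling because that form is probabilistic (Theorems~\ref{thm:prrvfun_inverse} and~\ref{thm:prrvfun_inverse_ibracket}).

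The heart of the proof is the remaining sequential composition with the assignment. Unfolding Definition~\ref{def:prog_seq}, the assignment contributes the Iverson bracket $\ibracket{c' = 3 - c - m \land p' = p \land m' = m}$, and summing over the intermediate state $v_0 = (p_0, c_0, m_0)$ a one-point rule applies: the bracket forces $p_0 = p'$, $m_0 = m'$ and $c_0 = 3 - c' - m'$, so the composite is obtained from the No change strategy closed form by substituting $c' \mapsto 3 - c' - m'$ (keeping $p'$ and $m'$). It then remains to simplify the three summands with the Iverson-bracket laws of Theorem~\ref{thm:ib}: in the first two, $\ibracket{3 - c' - m' = p'}$ becomes $\ibracket{c' = 3 - p' - m'}$, under which $\ibracket{m' = (3 - c' - m' + 1)\mod 3}$ becomes $\ibracket{m' = (p'+1)\mod 3}$ and $\ibracket{3 - c' - m' \in \{0 \upto 2\}}$ becomes a duplicate of $\ibracket{p' \in \{0 \upto 2\}}$ and is absorbed; in the third, $\ibracket{m' = 3 - (3 - c' - m') - p'}$ reduces to $\ibracket{c' = p'}$, and $\ibracket{3 - c' - m' \in \{0 \upto 2\}}$ splits (via $\ibracket{P \land Q} = \ibracket{P} * \ibracket{Q}$) into $\ibracket{3 - p' - m' \geq 0} * \ibracket{3 - p' - m' \leq 2}$ once $c'$ is rewritten to $p'$ under $\ibracket{c' = p'}$. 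These rewrites land on exactly the claimed expression.

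The main obstacle, especially in the mechanisation, is the arithmetic bookkeeping in this last step: the door variables have type $\nat$, so $3 - c - m$ and $3 - c - p$ are truncated subtractions, and one must check that the equalities and the $\mod 3$ simplifications behave correctly under the surrounding constraints (where, on the support, every door value lies in $\{0,1,2\}$ and truncation never bites). The cleanest way to settle these, and to confirm that $3 - p' - ((p'+1)\mod 3)$ and $3 - p' - ((p'+2)\mod 3)$ are genuinely the two doors distinct from $p'$, is a finite case split over $p' \in \{0,1,2\}$; this is also where the Isabelle proof is most likely to need explicit auxiliary lemmas about $\mod$ and $\nat$ subtraction rather than pure automation.
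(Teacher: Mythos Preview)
The paper does not give a textual proof of this theorem; it defers entirely to the mechanised Isabelle proof at the linked location, so there is no explicit argument to compare against. Your proposal is sound, and in fact more structured than a direct unfolding: reusing the already-established No change strategy theorem as a closed form for $\pseq{init}{mha}$ (via the right-unit law) and then post-composing with the single assignment $\passign{c}{3-c-m}$ is an economical decomposition that avoids recomputing the $\pseq{init}{mha}$ part from scratch.

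One point deserves tightening. The assignment bracket $\ibracket{c' = 3 - c_0 - m' \land p' = p_0 \land m' = m_0}$ does \emph{not} by itself force $c_0 = 3 - c' - m'$ under truncated $\nat$ subtraction, because $c' = 3 - c_0 - m'$ is not invertible over all of $\nat$ (many large $c_0$ map to $c' = 0$). The one-point for $c_0$ actually comes from constraints already present in the No change form: in the first two summands $\ibracket{c_0 = p'}$ fixes $c_0$ directly, and in the third $\ibracket{m' = 3 - c_0 - p'}$ does (since on that support $c_0 + p' \leq 3$ and the subtraction is exact). Your global substitution $c' \mapsto 3 - c' - m'$ happens to be valid on the support because there $c_0 + m' \leq 3$ and both subtractions are exact, but the justification should be phrased that way rather than as a direct inversion of the assignment. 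You anticipate exactly this difficulty in your closing paragraph about truncation and the finite case split on $p' \in \{0,1,2\}$, so this is a matter of presentation rather than a genuine gap.
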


With these laws, we can answer questions like the probability of winning for each strategy and whether you change the choice. 
\begin{thm}[Winning probability]
  \hfill \isaref{https://github.com/RandallYe/probabilistic_programming_utp/blob/6a4419b8674b84988065a58696f15093d176594c/probability/probabilistic_relations/Examples/utp_prob_rel_lattice_monty_hall.thy\#L928}
  and 
  \isaref{https://github.com/RandallYe/probabilistic_programming_utp/blob/6a4419b8674b84988065a58696f15093d176594c/probability/probabilistic_relations/Examples/utp_prob_rel_lattice_monty_hall.thy\#L1633}
  \begin{align*}
    & \pseq{\prrvfunsym{mha\_nc}}{{\ibracket{c = p}}} = {\usexpr{1/3}} \tag*{(winning probability of no-change strategy)} \label{thm:monty_nc_winning} \\
    & \pseq{\prrvfunsym{mha\_c}}{{\ibracket{c = p}}} = {\usexpr{2/3}}  \tag*{(winning probability of change strategy)} \label{thm:monty_c_winning} 
  \end{align*}
\end{thm}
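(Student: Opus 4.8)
The plan is to reduce each equation to a finite arithmetic count by starting from the two simplified-form lemmas for $mha\_nc$ and $mha\_c$ proved immediately above, and then evaluating the sequential composition with $\ibracket{c = p}$ directly from Definition~\ref{def:prog_seq}. Since each simplified form is a sum of scaled Iverson brackets whose supports are (essentially) disjoint and whose total mass is $1$, it is a distribution of the final state, hence probabilistic; so Theorem~\ref{thm:prrvfun_inverse} lets us replace $\prrvfunsym{mha\_nc}$ (resp.\ $\prrvfunsym{mha\_c}$) by the real-valued expression $\usexpr{\cdots}$ that appears inside the $\rvprfun$ in those lemmas. Unfolding the sequential composition (Definition~\ref{def:prog_seq}) then turns the goal into evaluating $\usexpr{\infsum v_0 @ R[v_0/\vv'] * \ibracket{c = p}[v_0/\vv]}$, where $R$ is the explicit sum of terms and $v_0 = (p_0, c_0, m_0)$ ranges over the intermediate state space. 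Note that $R$ depends only on the final (primed) variables, so $R[v_0/\vv']$ is just $R$ with $p',c',m'$ renamed to $p_0,c_0,m_0$, while $\ibracket{c = p}[v_0/\vv] = \ibracket{c_0 = p_0}$ (substitution distributes through Iverson brackets, as in the proof of Theorem~\ref{thm:pseq_ibracket}).

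Next I would multiply the summand out using Theorem~\ref{thm:ib} Law~\ref{thm:ib_conj} (product of brackets equals the bracket of the conjunction) and idempotence of $\ibracket{\cdot}$, and identify which terms of $R$ survive the extra factor $\ibracket{c_0 = p_0}$. For $mha\_nc$ the first two terms carry $\ibracket{c' = p'}$, which coincides with $\ibracket{c_0 = p_0}$ after substitution and is therefore absorbed, while the third term carries $\ibracket{c' \neq p'}$, whose substitution contradicts $\ibracket{c_0 = p_0}$ and kills the term (the summand becomes $\rfzero$; cf.\ Theorem~\ref{thm:pseq_ibracket_contradictory}). For $mha\_c$ the roles are reversed: the third term already carries $\ibracket{c' = p'}$ and survives, whereas in the first two terms the switched choice $c' = 3 - p' - m'$ with $m' = (p'+1)\bmod 3$ (resp.\ $(p'+2)\bmod 3$) is never equal to $p'$, so a short case split on $p_0 \in \{0,1,2\}$ plus the modular arithmetic shows $\ibracket{c_0 = 3 - p_0 - m_0} * \ibracket{m_0 = (p_0+i)\bmod 3} * \ibracket{c_0 = p_0} = \rfzero$, and those two terms vanish.

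Finally, after distributing the summation over the surviving terms (each has finite support, so linearity of $\infsum$, Theorem~\ref{thm:summation}, applies), each remaining summand is $\ibracket{\text{a conjunction of constraints on }(p_0,c_0,m_0)}/k$ with $k \in \{9,18\}$; by Law~\ref{thm:ib_summation} together with the one-point rule for the equalities pinning $c_0$ and $m_0$ to $p_0$, the sum of such a bracket over the intermediate space equals the number of satisfying triples. For $mha\_nc$ each of the two surviving terms contributes three triples (one per value of $p_0 = c_0 \in \{0,1,2\}$, with $m_0$ determined), giving $3/18 + 3/18 = 1/3$; for $mha\_c$ only the third term survives, contributing the triples $(p_0,p_0,m_0)$ with $m_0$ one of the two doors other than $p_0$, i.e.\ six in total, giving $6/9 = 2/3$. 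An arithmetic step closes both equations.

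The main obstacle is the bookkeeping in the middle step: establishing that the first two terms of $mha\_c$ really are incompatible with $c_0 = p_0$ requires unfolding $(\cdot)\bmod 3$ and $3 - p' - m'$ over $\{0,1,2\}$ (with care about natural-number truncation), and the final count requires checking that the constraints $3 - p_0 - m_0 \neq p_0$ and $0 \le 3 - p_0 - m_0 \le 2$ leave exactly the two non-prize doors for $m_0$. In the Isabelle development this is discharged automatically by the relational and arithmetic tactics after expanding over the finite enumerated types and the bounded range $\{0 \upto 2\}$; done by hand it is a routine but slightly fiddly finite case analysis.
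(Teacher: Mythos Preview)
Your proposal is correct and mirrors the paper's approach: the paper does not give a written proof here but defers to the Isabelle mechanisation, which proceeds exactly as you outline---start from the simplified closed forms of $mha\_nc$ and $mha\_c$, unfold the sequential composition with $\ibracket{c=p}$, kill the terms whose Iverson constraints contradict $c_0=p_0$, and reduce the surviving terms to a finite count over $\{0,1,2\}^3$. Your identification of which terms survive in each case and the resulting counts ($3/18+3/18$ and $6/9$) are right, and your caveat about natural-number truncated subtraction in the $3-p'-m'$ bookkeeping is exactly the fiddly point the automation discharges.
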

The above law shows that the winning probabilities are $1/3$ for the no-change strategy and $2/3$ for the change strategy, so you should change the choice because you have a {higher} probability of winning.

\subsection{The forgetful Monty}
\label{ssec:cases_forgetful_monty}
The new problem is modelled below.
\begin{definition}[Forgetful Monty]
  \label{def:forgetful_monty}
  \isalink{https://github.com/RandallYe/probabilistic_programming_utp/blob/6a4419b8674b84988065a58696f15093d176594c/probability/probabilistic_relations/Examples/utp_prob_rel_lattice_monty_hall.thy\#L2000}
  \begin{align*}
    & forgetful\_monty \defs \pseq{init}{mc} \tag*{(the forgetful Monty)} \label{def:monty_forgetful}\\
    & learn\_fact \defs \pparallel{{forgetful\_monty}}{\rvprfunsym{\ibracket{m' \neq p'}}} \tag*{(learn new fact that the prize is not revealed)} \label{def:monty_forgetful_learn}
  \end{align*}
\end{definition}
After initialisation, the forgetful Monty randomly chooses one from the other two doors. The learned new fact that the prize is not revealed ($m'\neq p'$) is fed into the program by parallel composition as shown in the definition of $learn\_fact$. The program equals the one below, and the winning probability is queried.
\begin{thm}
  \hfill \isaref{https://github.com/RandallYe/probabilistic_programming_utp/blob/6a4419b8674b84988065a58696f15093d176594c/probability/probabilistic_relations/Examples/utp_prob_rel_lattice_monty_hall.thy\#L2615} and \isaref{https://github.com/RandallYe/probabilistic_programming_utp/blob/6a4419b8674b84988065a58696f15093d176594c/probability/probabilistic_relations/Examples/utp_prob_rel_lattice_monty_hall.thy\#L2424} 
  \begin{align*}
    & learn\_fact = \rvprfunsym{
      \usexpr{
      \begin{array}[]{l}
        \ibracket{p' \in \{0 \upto 2\}} * \ibracket{c' \in \{0 \upto 2\}} * \ibracket{m' = (c'+1)\%3} * \ibracket{m' \neq p'} / 12\ + \\
        \ibracket{p' \in \{0 \upto 2\}} * \ibracket{c' \in \{0 \upto 2\}} * \ibracket{m' = (c'+2)\%3} * \ibracket{m' \neq p'} / 12
      \end{array}
    }}\\
    & \pseq{\prrvfunsym{learn\_fact}}{{\ibracket{c = p}}} = {\usexpr{1/2}} \tag*{(winning probability of learning new fact)} \label{thm:monty_learn_winning}
  \end{align*}
\end{thm}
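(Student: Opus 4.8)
The statement bundles two facts: a closed form for $learn\_fact$ and the derived query value $\pseq{\prrvfunsym{learn\_fact}}{\ibracket{c = p}} = \usexpr{1/2}$. The plan is to work outwards from $init$: first obtain a closed form for $forgetful\_monty = \pseq{init}{mc}$, then push it through the parallel composition (i.e.\ normalisation) that defines $learn\_fact$ to get its closed form, and finally evaluate the winning query as a finite summation. All of this is rewriting with the algebraic laws of Section~\ref{sec:program} plus the Iverson-bracket laws of Theorem~\ref{thm:ib}; the only genuinely arithmetic ingredient is a small finite case analysis over $\{0 \upto 2\}^3$ with modular arithmetic.

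For $forgetful\_monty$ I would start from the already-established closed form $init = \rvprfunsym{\usexpr{\ibracket{p' \in \{0 \upto 2\}} * \ibracket{c' \in \{0 \upto 2\}} * \ibracket{m' = m} / 9}}$ and rewrite $mc$ using Theorem~\ref{thm:prob_prob_choice} (Law~\ref{thm:pchoice_altdef}) and the assignment semantics, giving $mc = \rvprfunsym{\usexpr{\ibracket{m' = (c+1)\bmod 3 \land p' = p \land c' = c}/2 + \ibracket{m' = (c+2)\bmod 3 \land p' = p \land c' = c}/2}}$. Unfolding $\pseq{}{}$ by Definition~\ref{def:prog_seq} turns $forgetful\_monty$ into $\rvprfunsym{\usexpr{\infsum v_0 @ \prrvfunsym{init}[v_0/\vv'] * \prrvfunsym{mc}[v_0/\vv]}}$; after the substitutions, the Iverson bracket $\ibracket{m_0 = m}$ pins the intermediate $m_0$ and the frame equalities $p' = p_0$, $c' = c_0$ pin $p_0$ and $c_0$, so the one-point rules for Iverson-bracketed summations collapse the intermediate sum to $\rvprfunsym{\usexpr{\ibracket{p' \in \{0 \upto 2\}} * \ibracket{c' \in \{0 \upto 2\}} * \ibracket{m' = (c'+1)\bmod 3}/18 + \ibracket{p' \in \{0 \upto 2\}} * \ibracket{c' \in \{0 \upto 2\}} * \ibracket{m' = (c'+2)\bmod 3}/18}}$. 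Since $init$ and $mc$ are distributions, so is $forgetful\_monty$ (Theorem~\ref{thm:prog_seq_comp}, Law~\ref{thm:pseq_final_dist}); this is what lets the round-trip conversions $\prrvfun \circ \rvprfun$ below disappear.

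Next I would unfold $learn\_fact = \pparallel{forgetful\_monty}{\rvprfunsym{\ibracket{m' \neq p'}}}$ by Definition~\ref{def:prog_parallel_ff} as $\rvprfunsym{\normf\usexpr{\prrvfunsym{forgetful\_monty} * \ibracket{m' \neq p'}}}$, using Theorem~\ref{thm:prrvfun_inverse} on $\prrvfunsym{forgetful\_monty}$ and Theorem~\ref{thm:prrvfun_inverse_ibracket} on $\prrvfunsym{\rvprfunsym{\ibracket{m' \neq p'}}}$. Distributing the factor $\ibracket{m' \neq p'}$ across the two summands is immediate; the real work is the normalising constant $\infsum v_0 @ \bigl(\prrvfunsym{forgetful\_monty} * \ibracket{m' \neq p'}\bigr)[v_0/\vv']$. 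Here I would collapse the $m'$-sum by the one-point rule on $m' = (c'+1)\bmod 3$ (resp.\ $+2$) and then enumerate the nine pairs $(p',c') \in \{0,1,2\}^2$: for each $c'$ exactly one of the three values of $p'$ is ruled out by $(c'+k)\bmod 3 \neq p'$, so each summand contributes $6/18 = 1/3$ and the total is $2/3$. Dividing by $2/3$ (Definition~\ref{def:normf}) rescales each coefficient $1/18$ to $1/12$, which is exactly the claimed closed form for $learn\_fact$.

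For the winning probability I would expand $\pseq{\prrvfunsym{learn\_fact}}{\ibracket{c = p}}$ as $\usexpr{\infsum v_0 @ \prrvfunsym{learn\_fact}[v_0/\vv'] * \ibracket{c = p}[v_0/\vv]}$, substitute, collapse the $m'$-sum with the one-point rule and then the $c' = p'$ bracket with a second one-point rule, reducing each of the two summands to $\sum_{p' \in \{0,1,2\}} \ibracket{(p'+k)\bmod 3 \neq p'}/12$. Since $(p'+1)\bmod 3$ and $(p'+2)\bmod 3$ differ from $p'$ for every $p' \in \{0,1,2\}$, each summand equals $3/12 = 1/4$, and their sum is $1/2$. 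The step I expect to be the main obstacle is the normalisation summation in the previous paragraph: it needs careful bookkeeping of which variables become dashed/undashed under the substitutions together with the finite enumeration over $\{0,1,2\}^3$ and the modular reasoning — in the mechanisation this is where the proof leans hardest on the relational simplification tactics and on evaluating concrete finite sums, whereas everything else is routine rewriting with laws already proved.
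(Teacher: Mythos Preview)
Your proposal is correct and mirrors the approach the paper's mechanisation takes: simplify $forgetful\_monty$ to the $1/18$-weighted form via the sequential-composition and one-point laws, unfold parallel composition as $\normf$ of the pointwise product, compute the normalising sum over the finite final state space (your $6/18 + 6/18 = 2/3$ case analysis is exactly the crux), and then evaluate the winning query by another finite summation collapsed via one-point rules. The paper does not spell out a textual proof for this theorem beyond the Isabelle links, and your outline matches the structure one sees there, including your correct identification of the normalisation enumeration over $\{0 \upto 2\}^3$ with the modular-arithmetic side conditions as the step that carries the real proof effort.
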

The probability of winning is now $1/2$ and so if Monty is forgetful, and the contestant happens to choose a door with no prize, it does not matter whether the contestant sticks or switches because they have the equal probability of $1/2$.

\subsection{Robot localisation}
\label{ssec:cases_robot_localisation}
The likelihood functions are defined below.

\begin{definition}[Likelihood functions]
    \isalink{https://github.com/RandallYe/probabilistic_programming_utp/blob/6a4419b8674b84988065a58696f15093d176594c/probability/probabilistic_relations/Examples/utp_prob_rel_lattice_robot_localisation.thy\#L24}
\begin{align*}
    & scale\_door \defs  \usexpr{3 * \ibracket{door(bel')} + 1} \\
    & scale\_wall \defs  \usexpr{3 * \ibracket{\lnot door(bel')} + 1} 
\end{align*}
\end{definition}

We are interested in questions like how many measurements and moves are necessary to estimate the robot's location accurately.

\subsubsection{Initialisation}
Initially, the robot is randomly placed, and so a uniform distribution. This is defined below by the program $init$.

\begin{definition}[Initialisation]
    $ init \defs \uniformdist{bel}{\{0 \upto 2\}} $ 
    \isalink{https://github.com/RandallYe/probabilistic_programming_utp/blob/6a4419b8674b84988065a58696f15093d176594c/probability/probabilistic_relations/Examples/utp_prob_rel_lattice_robot_localisation.thy\#L20}
\end{definition}

\subsubsection{First sensor reading}

The sensor detects a door. We learn new knowledge and update our beliefs accordingly using parallel composition.  The prior probability distribution (prior) is $init$, and the likelihood function is $scale\_door$. The posterior probability distribution is given in the theorem below.  \begin{thm}[First posterior]
  $ \pparallel{init}{scale\_door} = \rvprfunsym{4/9 * \ibracket{bel' = 0} + 1/9 * \ibracket{bel' = 1} + 4/9 *  \ibracket{bel' = 2}}$ 
  \isalink{https://github.com/RandallYe/probabilistic_programming_utp/blob/6a4419b8674b84988065a58696f15093d176594c/probability/probabilistic_relations/Examples/utp_prob_rel_lattice_robot_localisation.thy\#L91}
\end{thm}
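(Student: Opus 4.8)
The plan is to reduce the parallel composition to a single finite summation using the already-proved law for parallel composition against a uniform distribution (Law~\ref{thm:pparallel_uniform}), and then evaluate that summation by a three-way case split on the value drawn for $bel$.

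First I would discharge the side conditions of Law~\ref{thm:pparallel_uniform}: the index set $\{0 \upto 2\}$ is finite and non-empty, so the law applies with $x := bel$, $A := \{0 \upto 2\}$ and $p := scale\_door = \usexpr{3 * \ibracket{door(bel')} + 1}$. This rewrites the goal to
\begin{align*}
  \pparallel{init}{scale\_door} = \rvprfunsym{\usexpr{\left(\infsum v \in \{0 \upto 2\} @ \ibracket{bel := v} * scale\_door[v/bel']\right) / \left(\infsum v \in \{0 \upto 2\} @ scale\_door[v/bel']\right)}},
\end{align*}
so that the remaining work is purely arithmetic on these two finite sums over the explicit set $\{0,1,2\}$. (One could instead unfold Definitions~\ref{def:prog_parallel_ff}, \ref{def:normf} and \ref{def:uniform_dist} and rewrite $init$ by Law~\ref{thm:uniform_form2}, but routing through Law~\ref{thm:pparallel_uniform} avoids re-doing the normalisation bookkeeping.)

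Next I would evaluate the substitutions and the $door$ predicate. Since the relation underlying $bel := v$ is $bel' = v$, we have $\ibracket{bel := v} = \ibracket{bel' = v}$, and, as substitution distributes through the Iverson bracket and through arithmetic, $scale\_door[v/bel'] = 3 * \ibracket{door(v)} + 1$. Using that $door(0)$ and $door(2)$ hold and $door(1)$ does not, the summand is $4 * \ibracket{bel' = 0}$ at $v = 0$, $1 * \ibracket{bel' = 1}$ at $v = 1$, and $4 * \ibracket{bel' = 2}$ at $v = 2$. Hence the numerator equals $4 * \ibracket{bel' = 0} + \ibracket{bel' = 1} + 4 * \ibracket{bel' = 2}$ and the denominator equals $4 + 1 + 4 = 9$; distributing the division by $9$ through the sum yields exactly the claimed right-hand side. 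As a sanity check, the coefficients $4/9$, $1/9$, $4/9$ sum to $1$, consistent with the posterior being a distribution of the final state (cf.\ Laws~\ref{thm:uniform_finaldist} and~\ref{thm:pparallel_dist}).

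The main obstacle is mechanical rather than conceptual: getting Isabelle to expand the finite sums over $\{0 \upto 2\}$, push $[v/bel']$ through the Iverson bracket and into the arithmetic, and resolve the case split on the concrete definition of $door$ --- in particular so that the automation treats $door(v)$ as the boolean predicate ``position $v$ is a door'' inside $\ibracket{\cdot}$ rather than leaving it opaque. In practice this is handled by unfolding the $door$ definition, presenting $\{0 \upto 2\}$ as the explicit three-element set, and letting the relational and simplification tactics finish the arithmetic; no genuinely new lemma beyond Law~\ref{thm:pparallel_uniform} and the Iverson-bracket laws (Theorem~\ref{thm:ib}) is required.
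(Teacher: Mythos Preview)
Your proposal is correct and follows the intended route: the paper does not spell out a textual proof here but defers to the mechanisation, and Law~\ref{thm:pparallel_uniform} was developed precisely for this situation (parallel composition of a uniform prior with a likelihood on $x'$), so your use of it together with the three-way case split on $door(v)$ is exactly the expected argument. The only implicit assumption worth making explicit is that $bel$ is the sole variable in the observation space, so that $\ibracket{bel := v}$ really does collapse to $\ibracket{bel' = v}$; this holds for the robot-localisation state space, and with it your numerator/denominator computation goes through verbatim.
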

We have a high probability of $4/9$ to believe the robot is in front of a door at position 0 or 2. 

\subsubsection{Move one space to the right}
Now the robot takes an action to move one space to the right, and the belief position is shifted one to the right. This is defined as $move\_right$ below.
\begin{definition}[Move to the right]
  $move\_right \defs \left(\passign{bel}{(bel + 1) \mod 3}\right) $
  \isalink{https://github.com/RandallYe/probabilistic_programming_utp/blob/6a4419b8674b84988065a58696f15093d176594c/probability/probabilistic_relations/Examples/utp_prob_rel_lattice_robot_localisation.thy\#L30}
\end{definition}

An action updates the belief using sequential composition. The posterior probability distribution after the move is given as follows.  \begin{thm}[Posterior after the first move]
  \isalink{https://github.com/RandallYe/probabilistic_programming_utp/blob/6a4419b8674b84988065a58696f15093d176594c/probability/probabilistic_relations/Examples/utp_prob_rel_lattice_robot_localisation.thy\#L116}
  \begin{align*}
    & \pseq{\left(\pparallel{init}{scale\_door}\right)}{move\_right}  = \rvprfunsym{4/9 * \ibracket{bel' = 0} + 4/9 * \ibracket{bel' = 1} + 1/9 *  \ibracket{bel' = 2}} 
  \end{align*}
\end{thm}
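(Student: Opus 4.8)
The plan is to reduce the claim to a finite arithmetic computation by unfolding sequential composition. First I would invoke the preceding ``First posterior'' theorem to rewrite $\pparallel{init}{scale\_door}$ as $\rvprfunsym{4/9 * \ibracket{bel' = 0} + 1/9 * \ibracket{bel' = 1} + 4/9 * \ibracket{bel' = 2}}$, and note that $move\_right$ is by definition the probabilistic assignment $\left(\passign{bel}{(bel + 1) \mod 3}\right) = \rvprfunsym{\ibracket{bel' = (bel+1) \mod 3}}$, which by Theorem~\ref{thm:prob_assign_finaldist} is a distribution. Unfolding Definition~\ref{def:prog_seq}, the left-hand side becomes $\rvprfunsym{\fseq{R}{T}}$ with $R = 4/9 * \ibracket{bel' = 0} + 1/9 * \ibracket{bel' = 1} + 4/9 * \ibracket{bel' = 2}$ and $T = \ibracket{bel' = (bel+1) \mod 3}$, after using Theorem~\ref{thm:prrvfun_inverse} to cancel the $\prrvfun$/$\rvprfun$ round-trips (both $R$ and $T$ are probabilistic, since in every state exactly one Iverson bracket is active and $4/9, 1/9 \in [0,1]$).

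Next I would evaluate $\fseq{R}{T} = \usexpr{\infsum v_0 @ R[v_0/\vv'] * T[v_0/\vv]}$. Because $bel$ is the only state variable and ranges over $\{0 \upto 2\}$, the factor $R[v_0/\vv']$ is nonzero only for $v_0 \in \{0,1,2\}$, so the infinite sum collapses --- by the Iverson summation one-point rule (Theorem~\ref{thm:ib}, Law~(\ref{thm:ib_summation})) together with summability on the finite support --- to the finite sum $4/9 * \ibracket{bel' = (0+1) \mod 3} + 1/9 * \ibracket{bel' = (1+1) \mod 3} + 4/9 * \ibracket{bel' = (2+1) \mod 3}$. Simplifying the modular arithmetic gives $4/9 * \ibracket{bel' = 1} + 1/9 * \ibracket{bel' = 2} + 4/9 * \ibracket{bel' = 0}$, which after reordering is exactly the claimed right-hand side; applying $\rvprfun$ to both sides closes the proof.

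The side obligations are routine but need care: one must confirm that $R$ and the resulting expression are probabilistic so the conversion lemmas apply, and that the relevant families are $\summableonfinal$. This follows because the first posterior is a distribution (it is $\pparallel{init}{scale\_door}$ with $init$ a uniform distribution over $\{0 \upto 2\}$, cf.\ Theorem~\ref{thm:uniform_finaldist}), so Theorem~\ref{thm:final_distribtion} supplies summability and finite reachable support, and Theorem~\ref{thm:pseq_final_dist} guarantees the composed program is again a distribution (a useful sanity check, since $4/9+4/9+1/9=1$). In the mechanisation these steps are discharged by the relational simplification tactics together with the finiteness of $\{0 \upto 2\}$.

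The main obstacle, such as it is, lies in the case analysis on $(v_0 + 1) \mod 3$ inside the Iverson brackets: one must be disciplined about keeping the substitution of $v_0$ for $\vv'$ (in $R$) separate from the substitution for $\vv$ (in $T$), and about the fact that the pointwise $\ureal$ addition here coincides with ordinary real addition (no clipping, because the brackets are mutually exclusive and the weights sum to $1$). None of this is deep; the proof is essentially a one-point summation followed by arithmetic over a three-element domain, which is why it is stated without a separate proof in the paper and handled automatically in Isabelle.
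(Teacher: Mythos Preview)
Your proposal is correct and matches the paper's approach: the paper states this theorem without an explicit pen-and-paper proof, relying on the Isabelle mechanisation, and the computation you outline --- invoking the first-posterior result, unfolding sequential composition with the point assignment $move\_right$, and collapsing the sum over the three intermediate states via the Iverson one-point rule --- is exactly the routine calculation that underlies it. As the paper remarks immediately after the theorem, the effect is simply a cyclic shift of the probability values, which is what your modular-arithmetic step makes explicit.
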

We observe that the probability values are not changing, but the positions are shifted in the distribution.

\subsubsection{Second sensor reading}
The sensor detects a door again. The posterior probability distribution is updated accordingly. 
\begin{thm}[Second posterior]
  \isalink{https://github.com/RandallYe/probabilistic_programming_utp/blob/6a4419b8674b84988065a58696f15093d176594c/probability/probabilistic_relations/Examples/utp_prob_rel_lattice_robot_localisation.thy\#L514}
  \begin{align*}
    & \pparallel{\left(\pseq{\left(\pparallel{init}{scale\_door}\right)}{move\_right}\right)}{scale\_door} = \rvprfunsym{2/3 * \ibracket{bel' = 0} + 1/6 * \ibracket{bel' = 1} + 1/6 *  \ibracket{bel' = 2}} 
  \end{align*}
\end{thm}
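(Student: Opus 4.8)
The plan is to evaluate the nested expression outside-in, reusing the result already proved for this example and reducing everything to a single normalised real-valued function. First I would rewrite the left operand of the outermost parallel composition: by the theorem \emph{Posterior after the first move}, $\pseq{\left(\pparallel{init}{scale\_door}\right)}{move\_right}$ equals $\rvprfunsym{4/9 * \ibracket{bel' = 0} + 4/9 * \ibracket{bel' = 1} + 1/9 * \ibracket{bel' = 2}}$; write $D$ for the bracketed real-valued expression. Since at every state at most one of the three Iverson brackets is $1$, the value of $D$ is always one of $4/9$, $1/9$, $0$, so $\isprob(D)$ holds and Theorem~\ref{thm:prrvfun_inverse} gives $\prrvfunsym{\left(\rvprfunsym{D}\right)} = D$. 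Unfolding the definition of parallel composition \ref{def:prog_parallel_ff} (the $\prfun$ left argument being coerced back by $\prrvfunsym{}$) then turns the goal into $\rvprfunsym{\normf\usexpr{D * scale\_door}}$, so it suffices to establish the real-valued identity $\normf\usexpr{D * scale\_door} = 2/3 * \ibracket{bel' = 0} + 1/6 * \ibracket{bel' = 1} + 1/6 * \ibracket{bel' = 2}$.

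Next I would compute the pointwise product $D * scale\_door$. From $scale\_door \defs \usexpr{3 * \ibracket{door(bel')} + 1}$ and the fact (from the problem description) that $door(bel')$ is $1$ exactly when $bel' \in \{0, 2\}$ and $0$ when $bel' = 1$, the likelihood scales the three branches by $4$, $1$, $4$ respectively. Using the Iverson-bracket laws of Theorem~\ref{thm:ib} (the conjunction law and the disjointness $\ibracket{bel'=i}*\ibracket{bel'=j} = \rfzero$ for $i \neq j$) together with arithmetic, $D * scale\_door = 16/9 * \ibracket{bel' = 0} + 4/9 * \ibracket{bel' = 1} + 4/9 * \ibracket{bel' = 2}$. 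Then I apply Definition~\ref{def:normf}: the denominator $\infsum v_0 @ \left(D * scale\_door\right)[v_0/\vv']$ collapses, via the Iverson-bracket summation one-point rule exactly as in the worked normalisation example, to $16/9 + 4/9 + 4/9 = 8/3$. Dividing each coefficient by $8/3$ gives $2/3$, $1/6$, $1/6$, which is the claimed numerator; re-applying $\rvprfunsym{}$ closes the goal. (As a sanity check, the coefficients sum to $1$, so the posterior is again a distribution of the final state.)

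I expect the main obstacle to be the bookkeeping around the $\infsum$ in the denominator of $\normf$ and the coercions between $\prfun$ and $\rvfun$: one must carry the right ``probabilistic'' side-conditions to invoke Theorem~\ref{thm:prrvfun_inverse}, and to rewrite $\infsum$ over a finite linear combination of Iverson brackets one needs the summability laws of Theorem~\ref{thm:summation} (multiplication/division by constants and addition) applied to $\summableonfinal$ of the product. In Isabelle/UTP the arithmetic is routine once the summand has been normalised into the $\sum_i a_i \ibracket{bel' = i}$ shape, so the real effort is forcing it into that shape — supplying the relational and \emph{simp} tactics with the Iverson-bracket rewrites and the disjointness facts — rather than anything conceptually deep.
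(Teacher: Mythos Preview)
Your proposal is correct and matches the approach the paper implicitly takes: the paper does not write out a proof in the text (it only links to the mechanisation), but the sequence of theorems in Section~\ref{ssec:cases_robot_localisation} is structured precisely so that each posterior is obtained from the previous one by multiplying by the next likelihood and normalising, exactly as you describe. Your arithmetic ($16/9,\,4/9,\,4/9$ summing to $8/3$, then dividing) is right, and your identification of the coercion bookkeeping via Theorem~\ref{thm:prrvfun_inverse} and the summability side-conditions as the main mechanical effort is accurate.
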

We have a high probability of $2/3$ to believe the robot is in front of a door at position 0 and a low probability of $1/6$ in the other two positions.

\subsubsection{Move one space to the right}

Another action is to move the robot to its right, and the posterior probability distribution is shifted accordingly.

\begin{thm}[Posterior after the second move]
  \isalink{https://github.com/RandallYe/probabilistic_programming_utp/blob/6a4419b8674b84988065a58696f15093d176594c/probability/probabilistic_relations/Examples/utp_prob_rel_lattice_robot_localisation.thy\#L603}
  \begin{align*}
    & \pseq{\left(\pparallel{\left(\pseq{\left(\pparallel{init}{scale\_door}\right)}{move\_right}\right)}{scale\_door}\right)}{move\_right} \\
    = \,& \rvprfunsym{1/6 * \ibracket{bel' = 0} + 2/3 * \ibracket{bel' = 1} + 1/6 *  \ibracket{bel' = 2}} 
  \end{align*}
\end{thm}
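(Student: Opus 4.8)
The plan is to reduce this statement to the already-established \emph{Second posterior} theorem and then evaluate a single sequential composition with the deterministic move. Rewriting the innermost subterm $\pparallel{\left(\pseq{\left(\pparallel{init}{scale\_door}\right)}{move\_right}\right)}{scale\_door}$ by that theorem turns the goal into $\pseq{\rvprfunsym{d}}{move\_right} = \rvprfunsym{1/6 * \ibracket{bel' = 0} + 2/3 * \ibracket{bel' = 1} + 1/6 * \ibracket{bel' = 2}}$, where $d \defs 2/3 * \ibracket{bel' = 0} + 1/6 * \ibracket{bel' = 1} + 1/6 * \ibracket{bel' = 2}$. Since $d$ is a distribution over its final state ($2/3 + 1/6 + 1/6 = 1$) and $move\_right$ is a distribution by Theorem~\ref{thm:prob_assign_finaldist}, Law~\ref{thm:pseq_final_dist} already guarantees the left-hand side is a distribution, which is a useful sanity check against the right-hand side ($1/6 + 2/3 + 1/6 = 1$).

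Next I would unfold the sequential composition using Definition~\ref{def:prog_seq}, together with Definition~\ref{def:prog_assign}, which gives $\prrvfunsym{move\_right} = \ibracket{bel' = (bel+1)\mod 3}$, and Theorem~\ref{thm:prrvfun_inverse} to discharge the round-trip $\prrvfunsym{\left(\rvprfunsym{d}\right)} = d$ (valid because $d$ is probabilistic). This leaves the arithmetic expression $\infsum v_0 @ d[v_0/bel'] * \ibracket{bel' = (v_0+1)\mod 3}$ to evaluate. The Iverson bracket inside the sum is nonzero only when $v_0$ is the unique preimage of $bel'$ under the cyclic shift, namely $v_0 = (bel'+2)\mod 3$ when $bel' \in \{0 \upto 2\}$ and no preimage otherwise; applying the Iverson-summation one-point rule (in the style of Theorem~\ref{thm:ib} Law~\ref{thm:ib_summation}) collapses the sum to $d[(bel'+2)\mod 3 / bel']$. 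Substituting the three relevant values of $bel'$ then yields $1/6$ at $bel' = 0$, $2/3$ at $bel' = 1$, and $1/6$ at $bel' = 2$, i.e.\ exactly the right-hand side after re-wrapping with $\rvprfunsym{}$.

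The main obstacle is the bookkeeping around the modular arithmetic in the summation: one has to verify carefully that for every $bel'$ the equation $(v_0 + 1)\mod 3 = bel'$ has exactly one solution $v_0$ in the support $\{0 \upto 2\}$ of $d$, and none for $bel'$ outside that range (so those cases contribute $0$), which is precisely what makes the one-point rule applicable; everything else is routine rewriting with the algebraic laws for sequential composition, assignment, and Iverson brackets. In Isabelle/UTP this case analysis on $\mod 3$ is absorbed by the relational and summation tactics. Note finally that this proof is structurally identical to the \emph{Posterior after the first move} theorem, so in practice one would reuse that development almost verbatim, merely with the shifted input distribution.
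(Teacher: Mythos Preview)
Your proposal is correct and follows exactly the approach the paper takes: the paper does not spell out a proof in the text but delegates it to the Isabelle/UTP mechanisation, and your plan---rewrite the inner subterm by the Second posterior theorem, unfold the sequential composition with the deterministic assignment, and collapse the resulting $\infsum$ via the one-point rule induced by $\ibracket{bel' = (v_0+1)\bmod 3}$---is precisely the calculation the tactics discharge. One small remark: since $bel$ is declared with range $\{0\upto 2\}$, the ``no preimage for $bel'$ outside that range'' case you flag is in fact vacuous by typing, so the modular case analysis reduces cleanly to the three-way split you describe.
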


\subsubsection{Third sensor reading}

The learn sensor detects a wall. The posterior probability distribution is updated accordingly. 

\begin{thm}[Third posterior]
  \isalink{https://github.com/RandallYe/probabilistic_programming_utp/blob/6a4419b8674b84988065a58696f15093d176594c/probability/probabilistic_relations/Examples/utp_prob_rel_lattice_robot_localisation.thy\#L994}
  \begin{align*}
    & \pparallel{\left(\pseq{\left(\pparallel{\left(\pseq{\left(\pparallel{init}{scale\_door}\right)}{move\_right}\right)}{scale\_door}\right)}{move\_right}\right)}{scale\_wall} \\
    = \,& \rvprfunsym{1/18 * \ibracket{bel' = 0} + 8/9 * \ibracket{bel' = 1} + 1/18 *  \ibracket{bel' = 2}} 
  \end{align*}
\end{thm}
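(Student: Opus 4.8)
The plan is to reduce the statement to a single Bayesian update by reusing the intermediate result already established. The subterm $\pseq{\left(\pparallel{\left(\pseq{\left(\pparallel{init}{scale\_door}\right)}{move\_right}\right)}{scale\_door}\right)}{move\_right}$ is exactly the left-hand side of the preceding ``Posterior after the second move'' theorem, so I would first rewrite it as $\rvprfunsym{1/6 * \ibracket{bel' = 0} + 2/3 * \ibracket{bel' = 1} + 1/6 * \ibracket{bel' = 2}}$. After this substitution the goal becomes $\pparallel{d}{scale\_wall} = \rvprfunsym{1/18 * \ibracket{bel' = 0} + 8/9 * \ibracket{bel' = 1} + 1/18 * \ibracket{bel' = 2}}$, where $d$ is the above distribution — i.e. one more application of parallel composition.

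Next I would unfold $\pparallel{\cdot}{\cdot}$ by Definition~\ref{def:prog_parallel_ff} and $\normf$ by Definition~\ref{def:normf}, handling the $\ureal$-to-$\real$ conversions via the inverse laws (e.g. Theorem~\ref{thm:prrvfun_inverse_ibracket}), so the right side is $\rvprfunsym$ applied to $\usexpr{(d * scale\_wall) / \left(\infsum v_0 @ (d * scale\_wall)[v_0/\vv']\right)}$. I would then expand $scale\_wall$, which evaluates to $4$ when $bel' = 1$ and to $1$ when $bel' \in \{0, 2\}$ (using Theorem~\ref{thm:ib} Law~\ref{thm:ib_neg} to rewrite $\ibracket{\lnot door(bel')}$), and compute the pointwise product $d * scale\_wall$ with the Iverson-bracket laws of Theorem~\ref{thm:ib}: the cross terms vanish since $\ibracket{bel' = i} * \ibracket{bel' = j} = \rfzero$ for $i \neq j$ (Law~\ref{thm:ib_conj}), leaving $1/6 * \ibracket{bel' = 0} + 8/3 * \ibracket{bel' = 1} + 1/6 * \ibracket{bel' = 2}$. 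Evaluating the normalisation denominator by the Iverson summation one-point rule (Law~\ref{thm:ib_summation}) gives $1/6 + 8/3 + 1/6 = 3$, and dividing yields $1/18$, $8/9$, $1/18$ by elementary arithmetic. As a sanity check these sum to $1$, consistent with Theorem~\ref{thm:pparallel_dist} (the operands are nonnegative, summable on their final states, and reachable at $bel' = 1$), confirming the posterior is again a distribution.

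The main obstacle is the treatment of the infinite summation: although $bel$ takes values in a set that is semantically finite, the ambient state space is a priori unbounded, so the normalisation sum $\infsum v_0 @ (d * scale\_wall)[v_0/\vv']$ must first be shown to have finite support (each contributing term is captured by a point-Iverson bracket) and hence be summable before the one-point rule applies; this, together with establishing $\isnonneg(scale\_wall)$ and $\finalreachables$ so that the normalisation is well-defined and there is no division by zero, is the substantive work, the subsequent arithmetic being routine. In the mechanisation these side conditions are discharged by the relational and summation tactics of Isabelle/UTP. A fully self-contained alternative would inline all earlier posteriors and replay the same product-and-normalise computation from scratch, but reusing the prior theorem is much shorter and is the route I would take.
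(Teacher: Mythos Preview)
Your proposal is correct and follows the same incremental approach as the paper: each posterior theorem in the robot localisation section is built directly from the preceding one, and the paper presents this theorem without an in-text proof, delegating the computation (product with the likelihood, normalisation, arithmetic simplification) to the Isabelle mechanisation. Your calculation of the product and the normalising constant $3$ is exactly right, and your identification of the summability and reachability side conditions as the main mechanisation burden matches how the paper's tactics handle these examples.
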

After three sensor readings and two moves, our beliefs about the robot's position are $8/9$ at position 1 and $1/18$ at position 0 or 2. We plot the beliefs in Fig.~\ref{fig:robot_localisation_belief} where each position has six updates corresponding to the initial prior (I1), the three sensor readings (door - D2 and D4, and wall - W6), and the two moves to the right (M3 and M5).  The diagram shows that the difference between the highest and lowest probability for each update becomes big or stays the same: from 0 for I1 to 15/18 ($=8/9-1/18$) for W6. So the robot gains more knowledge in each update.  From the diagram, we are confident of (probability $8/9$) the robot's localisation after three measurements and two moves.
\makeatletter
\pgfplotsset{
  calculate offset/.code={
    \pgfkeys{/pgf/fpu=true,/pgf/fpu/output format=fixed}
    \pgfmathsetmacro\testmacro{(\pgfplotspointmeta*10^\pgfplots@data@scale@trafo@EXPONENT@y)*\pgfplots@y@veclength)}
    \pgfkeys{/pgf/fpu=false}
  },
  every node near coord/.style={
    /pgfplots/calculate offset,
    yshift=-\testmacro,
  },
  name node/.style={
    every node near coord/.append style={
      name=#1-\coordindex
    }}
}    
\pgfplotstableread{
  0 0.3333 0.4444	0.4444	0.6667	0.1667	0.0556	
  1 0.3333 0.1111	0.4444	0.1667	0.6667	0.8889	
  2 0.3333 0.4444 0.1111  0.1667  0.1667  0.0556
}\dataset
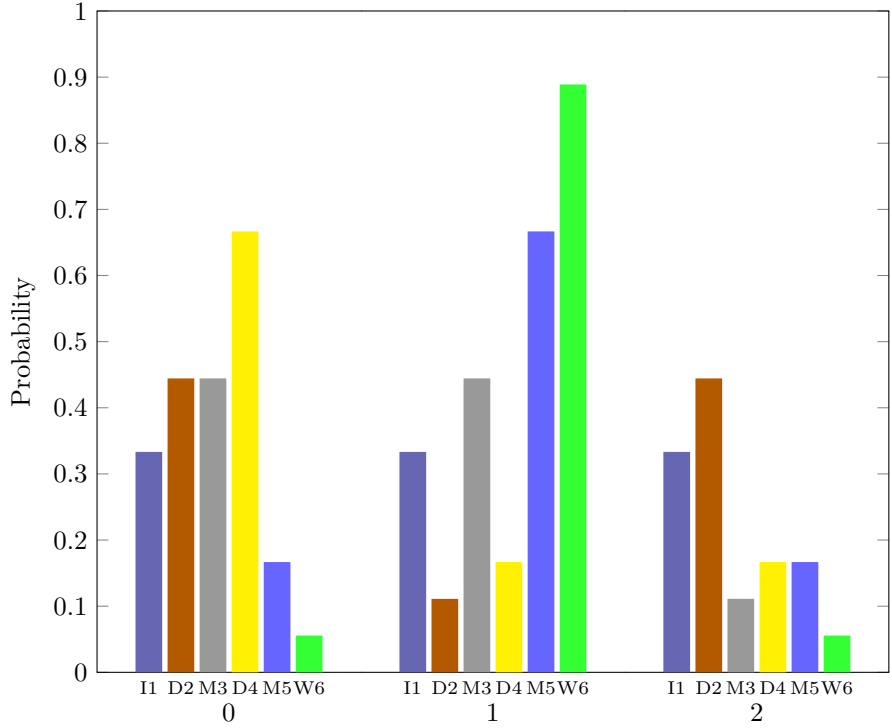
\begin{figure}[!ht]
  \begin{center}
    \begin{tikzpicture}
      \begin{axis}[ybar,
        width=12cm,
        ymin=0,
        ymax=1,
        xmin=-0.5,
        xmax=2.5,
        ylabel={Probability},
        xtick=data,
        xticklabels = {
          0,
          1,
          2,
        },
        xticklabel style={yshift=-1ex},
        major x tick style = {opacity=0},
        minor x tick num = 1,
        minor tick length=0ex,
        every node near coord/.append style={ anchor=north,font=\scriptsize }
        ]
        \addplot[draw=none,fill=blue!50!black!60, name node=1, nodes near coords=I1] table[x index=0,y index=1] \dataset; 
        \addplot[draw=none,fill=orange!70!black, name node=2, nodes near coords=D2] table[x index=0,y index=2] \dataset;
        \addplot[draw=none,fill=gray!80, name node=3, nodes near coords=M3] table[x index=0,y index=3] \dataset;
        \addplot[draw=none,fill=yellow, name node=4, nodes near coords=D4] table[x index=0,y index=4] \dataset;
        \addplot[draw=none,fill=blue!60, name node=5, nodes near coords=M5] table[x index=0,y index=5] \dataset;
        \addplot[draw=none,fill=green!80, name node=6, nodes near coords=W6] table[x index=0,y index=6] \dataset;
      \end{axis}
    \end{tikzpicture}
  \end{center}
  \caption{The update of the robot's belief at different positions after three measurements and two moves with a prior.}
  \label{fig:robot_localisation_belief}
\end{figure}

\subsection{Classification - COVID-19 diagnosis}
\label{ssec:cancer_diagnosis}


We define the state space $cdstate$ of this example as follows.
\begin{definition}[State space]
  \isalink{https://github.com/RandallYe/probabilistic_programming_utp/blob/6a4419b8674b84988065a58696f15093d176594c/probability/probabilistic_relations/Examples/machine_learning_examples/utp_prob_rel_cancer_diagnosis.thy\#L24}
  \begin{align*}
    & CovidTest ::= Pos | Neg \\ 
    &\isakwmaj{alphabet}\ cdstate = c::\bool \qquad ct::CovidTest
  \end{align*}
\end{definition}
Whether a person has COVID or not is recorded in a boolean variable $c$: true for COVID and false for no COVID. The test result is recorded in a variable $ct$ of type $CovidTest$ whose value could be $Pos$itive or $Neg$ative.

The prior probability of a randomly selected person having COVID is $p_1$ of type $\ureal$. The prior probability distribution, therefore, is a probabilistic choice, defined below.
\begin{definition}[Prior]
  $Init \defs \ppifchoice{p_1}{\passign{c}{True}}{\passign{c}{False}}$
\end{definition}
So the probability of a person having COVID is $p_1$ and having no COVID is $(1-p_1)$.

The test is imperfect. Its sensitivity (true positive) is $p_2$, and specificity (true negative) is $1-p_3$. It means if a person with COVID is tested, the probability of a positive result is $p_2$, and if a person without COVID is tested, the probability of a negative result is $1-p_3$. We, therefore, define the action of a test as below.

\begin{definition}[Test]
  $TestAction \defs \pcchoice{c}
  {(\ppchoice{p_2}{\passign{ct}{Pos}}{\passign{ct}{Neg}})}
  {(\ppchoice{p_3}{\passign{ct}{Pos}}{\passign{ct}{Neg}})}$ 
  \isalink{https://github.com/RandallYe/probabilistic_programming_utp/blob/6a4419b8674b84988065a58696f15093d176594c/probability/probabilistic_relations/Examples/machine_learning_examples/utp_prob_rel_cancer_diagnosis.thy\#L58}
\end{definition}
It is a conditional choice between two probabilistic choices, defining the probabilities of true positive, false negative, false positive, and true negative. A positive test result is a new learned evidence, defined as follows.
\begin{definition}
  $TestResPos \defs \ibracket{ct' = Pos}$ 
  \isalink{https://github.com/RandallYe/probabilistic_programming_utp/blob/6a4419b8674b84988065a58696f15093d176594c/probability/probabilistic_relations/Examples/machine_learning_examples/utp_prob_rel_cancer_diagnosis.thy\#L67}
\end{definition}
It, essentially, is an Iverson bracket expression of a relation $ct'=Pos$ stating that the test result $ct$ is positive.

We conduct the first test and learn its positive result, modelled as the program below.
\begin{definition}[First test]
  $FirstTestPos \defs \pparallel{\left(\pseq{Init}{TestAction}\right)}{TestResPos}$
  \isalink{https://github.com/RandallYe/probabilistic_programming_utp/blob/6a4419b8674b84988065a58696f15093d176594c/probability/probabilistic_relations/Examples/machine_learning_examples/utp_prob_rel_cancer_diagnosis.thy\#L110}
\end{definition}
As usual, the action $TestAction$ is sequentially composed, and the evidence is learned in parallel.  We show the posterior in the theorem below.
\begin{thm}[Posterior after the first test]
  \isalink{https://github.com/RandallYe/probabilistic_programming_utp/blob/6a4419b8674b84988065a58696f15093d176594c/probability/probabilistic_relations/Examples/machine_learning_examples/utp_prob_rel_cancer_diagnosis.thy\#L390}
  \begin{align*}
    FirstTestPos = \usexpr{\left(
    \begin{array}[]{l}
      \ibracket{c'}*\ibracket{ct'=Pos}*p_1*p_2 + \\
      \ibracket{\lnot c'}*\ibracket{ct'=Pos}*(1-p_1)*p_3
    \end{array}
    \right) / \left(p_1*p_2+(1-p_1)*p_3\right)}
  \end{align*}
\end{thm}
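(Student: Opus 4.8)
The plan is to unfold $FirstTestPos \defs \pparallel{\left(\pseq{Init}{TestAction}\right)}{TestResPos}$ and simplify from the inside out using the algebraic laws of Section~\ref{sec:program}. First I would rewrite the three building blocks as explicit real-valued functions over the state space $cdstate$ (variables $c$ and $ct$). Using Law~\ref{thm:pchoice_altdef} and the assignment rule (Definition~\ref{def:prog_assign}), $Init$ becomes $\rvprfunsym{p_1 * \ibracket{c' \land ct' = ct} + (1-p_1) * \ibracket{\lnot c' \land ct' = ct}}$. Using the conditional-as-probabilistic-choice Law~\ref{thm:cchoice_pchoice}, then Law~\ref{thm:pchoice_altdef} and Definition~\ref{def:prog_assign}, $TestAction$ becomes $\rvprfunsym{\ibracket{c} * \ibracket{c' = c} * (p_2 \ibracket{ct' = Pos} + (1-p_2)\ibracket{ct' = Neg}) + \ibracket{\lnot c} * \ibracket{c' = c} * (p_3 \ibracket{ct' = Pos} + (1-p_3)\ibracket{ct' = Neg})}$, observing that every branch assigns only to $ct$ so $c$ is preserved. $TestResPos$ is already $\ibracket{ct' = Pos}$. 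En route I would record that $Init$, $TestAction$, and hence $\pseq{Init}{TestAction}$ are distributions over their final states, via Theorems~\ref{thm:prob_prob_choice}, \ref{thm:prog_cond_choice}, \ref{thm:prob_assign_finaldist} and Law~\ref{thm:pseq_final_dist}; this matters both for applying the composition laws and because the normalisation inside parallel composition needs nonnegativity, summability, and a nonzero denominator.

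Next I would compute $\pseq{Init}{TestAction}$ from Definition~\ref{def:prog_seq}, i.e. $\infsum v_0 @ \prrvfunsym{Init}[v_0/\vv'] * \prrvfunsym{TestAction}[v_0/\vv]$ over the intermediate state $v_0 = (c_0, ct_0)$. Since $Init$ is a point mass on $c$ that leaves $ct$ at its input value, while $TestAction$ keeps $c$ fixed, the Iverson one-point rules (Theorem~\ref{thm:ib}) collapse the summation: $\ibracket{ct_0 = ct}$ disposes of the sum over $ct_0$, and the two surviving values of $c_0$ are pinned by $\ibracket{c_0}$ and $\ibracket{\lnot c_0}$. The outcome is the four-term mixture $\rvprfunsym{p_1 p_2 \ibracket{c'}\ibracket{ct' = Pos} + p_1(1-p_2)\ibracket{c'}\ibracket{ct' = Neg} + (1-p_1)p_3 \ibracket{\lnot c'}\ibracket{ct' = Pos} + (1-p_1)(1-p_3)\ibracket{\lnot c'}\ibracket{ct' = Neg}}$, i.e. the true-positive / false-negative / false-positive / true-negative branching.

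Then I would apply parallel composition, Definition~\ref{def:prog_parallel_ff}: $\pparallel{R}{T} = \rvprfunsym{\normf \usexpr{R * T}}$ with $R$ the function just obtained and $T = \ibracket{ct' = Pos}$. Multiplying by $\ibracket{ct' = Pos}$ uses $\ibracket{ct' = Pos} * \ibracket{ct' = Pos} = \ibracket{ct' = Pos}$ and, since $Pos \neq Neg$, $\ibracket{ct' = Pos} * \ibracket{ct' = Neg} = \rfzero$ (Laws~\ref{thm:ib_conj} and~\ref{thm:ib_false}), so the two $Neg$ terms vanish and the numerator of $\normf$ is $p_1 p_2 \ibracket{c'}\ibracket{ct' = Pos} + (1-p_1)p_3 \ibracket{\lnot c'}\ibracket{ct' = Pos}$. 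The denominator is this numerator summed over all final states $(c', ct')$; only $(c' = \mathit{true}, ct' = Pos)$ and $(c' = \mathit{false}, ct' = Pos)$ contribute, giving $p_1 p_2 + (1-p_1) p_3$, and the division produces exactly the claimed expression. Law~\ref{thm:pparallel_dist} additionally confirms $FirstTestPos$ is a distribution over its final states, which is what makes the right-hand side a well-formed probabilistic program.

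The main obstacle I expect is bookkeeping rather than conceptual depth: carefully tracking the $\prrvfun$/$\rvprfun$ conversions so that Theorems~\ref{thm:prrvfun_inverse} and~\ref{thm:prrvfun_inverse_ibracket} can collapse the round-trips introduced by the definitions of probabilistic and conditional choice, and then evaluating the finite summations over the two-component intermediate and final states with the Iverson one-point rules — in particular ensuring the $\normf$ denominator is nonzero, which needs the mild nondegeneracy of $p_1, p_2, p_3$ making $\ibracket{ct' = Pos}$ reachable (a $\finalreachables$/subdistribution fact). In the Isabelle/UTP mechanisation the relational tactics discharge most of this automatically; by hand, the intermediate-state arithmetic is the fiddly part.
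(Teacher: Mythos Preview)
Your proposal is correct and matches the approach the paper implicitly relies on: the paper does not spell out a proof for this theorem in the text, delegating it entirely to the Isabelle/UTP mechanisation (the \isalogo\ link), and your derivation---unfold $Init$, $TestAction$, and $TestResPos$ via Laws~\ref{thm:pchoice_altdef} and~\ref{thm:cchoice_pchoice}, compute the sequential composition by the one-point rules over the finite intermediate state, then apply $\normf$ from Definition~\ref{def:prog_parallel_ff}---is exactly the calculation that mechanisation would carry out. The only minor remark is that the nondegeneracy side condition you flag (the denominator $p_1 p_2 + (1-p_1)p_3 \neq 0$) is indeed needed but left implicit in the paper's statement as well.
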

From the theorem, we know that the probability that the person has COVID, given a positive test, is 
\begin{align*}
  &\left(p_1*p_2\right)/\left(p_1*p_2+(1-p_1)*p_3\right)
\end{align*}
Provided $p_1=0.002$, $p_2=0.89$, and $p_3=0.05$, the probability of the person having COVID is 0.0344 (much less than we may think?), and so the probability without COVID is 0.9656.

If a second test is conducted, the result is still positive, and so new evidence is learned again.
\begin{definition}[Second test]
  $SecondTestPos \defs \pparallel{\left(\pseq{FirstTestPos}{TestAction}\right)}{TestResPos}$
  \isalink{https://github.com/RandallYe/probabilistic_programming_utp/blob/6a4419b8674b84988065a58696f15093d176594c/probability/probabilistic_relations/Examples/machine_learning_examples/utp_prob_rel_cancer_diagnosis.thy\#L132}
\end{definition}
We show the posterior after the second test with the learned fact in the theorem below. 
\begin{thm}[Posterior after the second test]
  \isalink{https://github.com/RandallYe/probabilistic_programming_utp/blob/6a4419b8674b84988065a58696f15093d176594c/probability/probabilistic_relations/Examples/machine_learning_examples/utp_prob_rel_cancer_diagnosis.thy\#L661}
  \begin{align*}
    SecondTestPos = \usexpr{\left(
    \begin{array}[]{l}
      \ibracket{c'}*\ibracket{ct'=Pos}*p_1*p_2^2 + \\
      \ibracket{\lnot c'}*\ibracket{ct'=Pos}*(1-p_1)*p_3^2
    \end{array}
    \right) / \left(p_1*p_2^2+(1-p_1)*p_3^2\right)}
  \end{align*}
\end{thm}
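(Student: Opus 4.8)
The plan is to reduce $SecondTestPos$ to the earlier ``Posterior after the first test'' theorem via the algebraic laws for sequential and parallel composition, reusing that result as a black box. Unfolding the definition of $SecondTestPos$, the goal becomes an identity about $\pparallel{\left(\pseq{FirstTestPos}{TestAction}\right)}{TestResPos}$ with $TestResPos = \ibracket{ct' = Pos}$. Before the algebra, I would discharge the side-conditions the composition laws require: $FirstTestPos$ is a distribution over its final state (it is the normalisation produced by a parallel composition, so Law~\ref{thm:pparallel_dist} applies once the finiteness and reachability hypotheses are checked on the finite state space $cdstate$); $TestAction$ is a distribution, being built from assignments and conditional and probabilistic choices via Theorems~\ref{thm:prob_assign_finaldist}, \ref{thm:prog_cond_choice}, and \ref{thm:prob_prob_choice}; hence $\pseq{FirstTestPos}{TestAction}$ is a distribution by Law~\ref{thm:pseq_final_dist}, and it shares a reachable final state with $\ibracket{ct'=Pos}$, so the normalisation defining the parallel composition is well defined.

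Next I would compute $\pseq{FirstTestPos}{TestAction}$ explicitly. Converting to real-valued functions and applying the sequential-composition semantics of Definition~\ref{def:prob_programs} --- a finite summation over the intermediate state $v_0$, here the four elements of $cdstate$ --- the factor $\ibracket{ct_0 = Pos}$ carried by $FirstTestPos$ collapses the sum to the two intermediates with $ct_0 = Pos$. Since $TestAction$ does not read $ct$, for those intermediates it reduces, by Theorem~\ref{thm:prog_cond_choice} and Law~\ref{thm:pchoice_altdef}, to $p_2\ibracket{c' \land ct'=Pos} + (1-p_2)\ibracket{c' \land ct'=Neg}$ when $c_0$ is true and to the analogous expression with $p_3$ and $\lnot c'$ when $c_0$ is false. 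Multiplying, using $\ibracket{P \land Q} = \ibracket{P} * \ibracket{Q}$ (Law~\ref{thm:ib_conj}) and discarding contradictory products, yields a four-term expression all multiplied by $1 / (p_1 p_2 + (1-p_1) p_3)$ --- exactly the denominator carried over from $FirstTestPos$ --- namely the sum of $p_1 p_2^2\ibracket{c' \land ct'=Pos}$, $p_1 p_2 (1-p_2)\ibracket{c' \land ct'=Neg}$, $(1-p_1)p_3^2\ibracket{\lnot c' \land ct'=Pos}$, and $(1-p_1)p_3(1-p_3)\ibracket{\lnot c' \land ct'=Neg}$, all over that denominator.

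Then I would apply the parallel composition with $\ibracket{ct'=Pos}$, which by definition multiplies the expression above by $\ibracket{ct'=Pos}$ and normalises the product over final states. Multiplication by $\ibracket{ct'=Pos}$ annihilates the two $ct'=Neg$ terms (Law~\ref{thm:ib_conj} together with $\ibracket{ct'=Pos}*\ibracket{ct'=Neg} = \rfzero$). The normalisation denominator is the summation of what remains over all final states, which by the Iverson-bracket summation and one-point rule (Law~\ref{thm:ib_summation}) is $(p_1 p_2^2 + (1-p_1)p_3^2) / (p_1 p_2 + (1-p_1)p_3)$. The common factor $1 / (p_1 p_2 + (1-p_1)p_3)$ cancels between numerator and denominator, leaving precisely the stated right-hand side.

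The main obstacle is not a single deep step but the disciplined bookkeeping of the finite intermediate-state summation in the sequential composition: performing the substitutions through the Iverson brackets correctly, tracking which of the many Iverson-bracket products vanish, and keeping the shared factor $p_1 p_2 + (1-p_1)p_3$ visible so that it cancels cleanly during normalisation. A secondary, more routine concern is setting up the distribution and reachability premises of Law~\ref{thm:pseq_final_dist} and the parallel-composition laws before the rewriting can proceed. In the mechanisation this is exactly the kind of goal the relation tactics of Isabelle/UTP, combined with the summation laws of Theorem~\ref{thm:summation} and the Iverson-bracket laws of Theorem~\ref{thm:ib}, are meant to dispatch, so once those auxiliary lemmas are in place the proof should close by rewriting.
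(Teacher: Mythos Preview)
Your proposal is correct and follows the natural computational route that the paper's framework supports: unfold $SecondTestPos$, reuse the closed form for $FirstTestPos$, compute the sequential composition with $TestAction$ as a finite sum over the four-element intermediate state space, then apply the parallel-composition/normalisation semantics with $\ibracket{ct'=Pos}$ and cancel the common factor $p_1 p_2 + (1-p_1)p_3$. The paper itself does not display a proof for this theorem in the text --- it states the result and defers to the Isabelle/UTP mechanisation --- so your plan is essentially the manual elaboration of what the mechanised proof carries out by expanding definitions and rewriting with the Iverson-bracket, summation, and composition laws you cite.
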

From the theorem, we know that the probability that the person has COVID, given a positive test, is 
\begin{align*}
  &\left(p_1*p_2^2\right)/\left(p_1*p_2^2+(1-p_1)*p_3^2\right)
\end{align*}
Provided $p_1=0.002$, $p_2=0.89$, and $p_3=0.05$, the probability of the person having COVID is 0.3884, so the probability without COVID is 0.6116.  With the second test, it is more likely (38.84\% vs. 3.44\%) that the person may have COVID. In this case, a second test should be conducted, given the first test is positive.

\subsection{(Parametrised) coin flip}
\label{ssec:cases_coin}

The program $flip$ in Definition~\ref{def:coin_flip} is for an unbiased coin (a Bernoulli distribution with $p=1/2$), and $pflip$ below defines a parametrised program where the parameter $p$ denotes the probability to have its outcome as heads (a Bernoulli distribution with probability $p$). So it could be a biased coin.

\begin{definition}[Parametrised coin]
  $ pflip (p)\defs \pwhile{c = tl}{\ppchoice{p}{\passign{c}{hd}}{\passign{c}{tl}}} $
  \isalink{https://github.com/RandallYe/probabilistic_programming_utp/blob/6a4419b8674b84988065a58696f15093d176594c/probability/probabilistic_relations/Examples/utp_prob_rel_lattice_coin.thy\#L485}
\end{definition}

Both $flip$ and $pflip$ contain probabilistic loops. We use the unique fixed point theorem~\ref{thm:rec_unique} to give semantics to them where $P$ in the theorem is $cflip$ here for $flip$. Previously, we have shown that $cflip$ is a distribution. And obviously, the observation space $cstate$ is finite (two elements $hd$ and $tl$), so the product $cstate \cross cstate$ is also finite.  We also show that the differences in iterations from top and bottom tend to 0, which is illustrated in Fig.~\ref{fig:coin_t_prob_iteration}.
\begin{thm}
  \label{thm:cflip_iterdiff_0}
  $\forall s:cstate\cross cstate @ \left(\lambda n @ \prrvfunsym{\iterdiff\left(n, c=tl, cflip\right)}(s)\right) \tendsto 0$ 
  \isalink{https://github.com/RandallYe/probabilistic_programming_utp/blob/6a4419b8674b84988065a58696f15093d176594c/probability/probabilistic_relations/Examples/utp_prob_rel_lattice_coin.thy\#L389}
\end{thm}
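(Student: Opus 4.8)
The plan is to reduce the statement to an elementary geometric-sequence limit by first establishing a closed form for the iteration difference. Concretely, I would prove by induction on $n$ that for every $n \geq 1$,
\[
  \iterdiff\left(n, c = tl, cflip\right) = \rvprfunsym{\ibracket{c = tl}/2^{n-1}},
\]
which is well defined since $\ibracket{c=tl}/2^{n-1}$ is probabilistic for $n \geq 1$ (its values lie in $\{0,\,1/2^{n-1}\}$ and $1/2^{n-1}\leq 1$). The base case $n=1$ unfolds to $\iterdiff\left(1, c=tl, cflip\right) = \lfundiff\left(c=tl, cflip, \ufone\right) = \pcchoice{c=tl}{\left(\pseq{cflip}{\ufone}\right)}{\ufzero}$; since $cflip$ is a distribution (Lemma~\ref{thm:cflip_distribution}) we get $\pseq{cflip}{\ufone} = \ufone$ by Law~\ref{thm:pseq_one}, and the conditional between $\ufone$ and $\ufzero$ collapses to $\rvprfunsym{\ibracket{c=tl}}$ via Definition~\ref{def:prog_condchoice} and Theorem~\ref{thm:top_bot}. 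For the inductive step, from $\iterdiff\left(n+1, c=tl, cflip\right) = \pcchoice{c=tl}{\left(\pseq{cflip}{\iterdiff\left(n, c=tl, cflip\right)}\right)}{\ufzero}$ and the hypothesis, I would compute $\pseq{cflip}{\rvprfunsym{\ibracket{c=tl}/2^{n-1}}}$ by unfolding sequential composition (Definition~\ref{def:prog_seq}) with $cflip$ in the form of Law~\ref{thm:cflip_altdef}, killing the cross term with $\ibracket{v_0 = hd} * \ibracket{v_0 = tl} = \rfzero$ (Theorem~\ref{thm:ib}) and collapsing the sum with the Iverson one-point rule to get $\rvprfunsym{1/2^n}$, then re-wrapping with the conditional to obtain $\rvprfunsym{\ibracket{c=tl}/2^n}$. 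As a cross-check, Theorem~\ref{thm:iterdiff} combined with the explicit iterations \ref{thm:F_b_P_clip_bot} and \ref{thm:F_b_P_clip_top} gives the same difference, once one notes that $\left(\lfun_c\right)^n(\ufone) \geq \left(\lfun_c\right)^n(\ufzero)$ by monotonicity of $\lfun_c$, so the $\ureal$-bounded subtraction there coincides with ordinary subtraction.

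With the closed form in hand, fix $s : cstate \cross cstate$. Because $\ibracket{c=tl}/2^{n-1}$ is probabilistic, Theorem~\ref{thm:prrvfun_inverse} gives $\prrvfunsym{\left(\rvprfunsym{\ibracket{c=tl}/2^{n-1}}\right)} = \ibracket{c=tl}/2^{n-1}$, so for $n \geq 1$ the sequence is $n \mapsto \ibracket{c=tl}(s)/2^{n-1}$: the constant $0$ if the initial component of $s$ has $c = hd$, and $1/2^{n-1}$ otherwise; both tend to $0$. Formally, for $\varepsilon > 0$ choose $N \geq 1$ with $1/2^{N-1} < \varepsilon$, so that every term from index $N$ on is below $\varepsilon$ (Definition~\ref{def:tendsto}); the value at $n = 0$, namely $\iterdiff\left(0, c=tl, cflip\right)(s) = \ufone(s)$, does not fit the formula but is irrelevant since convergence depends only on the tail.

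I expect the main obstacle to be bookkeeping rather than anything conceptual: carrying the $\prrvfun$/$\rvprfun$ round-trips cleanly through the induction, discharging the summability side-conditions buried in sequential composition (immediate here because $cstate$ is finite), and, on the alternative route via Theorem~\ref{thm:iterdiff}, justifying that the pointwise $\ureal$-subtraction agrees with the honest difference. The geometric-sequence limit itself is standard and available from the Isabelle/HOL library.
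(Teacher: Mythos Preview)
Your proposal is correct and reaches the same closed form the paper establishes, namely that $\iterdiff(n,c=tl,cflip)$ equals $\rvprfunsym{\ibracket{c=tl}\cdot(1/2)^{n-1}}$ for $n\geq 1$, after which the geometric limit is immediate. The organisation differs slightly: the paper's appendix proof instantiates the general nested-sum formula for $\iterdiff$ (Theorem~\ref{thm:iterdiff_eq}) and then collapses the sums by observing that every intermediate state is forced to be $tl$, whereas you induct directly on the recursive definition of $\iterdiff$, computing $\pseq{cflip}{\rvprfunsym{\ibracket{c=tl}/2^{n-1}}}$ at each step. Your route is more self-contained for this specific example and avoids the bookkeeping of the general theorem; the paper's route pays that cost once and then reuses it (e.g.\ for the dice example). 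Your cross-check via Theorem~\ref{thm:iterdiff} together with \ref{thm:F_b_P_clip_bot} and \ref{thm:F_b_P_clip_top} is also exactly the informal argument the paper gives in the motivation section and in Fig.~\ref{fig:coin_t_prob_iteration}.
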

Additionally, $\rvprfunsym{\ibracket{c'=hd}}$ is a fixed point of the loop function.
\begin{thm}
  \label{thm:cflip_fp}
  $\lfun^{c = tl}_{cflip}\left(\rvprfunsym{\ibracket{c'=hd}}\right) = \rvprfunsym{\ibracket{c'=hd}}$
  \isalink{https://github.com/RandallYe/probabilistic_programming_utp/blob/6a4419b8674b84988065a58696f15093d176594c/probability/probabilistic_relations/Examples/utp_prob_rel_lattice_coin.thy\#L407}
\end{thm}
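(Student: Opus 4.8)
The statement claims that $\rvprfunsym{\ibracket{c'=hd}}$ solves the fixed-point equation of the coin-flip loop body, so the plan is a straightforward verification. First I would rewrite $\lfun^{c = tl}_{cflip}$ using the simplified form already derived for this loop (Law~\ref{thm:F_cflip_tl_altdef}, the ``loop body of $flip$'' identity) and substitute $X := \rvprfunsym{\ibracket{c'=hd}}$, so that the goal becomes
\begin{align*}
  &\lfun^{c = tl}_{cflip}\left(\rvprfunsym{\ibracket{c'=hd}}\right) = \\
  &\qquad \rvprfunsym{{\ibracket{c=tl}}*\prrvfunsym{\left(\pseq{cflip}{\rvprfunsym{\ibracket{c'=hd}}}\right)} + {\ibracket{c=hd}*\ibracket{c'=c}}}.
\end{align*}
This leaves two subgoals: (i) evaluate the inner sequential composition, and (ii) collapse the resulting Iverson-bracket expression.

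For (i), I would unfold $\pseq{cflip}{\rvprfunsym{\ibracket{c'=hd}}}$ by Definition~\ref{def:prog_seq}, use Theorem~\ref{thm:prrvfun_inverse_ibracket} to strip the conversions around $\ibracket{c'=hd}$, and observe that the relation $c'=hd$ contains no undashed variable, so the substitution $[v_0/\vv]$ acts as the identity on $\ibracket{c'=hd}$. Then $\ibracket{c'=hd}$ factors out of the sum over intermediate states $v_0$, leaving $\infsum v_0 @ \prrvfunsym{cflip}[v_0/\vv']$, which equals $1$ by Lemma~\ref{thm:cflip_distribution} together with Theorem~\ref{thm:final_distribtion} Law~\ref{thm:final_distribtion_1}. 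Hence $\pseq{cflip}{\rvprfunsym{\ibracket{c'=hd}}} = \rvprfunsym{\ibracket{c'=hd}}$; intuitively $\rvprfunsym{\ibracket{c'=hd}}$ is the assignment $\passign{c}{hd}$, which overwrites whatever $cflip$ produced.

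For (ii), substituting this back and simplifying $\prrvfunsym{\left(\rvprfunsym{\ibracket{c'=hd}}\right)}$ to $\ibracket{c'=hd}$ by Theorem~\ref{thm:prrvfun_inverse_ibracket} yields $\rvprfunsym{{\ibracket{c=tl}}*{\ibracket{c'=hd}} + {\ibracket{c=hd}}*{\ibracket{c'=c}}}$. Using Theorem~\ref{thm:ib} Law~\ref{thm:ib_conj} and a one-point substitution, $\ibracket{c=hd}*\ibracket{c'=c} = \ibracket{c=hd \land c'=hd} = \ibracket{c=hd}*\ibracket{c'=hd}$; factoring $\ibracket{c'=hd}$ out then gives $(\ibracket{c=tl}+\ibracket{c=hd})*\ibracket{c'=hd}$, and because the enumeration $Tcoin$ (Definition~\ref{def:coin_flip}) has exactly the two values $hd$ and $tl$, Law~\ref{thm:ib_disj} together with $\ibracket{c=tl \land c=hd} = \rfzero$ gives $\ibracket{c=tl}+\ibracket{c=hd} = \ibracket{c=tl \lor c=hd} = \rfone$. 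The bracket expression therefore collapses to $\ibracket{c'=hd}$, establishing the claim.

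The main obstacle is subgoal (i): justifying rigorously that $\ibracket{c'=hd}$ may be pulled out of the (a priori infinite) sum over intermediate states and that the remaining sum evaluates to $1$. This relies entirely on $cflip$ being a distribution over its final states (Lemma~\ref{thm:cflip_distribution}); since $cstate = \{hd,tl\}$ is finite, all summability side conditions are trivial, so in Isabelle/UTP I expect this step, and indeed the whole proof, to be largely discharged by the relational and summation tactics once the rewrites above are applied.
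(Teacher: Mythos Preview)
Your proposal is correct. The paper does not give a textual proof of this theorem (only the Isabelle link), but your argument is exactly the kind of calculation the paper's framework supports: unfolding via Law~\ref{thm:F_cflip_tl_altdef}, exploiting that $\ibracket{c'=hd}$ is independent of the initial state so that it factors out of the summation in the sequential composition, summing $\prrvfunsym{cflip}$ to $1$ by Lemma~\ref{thm:cflip_distribution}, and then collapsing the two Iverson-bracket cases using the exhaustiveness of $Tcoin=\{hd,tl\}$. Each step is justified by laws already stated in the paper, and the finiteness of $cstate$ makes all summability side conditions immediate.
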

All four assumptions of Theorem~\ref{thm:rec_unique} are now established. The $flip$, therefore, is semantically (surprisingly) just the fixed point $\rvprfunsym{\ibracket{c'=hd}}$.
\begin{thm}
  \label{thm:flip_sem}
  $flip = \rvprfunsym{\ibracket{c' = hd}}$
  \isalink{https://github.com/RandallYe/probabilistic_programming_utp/blob/6a4419b8674b84988065a58696f15093d176594c/probability/probabilistic_relations/Examples/utp_prob_rel_lattice_coin.thy\#L432}
\end{thm}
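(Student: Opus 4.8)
The plan is to obtain $flip = \rvprfunsym{\ibracket{c' = hd}}$ as a direct instance of the unique fixed-point Theorem~\ref{thm:rec_unique}, taking $P := cflip$, $b := (c = tl)$, and the candidate fixed point $fp := \rvprfunsym{\ibracket{c'=hd}}$. Unfolding Definition~\ref{def:coin_flip}, $flip$ is by definition $\pwhile{c=tl}{cflip} = \lfp~X @ \lfun^{c=tl}_{cflip}(X)$, so it suffices to discharge the four premises of Theorem~\ref{thm:rec_unique} and read off the conclusion $\pwhile{c=tl}{cflip} = fp$; the theorem simultaneously gives $\pwhiletop{c=tl}{cflip} = fp$, which we do not need here but which records that $flip$ has a genuinely unique fixed-point semantics.

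First I would discharge $\isfinaldist(\prrvfunsym{cflip})$, which is exactly Lemma~\ref{thm:cflip_distribution} (itself following from Theorem~\ref{thm:prob_assign_finaldist} together with the distribution-closure of probabilistic choice, Law~\ref{thm:pchoice_final_dist}). Next, the second premise $\finstatesasc\left(\lambda n @ \iter\left(n, c=tl, cflip, \ufzero\right)\right)$: since $Tcoin$ is a two-element enumeration, the state space $cstate$ — and hence the product space $cstate \cross cstate$ over which the iterates range — is finite, so the set $\left\{s \mid \left(\thnsup n @ \iter(n, c=tl, cflip, \ufzero)(s)\right) > \iter(0, c=tl, cflip, \ufzero)(s)\right\}$ is automatically finite. (Equivalently one could invoke the stronger Theorem~\ref{thm:rec_unique_fin} with $\finitefinal(cflip)$, which holds for the same reason.) The third premise, $\forall s @ \left(\lambda n @ \prrvfunsym{\iterdiff(n, c=tl, cflip)}(s)\right) \tendsto 0$, is Theorem~\ref{thm:cflip_iterdiff_0}, and the fourth, $\lfun^{c=tl}_{cflip}(fp) = fp$, is Theorem~\ref{thm:cflip_fp}. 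With all four in place, Theorem~\ref{thm:rec_unique} yields $flip = \pwhile{c=tl}{cflip} = \rvprfunsym{\ibracket{c'=hd}}$.

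The genuinely substantive work lives in the two cited supporting results rather than in this assembly step. Establishing Theorem~\ref{thm:cflip_fp} amounts to computing one application of $\lfun^{c=tl}_{cflip}$ to $\rvprfunsym{\ibracket{c'=hd}}$ using Law~\ref{thm:F_cflip_tl_altdef}, Definition~\ref{def:prog_seq}, the Iverson-bracket laws of Theorem~\ref{thm:ib}, and the contradiction law Theorem~\ref{thm:pseq_ibracket_contradictory}, and then checking that the resulting case split on $c = hd$ versus $c = tl$ collapses back to $\ibracket{c'=hd}$ — a finite but fiddly arithmetic simplification. Establishing Theorem~\ref{thm:cflip_iterdiff_0} uses the closed forms of the iterates \ref{thm:F_b_P_clip_bot} and \ref{thm:F_b_P_clip_top}, from which $\iterdiff$ evaluates (via Theorem~\ref{thm:iterdiff}) to the pointwise expression $\ibracket{c=tl}/2^{n-1}$, whose limit is $0$ by Definition~\ref{def:tendsto} and the convergence of $2^{-n}$.

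The main obstacle is therefore not the logical structure of this theorem — which is a clean four-premise instantiation — but keeping the real-valued/$\ureal$-valued conversions $\rvprfun$ and $\prrvfun$ straight through the iteration computations behind Theorems~\ref{thm:cflip_fp} and~\ref{thm:cflip_iterdiff_0}, discharging each intermediate summation with the Iverson summation rule Law~\ref{thm:ib_summation}. The finiteness premise is the one place where the concrete choice of a two-state example is essential: $\finstatesasc$ (or $\finitefinal$) holds here vacuously, whereas for loops with an iteration-counting time variable it would fail, which is precisely why the more general Theorem~\ref{thm:rec_unique_fin} exists.
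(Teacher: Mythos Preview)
Your proposal is correct and follows exactly the paper's approach: the paper states that ``All four assumptions of Theorem~\ref{thm:rec_unique} are now established'' and concludes immediately, having previously recorded Lemma~\ref{thm:cflip_distribution}, the finiteness of $cstate\cross cstate$, Theorem~\ref{thm:cflip_iterdiff_0}, and Theorem~\ref{thm:cflip_fp}. Your additional commentary on how the supporting results are obtained is consistent with the paper's earlier calculations.
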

The $flip$ terminates almost surely and is almost impossible for non-termination.
\begin{thm}
  \label{thm:flip_sem_termination}
  \isalink{https://github.com/RandallYe/probabilistic_programming_utp/blob/6a4419b8674b84988065a58696f15093d176594c/probability/probabilistic_relations/Examples/utp_prob_rel_lattice_coin.thy\#L447}
  \begin{align*}
    & \pseq{\prrvfunsym{flip}}{{\ibracket{c = hd}}} = {\usexpr{1}} \tag*{(termination probability)} \label{thm:coin_flip_term}\\
    & \pseq{\prrvfunsym{flip}}{{\ibracket{\lnot c = hd}}} = {\usexpr{0}} \tag*{(non-termination probability)} \label{thm:coin_flip_nonterm}
  \end{align*}
\end{thm}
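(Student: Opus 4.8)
The plan is to obtain this theorem as a short corollary of Theorem~\ref{thm:flip_sem}, which already establishes $flip = \rvprfunsym{\ibracket{c' = hd}}$. First I would unfold both sequential compositions using that theorem and then convert the probabilistic program to its real-valued form: $\prrvfunsym{flip} = \prrvfunsym{\left(\rvprfunsym{\ibracket{c' = hd}}\right)} = \ibracket{c' = hd}$ by Theorem~\ref{thm:prrvfun_inverse_ibracket} (the Iverson bracket is probabilistic, Theorem~\ref{thm:isprob_ibracket}). It is worth noting that $\ibracket{c' = hd}$ is precisely the relation computed by the assignment $\passign{c}{hd}$ (Definitions~\ref{def:prog_assign} and~\ref{def:uassign}, the alphabet $cstate$ of Definition~\ref{def:coin_flip} having only the variable $c$), so each of the two compositions is just the expected value of a one-variable predicate after executing $c := hd$.

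Next I would evaluate $\pseq{\prrvfunsym{flip}}{\ibracket{c = hd}}$ by unfolding $\fseq{}{}$ (Definition~\ref{def:prog_seq}), equivalently by Theorem~\ref{thm:prog_seq_comp}, Law~\ref{thm:pseq_ibracket}: it rewrites to the summation $\infsum v_0 @ \ibracket{(c' = hd)[v_0/\vv'] \land (c = hd)[v_0/\vv]}$, where both substituted conjuncts collapse to the same condition requiring the $c$-component of the intermediate state $v_0$ to be $hd$. Because $cstate$ consists of the single variable $c$ ranging over the two-element enumeration $Tcoin = \{hd, tl\}$, the universe of intermediate states is finite and the infinite sum reduces to a two-term sum whose value is $1$; converting back yields the constant $\usexpr{1}$. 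For the non-termination equation, exhaustiveness of $Tcoin$ makes $\lnot (c = hd)$ equivalent to $c = tl$, so the conjunction requiring the $c$-component of $v_0$ to be both $hd$ and $tl$ is unsatisfiable, every summand is $\rfzero$, the sum is $0$, and the composition collapses to $\usexpr{0}$.

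As a cross-check, one can also observe that $\ibracket{c' = hd}$ is a distribution over the final state (each initial state reaches the unique final state $c = hd$), so by Theorem~\ref{thm:prog_seq_comp}, Law~\ref{thm:pseq_one}, $\pseq{\prrvfunsym{flip}}{\rfone} = \ufone$; since $\ibracket{c = hd} + \ibracket{\lnot c = hd} = \rfone$, the two target quantities are forced to sum to $1$, so establishing either one is enough.

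I do not expect any genuine obstacle: all the substance lives in Theorem~\ref{thm:flip_sem} (and, behind it, the unique fixed-point Theorem~\ref{thm:rec_unique}), which may be assumed here. The only care points are the routine bookkeeping of the $\urealreal/\realureal$ conversions and the appeal to exhaustiveness of the enumeration $Tcoin$ to collapse the intermediate-state summation over the finite space $cstate \times cstate$ — both of which are discharged essentially automatically by the relational simplification tactics once $flip$ has been replaced by $\rvprfunsym{\ibracket{c' = hd}}$.
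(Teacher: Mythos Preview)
Your proposal is correct and follows the natural route the paper implicitly takes: the theorem is an immediate corollary of Theorem~\ref{thm:flip_sem}, obtained by rewriting $\prrvfunsym{flip}$ to $\ibracket{c'=hd}$ via Theorem~\ref{thm:prrvfun_inverse_ibracket} and then collapsing the two-element summation over $cstate$. One small remark: Law~\ref{thm:pseq_ibracket} is stated for the $\ureal$-valued composition $\pseq{}{}$ (with $\rvprfunsym{}$-wrapped arguments), whereas here the arguments $\prrvfunsym{flip}$ and $\ibracket{c=hd}$ are real-valued and the composition is really $\fseq{}{}$; the underlying calculation is identical, so your ``equivalently'' is morally right, but the cleanest justification is the direct unfolding of $\fseq{}{}$ from Definition~\ref{def:prog_seq} that you already describe.
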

This theorem shows the probability of the final state $c'$ being $hd$ is 1 and is not $hd$ is 0. This is equivalent to the termination probability.

We also show $pflip(p)$ is semantically equal to $flip$ if $p$ is not 0.
\begin{thm}
  \label{thm:pflip_sem}
  $ p \neq 0 \implies pflip(p) = \rvprfunsym{\ibracket{c' = hd}}$ 
  \isalink{https://github.com/RandallYe/probabilistic_programming_utp/blob/6a4419b8674b84988065a58696f15093d176594c/probability/probabilistic_relations/Examples/utp_prob_rel_lattice_coin.thy\#L655}
\end{thm}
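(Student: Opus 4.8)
The plan is to mirror the proof of Theorem~\ref{thm:flip_sem} for the unbiased coin, applying the unique fixed point Theorem~\ref{thm:rec_unique} with loop body $pcflip(p) \defs \ppchoice{p}{\passign{c}{hd}}{\passign{c}{tl}}$, so that $pflip(p) = \pwhile{c=tl}{pcflip(p)}$, and with candidate fixed point $fp = \rvprfunsym{\ibracket{c' = hd}}$. The work therefore reduces to discharging the four hypotheses of Theorem~\ref{thm:rec_unique}.

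First I would establish $\isfinaldist\left(\prrvfunsym{pcflip(p)}\right)$: this follows from Theorem~\ref{thm:prob_assign_finaldist} (assignments are distributions) together with Law~\ref{thm:pchoice_final_dist} of Theorem~\ref{thm:prob_prob_choice} (a probabilistic choice of distributions is a distribution), and notably does not require $p \neq 0$. Next, $\finstatesasc\left(\lambda n @ \iter\left(n, c=tl, pcflip(p), \ufzero\right)\right)$ holds trivially because the observation space $cstate$ has only the two constants $hd$ and $tl$, so $cstate \cross cstate$ is finite and any subset of it is finite. For the fixed-point hypothesis $\lfun^{c=tl}_{pcflip(p)}\left(\rvprfunsym{\ibracket{c'=hd}}\right) = \rvprfunsym{\ibracket{c'=hd}}$ I would use the $\lfunbp$ altdef~\ref{thm:lfun_altdef}: since $pcflip(p)$ is a distribution, $\pseq{pcflip(p)}{\rvprfunsym{\ibracket{c'=hd}}} = \rvprfunsym{\ibracket{c'=hd}}$ because composing a final-state distribution with a program whose output ignores its initial state just yields that output; then using $\lnot(c=tl) = (c = hd)$ (the two-element type $Tcoin$) and $\ibracket{c=hd}*\ibracket{c'=c} = \ibracket{c=hd}*\ibracket{c'=hd}$, the two summands collapse via $\ibracket{c=tl}+\ibracket{c=hd}=1$ to $\rvprfunsym{\ibracket{c'=hd}}$. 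This step is essentially the same calculation as Theorem~\ref{thm:cflip_fp}, generalised to arbitrary $p$.

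The remaining hypothesis, $\forall s @ \left(\lambda n @ \prrvfunsym{\iterdiff\left(n, c=tl, pcflip(p)\right)}(s)\right) \tendsto 0$, is the analytic core and the place where the side condition $p \neq 0$ is used. I would compute $\iterdiff\left(n, c=tl, pcflip(p)\right)$ in closed form by induction on $n$, following the pattern of the motivation example: starting from $\ufone$ and repeatedly applying $\lfundiff$ (which drops to $\ufzero$ off the guard), the difference at state $c = tl$ is driven by the probability $1-p$ of staying in the loop per iteration, giving $\iterdiff\left(n, c=tl, pcflip(p)\right) = \rvprfunsym{\ibracket{c=tl}*(1-p)^{n-1}}$ for $n \geq 1$ (and $\ufone$ for $n = 0$). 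Since $p$ has type $\ureal$, we have $1-p \in [0,1]$, and $p \neq 0$ forces $1 - p < 1$, whence $(1-p)^{n-1} \tendsto 0$; for the state $c = hd$ the difference is $0$ from $n \geq 1$ onward. I expect this convergence computation — especially handling the bounded $\ureal$ arithmetic in $\iterdiff$ and justifying the closed form by induction — to be the main obstacle, though it is a direct generalisation of Theorem~\ref{thm:cflip_iterdiff_0}. With all four hypotheses in place, Theorem~\ref{thm:rec_unique} gives $pflip(p) = \rvprfunsym{\ibracket{c'=hd}}$, completing the proof.
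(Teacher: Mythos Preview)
Your proposal is correct and follows essentially the same approach as the paper: the paper does not spell out a proof of this theorem, but it is presented as the direct parametrised analogue of Theorem~\ref{thm:flip_sem}, established via the unique fixed point Theorem~\ref{thm:rec_unique} with the same four hypotheses you identify. Your closed form $\iterdiff(n,c=tl,pcflip(p)) = \rvprfunsym{\ibracket{c=tl}*(1-p)^{n-1}}$ for $n\geq 1$ is exactly the parametrised version of the computation in the appendix proof of Theorem~\ref{thm:cflip_iterdiff_0}, and your observation that $p\neq 0$ is needed precisely to force $1-p<1$ and hence convergence of $\iterdiff$ to $0$ matches the paper's remark that $pflip(0)$ is non-terminating.
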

If $p$ is 0, we know this program $pflip$ is non-terminating because the probabilistic choice in the loop body always chooses $tl$, and so not possible to terminate.

Though $flip$ and $pflip(p)$ are semantically equal, we expect there are differences between the two programs in other aspects, such as average termination time. Consider a biased coin with probability $p=0.75$ to see heads. Then we know, on average, it needs fewer flips than an unbiased coin to see heads. In other words, $pflip(p)$ has a smaller average termination time than $flip$. This is modelled in Hehner's work~\cite{Hehner2011} by a time variable $t$ of type natural numbers to count iterations in a loop. In our language, it is defined below.

\begin{definition}[Coin flip with time]
  \isalink{https://github.com/RandallYe/probabilistic_programming_utp/blob/6a4419b8674b84988065a58696f15093d176594c/probability/probabilistic_relations/Examples/utp_prob_rel_lattice_coin.thy\#L668}
  \begin{align*}
    & \isakwmaj{alphabet}\ cstate\_t = t::\nat \qquad c::Tcoin \\
    & flip\_t \defs \pwhile{c = tl}{\pseq{\left(\ppchoice{1/2}{\passign{c}{hd}}{\passign{c}{tl}}\right)}{\passign{t}{t + 1}}} 
  \end{align*}
\end{definition}

The new state space is $cstate\_t$ with an additional variable $t$ of $\nat$, and the loop $flip\_t$ will increase $t$ by 1 in each iteration.  After the introduction of $t$, we use Theorem~\ref{thm:rec_unique_fin} to prove the semantics of $flip\_t$ is

\begin{thm}
  \label{thm:flip_t_sem}
  $ flip\_t = \rvprfunsym{\usexpr{
      \begin{array}[]{l}
        \ibracket{c=hd}*\ibracket{c'=hd}*\ibracket{t'=t}+\\
        \ibracket{c=tl}*\ibracket{c'=hd}*\ibracket{t'\geq t+1}*(1/2)^{t'-t}
      \end{array}
    }}$ 
  \isalink{https://github.com/RandallYe/probabilistic_programming_utp/blob/6a4419b8674b84988065a58696f15093d176594c/probability/probabilistic_relations/Examples/utp_prob_rel_lattice_coin.thy\#L1008}
\end{thm}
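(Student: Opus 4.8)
The plan is to identify $flip\_t$ as the least fixed point $\pwhile{b}{P}$ with $b \defs (c = tl)$ and $P$ the loop body, and to apply the unique fixed-point theorem for finitely-many final states per iteration (Theorem~\ref{thm:rec_unique_fin}) rather than Theorem~\ref{thm:rec_unique}: the overall reachable final states of $flip\_t$ are \emph{not} finite, since $t'$ is unbounded, so $\finstatesasc$ fails, but each single iteration of $P$ still produces only finitely many final states. Writing $fp$ for the claimed right-hand side, I would discharge the four premises of Theorem~\ref{thm:rec_unique_fin} and then read off $flip\_t = \pwhile{b}{P} = fp$.

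\emph{Premise 1 and 2.} For $\isfinaldist\left(\prrvfunsym{P}\right)$: the inner probabilistic choice is a distribution by the argument of Lemma~\ref{thm:cflip_distribution} (two assignments, each a distribution by Theorem~\ref{thm:prob_assign_finaldist}, combined by Theorem~\ref{thm:prob_prob_choice} Law~\ref{thm:pchoice_final_dist}); $\passign{t}{t+1}$ is a distribution by Theorem~\ref{thm:prob_assign_finaldist}; and the sequential composition of two distributions is a distribution by Theorem~\ref{thm:prog_seq_comp} Law~\ref{thm:pseq_final_dist}. For $\finitefinal(P)$: from any initial $(t, c)$ the body reaches exactly the two states $(t+1, hd)$ and $(t+1, tl)$, a finite set.

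\emph{Premise 3: pointwise convergence of the iteration difference to $0$.} Starting from $\iterdiff(0, b, P) = \ufone$ and using $\pseq{P}{\ufone} = \ufone$ (Theorem~\ref{thm:prog_seq_comp} Law~\ref{thm:pseq_one}), one gets $\iterdiff(1, b, P) = \pcchoice{b}{\ufone}{\ufzero} = \rvprfunsym{\ibracket{c = tl}}$; and since sequentially composing the body with $\rvprfunsym{\ibracket{c = tl}}$ yields the constant $1/2$ (the probability of flipping a tail, irrespective of $t$), an induction gives $\iterdiff(n, b, P) = \rvprfunsym{\ibracket{c = tl} * (1/2)^{n-1}}$ for $n \ge 1$. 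At any fixed relational state this is $(1/2)^{n-1}$ or $0$, both of which tend to $0$; alternatively one may invoke Theorem~\ref{thm:iterdiff} together with the coin-flip computation behind Theorem~\ref{thm:cflip_iterdiff_0}.

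\emph{Premise 4 (the main obstacle): $\lfunbp(fp) = fp$.} Expanding $\lfunbp$ by (\ref{thm:lfun_altdef}) with $b = (c = tl)$, $\lnot b = (c = hd)$, and $\ibracket{\II} = \ibracket{c' = c \land t' = t}$ reduces the goal to
\[\rvprfunsym{\usexpr{\ibracket{c = tl} * \prrvfunsym{\left(\pseq{P}{fp}\right)} + \ibracket{c = hd} * \ibracket{c' = c \land t' = t}}} = fp.\]
The work is computing $\pseq{P}{fp}$: by Definition~\ref{def:prog_seq} this is an infinite sum over intermediate states $v_0 = (t_0, c_0)$, but the increment in $P$ forces $t_0 = t+1$ through an Iverson bracket, so by one-point and contradiction arguments (cf.\ Theorems~\ref{thm:pseq_ibracket_agree_1_point} and~\ref{thm:pseq_ibracket_contradictory}) the sum collapses to the two branches $c_0 = hd$ and $c_0 = tl$. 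The $c_0 = hd$ branch makes $fp$ contribute $\ibracket{c' = hd \land t' = t_0}$ with $t_0 = t+1$; the $c_0 = tl$ branch makes it contribute $\ibracket{c' = hd \land t' \ge t_0 + 1} * (1/2)^{t' - t_0}$, again with $t_0 = t+1$. Re-indexing the exponent ($t' - t_0 = t' - t - 1$) and folding in the two $1/2$ weights from the probabilistic choice should reconstruct exactly the $\ibracket{c = tl} * \ibracket{c' = hd} * \ibracket{t' \ge t+1} * (1/2)^{t' - t}$ summand of $fp$, while the $\ibracket{c = hd}$ branch of $\lfunbp$ supplies the $\ibracket{c = hd} * \ibracket{c' = hd} * \ibracket{t' = t}$ summand directly. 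The delicate part is the bookkeeping of the exponent shift together with the $\prrvfunsym{}/\rvprfunsym{}$ conversions (Theorems~\ref{thm:prrvfun_inverse}, \ref{thm:prrvfun_inverse_ibracket}); the rest is routine. With all four premises established, Theorem~\ref{thm:rec_unique_fin} yields $flip\_t = fp$ (and, as a by-product, $\pwhiletop{b}{P} = fp$).
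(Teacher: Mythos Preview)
Your proposal is correct and follows the same approach as the paper: the paper explicitly states that Theorem~\ref{thm:rec_unique_fin} is used (since the timed state space makes Theorem~\ref{thm:rec_unique} inapplicable), and the four premises you identify and discharge---$\isfinaldist(\prrvfunsym{P})$, $\finitefinal(P)$, the convergence of $\iterdiff$ to $0$, and the fixed-point equation $\lfunbp(fp)=fp$---are exactly what that theorem requires. Your computations for the iteration difference and the fixed-point verification are sound and match the structure of the analogous detailed proofs elsewhere in the paper (e.g.\ Theorem~\ref{thm:dice_Ht}).
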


If the initial value of $c$ is $hd$ (that is, $c=hd$), $flip\_t$ terminates immediately ($t'=t$) and its final value of $c$ is $hd$ ($c'=hd$).  If the initial value of $c$ is $tl$ ($c=tl$), $flip\_t$ terminates ($c'=hd$) only when $t'$ is larger than or equal to $t+1$, that is, at least one flip of the coin. The probability that $flip\_t$ terminates at time $t'$ is given by $(1/2)^{t'-t}$ which can be regarded as $t'-t-1$ tails followed by heads: The termination probability of $flip\_t$ is the sum of $(1/2)^{t'-t}$ over natural numbers starting from 1: $\Sigma_{t'=t+1}^{\infty}(1/2)^{t'-t}=\Sigma_{n=1}^{\infty}(1/2)^n$. It is a geometric series with a common ratio $1/2$, and so its sum is equal to $(1/(1-(1/2))) - 1 = 1$. This is shown in the theorem below.

\begin{thm}
  \label{thm:flipt_sem_termination}
  $ \pseq{\prrvfunsym{flip\_t}}{{\ibracket{c = hd}}} = {\usexpr{1}}$
  \isalink{https://github.com/RandallYe/probabilistic_programming_utp/blob/6a4419b8674b84988065a58696f15093d176594c/probability/probabilistic_relations/Examples/utp_prob_rel_lattice_coin.thy\#L1024}
\end{thm}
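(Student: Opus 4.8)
The plan is to evaluate the sequential composition directly from the closed form of $flip\_t$ given in Theorem~\ref{thm:flip_t_sem}. That theorem states $flip\_t = \rvprfunsym{e}$ where $e$ is the bracket expression $\ibracket{c=hd}*\ibracket{c'=hd}*\ibracket{t'=t} + \ibracket{c=tl}*\ibracket{c'=hd}*\ibracket{t'\geq t+1}*(1/2)^{t'-t}$. Summing the two terms of $e$ over all final states $(c',t')$ gives $1$ for either initial value of $c$ (in the $c=hd$ branch by a one-point sum over $t'$, in the $c=tl$ branch by the geometric series $\sum_{n\geq 1}(1/2)^n = 1$), so $e$ is a distribution over the final state and in particular probabilistic; hence $\prrvfunsym{flip\_t} = \prrvfunsym{\rvprfunsym{e}} = e$ by Theorem~\ref{thm:prrvfun_inverse}.

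Next I would unfold $\pseq{\prrvfunsym{flip\_t}}{\ibracket{c=hd}}$ into the defining infinite sum over intermediate states (Definition~\ref{def:prog_seq}), namely $\infsum v_0 @ e[v_0/\vv'] * \ibracket{c=hd}[v_0/\vv]$ with $v_0 = (c_0,t_0)$ ranging over $cstate\_t \times cstate\_t$. Pushing the substitutions through the Iverson brackets turns $e[v_0/\vv']$ into $\ibracket{c=hd}*\ibracket{c_0=hd}*\ibracket{t_0=t} + \ibracket{c=tl}*\ibracket{c_0=hd}*\ibracket{t_0\geq t+1}*(1/2)^{t_0-t}$ and $\ibracket{c=hd}[v_0/\vv]$ into $\ibracket{c_0=hd}$; multiplying and using $\ibracket{c_0=hd}*\ibracket{c_0=hd} = \ibracket{c_0=hd}$ (Theorem~\ref{thm:ib}, Law~\ref{thm:ib_conj}), the summand is supported on $c_0=hd$, so the Iverson-bracket summation rule collapses the sum over $c_0$ and leaves $\infsum t_0 @ (\ibracket{c=hd}*\ibracket{t_0=t} + \ibracket{c=tl}*\ibracket{t_0\geq t+1}*(1/2)^{t_0-t})$. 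I would then case-split on the initial $c$: when $c=hd$ the summand is $\ibracket{t_0=t}$, which sums to $1$ by the one-point rule; when $c=tl$ it is $\ibracket{t_0\geq t+1}*(1/2)^{t_0-t}$, which after reindexing $n=t_0-t$ is the geometric series $\sum_{n\geq 1}(1/2)^n = 1$. In both cases the value is $1$, so the composition is the constant function $\usexpr{1}$.

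The main obstacle is the geometric-series step inside Isabelle/HOL's topological infinite-sum framework: one has to discharge summability (via Theorem~\ref{thm:summation}), reindex the sum from $t_0 \geq t+1$ to $n \geq 1$ — most cleanly by using Law~\ref{thm:ib_summation} of Theorem~\ref{thm:ib} so the index bookkeeping stays honest — and evaluate $\sum_{n\geq 1}(1/2)^n = 1$. The nested sum over the product state space and the substitution-into-Iverson-bracket rewriting are routine but fiddly; everything else is bracket algebra and the case split. An essentially equivalent but slightly shorter route, if the lemma $\isfinaldist(\prrvfunsym{flip\_t})$ has already been recorded, is to note that every term of $e$ carries a factor $\ibracket{c'=hd}$, so $\prrvfunsym{flip\_t} = \prrvfunsym{flip\_t}*\ibracket{c'=hd}$ and hence $e[v_0/\vv'] * \ibracket{c_0=hd} = e[v_0/\vv']$ for all $v_0$; the composition then collapses to $\infsum v_0 @ e[v_0/\vv']$, which is $1$ precisely because $flip\_t$ is a distribution over its final state — and that distribution fact is again the same geometric series.
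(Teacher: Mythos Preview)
Your proposal is correct and follows essentially the same route as the paper: use the closed form of $flip\_t$ from Theorem~\ref{thm:flip_t_sem}, unfold the sequential composition, case-split on the initial value of $c$, and in the $c=tl$ case reduce to the geometric series $\sum_{n\geq 1}(1/2)^n = 1$. One small slip: the intermediate state $v_0$ ranges over $cstate\_t$ (a single state with components $c$ and $t$), not over $cstate\_t \times cstate\_t$; your subsequent calculation with $v_0=(c_0,t_0)$ is consistent with the correct reading, so this does not affect the argument.
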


With $t$, we can quantify the expected value of $t$ (the number of flips on average to get the loop terminated) by sequential composition.
\begin{thm}
  \label{thm:flipt_sem_expected_t}
  $ \pseq{\prrvfunsym{flip\_t}}{t} = {\usexpr{\ibracket{c=hd}*t+\ibracket{c=tl}*{(t+2)}}}$
  \isalink{https://github.com/RandallYe/probabilistic_programming_utp/blob/6a4419b8674b84988065a58696f15093d176594c/probability/probabilistic_relations/Examples/utp_prob_rel_lattice_coin.thy\#L1081}
\end{thm}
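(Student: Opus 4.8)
The plan is to compute the expectation directly from the closed-form semantics of $flip\_t$ established in Theorem~\ref{thm:flip_t_sem}. First I would unfold the sequential-composition/expectation operator from Definition~\ref{def:prog_seq}, so that the goal reduces to the infinite sum $\infsum v_0 @ \prrvfunsym{flip\_t}[v_0/\vv'] * t[v_0/\vv]$, where $v_0$ ranges over the state space $cstate\_t$, i.e.\ over pairs $(t_0,c_0)$ with $t_0 : \nat$ and $c_0 : Tcoin$. Substituting the closed form of $flip\_t$ and collapsing the round-trip $\prrvfunsym{(\rvprfunsym{\cdots})}$ via Theorem~\ref{thm:prrvfun_inverse} (the inner Iverson-bracket expression is probabilistic: for each fixed initial $c$ it is either a two-valued indicator or a geometric weight bounded by $1$, and the two summands are mutually exclusive on $c$), and then performing the substitutions $[v_0/\vv']$ and $[v_0/\vv]$ — replacing the primed variables $c',t'$ by $c_0,t_0$ and evaluating $t[v_0/\vv]$ to $t_0$ — the summand becomes $\bigl(\ibracket{c=hd}*\ibracket{c_0=hd}*\ibracket{t_0=t}+\ibracket{c=tl}*\ibracket{c_0=hd}*\ibracket{t_0\geq t+1}*(1/2)^{t_0-t}\bigr)*t_0$. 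This is the same pattern used in the derivations of Laws~\ref{thm:pseq_one} and \ref{thm:pseq_ibracket}.

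Next I would case-split on the initial value of $c$, exploiting that $Tcoin$ has exactly the two constructors $hd$ and $tl$. In the case $c = hd$ the factor $\ibracket{c=tl}$ is $0$, so the sum reduces to $\infsum (t_0,c_0) @ \ibracket{c_0=hd}*\ibracket{t_0=t}*t_0$, which by the Iverson-bracket summation one-point rule (Theorem~\ref{thm:ib}, Law~\ref{thm:ib_summation}) evaluates to $t$. In the case $c = tl$ the factor $\ibracket{c=hd}$ is $0$, the index $c_0$ is pinned to $hd$, and the sum collapses to $\sum_{t_0 \geq t+1} (1/2)^{t_0-t}*t_0$; reindexing by $n = t_0 - t \geq 1$ turns this into $\sum_{n\geq 1}(1/2)^n (t+n)$.

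The last analytic step is to evaluate $\sum_{n\geq 1}(1/2)^n (t+n) = t\sum_{n\geq 1}(1/2)^n + \sum_{n\geq 1} n (1/2)^n$, using the geometric series $\sum_{n\geq 1}(1/2)^n = 1$ and the arithmetico-geometric series $\sum_{n\geq 1} n (1/2)^n = 2$ (the instance $\sum_{n\geq 1} n x^n = x/(1-x)^2$ at $x = 1/2$); this yields $t + 2$. Assembling the two cases gives $\ibracket{c=hd}*t + \ibracket{c=tl}*(t+2)$, exactly the claimed expression.

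I expect the main obstacle to be the analytic bookkeeping rather than the program-logic manipulation: discharging the summability side conditions needed to split and reindex the infinite sums (the hypotheses of the division, multiplication, and addition laws in Theorem~\ref{thm:summation}), and in particular proving the closed form of the arithmetico-geometric series $\sum_{n\geq 1} n x^n = x/(1-x)^2$ within Isabelle/HOL's filter-based theory of infinite sums over topological space, which is not one of the packaged laws of Theorem~\ref{thm:summation} and so needs a dedicated lemma (via differentiation of the geometric series or an Abel-summation argument). The substitution and lens manipulation over the two-variable state $cstate\_t$ is mechanical but must be handled carefully so that $t[v_0/\vv]$ is correctly projected out as the $t$-component of $v_0$; here the fact that $\isfinaldist(\prrvfunsym{flip\_t})$ (needed to justify Theorem~\ref{thm:rec_unique_fin} when deriving Theorem~\ref{thm:flip_t_sem}) already gives the probabilistic property that makes the round-trip collapse legitimate.
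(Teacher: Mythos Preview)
Your overall structure is correct and matches what the paper would do: unfold the sequential composition, substitute the closed form of $flip\_t$ from Theorem~\ref{thm:flip_t_sem}, collapse the $\prrvfunsym{}/\rvprfunsym{}$ round-trip, case-split on $c$, use the one-point rule in the $c=hd$ branch, and in the $c=tl$ branch reindex to a sum over $n\geq 1$. Your reindexing to $\sum_{n\geq 1}(1/2)^n(t+n)$ and the final arithmetic are right.

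Where you diverge from the paper is in how that last series is evaluated. You split it as $t\sum_{n\geq 1}(1/2)^n+\sum_{n\geq 1}n(1/2)^n$ and appeal to the closed form $\sum_{n\geq 1}n x^n = x/(1-x)^2$. The paper (in the detailed proof it gives only for the parametrised Theorem~\ref{thm:flipt_p_sem_expected_t}, which specialises to the present statement at $p=1/2$) instead avoids that identity entirely: it sets $f(n)=(1-\ur{p})^n\,\ur{p}\,(t+1+n)$, proves summability by the ratio test, lets $l=\sum_n f(n+1)$, and derives two expressions for $l$ --- one from the shift $\sum_n f(n)=l+f(0)$ and one from the algebraic identity $f(n+1)=(1-\ur{p})f(n)+(1-\ur{p})^{n}\ur{p}(1-\ur{p})$ together with the plain geometric series. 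Equating them gives a linear equation $l=(l+\ur{p}(t+1))(1-\ur{p})+(1-\ur{p})$ which is solved for $l$, yielding $\sum_n f(n)=t+1/\ur{p}$. This shift-and-solve trick is self-contained and sidesteps precisely the obstacle you identified: it needs only the geometric series and the summation laws of Theorem~\ref{thm:summation}, not a separate arithmetico-geometric lemma. Your route is mathematically cleaner on paper; the paper's route is what was actually mechanised and is arguably easier to push through Isabelle/HOL's \texttt{infsum} library without importing or re-proving the differentiated geometric series.
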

The expectation of $t$ given the distribution by $flip\_t$ is $t$ itself (terminate immediately) if the initial value of $c$ is $hd$, and $t+2$ (2 flips on average) otherwise.

We consider the parametrised version with $t$ where $p$ is a \ureal\ number.
\begin{definition}[Parametrised coin flip with time]
  \isalink{https://github.com/RandallYe/probabilistic_programming_utp/blob/6a4419b8674b84988065a58696f15093d176594c/probability/probabilistic_relations/Examples/utp_prob_rel_lattice_coin.thy\#L1214}
  \begin{align*}
    & pflip\_t(p)\defs \pwhile{c = tl}{\pseq{\left(\ppchoice{p}{\passign{c}{hd}}{\passign{c}{tl}}\right)}{\passign{t}{t + 1}}} 
  \end{align*}
\end{definition}

We show its semantics below.
\begin{thm}
  \label{thm:pflip_t_sem}
  $ p \neq 0 \implies pflip\_t(p) = \rvprfunsym{\usexpr{
      \begin{array}[]{l}
        \ibracket{c=hd}*\ibracket{c'=hd}*\ibracket{t'=t}+\\
        \ibracket{c=tl}*\ibracket{c'=hd}*\ibracket{t'\geq t+1}*(1-\ur{p})^{t'-t-1}*\ur{p}
      \end{array}
    }}$ 
  \isalink{https://github.com/RandallYe/probabilistic_programming_utp/blob/6a4419b8674b84988065a58696f15093d176594c/probability/probabilistic_relations/Examples/utp_prob_rel_lattice_coin.thy\#L1518}
\end{thm}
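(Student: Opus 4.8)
The plan is to instantiate the loop-reasoning machinery of Section~\ref{sec:rec}, proceeding exactly as for the unbiased timed coin in Theorem~\ref{thm:flip_t_sem}, of which this result is the $p$-parametrised generalisation. Because $cstate\_t$ carries the unbounded counter $t:\nat$, the loop has infinitely many reachable final states (one per value of $t'$), so the assumption $\finstatesasc$ of Theorem~\ref{thm:rec_unique} cannot hold; I would therefore invoke the finite-final-states-per-iteration version, Theorem~\ref{thm:rec_unique_fin}. Writing $P$ for the loop body $\pseq{\left(\ppchoice{p}{\passign{c}{hd}}{\passign{c}{tl}}\right)}{\passign{t}{t+1}}$, so that $pflip\_t(p) = \pwhile{c=tl}{P}$ by definition, the proof reduces to discharging the four hypotheses of that theorem: (i) $\isfinaldist\left(\prrvfunsym{P}\right)$; (ii) $\finitefinal(P)$; (iii) $\forall s @ \left(\lambda n @ \prrvfunsym{\iterdiff\left(n, c=tl, P\right)}(s)\right) \tendsto 0$, which is the only place the side condition $p \neq 0$ is used; and (iv) that the claimed right-hand side $fp$ is a fixed point of $\lfunbp$. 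The conclusion $pflip\_t(p) = \pwhile{c=tl}{P} = fp$ then follows.

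For (i) and (ii) I would first put the loop body into explicit Iverson-bracket form following the $cflip$ simplification: using Theorem~\ref{thm:prob_prob_choice} Law~\ref{thm:pchoice_altdef}, Definition~\ref{def:prog_assign}, Definition~\ref{def:prog_seq} and Theorem~\ref{thm:prrvfun_inverse} one obtains $P = \rvprfunsym{\usexpr{\ur{p}*\ibracket{c'=hd \land t'=t+1} + \left(1-\ur{p}\right)*\ibracket{c'=tl \land t'=t+1}}}$. That $P$ is a distribution over its final state is then immediate: each assignment is a distribution (Theorem~\ref{thm:prob_assign_finaldist}), hence so is the probabilistic choice (Theorem~\ref{thm:prob_prob_choice} Law~\ref{thm:pchoice_final_dist}), and sequential composition with $\passign{t}{t+1}$ preserves this (Theorem~\ref{thm:prog_seq_comp} Law~\ref{thm:pseq_final_dist}). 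For $\finitefinal(P)$ it suffices to note that from any initial state $P$ gives positive probability to at most the two states $(hd,t{+}1)$ and $(tl,t{+}1)$.

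For (iii) I would compute $\iterdiff(n, c=tl, P)$ by induction on $n$ from its recursive definition. The base step uses $\iterdiff(0,\dots)=\ufone$ together with $\pseq{P}{\ufone}=\ufone$ (valid since $P$ is a distribution, Theorem~\ref{thm:prog_seq_comp} Law~\ref{thm:pseq_one}); the inductive step uses the explicit form of $P$ so that the sequential composition with $\iterdiff(n{-}1,\dots)$ collapses to the single intermediate slice $t{+}1$, yielding $\prrvfunsym{\iterdiff(n, c=tl, P)} = \ibracket{c=tl}*(1-\ur{p})^{n-1}$, which is independent of the final state and consistent with Theorem~\ref{thm:iterdiff} and the dashed lines of Fig.~\ref{fig:coin_t_prob_iteration}. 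Pointwise convergence to $0$ then reduces to the geometric fact $(1-\ur{p})^{n-1} \tendsto 0$, which holds precisely because $p \neq 0$ forces $0 \leq 1-\ur{p} < 1$; if instead $p = 0$ the body always chooses $tl$, the difference is the non-vanishing constant $\ibracket{c=tl}$, and the theorem correctly fails to apply.

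The real work is (iv): showing $\lfunbp\left(fp\right) = fp$ for $fp = \rvprfunsym{\usexpr{\ibracket{c=hd}*\ibracket{c'=hd}*\ibracket{t'=t} + \ibracket{c=tl}*\ibracket{c'=hd}*\ibracket{t'\geq t+1}*(1-\ur{p})^{t'-t-1}*\ur{p}}}$ with $b$ the guard $c=tl$. Using the $\lfunbp$ altdef of Section~\ref{sec:rec} (and $\ibracket{\lnot(c=tl)}=\ibracket{c=hd}$ since $Tcoin$ is two-valued), $\lfunbp(fp) = \rvprfunsym{\usexpr{\ibracket{c=tl}*\prrvfunsym{\left(\pseq{P}{fp}\right)} + \ibracket{c=hd}*\ibracket{c'=hd \land t'=t}}}$. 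I would evaluate $\pseq{P}{fp}$ from an initial state with $c=tl$: the branch of $P$ reaching $(hd,t{+}1)$ hits the $c=hd$ clause of $fp$ and contributes $\ur{p}*\ibracket{c'=hd \land t'=t+1}$, while the branch reaching $(tl,t{+}1)$ hits the $c=tl$ clause and contributes $\ibracket{c'=hd \land t'\geq t+2}*(1-\ur{p})^{t'-t-1}*\ur{p}$; since $(1-\ur{p})^{0}=1$, the $t'=t+1$ part of the first contribution merges with the second to give exactly $\ibracket{c'=hd}*\ibracket{t'\geq t+1}*(1-\ur{p})^{t'-t-1}*\ur{p}$, i.e.\ the $c=tl$ clause of $fp$. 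Recombining with the $c=hd$ branch recovers $fp$. The main obstacle is precisely this calculation: although the infinite summation over intermediate states in Definition~\ref{def:prog_seq} collapses because $P$ only moves $t$ to $t{+}1$, the Iverson-bracket bookkeeping --- reindexing the exponent from $t'-t-2$ to $t'-t-1$ after absorbing the $(1-\ur{p})$ weight, and fusing the $t'=t+1$ one-flip case with the general geometric term --- is delicate, and is where the Isabelle proof needs the most manual guidance (normalising nested substitutions and products of brackets via Theorem~\ref{thm:ib}). Once (i)--(iv) are established, Theorem~\ref{thm:rec_unique_fin} gives $pflip\_t(p) = \pwhile{c=tl}{P} = fp$ as stated.
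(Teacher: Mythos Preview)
Your proposal is correct and follows the same approach the paper uses for the analogous unbiased case (Theorem~\ref{thm:flip_t_sem}): apply the unique fixed-point Theorem~\ref{thm:rec_unique_fin} after discharging its four premises, with $\finitefinal$ replacing $\finstatesasc$ because the time variable $t$ makes the reachable final states unbounded. Your computations for each premise---the distribution and finite-final properties of the body, the geometric decay $\ibracket{c=tl}*(1-\ur{p})^{n-1}$ of $\iterdiff$ (using $p\neq 0$), and the fixed-point verification via the reindex-and-merge of the $t'=t+1$ term into the geometric tail---are all sound and match what the mechanised proof does.
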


If $p$ is not 0, then the probability that $pflip\_t(p)$ terminates at $t'$ now is $(t'-t-1)$ tails (probability $(1-\ur{p})^{t'-t-1}$, $\ur{p}$ is the conversion of $p$ to $\real$) and followed by heads (probability $\ur{p}$) when the initial $c$ is $tl$.  The following theorem shows the program $pflip\_t(p)$ terminates almost surely.

\begin{thm}
    \label{thm:flipt_p_sem_termination}
    $ p \neq 0 \implies \pseq{\prrvfunsym{pflip\_t(p)}}{{\ibracket{c = hd}}} = {\usexpr{1}}$
    \isalink{https://github.com/RandallYe/probabilistic_programming_utp/blob/6a4419b8674b84988065a58696f15093d176594c/probability/probabilistic_relations/Examples/utp_prob_rel_lattice_coin.thy\#L1530}
\end{thm}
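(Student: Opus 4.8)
The plan is to reduce the termination query to the closed-form semantics of $pflip\_t(p)$ already computed in Theorem~\ref{thm:pflip_t_sem} and then evaluate a geometric series. First I would rewrite $\pseq{\prrvfunsym{pflip\_t(p)}}{\ibracket{c = hd}}$ by unfolding sequential composition (Definition~\ref{def:prog_seq}, using $\fseq{R}{T} \defs \usexpr{\infsum v_0 @ R[v_0/\vv'] * T[v_0/\vv]}$), so that for each initial state $(c,t)$ the value becomes the infinite sum over intermediate states $v_0 = (c_0, t_0)$ of $\prrvfunsym{pflip\_t(p)}[v_0/\vv'] * \ibracket{c = hd}[v_0/\vv]$. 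The Iverson factor $\ibracket{c = hd}[v_0/\vv]$ collapses the sum to the intermediate states with $c_0 = hd$, i.e. it picks out the final-state distribution of $pflip\_t(p)$ restricted to $c' = hd$ and sums its mass.

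Next I would substitute the explicit form from Theorem~\ref{thm:pflip_t_sem} and split on the initial value of $c$. When $c = hd$ the summand is $\ibracket{c'=hd}*\ibracket{t'=t}$, which by the one-point rule for Iverson summation (Theorem~\ref{thm:ib}, Law~\ref{thm:ib_summation}) sums to $1$. When $c = tl$ the summand is $\ibracket{c'=hd}*\ibracket{t'\geq t+1}*(1-\ur{p})^{t'-t-1}*\ur{p}$; after eliminating $c'$ by the one-point rule and reindexing $t' = t + 1 + k$ with $k \geq 0$, this becomes $\ur{p} * \sum_{k=0}^{\infty} (1-\ur{p})^{k}$. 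Here I would pull the constant $\ur{p}$ outside the sum via Theorem~\ref{thm:summation} and invoke the standard geometric-series result in Isabelle/HOL: since $p \neq 0$ and $p : \ureal$ we have $0 < \ur{p} \leq 1$, hence $0 \leq 1 - \ur{p} < 1$, so the series converges to $1/\ur{p}$ and the product is $1$. The degenerate subcase $\ur{p} = 1$ (the series is the single term $k=0$) is covered by the same conclusion but may need explicit treatment if the convergence lemma used demands strict inequality.

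The main obstacle I expect is the bookkeeping around the infinite summation: discharging the $\summable$ side-conditions before the constant-factor and reindexing manipulations, tracking the $\ureal$-to-$\real$ conversions carefully so that the geometric-series lemma (stated over $\real$ as a Banach space) is applicable, and justifying the reindexing from the domain $\{t' \mid t' \geq t+1\}$ to $\nat$, which in Isabelle usually requires exhibiting an explicit bijection. Everything else — the case split on $c$, the one-point eliminations of $c'$ and $t'$, and the final arithmetic — is routine given the algebraic laws already available. As a consistency check, this statement specialises Theorem~\ref{thm:flipt_sem_termination} at $p = 1/2$, and the argument here is simply its parametric generalisation.
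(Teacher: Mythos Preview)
Your proposal is correct and matches the paper's approach: the paper does not spell out a proof for this parametric termination theorem, but its treatment of the non-parametric case (Theorem~\ref{thm:flipt_sem_termination}) and of the expected-runtime calculation (Theorem~\ref{thm:flipt_p_sem_expected_t}) follows exactly the route you describe---substitute the closed-form semantics from Theorem~\ref{thm:pflip_t_sem}, unfold sequential composition into an infinite sum over intermediate states, eliminate $c'$ by the Iverson one-point rule, reindex $t'$ via an injective map into $\nat$, and close with the geometric series $\ur{p}\sum_{k\ge 0}(1-\ur{p})^k = 1$. The bookkeeping obstacles you flag (summability side-conditions, $\ureal$/$\real$ coercions, the explicit reindexing bijection) are precisely the ones the paper highlights in its step-by-step sketch of the expected-runtime proof.
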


Its expected termination time is $1/\ur{p}$ flips, shown below.
\begin{thm}
  \label{thm:flipt_p_sem_expected_t}
  $ p \neq 0 \implies \pseq{\prrvfunsym{pflip\_t(p)}}{t} = {\usexpr{\ibracket{c=hd}*t+\ibracket{c=tl}*{(t+1/\ur{p})}}}$
  \isalink{https://github.com/RandallYe/probabilistic_programming_utp/blob/6a4419b8674b84988065a58696f15093d176594c/probability/probabilistic_relations/Examples/utp_prob_rel_lattice_coin.thy\#L1610}
\end{thm}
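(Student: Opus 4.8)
The plan is to reduce this expected-value query to the summation of an arithmetico-geometric series, by first substituting the already-established closed form of the program. Because the hypothesis $p \neq 0$ matches the side condition of Theorem~\ref{thm:pflip_t_sem}, I would start by rewriting $pflip\_t(p)$ as the explicit distribution given there, then apply the inverse law Theorem~\ref{thm:prrvfun_inverse} (the bracket-sum expression is probabilistic, being the semantics of an almost-surely-terminating loop) to obtain $\prrvfunsym{pflip\_t(p)}$ as the real-valued function $\ibracket{c=hd}*\ibracket{c'=hd}*\ibracket{t'=t} + \ibracket{c=tl}*\ibracket{c'=hd}*\ibracket{t'\geq t+1}*(1-\ur{p})^{t'-t-1}*\ur{p}$. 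This mirrors the earlier computation of the expected $t$ for the unbiased loop in Theorem~\ref{thm:flipt_sem_expected_t}.

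Next I would unfold $\pseq{\prrvfunsym{pflip\_t(p)}}{t}$ using the real-valued sequential composition $\fseq{}{}$ from Definition~\ref{def:prog_seq}, turning it into $\infsum v_0 @ \prrvfunsym{pflip\_t(p)}[v_0/\vv'] * t[v_0/\vv]$, where $v_0$ ranges over the two-component state space and $t[v_0/\vv]$ selects its $t$-component. I would then case-split on the initial value of $c$, exploiting the $\ibracket{c=hd}$ and $\ibracket{c=tl}$ factors together with the conjunction and summation laws of Theorem~\ref{thm:ib}. In the $c=hd$ branch, the brackets $\ibracket{c'=hd}*\ibracket{t'=t}$ collapse the sum to $t$ by the one-point rule. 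In the $c=tl$ branch, after discarding the unreachable $c'\neq hd$ states, the sum becomes $\sum_{t'=t+1}^{\infty} t'\,(1-\ur{p})^{t'-t-1}\,\ur{p}$, which I must show equals $t+1/\ur{p}$.

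The remaining and main obstacle is evaluating this series. Reindexing by $k = t'-t$ (so $t' = t+k$, $k \geq 1$) and distributing, using the summability and linearity laws of Theorem~\ref{thm:summation}, gives $t\cdot\ur{p}\sum_{k\geq1}(1-\ur{p})^{k-1} + \ur{p}\sum_{k\geq1}k\,(1-\ur{p})^{k-1}$. The first factor is a geometric series summing to $1/\ur{p}$, so its term contributes $t$; the second is the arithmetico-geometric series $\sum_{k\geq1}k\,x^{k-1} = 1/(1-x)^2$ with $x = 1-\ur{p}$, so its term contributes $\ur{p}/\ur{p}^2 = 1/\ur{p}$. Since $p \in \ureal$ and $p \neq 0$ we have $x = 1-\ur{p}\in[0,1)$, so both series converge (the edge value $x=0$, i.e.\ $p=1$, is harmless: every series truncates and the formula still holds). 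I expect the Isabelle friction to be concentrated here — obtaining the closed form of $\sum_{k\geq1}k\,x^{k-1}$ (via an existing power-series / derivative-of-geometric-series lemma in the analysis library, or an Abel-summation argument) and managing the bookkeeping that restricts the $\infsum$ over the product state space to the reachable final states. Recombining the two branches with the $\ibracket{c=hd}$ / $\ibracket{c=tl}$ partition then yields the stated expression $\ibracket{c=hd}*t+\ibracket{c=tl}*(t+1/\ur{p})$.
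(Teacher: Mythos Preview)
Your proposal is correct and tracks the paper's proof closely through the first stages: substitute the closed form from Theorem~\ref{thm:pflip_t_sem}, unfold $\fseq{}{}$, case-split on $c$, and reduce the $c=tl$ branch to the series $\sum_{t' \geq t+1} t'\,(1-\ur{p})^{t'-t-1}\,\ur{p}$, which after reindexing becomes $\sum_{n \geq 0} (t+1+n)\,(1-\ur{p})^{n}\,\ur{p}$.

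The only genuine divergence is in how that series is evaluated. You split it additively into a geometric part and the arithmetico-geometric part $\sum_{k \geq 1} k\,x^{k-1} = 1/(1-x)^2$, appealing to a library closed form. The paper instead uses a shift-and-solve argument: after establishing summability by the ratio test, it writes the tail sum $\sum_n f(n+1)$ both directly (as some unknown $l$) and, via the identity $f(n+1) = (1-\ur{p})\,f(n) + (1-\ur{p})^{n+1}\ur{p}$, as $(1-\ur{p})(l + f(0)) + (1-\ur{p})$, then solves the resulting linear equation for $l$. Your route is shorter if the $\sum k\,x^{k-1}$ lemma is available in the Isabelle analysis library; the paper's route is self-contained and needs only the geometric-series sum plus the linearity and shift laws of $\infsum$, which may explain why it was preferred for the mechanisation. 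Either way the answer is $t + 1/\ur{p}$ and the recombination with the $c=hd$ branch proceeds as you describe.
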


In essence, the proof of this theorem is the calculation of the following summation.
\begin{align*}
  \left(\infsum v_0 | c_v(v_0) = hd \land Suc(t) \leq t_v(v_0) @ (1-\ur{p})^{(t_v(v_0) - Suc(t))} * \ur{p}*t_v(v_0) \right)
\end{align*}
where $c_v(v_0)$ and $t_v(v_0)$ extract the values of the variables $c$ and $t$ from the state $v_0$. The calculation involves several important steps:
\begin{enumerate}[labelindent=\parindent,leftmargin=*,label=(\arabic*).]
\item find an injective function to reindex the summation over $v_0$ into the summation over $n$ of natural numbers: $\left(\infsum n:\nat @ f(n)\right)$ where $f(n) \defs ((1-\ur{p})^n * \ur{p}* (Suc(t)+n)$;
\item prove $f(n)$ is summable using the ratio test for convergence by supplying a constant ratio $c$ that is less than 1 and a natural number $N$ so that for all numbers larger than $N$, the ratio $f(n+1)/f(n)$ is less than $c$; 
\item because $f(n)$ is summable, we can assume $f(n+1)$ sums to $l$, then $f(n)$ must sum to $l+f(0)=l+\ur{p}*Suc(t)$; 
\item alternatively, $f(n+1)=(1-\ur{p})^{n+1} * \ur{p}* (Suc(t)+n+1)=f(n)*(1-\ur{p})+(1-\ur{p})^{n} * \ur{p} * (1-\ur{p})$, and so $\left(\infsum n:\nat @ f(n+1)\right)=\left(\infsum n:\nat @ f(n)* (1-\ur{p})\right)+ \left(\infsum n:\nat @ (1-\ur{p})^{n} * \ur{p} * (1-\ur{p})\right)$;
  \begin{itemize}
  \item $\left(\infsum n:\nat @ (1-\ur{p})^{n} * \ur{p} * (1-\ur{p})\right)$ is a geometric series and equal to $\ur{p} * (1-\ur{p}) * \left(1/(1-(1-\ur{p}))\right)=1-\ur{p}$
  \end{itemize}
\item get an equation $l = (l+\ur{p}*Suc(t))*(1-\ur{p}) + (1 - \ur{p})$, solve this equation and we get the result of $l$, then we know $f(n)$ sums to $\left(t + 1/\ur{p}\right)$.
\end{enumerate}

We also note that the parameter $p$ is not present in the semantics (see Theorem~\ref{thm:pflip_sem}) of $pflip$ as long as $p$ is larger than 0. We have seen that the average termination time $1/\ur{p}$ is a function of the parameter $p$, which entitles us to reason about parametric probabilistic models intrinsically, not like approximation and limitations in probabilistic model checking~\cite{Daws2005,Hahn2011}.\footnote{\url{https://www.prismmodelchecker.org/manual/RunningPRISM/ParametricModelChecking}}


\subsection{Dice}
\label{sec:ex_cases:dice}
This example~\cite{Hehner2011} is about throwing a pair of dice till they have the same outcome. The $dice$ program is defined below.
\begin{definition}
  \label{def:dice}
  \isalink{https://github.com/RandallYe/probabilistic_programming_utp/blob/6a4419b8674b84988065a58696f15093d176594c/probability/probabilistic_relations/Examples/utp_prob_rel_lattice_dices.thy\#L41}
  \begin{align*} 
    Tdice & ::= \{1..6\}\\ 
    \isakwmaj{alphabet}\ & fdstate = d_1::Tdice \qquad d_2::Tdice \\
    throw & \defs {\pseq{\rvprfunsym{\uniformdist{d_1}{Tdice}}}{\rvprfunsym{\uniformdist{d_2}{Tdice}}}} \\
    dice & \defs \pwhile{d_1 \neq d_2}{throw}
  \end{align*}
\end{definition}
The outcome of a die is from 1 to 6 as given in $Tdice$. The observation space of this program is $fdstate$, containing two variables $d_1$ and $d_2$ of type $Tdice$, denoting the outcome of each dice in an experiment. The program $throw$ is the sequential composition of two uniform distributions to choose $d_1$ and $d_2$ independently, and $dice$ models the example: continue throwing till the outcomes of two dice are equal ($d_1=d_2$).

We use the unique fixed point theorem~\ref{thm:rec_unique} to give semantics. First, $throw$ is a distribution. 
\begin{thm}
  $\isfinaldist(\prrvfunsym{throw})$ 
  \isalink{https://github.com/RandallYe/probabilistic_programming_utp/blob/6a4419b8674b84988065a58696f15093d176594c/probability/probabilistic_relations/Examples/utp_prob_rel_lattice_dices.thy\#L409}
\end{thm}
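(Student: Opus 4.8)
The plan is to prove this compositionally, using the two structural facts that $throw$ is the sequential composition of two uniform distributions and that uniform distributions over finite non-empty sets are already known to be distributions over their final states. Concretely, set $P \defs \rvprfunsym{\uniformdist{d_1}{Tdice}}$ and $Q \defs \rvprfunsym{\uniformdist{d_2}{Tdice}}$, so that $throw = \pseq{P}{Q}$. By Theorem~\ref{thm:prog_seq_comp}, Law~\ref{thm:pseq_final_dist}, it then suffices to establish $\isfinaldist(\prrvfunsym{P})$ and $\isfinaldist(\prrvfunsym{Q})$.

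First I would discharge the side conditions on the index set: $Tdice$ is the six-element enumeration $\{1..6\}$, so $\finite(Tdice)$ and $Tdice \neq \emptyset$ hold trivially. Theorem~\ref{thm:uniform_dist}, Law~\ref{thm:uniform_finaldist}, then gives $\isfinaldist(\uniformdist{d_1}{Tdice})$ and $\isfinaldist(\uniformdist{d_2}{Tdice})$ directly, and Law~\ref{thm:uniform_prob} gives $\isprob(\uniformdist{d_1}{Tdice})$ and $\isprob(\uniformdist{d_2}{Tdice})$.

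The only remaining bookkeeping is to rewrite $\prrvfunsym{P}$ back into $\uniformdist{d_1}{Tdice}$, since $P$ is the $\ureal$-lifted form $\rvprfunsym{\uniformdist{d_1}{Tdice}}$. Here I would invoke Theorem~\ref{thm:prrvfun_inverse}: because $\uniformdist{d_1}{Tdice}$ is probabilistic, $\prrvfunsym{\left(\rvprfunsym{\uniformdist{d_1}{Tdice}}\right)} = \uniformdist{d_1}{Tdice}$, so $\isfinaldist(\prrvfunsym{P})$ is literally $\isfinaldist(\uniformdist{d_1}{Tdice})$, already established; the same argument applies to $Q$. Applying Law~\ref{thm:pseq_final_dist} then closes the proof. (An alternative, more computational route would expand $throw$ via Law~\ref{thm:uniform_pseq} into an explicit normalised Iverson-bracket expression and verify the distribution conditions by hand, but the compositional argument is shorter and more robust.)

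I do not expect a genuine obstacle here: in the Isabelle/UTP development this is essentially automatic once the uniform-distribution laws of Theorem~\ref{thm:uniform_dist}, the inverse law Theorem~\ref{thm:prrvfun_inverse}, and the preservation law Law~\ref{thm:pseq_final_dist} are in scope. The one point to watch is that the conversions $\prrvfun$ and $\rvprfun$ only round-trip when applied to probabilistic functions, so the $\isprob$ premise supplied by Law~\ref{thm:uniform_prob} (itself a consequence of $Tdice$ being finite) must be threaded through before the $\isfinaldist$ facts can be used.
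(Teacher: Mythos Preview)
Your proposal is correct and follows exactly the compositional style the paper adopts elsewhere (cf.\ the proof of Lemma~\ref{thm:cflip_distribution} for $cflip$, which chains the component distribution facts through the relevant preservation law). The paper itself does not spell out a proof for this statement, deferring instead to the Isabelle mechanisation via the \isalogo\ link; your argument via Law~\ref{thm:uniform_finaldist}, Theorem~\ref{thm:prrvfun_inverse}, and Law~\ref{thm:pseq_final_dist} is the natural explicit reconstruction.
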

Second, the observation space $Tdice$ is finite, and sct $fdstate \cross fdstate$ is also finite.
Third, the differences of iterations from top and bottom tend to 0.
\begin{thm}
  \label{thm:dice_iterdiff_0}
  $\forall s:fdstate\cross fdstate @ \left(\lambda n @ \prrvfunsym{\iterdiff\left(n, d_1 \neq d_2, throw\right)}(s)\right) \tendsto 0$ 
  \isalink{https://github.com/RandallYe/probabilistic_programming_utp/blob/6a4419b8674b84988065a58696f15093d176594c/probability/probabilistic_relations/Examples/utp_prob_rel_lattice_dices.thy\#L1221}
\end{thm}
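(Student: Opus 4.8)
The plan is to obtain a closed form for $\iterdiff(n, d_1 \neq d_2, throw)$ by induction on $n$ and then read off the limit, which turns out to be that of a geometric sequence. Unfolding Definition~\ref{def:iterdiff}, we have $\iterdiff(0, d_1\neq d_2, throw) = \ufone$ and, for $n \geq 1$, $\iterdiff(n, d_1\neq d_2, throw) = \pcchoice{d_1 \neq d_2}{\bigl(\pseq{throw}{\iterdiff(n-1, d_1\neq d_2, throw)}\bigr)}{\ufzero}$. The crucial intermediate fact is that sequencing $throw$ before a conditional of this shape collapses to a constant: for any $c$ with $0 \le c \le 1$,
\[
\pseq{throw}{\Bigl(\pcchoice{d_1 \neq d_2}{\rvprfunsym{\usexpr{c}}}{\ufzero}\Bigr)} = \rvprfunsym{\usexpr{5c/6}}.
\]
I would prove this by rewriting $throw$ as the sequential composition of the two uniform distributions of Definition~\ref{def:dice}, reassociating via Theorem~\ref{thm:prog_seq_comp} (Law~\ref{thm:pseq_assoc}/\ref{thm:pseq_assoc_subdist}), and then applying the uniform-distribution sequencing law, Theorem~\ref{thm:uniform_dist} Law~\ref{thm:uniform_pseq}, twice: summing $\ibracket{d_1 \neq d_2}$ over the six values of $d_2$ gives $5$, dividing by $\card(Tdice) = 6$ gives the constant $5/6$, and the outer sum over $d_1$ leaves a constant unchanged; scaling by $c$ yields $5c/6$. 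Throughout one must check that the bounded $\ureal$ operations $+$, $-$, $*$ agree with ordinary real arithmetic here, which holds because every value in sight lies in $[0,1]$.

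With this lemma in hand the induction is short. For the base case $n = 1$, since $throw$ is a distribution (already established for this example) Theorem~\ref{thm:prog_seq_comp} Law~\ref{thm:pseq_one} gives $\pseq{throw}{\ufone} = \ufone$, so $\iterdiff(1, d_1\neq d_2, throw) = \pcchoice{d_1 \neq d_2}{\ufone}{\ufzero}$, i.e. $\prrvfunsym{\iterdiff(1,\ldots)}(s) = \ibracket{d_1 \neq d_2}(s) = (5/6)^{0}\cdot\ibracket{d_1\neq d_2}(s)$. For the step, assume $\iterdiff(n, d_1\neq d_2, throw) = \pcchoice{d_1 \neq d_2}{\rvprfunsym{\usexpr{(5/6)^{n-1}}}}{\ufzero}$; applying the lemma with $c = (5/6)^{n-1}$ gives $\pseq{throw}{\iterdiff(n,\ldots)} = \rvprfunsym{\usexpr{(5/6)^n}}$, hence $\iterdiff(n+1, d_1\neq d_2, throw) = \pcchoice{d_1 \neq d_2}{\rvprfunsym{\usexpr{(5/6)^n}}}{\ufzero}$. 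Converting to real-valued form, $\prrvfunsym{\iterdiff(n, d_1\neq d_2, throw)}(s) = (5/6)^{n-1}\cdot\ibracket{d_1 \neq d_2}(s)$ for every $n \geq 1$ and every $s \in fdstate \cross fdstate$, so in particular $0 \le \prrvfunsym{\iterdiff(n,\ldots)}(s) \le (5/6)^{n-1}$.

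Finally, for the limit: $(5/6)^{n-1} \to 0$ because it is a geometric sequence with ratio strictly less than $1$, so by a squeeze on $0 \le \prrvfunsym{\iterdiff(n,\ldots)}(s) \le (5/6)^{n-1}$ the sequence $n \mapsto \prrvfunsym{\iterdiff(n, d_1\neq d_2, throw)}(s)$ converges to $0$ for every $s$; concretely, given $\varepsilon > 0$ of Definition~\ref{def:tendsto}, choose $N$ with $(5/6)^{N-1} < \varepsilon$ and use monotonicity of the geometric sequence. I expect the main obstacle to be the constant-collapse lemma rather than the limit: pushing the two applications of the uniform-distribution sequencing law through the conditional choice and the $\rvprfun$/$\prrvfun$ conversions, and confirming that bounded $\ureal$ arithmetic is faithful on $[0,1]$, is where the real bookkeeping lives, whereas the geometric-limit step is routine.
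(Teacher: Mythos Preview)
Your approach is correct and reaches the right closed form $\prrvfunsym{\iterdiff(n,\ldots)}(s) = \ibracket{d_1\neq d_2}(s)\cdot(5/6)^{n-1}$, and the geometric-limit step is unproblematic. One technical wrinkle: the associativity laws you cite (Laws~\ref{thm:pseq_assoc}/\ref{thm:pseq_assoc_subdist}) require all three operands to be (sub)distributions over final states, but the right operand $\pcchoice{d_1\neq d_2}{\rvprfunsym{c}}{\ufzero}$ is not---its final-state sum is $36c$ when $d_1\neq d_2$, which exceeds $1$ for the values $c=(5/6)^{n-1}$ you need. The repair is easy: first simplify $throw$ itself to the constant $1/36$ over $fdstate\times fdstate$ (the two uniform-distribution laws suffice for this, and here associativity \emph{does} apply because both uniforms are distributions), and then compute $\pseq{throw}{Q}$ directly from Definition~\ref{def:prog_seq} as a single finite sum over the $36$ intermediate states. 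Your $5c/6$ lemma then falls out immediately without reassociation.

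The paper's route (made explicit only for the analogous coin theorem in the appendix) is slightly different in shape: rather than a bespoke induction, it invokes the general nested-sum identity of Theorem~\ref{thm:iterdiff_eq}, which expresses $\iterdiff(n,b,P)$ as a depth-$n$ tower of infinite sums, specialises it to $b=(d_1\neq d_2)$ and $P=throw$, and then collapses the levels one by one, each contributing a factor $5/6$. Your constant-collapse lemma is precisely one such level, so the mathematical content coincides; your version is more self-contained (no appeal to Theorem~\ref{thm:iterdiff_eq}) while the paper's factoring amortises the induction across all the loop examples.
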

Finally, we define $H$, 
\isalink{https://github.com/RandallYe/probabilistic_programming_utp/blob/6a4419b8674b84988065a58696f15093d176594c/probability/probabilistic_relations/Examples/utp_prob_rel_lattice_dices.thy\#L389}
\begin{align*}
  H & \defs \usexpr{\ibracket{d_1 = d_2} * \ibracket{d_1' = d_1 \land d_2' = d_2} + \ibracket{d_1 \neq d_2} * \ibracket{d_1' = d_2'} / 6}
\end{align*}
and prove it is a fixed point.
\begin{thm}
  \label{thm:dice_H}
  $\lfun^{d_1\neq d_2}_{throw}\left(\rvprfunsym{H}\right) = \rvprfunsym{H}$
  \isalink{https://github.com/RandallYe/probabilistic_programming_utp/blob/6a4419b8674b84988065a58696f15093d176594c/probability/probabilistic_relations/Examples/utp_prob_rel_lattice_dices.thy\#L1239}
\end{thm}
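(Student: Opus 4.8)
The plan is to verify the fixed-point equation by direct computation, expanding the left-hand side until it visibly coincides with $\rvprfunsym{H}$. First I would rewrite the loop function using its alternative characterisation \ref{thm:lfun_altdef}, which for $b = (d_1 \neq d_2)$ and $P = throw$ gives
\begin{align*}
\lfun^{d_1\neq d_2}_{throw}\left(\rvprfunsym{H}\right) = \rvprfunsym{\usexpr{\ibracket{d_1 \neq d_2}*\prrvfunsym{\left(\pseq{throw}{\rvprfunsym{H}}\right)} + \ibracket{\lnot(d_1 \neq d_2)} * \ibracket{\II}}},
\end{align*}
where $\ibracket{\II} = \ibracket{d_1' = d_1 \land d_2' = d_2}$ for the state space $fdstate$ and $\ibracket{\lnot(d_1\neq d_2)} = \ibracket{d_1 = d_2}$. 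Since the second summand already matches the $d_1 = d_2$ branch of $H$, the whole proof reduces to establishing that $\prrvfunsym{\left(\pseq{throw}{\rvprfunsym{H}}\right)} = \usexpr{\ibracket{d_1' = d_2'}/6}$, which is exactly the $d_1\neq d_2$ branch of $H$.

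Next I would discharge the conversions. A case split on $d_1 = d_2$ shows $H$ equals either $\ibracket{d_1' = d_1 \land d_2' = d_2}$ (values in $\{0,1\}$) or $\ibracket{d_1' = d_2'}/6$ (values in $\{0,1/6\}$), so $H$ is probabilistic and by Theorem~\ref{thm:prrvfun_inverse} we have $\prrvfunsym{\left(\rvprfunsym{H}\right)} = H$. Hence, using the definition of $\pseq{}{}$, $\prrvfunsym{\left(\pseq{throw}{\rvprfunsym{H}}\right)}$ is (once we check it is probabilistic, which follows either from $throw$ and $\rvprfunsym{H}$ being distributions and Law~\ref{thm:pseq_final_dist}, or a posteriori from the closed form obtained below) equal to $\usexpr{\infsum v_0 @ \prrvfunsym{throw}[v_0/\vv'] * H[v_0/\vv]}$. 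I would then put $throw$ in closed form: using Law~\ref{thm:uniform_pseq} (or associativity via Law~\ref{thm:pseq_assoc} followed by two applications of Law~\ref{thm:uniform_pseq}, noting $\card(Tdice) = 6$), $\prrvfunsym{throw}$ is the constant $1/36$ over $Tdice\times Tdice$ (independent of the initial state). The summation thus becomes $(1/36)\,\bigl(\infsum (d_1^0,d_2^0) \in Tdice\times Tdice @ H\bigr)$ evaluated with the intermediate values substituted for $\vv$ and the target values $(d_1',d_2')$ left in the primed positions.

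The computational heart is this finite sum, handled with the Iverson-bracket laws of Theorem~\ref{thm:ib}. Splitting $H$ at the intermediate state into $\ibracket{d_1^0 = d_2^0}*\ibracket{d_1' = d_1^0 \land d_2' = d_2^0}$ and $\ibracket{d_1^0 \neq d_2^0}*\ibracket{d_1' = d_2'}/6$: the first is nonzero only at the single intermediate point $d_1^0 = d_2^0 = d_1' = d_2'$, so by an Iverson one-point argument it contributes $\ibracket{d_1' = d_2'}$; the second contributes $\ibracket{d_1' = d_2'}/6$ times the number of off-diagonal pairs in $Tdice\times Tdice$, i.e. $36 - 6 = 30$, giving $5\,\ibracket{d_1' = d_2'}$. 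The total is $6\,\ibracket{d_1' = d_2'}$, and multiplying by $1/36$ yields $\ibracket{d_1' = d_2'}/6$, as needed. Substituting this back into the expression from the first paragraph gives $\rvprfunsym{\usexpr{\ibracket{d_1 = d_2}*\ibracket{d_1' = d_1 \land d_2' = d_2} + \ibracket{d_1 \neq d_2}*\ibracket{d_1' = d_2'}/6}} = \rvprfunsym{H}$.

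The main obstacle is the intermediate-state summation: one must perform the one-point elimination for the diagonal term and justify the cardinality count $|\{(i,j)\in Tdice\times Tdice : i\neq j\}| = 30$ rigorously inside the summation-over-topological-space machinery, while tracking the $\rvprfun$/$\prrvfun$ conversions and the (here trivial, since $fdstate$ is finite) summability side conditions. Each individual step is routine, but in Isabelle this bookkeeping is where essentially all the effort of the proof lies; the algebraic skeleton above is otherwise direct.
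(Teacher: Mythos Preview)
Your proposal is correct and follows essentially the same route the paper takes (as made explicit in its proof of the closely related time-variant Theorem~\ref{thm:dice_Ht}): unfold $\lfunbp$ via \ref{thm:lfun_altdef}, peel off the $\ibracket{d_1=d_2}*\ibracket{\II}$ summand, discharge the $\rvprfun/\prrvfun$ conversions using Theorem~\ref{thm:prrvfun_inverse}, expand the sequential composition as a finite sum over the $36$ intermediate states weighted by $\prrvfunsym{throw}=1/36$, and then split that sum into the diagonal one-point contribution $\ibracket{d_1'=d_2'}$ and the $30$ off-diagonal contributions totalling $5\,\ibracket{d_1'=d_2'}$. Your arithmetic and the handling of the conversion side conditions are exactly what the mechanised proof does; there is nothing to add.
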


The $H$ gives the distribution on the final states. If initially, $d_1$ is equal to $d_2$; it has probability 1 to establish that both $d_1'$ and $d_2'$ are equal to their initial states, so they are identical too. This is the semantics of $\pskip$. However, if $d_1$ is not equal to $d_2$ initially, it has a probability $1/6$ to establish $d_1'=d_2'$. Because there are six combinations of the equal values of $d_1'$ and $d_2'$, the total probability is still 1 ($6*1/6$), so $H$ is a distribution.

All four assumptions of Theorem~\ref{thm:rec_unique} are now established. The $dice$, therefore, is semantically just the fixed point $\rvprfunsym{H}$.

\begin{thm}
  \label{thm:dice_sem}
  $dice = \rvprfunsym{H}$
  \isalink{https://github.com/RandallYe/probabilistic_programming_utp/blob/6a4419b8674b84988065a58696f15093d176594c/probability/probabilistic_relations/Examples/utp_prob_rel_lattice_dices.thy\#L1361}
\end{thm}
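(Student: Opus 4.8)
The plan is to prove $dice = \rvprfunsym{H}$ exactly as was done for the coin in Theorem~\ref{thm:flip_sem}: by a single application of the unique fixed-point theorem (Theorem~\ref{thm:rec_unique}) instantiated with $P \defs throw$, $b \defs (d_1 \neq d_2)$, and candidate fixed point $fp \defs \rvprfunsym{H}$. Since $dice$ is \emph{defined} as $\pwhile{d_1 \neq d_2}{throw}$, the conclusion $\pwhile{d_1 \neq d_2}{throw} = fp$ of that theorem is literally the statement we want, so the whole task reduces to discharging the theorem's four hypotheses.

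First I would collect the four hypotheses. (1) $\isfinaldist(\prrvfunsym{throw})$ is the distribution lemma already established above. (2) $\finstatesasc\left(\lambda n @ \iter\left(n, d_1 \neq d_2, throw, \ufzero\right)\right)$ follows because the observation space $fdstate$ --- and hence $fdstate \cross fdstate$ --- is finite (it is built from $Tdice = \{1..6\}$), so the set appearing in the definition of $\finstatesasc$ is a subset of a finite set; equivalently one could appeal to the variant Theorem~\ref{thm:rec_unique_fin} with $\finitefinal(throw)$, which is immediate for the same reason. (3) $\forall s @ \left(\lambda n @ \prrvfunsym{\iterdiff\left(n, d_1 \neq d_2, throw\right)}(s)\right) \tendsto 0$ is Theorem~\ref{thm:dice_iterdiff_0}. (4) $\lfun^{d_1 \neq d_2}_{throw}\left(\rvprfunsym{H}\right) = \rvprfunsym{H}$ is Theorem~\ref{thm:dice_H}. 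With these four in hand, Theorem~\ref{thm:rec_unique} yields $dice = \rvprfunsym{H}$ (and, as a bonus, $\pwhiletop{d_1 \neq d_2}{throw} = \rvprfunsym{H}$).

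The hard part is therefore not this final assembly, which is a one-liner, but the two ingredient lemmas it rests on --- and I would expect Theorem~\ref{thm:dice_H} (that $\rvprfunsym{H}$ is a fixed point) to be the main obstacle. Proving it means unfolding $\lfun^{d_1 \neq d_2}_{throw}(\rvprfunsym{H})$ via the $\lfunbp$-altdef law, expanding $throw$ as the sequential composition of two uniform distributions, and computing $\pseq{throw}{\rvprfunsym{H}}$ through the intermediate-state summation of Definition~\ref{def:prog_seq} (or, more cleanly, two applications of Law~\ref{thm:uniform_pseq}). One then has to collapse the resulting double sum over $(d_1, d_2)$ using the Iverson-bracket arithmetic of Theorem~\ref{thm:ib} --- the conjunction law, the summation one-point rule, and contradiction laws such as Theorem~\ref{thm:pseq_ibracket_contradictory} --- so that the fact that one throw produces six equally likely matching outcomes exactly reproduces the $1/6$ coefficient in $H$, while the $d_1 = d_2$ branch reproduces the $\pskip$-part of $H$. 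Theorem~\ref{thm:dice_iterdiff_0} is the other piece of real work: one identifies the closed form of $\iterdiff(n, d_1 \neq d_2, throw)$ by induction (it is proportional to $(5/6)^n$ on the states where initially $d_1 \neq d_2$) and then invokes convergence of a geometric sequence with ratio $5/6 < 1$. Once both are available, the theorem $dice = \rvprfunsym{H}$ follows immediately from Theorem~\ref{thm:rec_unique}.
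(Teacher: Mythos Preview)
Your proposal is correct and matches the paper's approach exactly: the paper explicitly notes that ``All four assumptions of Theorem~\ref{thm:rec_unique} are now established'' and concludes $dice = \rvprfunsym{H}$ by direct application of the unique fixed-point theorem, just as you outline. Your identification of Theorems~\ref{thm:dice_iterdiff_0} and~\ref{thm:dice_H} as the real work, with the present theorem being a one-line assembly, is also how the paper structures it.
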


The $dice$ terminates almost surely and is almost impossible for non-termination.
\begin{thm}
  \isalink{https://github.com/RandallYe/probabilistic_programming_utp/blob/6a4419b8674b84988065a58696f15093d176594c/probability/probabilistic_relations/Examples/utp_prob_rel_lattice_dices.thy\#L1379}
  \begin{align*}
    & \pseq{\prrvfunsym{dice}}{{\ibracket{d_1 = d_2}}} = {\usexpr{1}} \tag*{(termination probability)} \label{thm:dice_term}\\
    & \pseq{\prrvfunsym{dice}}{{\ibracket{d_1 \neq d_2}}} = {\usexpr{0}} \tag*{(non-termination probability)} \label{thm:dice_nonterm}
  \end{align*}
\end{thm}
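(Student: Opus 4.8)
The plan is to reduce the whole claim to arithmetic on the closed form $dice = \rvprfunsym{H}$ already established in Theorem~\ref{thm:dice_sem}, and then evaluate a single finite summation over the product state space $fdstate \cross fdstate$. First I would rewrite $dice$ as $\rvprfunsym{H}$ and expand the sequential composition via Definition~\ref{def:prog_seq}. Since $H$ is probabilistic (indeed a distribution over its final states, as noted just before Theorem~\ref{thm:dice_sem}), we have $\prrvfunsym{\left(\rvprfunsym{H}\right)} = H$ by Theorem~\ref{thm:prrvfun_inverse}, and $\prrvfunsym{\left(\rvprfunsym{\ibracket{d_1 = d_2}}\right)} = \ibracket{d_1 = d_2}$ by Theorem~\ref{thm:prrvfun_inverse_ibracket}. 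That leaves the single obligation to compute $\fseq{H}{\ibracket{d_1 = d_2}} = \left(\infsum v_0 @ H[v_0/\vv'] * \ibracket{d_1 = d_2}[v_0/\vv]\right)$, a sum over the intermediate state $v_0$ that ranges over the finite set $fdstate$.

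Next I would split $H[v_0/\vv']$ into its two summands and simplify each against the factor $\ibracket{d_1 = d_2}[v_0/\vv]$, which is the Iverson bracket of ``$d_1$ and $d_2$ agree in $v_0$''. For the point-mass summand $\ibracket{d_1 = d_2} * \ibracket{d_1'{=}d_1 \land d_2'{=}d_2}$, substitution of $v_0$ for $\vv'$ forces the $v_0$-components to equal the initial $d_1,d_2$; combined with $\ibracket{d_1 = d_2}$ this makes the agreement factor redundant (Theorem~\ref{thm:ib}, Law~\ref{thm:ib_conj}), and the one-point rule for Iverson-bracket summation collapses the sum to $\ibracket{d_1 = d_2}$. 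For the diagonal summand $\ibracket{d_1 \neq d_2} * \ibracket{d_1'{=}d_2'}/6$, multiplication by the agreement factor is idempotent under $*$ and leaves $\ibracket{d_1 \neq d_2} * \ibracket{d_1'{=}d_2'}/6$ (after substitution), and summing the bracket over the six diagonal states of $fdstate$ contributes $6 \cdot \frac{1}{6} = 1$, giving $\ibracket{d_1 \neq d_2}$. Adding the two contributions and using $\ibracket{P} + \ibracket{\lnot P} = \rfone$ (Law~\ref{thm:ib_neg}) yields $\rfone$, so, converting back through $\rvprfunsym{\cdot}$ and Theorem~\ref{thm:top_bot}, $\pseq{\prrvfunsym{dice}}{\ibracket{d_1 = d_2}} = \usexpr{1}$.

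For the non-termination equation I would run the identical computation with $\ibracket{d_1 \neq d_2}$ in the right-hand position: in the point-mass summand the $v_0$-components are forced equal, contradicting the $\ibracket{d_1' \neq d_2'}$ factor, so that term vanishes; in the diagonal summand $\ibracket{d_1'{=}d_2'} * \ibracket{d_1' \neq d_2'} = \rfzero$, so it vanishes too, leaving $\usexpr{0}$. Alternatively, since $\rvprfunsym{H}$ is a distribution over its final states, the two probabilities must sum to $1$, so the non-termination value is $1 - 1 = 0$; the direct evaluation is about as short.

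The main obstacle is bookkeeping rather than mathematical depth: correctly tracking which occurrences of $d_1, d_2$ in $H$ and in $\ibracket{d_1 = d_2}$ refer to $\vv$ versus $\vv'$ under the substitutions $[v_0/\vv']$ and $[v_0/\vv]$, and then splitting the finite summation into its one-point contribution and its diagonal contribution with the correct $\frac{1}{6}$ weighting. In Isabelle this is discharged by the relational and summation tactics together with the Iverson-bracket laws of Theorem~\ref{thm:ib}; done by hand, the only delicate step is the simultaneous simplification of the three nested Iverson brackets appearing in each summand.
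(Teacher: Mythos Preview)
Your proposal is correct and follows essentially the same route as the paper's (implicit) argument: the paper does not spell out a proof here beyond the Isabelle link, but the surrounding discussion of $H$ (just before Theorem~\ref{thm:dice_sem}) makes exactly your two-case analysis---the point-mass summand contributes $\ibracket{d_1=d_2}$ and the diagonal summand contributes $6\cdot\tfrac{1}{6}\cdot\ibracket{d_1\neq d_2}$---so the sum is $\rfone$. One small bookkeeping remark: since $\prrvfunsym{dice}$ and $\ibracket{d_1=d_2}$ are already real-valued, the composition here is the real-valued $\fseq{}{}$ (as in Theorem~\ref{thm:dwta_success}), so no final conversion through $\rvprfunsym{\cdot}$ is needed and Theorem~\ref{thm:prrvfun_inverse_ibracket} is not invoked; the result $\rfone=\usexpr{1}$ drops out directly.
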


We now consider the dice program with a time variable $t$.
\begin{definition}[Dice with time]
  \isalink{https://github.com/RandallYe/probabilistic_programming_utp/blob/6a4419b8674b84988065a58696f15093d176594c/probability/probabilistic_relations/Examples/utp_prob_rel_lattice_dices.thy\#L1430}
  \begin{align*}
    & \isakwmaj{alphabet}\ dstate\_t = t::\nat \qquad d_1::Tdice \qquad d_2::Tdice \\
    & throw\_t \defs {\pseq{\pseq{\rvprfunsym{\uniformdist{d_1}{Tdice}}}{\rvprfunsym{\uniformdist{d_2}{Tdice}}}}{\passign{t}{t + 1}}} \\
    & dice\_t \defs \pwhile{d_1 \neq d_2}{throw\_t}
  \end{align*}
\end{definition}

{

We show that 
\begin{thm}
  \label{thm:dice_pt}
  $throw\_t = \rvprfunsym{\ibracket{t'=t+1} / 36}$
  \isalink{https://github.com/RandallYe/probabilistic_programming_utp/blob/6a4419b8674b84988065a58696f15093d176594c/probability/probabilistic_relations/Examples/utp_prob_rel_lattice_dices.thy\#L1883}
  \isalink{https://github.com/RandallYe/probabilistic_programming_utp/blob/6a4419b8674b84988065a58696f15093d176594c/probability/probabilistic_relations/Examples/utp_prob_rel_lattice_dices.thy\#L1903}
\end{thm}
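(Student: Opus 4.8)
The plan is to compute $throw\_t$ directly, unfolding its definition as a sequential composition of two uniform‑distribution assignments followed by the counter increment $\passign{t}{t + 1}$, and pushing the simplification through the algebraic laws for uniform distributions and sequential composition. First I would record the elementary facts that $Tdice = \{1..6\}$ is finite, non‑empty, with $\card(Tdice) = 6$, and that each of $\rvprfunsym{\uniformdist{d_1}{Tdice}}$, $\rvprfunsym{\uniformdist{d_2}{Tdice}}$ and $\passign{t}{t + 1}$ is a distribution over its final state — the first two by Theorem~\ref{thm:uniform_dist} (Laws~\ref{thm:uniform_prob} and~\ref{thm:uniform_finaldist}) together with Theorem~\ref{thm:prrvfun_inverse}, and the last by Theorem~\ref{thm:prob_assign_finaldist}. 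By associativity of sequential composition for distributions (Theorem~\ref{thm:prog_seq_comp}, Law~\ref{thm:pseq_assoc}) this lets me regroup $throw\_t$ as $\pseq{\rvprfunsym{\uniformdist{d_1}{Tdice}}}{\bigl(\pseq{\rvprfunsym{\uniformdist{d_2}{Tdice}}}{\passign{t}{t + 1}}\bigr)}$.

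Next I would evaluate the inner composition $\pseq{\rvprfunsym{\uniformdist{d_2}{Tdice}}}{\passign{t}{t + 1}}$ with Theorem~\ref{thm:uniform_dist} Law~\ref{thm:uniform_pseq}. Unfolding $\prrvfunsym{\passign{t}{t + 1}}$ to $\ibracket{t := t + 1} = \ibracket{t' = t + 1 \land d_1' = d_1 \land d_2' = d_2}$ (Definitions~\ref{def:prog_assign} and~\ref{def:uassign}, via Theorem~\ref{thm:prrvfun_inverse_ibracket}), distributing the substitution $[v/d_2]$ through the Iverson bracket, and collapsing the resulting sum over $v \in Tdice$ by the Iverson‑bracket one‑point rule — crucially using that the final value $d_2'$ is always in $Tdice$ by typing, so the membership bracket $\ibracket{d_2' \in Tdice}$ vanishes — gives $\rvprfunsym{\ibracket{t' = t + 1 \land d_1' = d_1} / 6}$. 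Applying Law~\ref{thm:uniform_pseq} once more to the outer composition in the same way, re‑entering the real‑valued world by Theorem~\ref{thm:prrvfun_inverse} (the expression $\ibracket{t' = t + 1 \land d_1' = d_1}/6$ being probabilistic), leaves one more one‑point sum over $v \in Tdice$ for $d_1'$; after that sum collapses and a second factor of $1/6$ appears, one obtains $\rvprfunsym{\ibracket{t' = t + 1}/36}$, as required.

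The bulk of the work — and the step most prone to slips — is the summation bookkeeping: repeatedly converting between $\prfun$ and $\rvfun$ via Theorems~\ref{thm:prrvfun_inverse} and~\ref{thm:prrvfun_inverse_ibracket}, tracking which occurrences of $d_1$, $d_2$, $t$ are dashed versus undashed under the substitutions $[v/x]$ and $[v_0/\vv]$, and discharging the one‑point collapses (noting the typing facts $d_1', d_2' \in Tdice$ that make the membership brackets disappear). None of this is deep; in the mechanisation it is dispatched by the relational and infinite‑summation tactics of Isabelle/UTP together with the laws of Theorem~\ref{thm:summation}, and the side conditions (finiteness and non‑emptiness of $Tdice$, and the distribution property needed for associativity) are immediate. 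An alternative route avoids the associativity step entirely by first computing $\pseq{\rvprfunsym{\uniformdist{d_1}{Tdice}}}{\rvprfunsym{\uniformdist{d_2}{Tdice}}} = \rvprfunsym{\ibracket{t' = t}/36}$ and then composing with $\passign{t}{t + 1}$ using Theorem~\ref{thm:prog_seq_comp} Law~\ref{thm:pseq_ibracket}, after factoring the scalar $1/36$ out of $\fseq{}{}$; it reaches the same answer.
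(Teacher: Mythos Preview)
Your proposal is correct. The paper does not give an explicit written proof of this theorem (only Isabelle links), but your argument is exactly the natural one using the paper's own machinery: associativity of sequential composition for distributions followed by two applications of Law~\ref{thm:uniform_pseq}, with the one-point collapses justified by the typing constraint $d_i' \in Tdice$. The alternative route you mention (first simplifying the two uniform draws to $\rvprfunsym{\ibracket{t'=t}/36}$ and then composing with $\passign{t}{t+1}$) matches the left-associated bracketing of the definition of $throw\_t$ and is arguably slightly more direct.
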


The ${\ibracket{t'=t+1} / 36}$ is a distribution.
\begin{thm}
  \label{thm:dice_pt_distr}
  $\isfinaldist(\ibracket{t'=t+1} / 36)$
  \isalink{https://github.com/RandallYe/probabilistic_programming_utp/blob/6a4419b8674b84988065a58696f15093d176594c/probability/probabilistic_relations/Examples/utp_prob_rel_lattice_dices.thy\#L1941}
\end{thm}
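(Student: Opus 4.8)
By Definition~\ref{def:isfinaldist}, the goal $\isfinaldist(\ibracket{t'=t+1} / 36)$ unfolds to showing that for every initial state $s$ of $dstate\_t$ the curried function $\curried{(\ibracket{t'=t+1}/36)}(s)$ is probabilistic and that $\infsum s' @ \ibracket{t'=t+1}(s,s')/36 = 1$. The plan is to avoid carrying out this summation by hand, and instead to derive the result from the fact that $throw\_t$ (Definition~\ref{def:dice}) is a distribution together with Theorem~\ref{thm:dice_pt}, which already identifies $throw\_t$ with $\rvprfunsym{\ibracket{t'=t+1}/36}$.

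First I would establish $\isfinaldist(\prrvfunsym{throw\_t})$ compositionally. Since $throw\_t \defs \pseq{\pseq{\rvprfunsym{\uniformdist{d_1}{Tdice}}}{\rvprfunsym{\uniformdist{d_2}{Tdice}}}}{\passign{t}{t+1}}$ and $Tdice$ is finite and non-empty, each uniform block $\uniformdist{d_i}{Tdice}$ is a distribution by Law~\ref{thm:uniform_finaldist}, the assignment $\passign{t}{t+1}$ is a distribution by Theorem~\ref{thm:prob_assign_finaldist}, and sequential composition of distributions is again a distribution by Law~\ref{thm:pseq_final_dist}. Applying Law~\ref{thm:pseq_final_dist} twice (first to the inner sequence of the two uniform blocks, then to its composition with the assignment) yields $\isfinaldist(\prrvfunsym{throw\_t})$.

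Next I would rewrite $\prrvfunsym{throw\_t}$. By Theorem~\ref{thm:dice_pt}, $throw\_t = \rvprfunsym{\ibracket{t'=t+1}/36}$, hence $\prrvfunsym{throw\_t} = \prrvfunsym{(\rvprfunsym{\ibracket{t'=t+1}/36})}$. To collapse this double conversion via Theorem~\ref{thm:prrvfun_inverse} I need $\isprob(\ibracket{t'=t+1}/36)$, which is immediate: $\ibracket{t'=t+1}$ takes only the values $0$ and $1$ (Theorem~\ref{thm:isprob_ibracket}), so dividing by $36$ yields values in $\{0, 1/36\} \subseteq [0,1]$. Theorem~\ref{thm:prrvfun_inverse} then gives $\prrvfunsym{(\rvprfunsym{\ibracket{t'=t+1}/36})} = \ibracket{t'=t+1}/36$, so the established $\isfinaldist(\prrvfunsym{throw\_t})$ is literally $\isfinaldist(\ibracket{t'=t+1}/36)$, which is the claim.

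The only step needing some care is threading the side conditions of Law~\ref{thm:pseq_final_dist} (and Law~\ref{thm:uniform_finaldist}) through the two nesting levels of $throw\_t$; this is routine. If one instead preferred a direct proof, the main obstacle would be the summation $\infsum s' @ \ibracket{t'=t+1}/36 = 1$: one would pull out the constant factor $1/36$ using Theorem~\ref{thm:summation} (division by constant), which also requires summability — available because only finitely many final states carry nonzero value — then split the product state space of $dstate\_t$ into its $t'$, $d_1'$, $d_2'$ components, apply the one-point rule on $t'$ and the Iverson summation Law~\ref{thm:ib_summation} to reduce $\infsum s' @ \ibracket{t'=t+1}$ to $\card(Tdice)^2 = 36$, giving $36/36 = 1$; probabilisticity is as in the paragraph above. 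I would favour the indirect route since it reuses the already-proven compositional distribution laws and Theorem~\ref{thm:dice_pt}.
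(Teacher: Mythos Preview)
Your proposal is correct. The paper does not spell out a proof of this theorem in the text (it simply points to the Isabelle mechanisation), but the surrounding dependency structure indicates the paper takes the \emph{direct} route rather than your indirect one: immediately after this theorem, the paper proves Theorem~\ref{thm:dice_pt_inverse} ($\prrvfunsym{throw\_t} = \ibracket{t'=t+1}/36$) and explicitly cites Theorem~\ref{thm:dice_pt_distr} as an ingredient, using it (via Theorem~\ref{thm:final_distribtion}) to obtain the $\isprob$ premise needed for Theorem~\ref{thm:prrvfun_inverse}. Your indirect route reverses this dependency: you first establish $\isfinaldist(\prrvfunsym{throw\_t})$ compositionally and prove $\isprob(\ibracket{t'=t+1}/36)$ by inspection, which together yield $\prrvfunsym{throw\_t} = \ibracket{t'=t+1}/36$ --- i.e.\ you obtain Theorem~\ref{thm:dice_pt_inverse} as a by-product and then read off Theorem~\ref{thm:dice_pt_distr}. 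Both orderings are sound; yours avoids summing over the $36$ final $(d_1',d_2')$ pairs by leaning on the already-available closure laws (Laws~\ref{thm:uniform_finaldist}, \ref{thm:pseq_final_dist}, Theorem~\ref{thm:prob_assign_finaldist}), while the paper's ordering keeps Theorem~\ref{thm:dice_pt_distr} self-contained so it can serve as the input to Theorem~\ref{thm:dice_pt_inverse}. Your direct sketch in the final paragraph matches what the mechanised proof presumably does. One small citation slip: $throw\_t$ lives in the ``Dice with time'' definition, not Definition~\ref{def:dice}.
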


So the conversion of $throw\_t$ to real-valued functions is just the distribution.
\begin{thm}
  \label{thm:dice_pt_inverse}
  ${\prrvfunsym{throw\_t}} = {\ibracket{t'=t+1} / 36}$
  \isalink{https://github.com/RandallYe/probabilistic_programming_utp/blob/6a4419b8674b84988065a58696f15093d176594c/probability/probabilistic_relations/Examples/utp_prob_rel_lattice_dices.thy\#L1952}
\end{thm}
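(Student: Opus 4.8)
The plan is to reduce the statement to facts already established about $throw\_t$ rather than unfolding its definition as the sequential composition of two uniform distributions and the time increment. By Theorem~\ref{thm:dice_pt} we have $throw\_t = \rvprfunsym{\ibracket{t'=t+1} / 36}$, so applying $\prrvfun$ to both sides immediately gives
\[
\prrvfunsym{throw\_t} = \prrvfunsym{\left(\rvprfunsym{\ibracket{t'=t+1} / 36}\right)}.
\]
It then remains only to collapse the round trip $\prrvfun \circ \rvprfun$ on the right-hand side, which by Theorem~\ref{thm:prrvfun_inverse} holds exactly when its argument is probabilistic.

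So the key step is to discharge $\isprob\left(\ibracket{t'=t+1} / 36\right)$. I would obtain this from Theorem~\ref{thm:dice_pt_distr}, which states $\isfinaldist\left(\ibracket{t'=t+1} / 36\right)$, together with Theorem~\ref{thm:final_distribtion} (Law~\ref{thm:final_distribtion_1}), whose first conjunct is precisely that a distribution over the final states is probabilistic. Alternatively, one can argue directly: $\ibracket{t'=t+1}$ takes only the values $0$ and $1$, so dividing by $36$ yields values in $\{0,\tfrac{1}{36}\} \subseteq [0,1]$, making the tautology $\tautology{p \geq 0 \land p \leq 1}$ routine. Either route suffices.

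Chaining these together: from $\isprob\left(\ibracket{t'=t+1} / 36\right)$ and Theorem~\ref{thm:prrvfun_inverse} we get $\prrvfunsym{\left(\rvprfunsym{\ibracket{t'=t+1} / 36}\right)} = \ibracket{t'=t+1} / 36$, and combining this with the rewriting by Theorem~\ref{thm:dice_pt} yields $\prrvfunsym{throw\_t} = \ibracket{t'=t+1} / 36$, as required.

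I do not anticipate a genuine obstacle here; the only point needing care is to cite the correct lemma linking distributions to probabilistic expressions (Theorem~\ref{thm:final_distribtion}) so that the hypothesis of the inverse theorem (Theorem~\ref{thm:prrvfun_inverse}) is satisfied. In Isabelle this should amount to a two- or three-line proof once Theorems~\ref{thm:dice_pt} and~\ref{thm:dice_pt_distr} are available.
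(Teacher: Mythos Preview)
Your proposal is correct and matches the paper's own proof essentially verbatim: the paper's one-line justification cites exactly Theorems~\ref{thm:prrvfun_inverse}, \ref{thm:dice_pt_distr}, and \ref{thm:final_distribtion}, which is precisely the chain you spell out (with Theorem~\ref{thm:dice_pt} as the implicit starting rewrite). Your optional direct argument for $\isprob$ is a harmless alternative but not needed.
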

\begin{proof}
   This can be proved using Theorems~\ref{thm:prrvfun_inverse}, \ref{thm:dice_pt_distr}, and \ref{thm:final_distribtion}.
\end{proof}
}

We define $Ht$.
\begin{definition}[$Ht$]
    \label{def:Ht}
\begin{align*}
  Ht & \defs \usexpr{
       \begin{array}[]{l}
         \ibracket{d_1 = d_2} * \ibracket{t'=t \land d_1' = d_1 \land d_2' = d_2} + \\
         \ibracket{d_1 \neq d_2} * \ibracket{d_1' = d_2'} * \ibracket{t'\geq t+1}*(5/6)^{t'-t-1}*(1/36)
       \end{array}}
  \tag*{\isalink{https://github.com/RandallYe/probabilistic_programming_utp/blob/6a4419b8674b84988065a58696f15093d176594c/probability/probabilistic_relations/Examples/utp_prob_rel_lattice_dices.thy\#L1686}}
\end{align*}
\end{definition}

{
    The $Ht$ is a distribution. 
\begin{thm}
  \label{thm:dice_ht_distr}
  $\isfinaldist(Ht)$
  \isalink{https://github.com/RandallYe/probabilistic_programming_utp/blob/6a4419b8674b84988065a58696f15093d176594c/probability/probabilistic_relations/Examples/utp_prob_rel_lattice_dices.thy\#L2018}
\end{thm}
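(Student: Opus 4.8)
The plan is to unfold $\isfinaldist(Ht)$ via Definition~\ref{def:isdist} into its two obligations: for every initial state $s$, (i) $\curried{Ht}(s)$ is probabilistic, i.e.\ takes values in $[0,1]$, and (ii) $\infsum s' @ Ht(s,s') = 1$. The key structural observation is that $Ht$ (Definition~\ref{def:Ht}) is a sum of two summands guarded by the complementary Iverson brackets $\ibracket{d_1 = d_2}$ and $\ibracket{d_1 \neq d_2}$ on the \emph{initial} state, so exactly one summand is active for any given $s$; the whole argument therefore splits cleanly on whether $d_1 = d_2$ holds initially. A direct computation, rather than any appeal to the semantics of $dice\_t$, is needed here, since $\isfinaldist(Ht)$ is precisely a prerequisite for later identifying $Ht$ with that semantics.

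For obligation (i), each summand is a product of Iverson brackets, which are probabilistic by Theorem~\ref{thm:isprob_ibracket}, together with the factor $(5/6)^{t'-t-1}$ and the constant $1/36$. Since the exponent $t'-t-1$ is nonnegative whenever the guard $\ibracket{t' \geq t+1}$ equals $1$ (and the whole term is $0$ otherwise), $(5/6)^{t'-t-1} \in [0,1]$; products of $[0,1]$-valued quantities stay in $[0,1]$, and because the two initial-state guards are mutually exclusive the sum of the two summands also stays in $[0,1]$. This part is routine arithmetic reasoning.

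For obligation (ii) I would case split on the initial state. If $d_1 = d_2$, then $\curried{Ht}(s)$ collapses to $\ibracket{t'=t \land d_1'=d_1 \land d_2'=d_2}$, a point mass whose sum over the final state space is $1$ by the one-point rule for Iverson-bracket summations (Theorem~\ref{thm:ib}, Law~\ref{thm:ib_summation}). If $d_1 \neq d_2$, then $\curried{Ht}(s) = \ibracket{d_1'=d_2'} * \ibracket{t' \geq t+1} * (5/6)^{t'-t-1} * (1/36)$. I would first sum over $d_1'$ and $d_2'$: the constraint $d_1'=d_2'$ leaves six surviving combinations, each contributing the same value, so this factor of six reduces the remaining task to evaluating $(6/36)\sum_{t' \geq t+1}(5/6)^{t'-t-1} = (1/6)\sum_{n\geq 0}(5/6)^n$. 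I would reindex the abstract summation over states using an explicit injection sending $n$ to the final state with $t' = t+1+n$ (and the fixed diagonal values of $d_1', d_2'$), establish summability of the geometric family via its ratio $5/6 < 1$, pull the constant $1/6$ outside with the summation laws of Theorem~\ref{thm:summation}, and evaluate the geometric sum to $6$, giving $(1/6)\cdot 6 = 1$.

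The main obstacle is the infinite geometric sum over $t'$. Because $\infsum$ returns $0$ on non-summable families, summability must be discharged \emph{before} the constant can be extracted and the closed form applied, and the abstract state-space summation must be transferred to a summation over $\nat$ through an explicit injective reindexing — the same technique used in the proof of Theorem~\ref{thm:flipt_p_sem_expected_t}. Once summability and the reindexing are in place, the remainder (the nonnegativity-and-upper-bound argument and the $d_1 = d_2$ point-mass case) is mechanical.
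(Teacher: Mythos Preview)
Your proposal is correct and follows the natural route: case-split on the complementary initial-state guards, handle the $d_1=d_2$ branch as a point mass, and in the $d_1\neq d_2$ branch collect the six diagonal pairs and evaluate the geometric series in $t'$. The paper does not spell out a textual proof for this theorem at all---it simply states the result and defers to the Isabelle mechanisation via the \isalogo\ link---so there is nothing to compare at the level of a written argument; your outline is exactly the computation one would expect the mechanised proof to carry out, and your remarks about needing summability before extracting constants and about reindexing the state-space sum to a $\nat$-indexed sum (as in the proof of Theorem~\ref{thm:flipt_p_sem_expected_t}) correctly anticipate the proof-engineering obligations.
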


\begin{thm}
  \label{thm:dice_ht_inverse}
  ${\prrvfunsym{\left(\rvprfunsym{Ht}\right)}} = Ht$
\end{thm}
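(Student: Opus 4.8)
The plan is to reduce this to Theorem~\ref{thm:prrvfun_inverse}, which states that $\prrvfun$ inverts $\rvprfun$ on any probabilistic real-valued function, i.e. $\isprob(p) \implies \prrvfunsym{\left(\rvprfunsym{p}\right)} = p$. Hence it suffices to establish $\isprob(Ht)$, after which the theorem follows by a single rewrite with that lemma instantiated at $p \defs Ht$. This is the same route taken for $throw\_t$ in Theorem~\ref{thm:dice_pt_inverse}.

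To obtain $\isprob(Ht)$ I would invoke Theorem~\ref{thm:dice_ht_distr}, which asserts $\isfinaldist(Ht)$, together with Theorem~\ref{thm:final_distribtion}. By Law~\ref{thm:final_distribtion_1}, $\isfinaldist(Ht)$ implies $\isprob(Ht)$ (alongside other consequences, such as summability and reachability, that are not needed here). The only point requiring a moment's care is the type-level identification: $Ht$ has type $[S_1,S_2]\rvfun = [\real, S_1\times S_2]\uexpr$, so it is simultaneously a relational expression over the product space and an ordinary real-valued expression of type $[S]\rexpr$ with $S \defs S_1\times S_2$; under this identification $\isfinalprob(Ht)$ and $\isprob(Ht)$ coincide, so there is no gap between the finality-flavoured predicate delivered by $\isfinaldist$ and the plain $\isprob$ required by Theorem~\ref{thm:prrvfun_inverse}.

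Putting the pieces together, the derivation chain is: $\isfinaldist(Ht)$ by Theorem~\ref{thm:dice_ht_distr}, hence $\isprob(Ht)$ by Theorem~\ref{thm:final_distribtion} (Law~\ref{thm:final_distribtion_1}), hence $\prrvfunsym{\left(\rvprfunsym{Ht}\right)} = Ht$ by Theorem~\ref{thm:prrvfun_inverse}. In Isabelle this is essentially a two-line proof reusing the cited lemmas.

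There is no genuine obstacle in the present statement: all the substantive work has already been discharged in proving that $Ht$ is a final-state distribution. If anything, the \emph{hard part} lies upstream in Theorem~\ref{thm:dice_ht_distr} itself, namely verifying the summation $\infsum s' @ Ht(s,s') = 1$ that underlies $\isfinaldist(Ht)$: this requires evaluating the geometric tail $\sum_{t' \geq t+1}\left(5/6\right)^{t'-t-1}\left(1/36\right)$ summed over the six diagonal pairs with $d_1' = d_2'$, plus the boundary case $d_1 = d_2$ that collapses to $\pskip$. That calculation is the content of Theorem~\ref{thm:dice_ht_distr}, not of this corollary, so here it may simply be assumed.
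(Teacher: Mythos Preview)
Your proposal is correct and mirrors the paper's own proof exactly: the paper also derives the result directly from Theorems~\ref{thm:prrvfun_inverse}, \ref{thm:dice_ht_distr}, and~\ref{thm:final_distribtion}, just as you outline.
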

\begin{proof}
   This can be proved using Theorems~\ref{thm:prrvfun_inverse}, \ref{thm:dice_ht_distr}, and \ref{thm:final_distribtion}.
\end{proof}
}

The $Ht$ is proved to be a fixed point of $dice\_t$.\footnote{We note that our $Ht$ is different from that of~\cite{Hehner2011} where the probability of having $d_1'=d_2'$ is $1/6$ (instead of $1/36$ in ours). After a careful comparison of our mechanised proof and the pencil-and-paper proof in~\cite{Hehner2011}, we figured out the mistake is introduced in a step of the proofs in~\cite{Hehner2011}.}
\begin{thm}
  \label{thm:dice_Ht}
  $\lfun^{d_1\neq d_2}_{throw\_t}\left(\rvprfunsym{Ht}\right) = \rvprfunsym{Ht}$
  \isalink{https://github.com/RandallYe/probabilistic_programming_utp/blob/6a4419b8674b84988065a58696f15093d176594c/probability/probabilistic_relations/Examples/utp_prob_rel_lattice_dices.thy\#L2053}
\end{thm}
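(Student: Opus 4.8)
The plan is to evaluate the left-hand side $\lfun^{d_1\neq d_2}_{throw\_t}(\rvprfunsym{Ht})$ directly and recognise the result as $\rvprfunsym{Ht}$. First I would rewrite the loop function with the derived identity ($\lfunbp$ altdef), so that
\[
\lfun^{d_1\neq d_2}_{throw\_t}(\rvprfunsym{Ht}) = \rvprfunsym{\usexpr{\ibracket{d_1 \neq d_2} * \prrvfunsym{\left(\pseq{throw\_t}{\rvprfunsym{Ht}}\right)} + \ibracket{d_1 = d_2} * \ibracket{\II}}}.
\]
The skip branch is immediate, since $\ibracket{\II} = \ibracket{t'=t \land d_1'=d_1 \land d_2'=d_2}$, which is exactly the first summand of $Ht$ (restricted by $\ibracket{d_1=d_2}$). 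So the whole problem reduces to computing the sequential composition $\pseq{throw\_t}{\rvprfunsym{Ht}}$.

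For that composition I would use Definition~\ref{def:prog_seq} together with Theorem~\ref{thm:dice_pt_inverse} (so $\prrvfunsym{throw\_t} = \ibracket{t'=t+1}/36$) and Theorem~\ref{thm:dice_ht_inverse} (so $\prrvfunsym{(\rvprfunsym{Ht})} = Ht$), obtaining $\prrvfunsym{\left(\pseq{throw\_t}{\rvprfunsym{Ht}}\right)} = \fseq{(\ibracket{t'=t+1}/36)}{Ht}$. The intermediate $\rvprfun\circ\prrvfun$ round-trip collapses because $\pseq{throw\_t}{\rvprfunsym{Ht}}$ is a distribution: both $throw\_t$ and $\rvprfunsym{Ht}$ are distributions (Theorems~\ref{thm:dice_pt_distr} and \ref{thm:dice_ht_distr}), so Law~\ref{thm:pseq_final_dist} applies, and a distribution is probabilistic, whence Theorems~\ref{thm:prrvfun_inverse} and \ref{thm:prrvfun_prob} close the gap. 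Expanding $\fseq{}{}$ as $\usexpr{\infsum v_0 @ (\ibracket{t'=t+1}/36)[v_0/\vv'] * Ht[v_0/\vv]}$, the factor $\ibracket{t_0 = t+1}$ pins the intermediate value of $t$ via the Iverson one-point summation rule, and the remaining sum over the intermediate values of $d_1,d_2$ splits into two pieces by Law~\ref{thm:ib_conj}: the $\ibracket{d_{1,0}=d_{2,0}}$ piece collapses by another one-point rule to $\ibracket{d_1'=d_2'}*\ibracket{t'=t+1}$, and the $\ibracket{d_{1,0}\neq d_{2,0}}$ piece contributes a factor equal to the number of unequal pairs in $Tdice \times Tdice$, namely $30$; Law~\ref{thm:ib_summation} and the summability laws of Theorem~\ref{thm:summation} justify these manipulations.

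Then the arithmetic: $\tfrac{1}{36}\cdot 30\cdot\tfrac{1}{36}\cdot(5/6)^{t'-t-2}$ simplifies to $(5/6)^{t'-t-1}\cdot\tfrac{1}{36}$ by absorbing one factor $5/6$ into the exponent, and the two disjoint guards $\ibracket{t'=t+1}$ (with $(5/6)^0=1$) and $\ibracket{t'\geq t+2}$ merge into $\ibracket{t'\geq t+1}$. This gives $\fseq{(\ibracket{t'=t+1}/36)}{Ht} = \ibracket{d_1'=d_2'}*\ibracket{t'\geq t+1}*(5/6)^{t'-t-1}*(1/36)$; plugging it back under $\ibracket{d_1\neq d_2}$ and adding the skip term for the $d_1=d_2$ branch reproduces exactly $\rvprfunsym{Ht}$. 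I expect the main obstacle to be the infinite-summation step — getting the nested Iverson brackets and the two one-point reductions to fire in the right order, tracking the index shift $t'-t-2\mapsto t'-t-1$, and discharging the summability side conditions — while the surrounding reasoning is routine rewriting with the already-established laws.
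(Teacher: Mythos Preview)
Your proposal is correct and follows essentially the same approach as the paper: unfold via the $\lfunbp$ altdef, expand the sequential composition using Definition~\ref{def:prog_seq} and the inverse theorems for $throw\_t$ and $Ht$, split the intermediate sum into the $d_1''=d_2''$ piece (one state) and the $d_1''\neq d_2''$ piece (30 states), simplify $30/36=5/6$ to absorb into the exponent, and merge the $t'=t+1$ and $t'\geq t+2$ cases. The only cosmetic differences are that the paper keeps the $\prrvfun\circ\rvprfun$ wrapper until after the summation is computed rather than collapsing it up front via Law~\ref{thm:pseq_final_dist}, and it handles the $t$ and $(d_1,d_2)$ one-point reductions together rather than sequentially.
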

{
\begin{proof}
\begin{align*}
    & \lfun^{d_1\neq d_2}_{throw\_t}\left(\rvprfunsym{Ht}\right) \\
= & \cmt{Law~\ref{thm:lfun_altdef}} \\
  & \rvprfunsym{{\ibracket{d_1\neq d_2}}*\prrvfunsym{\left(\pseq{throw\_t}{\rvprfunsym{Ht}}\right)} + \ibracket{\lnot d_1\neq d_2} * {\ibracket{\II}}}\\
  = & \cmt{Definition~\ref{def:prob_programs} for $\pseq{}{}$} \\
  & \rvprfunsym{{\ibracket{d_1\neq d_2}}*\prrvfunsym{\left(\rvprfunsym{\fseq{\prrvfunsym{throw\_t}}{\prrvfunsym{\left(\rvprfunsym{Ht}\right)}}}\right)} + \ibracket{\lnot d_1\neq d_2} * {\ibracket{\II}}}\\
  = & \cmt{Theorems~\ref{thm:dice_pt_inverse} and \ref{thm:dice_ht_inverse}} \\
  & \rvprfunsym{{\ibracket{d_1\neq d_2}}*\prrvfunsym{\left(\rvprfunsym{\fseq{{\ibracket{t'=t+1} / 36}}{Ht}}\right)} + \ibracket{\lnot d_1\neq d_2} * {\ibracket{\II}}}\\
  = & \cmt{Definition~\ref{def:prob_programs} for $\fseq{}{}$} \\
  & \rvprfunsym{{\ibracket{d_1\neq d_2}}*\prrvfunsym{\left(\rvprfunsym{{{\infsum \vv'' @ {\left(\ibracket{t'=t+1} / 36\right)}[\vv''/\vv'] * {Ht}[\vv''/\vv]}}}\right)} + \ibracket{\lnot d_1\neq d_2} * {\ibracket{\II}}}\\
  = & \cmt{Definitions~\ref{def:Ht}, substitution, and omit $\ibracket{\lnot d_1\neq d_2} * {\ibracket{\II}}$} \\
  & \rvprfunsym{{\ibracket{d_1\neq d_2}}*\prrvfunsym{\left(\rvprfunsym{{{\infsum \vv'' @ 
      \begin{array}[]{l}
      {\left(\ibracket{t''=t+1} / 36\right)} * \\
      {\left(
       \begin{array}[]{l}
         \ibracket{d_1'' = d_2''} * \ibracket{t'=t'' \land d_1' = d_1'' \land d_2' = d_2''} + \\
         \ibracket{d_1'' \neq d_2''} * \ibracket{d_1' = d_2'} * \ibracket{t'\geq t''+1}*(5/6)^{t'-t''-1}*(1/36)
       \end{array}\right)}
      \end{array}
  }}}\right)} + \cdots} \\
  = & \cmt{Multiplication distributive over addition} \\
  & \rvprfunsym{{\ibracket{d_1\neq d_2}}*\prrvfunsym{\left(\rvprfunsym{{{\infsum \vv'' @ 
      \begin{array}[]{l}
      {\left(\ibracket{t''=t+1} / 36\right)} * \ibracket{d_1'' = d_2''} * \ibracket{t'=t'' \land d_1' = d_1'' \land d_2' = d_2''} + \\
      {\left(\ibracket{t''=t+1} / 36\right)} * \\{
       \begin{array}[]{l}
         \ibracket{d_1'' \neq d_2''} * \ibracket{d_1' = d_2'} * \ibracket{t'\geq t''+1}*(5/6)^{t'-t''-1}*(1/36)
       \end{array}}
      \end{array}
  }}}\right)} + \cdots} \\
  = & \cmt{Law~\ref{thm:summation_add} and proofs of summable omitted } \\
  & \rvprfunsym{{\ibracket{d_1\neq d_2}}*\prrvfunsym{\left(\rvprfunsym{{{
      \begin{array}[]{l}
      \left(\infsum \vv'' @ {\left(\ibracket{t''=t+1} / 36\right)} * \ibracket{d_1'' = d_2''} * \ibracket{t'=t'' \land d_1' = d_1'' \land d_2' = d_2''}\right) + \\
      \infsum \vv'' @ {\left(\ibracket{t''=t+1} / 36\right)} * \\{
       \begin{array}[]{l}
         \ibracket{d_1'' \neq d_2''} * \ibracket{d_1' = d_2'} * \ibracket{t'\geq t''+1}*(5/6)^{t'-t''-1}*(1/36)
       \end{array}}
      \end{array}
  }}}\right)} + \cdots} \\
  = & \cmt{In the first summation, only one state $\vv''[t''=t+1,d_1''=d_1',d_2''=d_2']$ satisfies the predicates }\\
  & \rvprfunsym{{\ibracket{d_1\neq d_2}}*\prrvfunsym{\left(\rvprfunsym{{{
      \begin{array}[]{l}
        {\ibracket{d_1' = d_2'} * \ibracket{t'=t+1} / 36} + \\
      \infsum \vv'' @ {\left(\ibracket{t''=t+1} / 36\right)} * \\{
       \begin{array}[]{l}
         \ibracket{d_1'' \neq d_2''} * \ibracket{d_1' = d_2'} * \ibracket{t'\geq t''+1}*(5/6)^{t'-t''-1}*(1/36)
       \end{array}}
      \end{array}
  }}}\right)} + \cdots} \\
  = & \cmt{There are 30 states $\vv''[t+1/t'',x/d_1'',y/d_2'']$ where $x \neq y$ satisfies the predicates }\\
  & \rvprfunsym{{\ibracket{d_1\neq d_2}}*\prrvfunsym{\left(\rvprfunsym{{{
      \begin{array}[]{l}
        {\ibracket{d_1' = d_2'} * \ibracket{t'=t+1} / 36} + \\
        {
         \ibracket{d_1' = d_2'} * \ibracket{t'\geq t+1+1}*(5/6)^{t'-(t+1)-1}*30*(1/36)*(1/36)
        }
      \end{array}
  }}}\right)} + \cdots} \\
  = & \cmt{$30/36=5/6$}\\
  & \rvprfunsym{{\ibracket{d_1\neq d_2}}*\prrvfunsym{\left(\rvprfunsym{{{
      \begin{array}[]{l}
        {\ibracket{d_1' = d_2'} * \ibracket{t'=t+1} / 36} + \\
        {
         \ibracket{d_1' = d_2'} * \ibracket{t'\geq t+2}*(5/6)^{t'-t-1}*(1/36)
        }
      \end{array}
  }}}\right)} + \cdots} \\
  = & \cmt{ Merged }\\
  & \rvprfunsym{{\ibracket{d_1\neq d_2}}*\prrvfunsym{\left(\rvprfunsym{{{
      \begin{array}[]{l}
        {
         \ibracket{d_1' = d_2'} * \ibracket{t'\geq t+1}*(5/6)^{t'-t-1}*(1/36)
        }
      \end{array}
  }}}\right)} + \cdots} \\
  = & \cmt{Theorems~\ref{thm:prrvfun_inverse}, \ref{thm:dice_ht_distr}, and \ref{thm:final_distribtion}, and the omitted}\\
  & \rvprfunsym{{\ibracket{d_1\neq d_2}}*{\left({{{
      \begin{array}[]{l}
        {
         \ibracket{d_1' = d_2'} * \ibracket{t'\geq t+1}*(5/6)^{t'-t-1}*(1/36)
        }
      \end{array}
  }}}\right)} + \ibracket{\lnot d_1\neq d_2} * {\ibracket{\II}}} \\
  = & \cmt{ Definition~\ref{def:uskip} }\\
  & \rvprfunsym{Ht}
\end{align*}
\end{proof}
}


Using Theorem~\ref{thm:rec_unique_fin}, we prove the semantics of $dice\_t$ is just $Ht$.
\begin{thm}
  \label{thm:dice_t_sem}
  $dice\_t = \rvprfunsym{Ht}$
  \isalink{https://github.com/RandallYe/probabilistic_programming_utp/blob/6a4419b8674b84988065a58696f15093d176594c/probability/probabilistic_relations/Examples/utp_prob_rel_lattice_dices.thy\#L2380}
\end{thm}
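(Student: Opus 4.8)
The plan is to instantiate the unique fixed-point Theorem~\ref{thm:rec_unique_fin}, exactly as was done for $flip\_t$ in Theorem~\ref{thm:flip_t_sem}. Applied with $b := (d_1 \neq d_2)$, $P := throw\_t$, and $fp := \rvprfunsym{Ht}$, that theorem yields $\pwhile{d_1 \neq d_2}{throw\_t} = \rvprfunsym{Ht}$ (and, as a bonus, $\pwhiletop{d_1 \neq d_2}{throw\_t} = \rvprfunsym{Ht}$) once its four premises are discharged: (1) $\isfinaldist(\prrvfunsym{throw\_t})$; (2) $\finitefinal(throw\_t)$; (3) $\forall s @ \left(\lambda n @ \prrvfunsym{\iterdiff(n, d_1 \neq d_2, throw\_t)}(s)\right) \tendsto 0$; and (4) $\lfun^{d_1 \neq d_2}_{throw\_t}(\rvprfunsym{Ht}) = \rvprfunsym{Ht}$. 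Since $dice\_t \defs \pwhile{d_1 \neq d_2}{throw\_t}$, this is precisely the statement to prove. Premise (4) is already available as Theorem~\ref{thm:dice_Ht}, so the work reduces to premises (1)--(3).

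Premise (1) follows either directly---$throw\_t$ is a sequential composition of two uniform distributions over the finite nonempty set $Tdice$ (Law~\ref{thm:uniform_finaldist}) and an assignment (Theorem~\ref{thm:prob_assign_finaldist}), closed under sequencing by Law~\ref{thm:pseq_final_dist}---or, more conveniently, from the already-established closed form $\prrvfunsym{throw\_t} = \ibracket{t'=t+1}/36$ (Theorems~\ref{thm:dice_pt} and~\ref{thm:dice_pt_inverse}) together with Theorem~\ref{thm:dice_pt_distr}. Premise (2) is immediate from the same closed form: from any initial state the only reachable final states have $t' = t+1$ and $(d_1', d_2') \in Tdice \times Tdice$, hence exactly $36$ of them, so finitely many.

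Premise (3) is the real computation, and it mirrors Theorem~\ref{thm:dice_iterdiff_0} (which treats $throw$ without the clock). Unfolding Definition~\ref{def:iterdiff}, $\iterdiff(0, d_1 \neq d_2, throw\_t) = \ufone$ and $\iterdiff(n, d_1\neq d_2, throw\_t) = \pcchoice{d_1 \neq d_2}{(\pseq{throw\_t}{\iterdiff(n-1,d_1\neq d_2, throw\_t)})}{\ufzero}$. Because $throw\_t$ is a distribution, $\pseq{throw\_t}{\ufone} = \ufone$ (Law~\ref{thm:pseq_one}), giving $\iterdiff(1,\ldots) = \rvprfunsym{\ibracket{d_1 \neq d_2}}$; and $\pseq{throw\_t}{\rvprfunsym{\ibracket{d_1 \neq d_2}}}$ evaluates to the probability that the intermediate state satisfies $d_1 \neq d_2$, namely $30/36 = 5/6$ (the $30$ off-diagonal combinations out of $36$). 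An induction on $n$ then gives $\iterdiff(n, d_1 \neq d_2, throw\_t) = \rvprfunsym{\usexpr{\ibracket{d_1 \neq d_2} * (5/6)^{n-1}}}$ for $n \geq 1$; for each fixed $s$ this is a constant times $(5/6)^{n-1}$, which tends to $0$ since $5/6 < 1$. (Theorem~\ref{thm:iterdiff} confirms that this $\iterdiff$ really is the gap between the ascending and descending Kleene chains.)

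With premises (1)--(4) in hand, Theorem~\ref{thm:rec_unique_fin} closes the proof. The main obstacle within this argument is premise (3): the inductive evaluation of $\iterdiff$ forces one to compute the sequential compositions $\pseq{throw\_t}{\,\cdot\,}$, i.e.\ an infinite summation over intermediate states that here collapses onto the finite $36$-element support of $throw\_t$ but still requires the substitution and Iverson-bracket bookkeeping seen in the $flip\_t$ and $dice$ derivations; no ratio-test convergence argument is needed here since $(5/6)^{n-1}$ is manifestly null-convergent. The genuinely laborious step would otherwise be premise (4), the fixed-point check against $Ht$ with its geometric factor $(5/6)^{t'-t-1}$, but that is already discharged as Theorem~\ref{thm:dice_Ht} in the excerpt.
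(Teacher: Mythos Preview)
Your proposal is correct and follows precisely the paper's approach: the paper states only that the result is obtained ``using Theorem~\ref{thm:rec_unique_fin}'' and otherwise defers to the mechanisation, so your discharge of the four premises---distribution, finite final states, vanishing iteration difference, and the fixed-point equation (already Theorem~\ref{thm:dice_Ht})---is exactly what that one line abbreviates. Your computation of $\iterdiff(n,d_1\neq d_2,throw\_t)=\rvprfunsym{\ibracket{d_1\neq d_2}*(5/6)^{n-1}}$ and your use of the closed form $\prrvfunsym{throw\_t}=\ibracket{t'=t+1}/36$ to read off $\finitefinal$ are the right instantiations.
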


\changed[\C{1}]{We note that the semantics of $pflip\_t$ in Theorem~\ref{thm:pflip_t_sem} has a pattern 
\begin{align*}
    \ibracket{c=tl}*\ibracket{c'=hd}*\ibracket{t'\geq t+1}*(5/6)^{t'-t-1}*(1/6) 
\end{align*}
if $p=1/6$, and the semantics of $dice\_t$ here has a pattern 
\begin{align*}
    \ibracket{d_1 \neq d_2} * \ibracket{d_1' = d_2'} * \ibracket{t'\geq t+1}*(5/6)^{t'-t-1}*(1/36) 
\end{align*}
The $(1/6)$ or $(1/36)$ above denotes the success probability of each experiment in terms of a particular valuation of the variables in the observation space. For example, $(1/6)$ denotes the probability of $\ibracket{c'=hd}$ for a particular $t'$ and $c'$ (where $c'=hd$ is the only value to establish $\ibracket{c'=hd}$), and $(1/36)$ denotes the probability of $\ibracket{d_1' = d_2'}$ for a particular $t'$, $d_1'$, and $d_2'$ (where for each $t'$, there are overall 6 values of $d_1'$ and $d_2'$ to establish $\ibracket{d_1' = d_2'}$, that is, both take the same value from 1 to 6).
} 

The $dice\_t$ terminates almost surely.
\begin{thm}
  \label{thm:dice_t_terminate}
  $ \pseq{\prrvfunsym{dice\_t}}{{\ibracket{d_1 = d_2}}} = {\usexpr{1}}$
  \isalink{https://github.com/RandallYe/probabilistic_programming_utp/blob/6a4419b8674b84988065a58696f15093d176594c/probability/probabilistic_relations/Examples/utp_prob_rel_lattice_dices.thy\#L2395}
\end{thm}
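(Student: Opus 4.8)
The plan is to reduce the claim to two facts already established for the timed dice program: that $dice\_t = \rvprfunsym{Ht}$ (Theorem~\ref{thm:dice_t_sem}) and that $Ht$ is a distribution over the final state (Theorem~\ref{thm:dice_ht_distr}). First I would rewrite $\prrvfunsym{dice\_t}$ as $Ht$, using Theorem~\ref{thm:dice_t_sem} together with Theorem~\ref{thm:dice_ht_inverse}, so the goal becomes $\fseq{Ht}{\ibracket{d_1 = d_2}} = \usexpr{1}$; then unfold the sequential composition via the definition in \ref{def:prog_seq}, turning the goal into $\usexpr{\infsum v_0 @ Ht[v_0/\vv'] * \ibracket{d_1 = d_2}[v_0/\vv]}$.

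The key observation is that every state in the support of $Ht$ satisfies $d_1' = d_2'$: in the first summand of Definition~\ref{def:Ht} the conjuncts $d_1' = d_1$ and $d_2' = d_2$, together with the guard $d_1 = d_2$, force it, while in the second summand the factor $\ibracket{d_1' = d_2'}$ states it outright. Hence, after substituting $v_0$ for the primed variables, $Ht[v_0/\vv']$ is non-zero only when $d_1$ and $d_2$ agree in $v_0$ — which is exactly the condition tested by $\ibracket{d_1 = d_2}[v_0/\vv]$ — so the product $Ht[v_0/\vv'] * \ibracket{d_1 = d_2}[v_0/\vv]$ equals $Ht[v_0/\vv']$ pointwise in $v_0$. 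The sum therefore collapses to $\infsum v_0 @ Ht[v_0/\vv']$, and by Theorem~\ref{thm:dice_ht_distr} with Theorem~\ref{thm:final_distribtion} Law~\ref{thm:final_distribtion_1} this equals $1$ for every initial state, which is the desired $\usexpr{1}$.

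Should the absorption step prove awkward to automate, the fallback is a direct computation by case split on whether $d_1 = d_2$ holds initially. When $d_1 = d_2$, only the first summand of $Ht$ survives and exactly one intermediate state contributes, summing to $1$; when $d_1 \neq d_2$, only the second summand survives, and after reindexing the sum over intermediate states by the value of $t'$ (with six equal-valued choices of $(d_1', d_2')$ for each $t'$) one obtains the geometric series $\sum_{k \geq 1} 6 \cdot (5/6)^{k-1} \cdot (1/36) = \sum_{k \geq 1} (1/6)(5/6)^{k-1} = 1$. I expect the main obstacle to be discharging the side conditions of this computational route — establishing summability of the geometric terms and performing the reindexing cleanly in Isabelle — rather than any conceptual difficulty; the support-absorption argument sidesteps these, so I would attempt it first and fall back on the explicit summation only if the relational tactics do not close the absorption step automatically.
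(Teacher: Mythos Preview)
Your proposal is correct. The paper does not present a written proof for this theorem (only the Isabelle link), but based on its treatment of the analogous coin result (Theorem~\ref{thm:flipt_sem_termination}) and the informal dice calculation in Sect.~\ref{sec:relwork}, the mechanised proof almost certainly proceeds via the explicit geometric-series computation --- your fallback route. Your primary support-absorption argument is a genuinely cleaner alternative: by noting that both summands of $Ht$ already force $d_1'=d_2'$, the post-condition factor drops out and the claim reduces immediately to $\isfinaldist(Ht)$ (Theorem~\ref{thm:dice_ht_distr}) together with Law~\ref{thm:final_distribtion_1}, bypassing all reindexing and summability side conditions. Both approaches are valid; the absorption argument is shorter and conceptually clearer, while the explicit computation is closer to what the paper does elsewhere and is the one you would likely find behind the Isabelle link.
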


On average, the $dice\_t$ takes six dice throws to get an equal outcome.
\begin{thm}
  \label{thm:dice_t_expectation}
  $ \pseq{\prrvfunsym{dice\_t}}{t} = \usexpr{\ibracket{d1=d2}*t + \ibracket{d1 \neq d2}*(t+6)}$
  \isalink{https://github.com/RandallYe/probabilistic_programming_utp/blob/6a4419b8674b84988065a58696f15093d176594c/probability/probabilistic_relations/Examples/utp_prob_rel_lattice_dices.thy\#L2431}
\end{thm}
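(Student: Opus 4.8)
The plan is to compute the expectation directly from the closed-form semantics of $dice\_t$ that has already been established. By Theorem~\ref{thm:dice_t_sem} we have $dice\_t = \rvprfunsym{Ht}$, and Theorem~\ref{thm:dice_ht_inverse} gives $\prrvfunsym{dice\_t} = \prrvfunsym{\left(\rvprfunsym{Ht}\right)} = Ht$. Hence the goal reduces to evaluating $\fseq{Ht}{t} = \usexpr{\infsum v_0 @ Ht[v_0/\vv'] * t[v_0/\vv]}$ and showing it equals $\ibracket{d_1 = d_2}*t + \ibracket{d_1 \neq d_2}*(t+6)$. I would then case-split on the initial state according to whether $d_1 = d_2$ or $d_1 \neq d_2$, using the two-summand shape of $Ht$ from Definition~\ref{def:Ht} and distributing the infinite sum over the two summands via Theorem~\ref{thm:summation} Law~\ref{thm:summation_add} once the relevant summability side-conditions are discharged.

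In the case $d_1 = d_2$, the second summand of $Ht$ is killed by $\ibracket{d_1 \neq d_2} = 0$, and the first summand $\ibracket{t'=t \land d_1'=d_1 \land d_2'=d_2}$ is a point mass at the initial state; a one-point rule for Iverson-bracket summation (the same kind used throughout Sect.~\ref{ssec:prog_seq_comp}) collapses the sum to the value of variable $t$ in that single state, namely $t$, which matches. In the case $d_1 \neq d_2$, the first summand vanishes and the goal becomes $\infsum v_0 @ \ibracket{d_1'=d_2'} * \ibracket{t'\geq t+1} * (5/6)^{t'-t-1} * (1/36) * t'$. Here I would first collapse the six diagonal valuations of $(d_1',d_2')$ with $d_1'=d_2'$, turning the constant $1/36$ into $1/6$, and then reindex the remaining summation over $t' \in \{t+1, t+2, \ldots\}$ by the injective map $n \mapsto t+1+n$ onto $\nat$, exactly as in step (1) of the calculation sketched after Theorem~\ref{thm:flipt_p_sem_expected_t}. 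This leaves the arithmetico-geometric series $\infsum n:\nat @ (5/6)^n * (1/6) * (t+1+n)$.

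The main obstacle is this last series: proving it is summable and evaluating it to $t+6$. I would establish summability by a ratio test, supplying a ratio bound strictly below $1$ and a threshold beyond which successive terms stay under it, and then recover the sum either from the standard closed form $\sum_{n} n * r^n = r/(1-r)^2$ for $|r|<1$ or, mimicking the coin proof, by deriving a fixed-point equation for the tail sum $\infsum n @ f(n+1)$ (splitting off a pure geometric series of value $5/6$) and solving it as in step (5) of that sketch. Both routes give $(1/6)(t+1)\cdot 6 + (1/6)\cdot 30 = (t+1)+5 = t+6$. Reassembling the two cases via the Iverson-bracket case split then yields the stated identity; the residual bookkeeping with substitutions through Iverson brackets and summability discharges is routine and parallels the dice fixed-point proof of Theorem~\ref{thm:dice_Ht}.
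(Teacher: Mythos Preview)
Your proposal is correct and follows essentially the same approach the paper uses: the paper does not spell out a proof for this particular theorem in the text, but it details the identical methodology for the analogous coin-flip expectation (Theorem~\ref{thm:flipt_p_sem_expected_t}), and your plan explicitly mirrors those five steps (reindex to $\nat$, ratio test for summability, derive a fixed-point equation for the tail sum, solve, and reassemble). Your arithmetic for collapsing the six diagonal states and evaluating the arithmetico-geometric series to $t+6$ is right.
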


\section{Conclusion}
\label{sec:concl}

Previous work~\cite{Ye2022,Ye2021} has shown the modelling of aleatoric uncertainty in RoboChart based on the semantics of MDP and in \emph{pGCL} based on the theory of probabilistic designs and the automated verification of probabilistic behaviours using probabilistic model checking and theorem proving.
This work presents a new probabilistic semantic framework, \emph{ProbURel}, and probabilistic programming to cover modelling both aleatoric and epistemic uncertainties and the automated verification of probabilistic systems exhibiting both uncertainties using theorem proving. 
We discuss our probabilistic vision in Sect.~\ref{sec:intro}, and the new semantic framework is our first step in the big picture. With ProbURel, we can give semantics to deterministic, probabilistic sequential programs with the support of discrete distributions and time.

We have based our work on Hehner's predicative probabilistic programming and addressed obstacles to applying his work by formalising and mechanising its semantics in Isabelle/UTP. We have introduced an Iverson bracket notation to separate arithmetic semantics from relational semantics so that reasoning about a probabilistic program can reuse existing reasoning techniques for both arithmetic and relational semantics. 
We have used the UTP's alphabetised relational calculus to formalise its relational semantics, and so probabilistic programs benefit from automated reasoning in Isabelle/UTP. 
We have used the summations over the topological space of real numbers for arithmetic semantics, and so probabilistic programs also benefit from mechanised theories in Isabelle/HOL for reasoning. We have enriched the semantics domains from probabilistic distributions to subdistributions and superdistributions to use the constructive Kleene fixed point theorem to give semantics to probabilistic loops based on the least fixed point and derive a unique fixed point theorem to vastly simplify the reasoning of probabilistic loops. With formalisation and mechanisation, we have reasoned about six examples of probabilistic programs. 

{
\subsection{{Our probabilistic vision}}
\label{sec:concl:vision}
Recently, we presented a probabilistic extension~\cite{Ye2022} to RoboChart~\cite{Miyazawa2019}, a state machine-based DSL for robotics, to allow the modelling of probabilistic behaviour in robot control software. 
RoboChart~\cite{Miyazawa2019,Ye2022} is a core notation in the RoboStar%
\footnote{%
  \url{robostar.cs.york.ac.uk}.%
} framework~\cite{Cavalcanti2021} that brings modern modelling and verification technologies into software engineering for robotics. In this framework, three key elements are models, formal mathematical semantics for models, and automated modelling and verification tool support.  
RoboChart is a UML-like architectural and state machine modelling notation featuring discrete time and probabilistic modelling. It has formal semantics: state machines and architectural semantics~\cite{Miyazawa2019} based on the CSP process algebra~\cite{Hoare1985,Roscoe2011} and time semantics~\cite{Miyazawa2019} based on \emph{tock}-CSP~\cite{Baxter2021,Roscoe2011}. CSP is a formal notation to describe concurrent systems where processes interact using communication. In the framework, robot hardware and control software are captured in robotic platforms, and controllers of RoboChart. The environment is captured in RoboWorld~\cite{Cavalcanti2021a} whose semantics is based on \textit{\textsf{CyPhyCircus}}~\cite{Foster2020a}, a hybrid process algebra, because of the continuous nature of the environment.

While RoboChart and RoboWorld are high-level specification languages, RoboSim~\cite{Cavalcanti2019} is a cycle-based simulation-level notation in the RoboStar framework. RoboSim also has semantics in CSP and \emph{tock}-CSP. A RoboSim model can be automatically transformed from a RoboChart model directly, and its correctness is established by refinements~\cite{Roscoe2011} in CSP.

Probability is used to capture uncertainties from physical robots and the environment and randomisation in controllers. The probabilistic extension in the RoboStar framework requires its semantic extension to either base on process algebras and hybrid process algebras or have the richness to deal with probabilistic concurrent and reactive systems. The features of its probabilistic semantics that we consider in this big vision include discrete and continuous distributions, discrete time, nondeterminism, concurrency, and refinement. The type system and the comprehensive expression language of RoboChart~\cite{Miyazawa2019}, additionally, are based on those of the Z notation~\cite{Spivey1992,Woodcock1996} and include mathematical data types such as relations and functions, quantifications, and lambda expressions. Because of such richness in semantics and language features of RoboChart, the formal verification support of RoboChart requires theorem proving and model checking.

Our immediate thought is to consider existing probabilistic extensions to process algebras, including CSP-based~\cite{Morgan1996,Morgan2005,Nunez1995,Gomez1997,Kwiatkowska1998,Georgievska2012}, CCS-based~\cite{Hansson1990,Giacalone1990,Yi1992,Vanglabbeek1995}, and ACP-based~\cite{Andova2002}. The main difference between these extensions is how existing constructs or operators, particularly nondeterministic and external choice, interact with probabilistic choice. We also looked at extensions based on probabilistic transition systems~\cite{Larsen1991,Bloom1989,Jonsson2001} and automata~\cite{Wu1997,Hartmanns2015}. To preserve the distributivity of existing operators over probabilistic choice, some algebraic properties are lost, such as the congruence for hiding and asynchronous parallel composition~\cite{Kwiatkowska1998}, idempotence for nondeterministic choice~\cite{Morgan1996a}, or even no standard nondeterministic choice~\cite{Gomez1997,Seidel1995}. The critical problem, however, is the lack of tool support for these extensions. For example, FDR~\cite{T.GibsonRobinson2014}, a refinement model checker for CSP and tock-CSP, cannot verify the probabilistic extensions in CSP. For this reason, we explored other solutions. 

In~\cite{Woodcock2019,Ye2022}, we give probabilistic semantics of RoboChart on probabilistic designs~\cite{Ye2021} in Hoare and He's Unifying Theories of Programming (UTP)~\cite{Hoare1998} and then use the theorem prover Isabelle/UTP~\cite{Foster2020}, an implementation of UTP in Isabelle/HOL, to verify probabilistic models. Probabilistic designs are an embedding of standard non-probabilistic designs into the probabilistic world. The theory of probabilistic designs gives probabilistic semantics to the imperative nondeterministic probabilistic sequential programming language \emph{pGCL}~\cite{McIver2005}, but not reactive aspects of RoboChart. We have thought about lifting probabilistic designs into probabilistic reactive designs. Still, the main obstacle is the complexity of reasoning about probabilistic distributions in probabilistic designs because distributions are captured in a dedicated variable $prob$, representing a probability mass function. In particular, the definition~\cite{Ye2021} of sequential composition includes an existential quantification over intermediate distributions. The proof of sequential composition needs to supply a witness for the intermediate distributions, which is helpful but non-trivial.

We also gave RoboChart probabilistic semantics~\cite{Ye2022,Miyazawa2020} in the PRISM language~\cite{Kwiatkowska2011}. We developed plugins for RoboTool,\footnote{\url{www.cs.york.ac.uk/robostar/robotool/}} an accompanying tool for RoboChart, to support automated verification through probabilistic model checking using PRISM. PRISM, however, employs a closed-world assumption: systems are not subjected to environmental inputs. To verify a RoboChart model, such as a high voltage controller\footnote{\url{github.com/UoY-RoboStar/hvc-case-study/tree/prism_verification/sbmf}} for a painting robot~\cite{Murray2020} and an agricultural robot\footnote{\url{github.com/UoY-RoboStar/uvc-case-study}} for UV-light treatment using PRISM, we need to constrain the environmental input and verify its expected outputs through an additional PRISM module being in parallel with the corresponding PRISM model that is automatically transformed from the RoboChart model. Finally, the safety and reachability properties of the RoboChart model (checked by the trace refinement in FDR) become deadlock freedom problems in PRISM. However, this cannot verify other properties like liveness, which requires failures-divergences refinement in CSP and FDR.

The research question that we aim to answer is a probabilistic semantic framework that 
\begin{enumerate*}[label={(\arabic*)}]
    \item has rich semantics to capture our probabilistic vision, 
    \item is simple and flexible to allow further extensions, and
    \item supports theorem proving.
\end{enumerate*}
This question is comprehensive and needs a research programme, instead of a project, to address it. The work we present in this paper is our first step to answering this question. 
}

\subsection{Future work}
{We have not proved and mechanised the SRW example~\ref{ex:srw}. Our immediate future work is to verify SRW: its semantics, termination, and expected runtime in terms of the parameters $m$ and $p$. We are also interested in the mathematical (that is, the probability theory) way to calculate the termination distribution and comparing it with our programming way (that is, lfp) to establish the equivalence between them.}

Our fixed point theorems, such as Theorems~\ref{thm:rec_least_fixed_point}, \ref{thm:rec_great_fixed_point}, \ref{thm:rec_unique}, and \ref{thm:rec_unique_fin} for probabilistic loops, cannot deal with the programs (the loop body) whose final observation space contains infinite states with positive probabilities. 
The restriction is introduced in Theorems~\ref{thm:incseq_limit_is_lub_all} and \ref{thm:decseq_limit_is_glb_all}, which are used to prove continuity theorems~\ref{thm:continuity_lfun_bot} and \ref{thm:continuity_lfun_top}, and eventually for the least and greatest fixed point theorems \ref{thm:rec_least_fixed_point} and \ref{thm:rec_great_fixed_point}. Our immediate future work is to extend our fixed point theorem to support such countably infinite state space, enabling us to give semantics to loops containing such programs. Our approach is to use Cousot's constructive version of the Knaster–Tarski fixed point theorem~\cite{Cousot1979} to weaken continuity to monotonicity and treat the least fixed point as the stationary limit of transfinite iteration sequences. 
With this extension, our semantics can tackle more general probabilistic programs with countably infinite state space. 
{Hehner~\cite{Hehner2011} presented a simpler semantics for loops. His approach is to include a time variable of type extended integer or real numbers to count iterations, similar to the $t$ (but its type is natural numbers) in our examples. He argued that if a fixed point is proved for a loop, then it is the only fixed point, and so the semantics for the loop. This is very interesting to us. We could formalise his proof and mechanise it in Isabelle/UTP, which may benefit our approach to simplify reasoning of loops or address the limitation of finite states with positive probabilities.} 

The probabilistic programming we present in this paper only considers discrete probabilistic distributions. One of our future works is to support continuous distributions such as normal or Gaussian distributions, uniform distributions, and exponential distributions, which are naturally presented in many physical systems in our semantics. Each point has zero probability in (absolute) continuous distributions, so the probability mass functions for discrete distributions could not describe them. Instead, they are characterised by probability density functions, which require measure theory to deal with probabilities and integration~\cite{Dahlqvist2020} over intervals. We, therefore, will introduce measure theory to our semantics and mechanise it in Isabelle/UTP based on the measure theory in Isabelle. After these lines of future work are complete, our probabilistic programming can automatically model a wide range of probabilistic systems and reasoning about them. 

With UTP and ProbURel, we could bring different approaches to handling uncertainty, such as epistemic mu-calculus and probabilistic synthesis, together and unify these approaches. Our semantics for ProbURel are denotational, which could underpin the operational semantics for other approaches. By unifying these theories, we could link different tools. For example, one model could be analysed using our theorem prover, and it could also be transformed into another probabilistic programming language and analysed by the supported tools for it, such as PRISM. This will be beneficial for analysis by leveraging the advantages of different tools.



\section*{Acknowledgements}

This work is funded by the EPSRC projects RoboCalc (Grant EP/M025756/1) and RoboTest (Grant EP/R025479/1).

\appendix
{
\section{Proofs}
\subsection{Proof of Theorem~\ref{thm:iterdiff}}
\label{appendix:proof_iterdiff}

We present and prove three theorems first and then use them to prove Theorem~\ref{thm:iterdiff}.
\begin{thm}
    \label{thm:iterdiff_bot}
    Provided $P$ is a distribution, that is, $\isfinaldist(P)$. 
    \begin{align*}
        &\left(\lfun_P^b\right)^0(\ufzero) = \ufzero \\ 
        &\left(\lfun_P^b\right)^1(\ufzero) = \lambda (s, s'). ~\rvprfunsym{\ibracket{\lnot b(s)}*\ibracket{s'=s}}
    \end{align*}
    If $n>1$, then
    \begin{align*}
        & \left(\lfun_P^b\right)^n(\ufzero) = \\ 
        & \lambda (s, s'). \rvprfunsym{
            \left(
            \begin{array}[]{@{}l} 
                \sum\limits_{i=1}^{n-1} \left(
                \begin{array}{@{}l}
                    \infsum s_{i-1}.  
                    \ibracket{b(s)}*{\prrvfunsym{P}}(s,s_{i-1})* \\
                    \left(
                    \begin{array}{@{}l}
                        \infsum s_{i-2}. \ibracket{b(s_{i-1})}*{\prrvfunsym{P}}(s_{i-1}, s_{i-2})* \\\left(
                        \begin{array}{@{}l}
                            \vdots * \\
                            \left(
                            \infsum s_0. \ibracket{b(s_1)}* {\prrvfunsym{P}}(s_1, s_0) * \ibracket{\lnot b(s_0)}*\ibracket{s'=s_0} 
                            \right)
                        \end{array}\right)
                    \end{array}
                    \right)
                \end{array}\right) \\
                + \left({\ibracket{\lnot b(s)}}\right) * {\ibracket{s'=s}}
            \end{array} 
            \right) 
        } 
    \end{align*}
\end{thm}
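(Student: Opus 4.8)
The plan is to prove all three claims by induction on $n$, reusing the simplified form of the loop function derived earlier in the excerpt --- the ``$\lfunbp$ altdef'' identity~\ref{thm:lfun_altdef}, namely $\lfunbp(X) = \rvprfunsym{\ibracket{b}*\prrvfunsym{(\pseq{P}{X})} + \ibracket{\lnot b} * \ibracket{\II}}$ --- together with the definition of sequential composition as an infinite summation over intermediate states (Definition~\ref{def:prog_seq}).

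First I would dispatch the two base cases. For $n = 0$, $(\lfun_P^b)^0(\ufzero) = \ufzero$ is immediate from the definition of functional iteration. For $n = 1$, I apply $\lfunbp$ once to $\ufzero$ via~\ref{thm:lfun_altdef}; the sub-term $\pseq{P}{\ufzero}$ collapses to $\ufzero$ by right-zero of sequential composition (Theorem~\ref{thm:prog_seq_comp}, law~\ref{thm:pseq_right_zero}), so only $\rvprfunsym{\ibracket{\lnot b} * \ibracket{\II}}$ survives; expanding $\ibracket{\II} = \ibracket{v' = v}$ pointwise at a pair $(s,s')$ yields $\ibracket{\lnot b(s)} * \ibracket{s' = s}$, the claimed form. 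I would also compute $n = 2$ explicitly, since the general $n > 1$ formula has outer sum $\sum_{i=1}^{n-1}$, which is a single term at $n = 2$; this serves as the anchor for the induction on the nested form.

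The core is the inductive step: assuming the formula for some $n \ge 1$ (respectively $n \ge 2$ for the nested shape), establish it for $n+1$. I would write $(\lfunbp)^{n+1}(\ufzero) = \lfunbp\big((\lfunbp)^n(\ufzero)\big)$, apply~\ref{thm:lfun_altdef}, and expand $\pseq{P}{(\lfunbp)^n(\ufzero)}$ through Definition~\ref{def:prog_seq} as $\infsum s_{n} @ \prrvfunsym{P}[s_{n}/\vv'] * \prrvfunsym{(\lfunbp)^n(\ufzero)}[s_{n}/\vv]$, then substitute the induction hypothesis for $(\lfunbp)^n(\ufzero)$ inside the summand. This introduces exactly one more layer of the nested sum --- the new outermost $\infsum$ paired with a fresh factor $\ibracket{b(\cdot)} * \prrvfunsym{P}(\cdot,\cdot)$ --- after which distributing the substitutions $[\,\cdot/\vv'\,]$ and $[\,\cdot/\vv\,]$ through the Iverson brackets and re-indexing $\sum_{i=1}^{n-1}$ to $\sum_{i=1}^{n}$ matches the stated form with $n$ replaced by $n+1$. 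Throughout, I rely on the $\rvprfun$/$\prrvfun$ round trips collapsing: every iterate $(\lfunbp)^n(\ufzero)$ lies between $\ufzero$ and $\ufone$ (it is an ascending chain of probabilistic programs, cf.\ Theorem~\ref{thm:iter_bot_ascending}) and $\prrvfunsym{P}$ is probabilistic (Theorem~\ref{thm:prrvfun_prob}), so Theorem~\ref{thm:prrvfun_inverse} applies at each stage.

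The main obstacle I anticipate is combinatorial rather than conceptual: carefully tracking the index bookkeeping of the telescoping nested summations --- proving that prepending one $P$-transition layer to the $n$-fold nested sum produces precisely the $(n+1)$-fold nested sum in the stated shape, with the innermost kernel still $\ibracket{\lnot b(s_0)} * \ibracket{s' = s_0}$ and all bound variables renamed consistently. A secondary nuisance is distributing the primed/unprimed substitutions through products of Iverson brackets while keeping $\isfinaldist(P)$ at hand so the sequential-composition expansion and the conversion lemmas can be invoked at every step. These are mechanical in Isabelle once the statement is phrased with explicit indices, but they are the error-prone part of a pencil-and-paper argument.
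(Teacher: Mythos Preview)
Your proposal is correct and mirrors the paper's own proof almost exactly: the paper also dispatches $n=0,1$ using~\ref{thm:lfun_altdef} and the right-zero law~\ref{thm:pseq_right_zero}, explicitly computes the small cases (up to $n=3$), and then performs the inductive step by unfolding $\lfunbp$ once, expanding the sequential composition via Definition~\ref{def:prog_seq}, invoking the $\rvprfun/\prrvfun$ inverse (Theorem~\ref{thm:prrvfun_inverse}) with $\isprob$ side-conditions, and distributing the new outer $\infsum$ over the finite sum via~\ref{thm:summation_add} and~\ref{thm:summation_cmult_left} before re-indexing. The obstacles you anticipate---the index bookkeeping and the repeated discharge of $\isprob$/summability premises---are precisely the places where the paper's calculation is most verbose.
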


\begin{proof}
    We show below that $\left(\lfun_P^b\right)^n(\ufzero)$ for $n=0$ to $3$ satisfies the theorem.
    \begin{align*}
        &\left(\lfun_P^b\right)^0(\ufzero) = \lambda (s, s'). 0 = \ufzero \\
        &\left(\lfun_P^b\right)^1(\ufzero) \\
        = & \cmt{Defintion~\ref{def:lfun}} \\
        & \pcchoice{b}{\left(\pseq{P}{\ufzero}\right)}{\pskip}\\ 
        = & \cmt{Law~\ref{thm:lfun_altdef}} \\
        & \rvprfunsym{{\ibracket{b}}*\prrvfunsym{\left(\pseq{P}{\ufzero}\right)} + \ibracket{\lnot b} * {\ibracket{\II}}} \\
        = & \cmt{Theorem~\ref{thm:prog_seq_comp} Law~\ref{thm:pseq_right_zero}} \\
        & \rvprfunsym{\ibracket{\lnot b} * {\ibracket{\II}}} \\
        = &\cmt{Expand as a function form, use $s$ and $s'$ for initial and final observation states} \\
        &\cmt{Substitution and Definition~\ref{def:uskip}} \\
         &\lambda (s, s'). ~\rvprfunsym{\ibracket{\lnot b(s)}*\ibracket{s'=s}}\\
        &\left(\lfun_P^b\right)^2 (\ufzero) \\
        = &\cmt{$\lfun^2(\ufzero) = \lfun(\lfun^1(\ufzero))$ and Law~\ref{thm:lfun_altdef}} \\
        & \pcchoice{b}{\left(\pseq{P}{\rvprfunsym{\ibracket{\lnot b} * {\ibracket{\II}}}}\right)}{\pskip}\\ 
        = &\cmt{Theorem~\ref{thm:prog_cond_choice} Law~\ref{thm:cchoice_pchoice} and Theorem~\ref{thm:prob_prob_choice} Law~\ref{thm:pchoice_altdef} } \\
        & \rvprfunsym{\prrvfunsym{\left({\rvprfunsym{\ibracket{b}}}\right)}*\prrvfunsym{\left(\pseq{P}{\rvprfunsym{\ibracket{\lnot b} * {\ibracket{\II}}}}\right)} + \left(\rfone - \prrvfunsym{\left({\rvprfunsym{\ibracket{b}}}\right)}\right) * \prrvfunsym{\pskip}}  \\
        = &\cmt{Theorem~\ref{thm:prrvfun_inverse_ibracket}, Definition~\ref{def:prog_skip}} \\
        & \rvprfunsym{{\ibracket{b}}*\prrvfunsym{\left(\pseq{P}{\rvprfunsym{\ibracket{\lnot b} * {\ibracket{\II}}}}\right)} + \left(\rfone - {\ibracket{b}}\right) * {\ibracket{\II}}} \\
        = &\cmt{Definition~\ref{def:prog_seq}, Theorem~\ref{thm:ib} Law~\ref{thm:ib_neg}} \\
        & \rvprfunsym{{\ibracket{b}}*\prrvfunsym{\left(\rvprfunsym{\left({\infsum v_0 @ {{\prrvfunsym{P}}}[v_0/\vv'] * {{\prrvfunsym{\left(\rvprfunsym{\ibracket{\lnot b} * {\ibracket{\II}}}\right)}}}[v_0/\vv]}\right)}\right)} + \left({\ibracket{\lnot b}}\right) * {\ibracket{\II}}} \\
        = &\cmt{Theorem~\ref{thm:prrvfun_inverse_ibracket}} \\
        & \rvprfunsym{{\ibracket{b}}*\prrvfunsym{\left(\rvprfunsym{\left({\infsum v_0 @ {{\prrvfunsym{P}}}[v_0/\vv'] * {{{\left({\ibracket{\lnot b} * {\ibracket{\II}}}\right)}}}[v_0/\vv]}\right)}\right)} + \left({\ibracket{\lnot b}}\right) * {\ibracket{\II}}} \\
        = &\cmt{Expand as a function form, use $s$ and $s'$ for initial and final observation states} \\
        &\cmt{Substitution and Definition~\ref{def:uskip}} \\
        & \lambda (s, s'). \rvprfunsym{{\ibracket{b(s)}}*\prrvfunsym{\left(\rvprfunsym{\left({\infsum s_0 @ {{\prrvfunsym{P}}}(s,s_0) * {{{\left({\ibracket{\lnot b(s_0)} * {\ibracket{s'=s_0}}}\right)}}}}\right)}\right)} + \left({\ibracket{\lnot b(s)}}\right) * {\ibracket{s'=s}}} \\
        = &\cmt{Theorem~\ref{thm:prrvfun_inverse} where the proof of $\isprob$ is omitted} \\
        & \lambda (s, s'). \rvprfunsym{{\ibracket{b(s)}}*{{\left({\infsum s_0 @ {{\prrvfunsym{P}}}(s,s_0) * {{{\left({\ibracket{\lnot b(s_0)} * {\ibracket{s'=s_0}}}\right)}}}}\right)}} + \left({\ibracket{\lnot b(s)}}\right) * {\ibracket{s'=s}}} \\
        = &\cmt{Law~\ref{thm:summation_cmult_left} and proof of summable is omitted } \\
        & \lambda (s, s'). \rvprfunsym{{{\left({\infsum s_0 @ {\ibracket{b(s)}}*{{\prrvfunsym{P}}}(s,s_0) * {{{\left({\ibracket{\lnot b(s_0)} * {\ibracket{s'=s_0}}}\right)}}}}\right)}} + \left({\ibracket{\lnot b(s)}}\right) * {\ibracket{s'=s}}} \\
    %
        &\left(\lfun_P^b\right)^3 (\ufzero) \\
        = &\cmt{$\lfun^3(\ufzero) = \lfun(\lfun^2(\ufzero))$ and same as previous proof} \\
        & \lambda (s, s'). \rvprfunsym{
            \left(
            \begin{array}[]{@{}l}
                {\ibracket{b(s)}}*\prrvfunsym{\left(\rvprfunsym{\left(
                    \begin{array}[]{@{}l}
                        \infsum s_1 @ {{\prrvfunsym{P}}}(s,s_1) * \\
                        \prrvfunsym{\left(
                            \rvprfunsym{
                                \begin{array}[]{@{}l}
                                    {{\left({\infsum s_0 @ {\ibracket{b(s_1)}}*{{\prrvfunsym{P}}}(s_1,s_0) * {{{\left({\ibracket{\lnot b(s_0)} * {\ibracket{s'=s_0}}}\right)}}}}\right)}} \\
                                    + {\ibracket{\lnot b(s_1)}} * {\ibracket{s'=s_1}}
                                \end{array} }\right)}
                            \end{array} \right)}\right)} \\
                            + \left({\ibracket{\lnot b(s)}}\right) * {\ibracket{s'=s}}
                        \end{array}
                        \right)
                    } \\
                    = &\cmt{Theorem~\ref{thm:prrvfun_inverse} where the proof of $\isprob$ is omitted} \\
                    & \lambda (s, s'). \rvprfunsym{
                        \left(
                        \begin{array}[]{@{}l}
                            {\ibracket{b(s)}}*\prrvfunsym{\left(\rvprfunsym{\left(
                                \begin{array}[]{@{}l}
                                    \infsum s_1 @ {{\prrvfunsym{P}}}(s,s_1) * \\
                                    {\left( {
                                        \begin{array}[]{@{}l}
                                            {{\left({\infsum s_0 @ {\ibracket{b(s_1)}}*{{\prrvfunsym{P}}}(s_1,s_0) * {{{\left({\ibracket{\lnot b(s_0)} * {\ibracket{s'=s_0}}}\right)}}}}\right)}} \\
                                            + {\ibracket{\lnot b(s_1)}} * {\ibracket{s'=s_1}}
                                        \end{array} }\right)}
                                    \end{array} \right)}\right)} \\
                                    + \left({\ibracket{\lnot b(s)}}\right) * {\ibracket{s'=s}}
                                \end{array}
                                \right)
                            } \\
                            = &\cmt{Law~\ref{thm:summation_add} and proofs of summable omitted } \\
                            & \lambda (s, s'). \rvprfunsym{
                                \left(
                                \begin{array}[]{@{}l}
                                    {\ibracket{b(s)}}*\prrvfunsym{\left(\rvprfunsym{\left(
                                        \begin{array}[]{@{}l}
                                            \infsum s_1 @ 
                                            {\prrvfunsym{P}}(s,s_1) * \\
                                            \;{{\left({\infsum s_0 @ {\ibracket{b(s_1)}}*{{\prrvfunsym{P}}}(s_1,s_0) * {{{\left({\ibracket{\lnot b(s_0)} * {\ibracket{s'=s_0}}}\right)}}}}\right)}} 
                                            \\
                                            + \infsum s_1 @ {\prrvfunsym{P}}(s,s_1) * {\ibracket{\lnot b(s_1)}} * {\ibracket{s'=s_1}}
                                        \end{array} \right)}\right)} \\
                                        + \left({\ibracket{\lnot b(s)}}\right) * {\ibracket{s'=s}}
                                    \end{array}
                                    \right)
                                } \\
                                = &\cmt{Theorem~\ref{thm:prrvfun_inverse} where the proof of $\isprob$ is omitted} \\
                                & \lambda (s, s'). \rvprfunsym{
                                    \left(
                                    \begin{array}[]{@{}l} 
                                        {\ibracket{b(s)}} *\infsum s_1 @ {\prrvfunsym{P}}(s,s_1) * \\
                                        \;{{\left({\infsum s_0 @ {\ibracket{b(s_1)}}*{{\prrvfunsym{P}}}(s_1,s_0) * {{{\left({\ibracket{\lnot b(s_0)} * {\ibracket{s'=s_0}}}\right)}}}}\right)}} \\
                                        {\ibracket{b(s)}} *\infsum s_1 @ {\prrvfunsym{P}}(s,s_1) * {\ibracket{\lnot b(s_1)}} * {\ibracket{s'=s_1}} \\ 
                                        + \left({\ibracket{\lnot b(s)}}\right) * {\ibracket{s'=s}}
                                    \end{array} 
                                    \right) 
                                } \\
                                = &\cmt{Law~\ref{thm:summation_cmult_left} and proof of summable is omitted } \\
                                & \lambda (s, s'). \rvprfunsym{
                                    \left(
                                    \begin{array}[]{@{}l} 
                                        \infsum s_1 @ {\ibracket{b(s)}} *{\prrvfunsym{P}}(s,s_1) * \\
                                        \;{{\left({\infsum s_0 @ {\ibracket{b(s_1)}}*{{\prrvfunsym{P}}}(s_1,s_0) * {{{\left({\ibracket{\lnot b(s_0)} * {\ibracket{s'=s_0}}}\right)}}}}\right)}} \\
                                        \infsum s_1 @ {\ibracket{b(s)}} *{\prrvfunsym{P}}(s,s_1) * {\ibracket{\lnot b(s_1)}} * {\ibracket{s'=s_1}} \\ 
                                        + \left({\ibracket{\lnot b(s)}}\right) * {\ibracket{s'=s}}
                                    \end{array} 
                                    \right) 
                                } \\
                                = &\cmt{Rewrite } \\
                                & \lambda (s, s'). \rvprfunsym{
                                    \left(
                                    \begin{array}[]{@{}l} 
                                        \sum\limits_{i=1}^{2} \left(
                                        \begin{array}{@{}l}
                                            \infsum s_{i-1}.  
                                            \ibracket{b(s)}*{\prrvfunsym{P}}(s,s_{i-1})* \\
                                            \left(
                                            \begin{array}{@{}l}
                                                \infsum s_{i-2}. \ibracket{b(s_{i-1})}*{\prrvfunsym{P}}(s_{i-1}, s_{i-2})* \\\left(
                                                \begin{array}{@{}l}
                                                    \vdots * \\
                                                    \left(
                                                    \infsum s_0. \ibracket{b(s_1)}* {\prrvfunsym{P}}(s_1, s_0) * \ibracket{\lnot b(s_0)}*\ibracket{s'=s_0} 
                                                    \right)
                                                \end{array}\right)
                                            \end{array}
                                            \right)
                                        \end{array}\right) \\
                                        + \left({\ibracket{\lnot b(s)}}\right) * {\ibracket{s'=s}}
                                    \end{array} 
                                    \right) 
                                } 
                            \end{align*}
Assume $(\lfun_P^b)^n(\ufzero)$ satisfies the theorem, we show below that $(\lfun_P^b)^{n+1}(\ufzero)$ also satisfies the theorem.
                            \begin{align*}
                                &\left(\lfun_P^b\right)^{n+1} (\ufzero) \\
                                = &\cmt{Assumption, $\lfun^{n+1}(\ufzero) = \lfun(\lfun^n(\ufzero))$, and same as previous proof} \\
                                & \lambda (s, s'). \rvprfunsym{
                                    \left(
                                    \begin{array}[]{@{}l}
                                        {\ibracket{b(s)}}*\\
                                        \prrvfunsym{\left(\rvprfunsym{\left(
                                            \begin{array}[]{@{}l}
                                                \infsum s_{n-1} @ {{\prrvfunsym{P}}}(s,s_{n-1}) * \\
                                                { \left(
                                                    \begin{array}[]{@{}l} 
                                                        \sum\limits_{i=1}^{n-1} \left(
                                                        \begin{array}{@{}l}
                                                            \infsum s_{i-1}. 
                                                            \ibracket{b(s_{n-1})}*{\prrvfunsym{P}}(s_{n-1},s_{i-1})* \\
                                                            \left(
                                                            \begin{array}{@{}l}
                                                                \infsum s_{i-2}. \ibracket{b(s_{i-1})}*{\prrvfunsym{P}}(s_{i-1}, s_{i-2})* \\\left(
                                                                \begin{array}{@{}l}
                                                                    \vdots * \\
                                                                    \left(
                                                                    \infsum s_0. \ibracket{b(s_1)}* {\prrvfunsym{P}}(s_1, s_0) * \ibracket{\lnot b(s_0)}*\ibracket{s'=s_0} 
                                                                    \right)
                                                                \end{array}\right)
                                                            \end{array}
                                                            \right)
                                                        \end{array}\right) \\
                                                        + \left({\ibracket{\lnot b(s_{n-1})}}\right) * {\ibracket{s'=s_{n-1}}}
                                                    \end{array} 
                                                    \right) 
                                                }
                                            \end{array} \right)}\right)} \\
                                            + \left({\ibracket{\lnot b(s)}}\right) * {\ibracket{s'=s}}
                                        \end{array}
                                        \right)
                                    } \\
                                    = &\cmt{ Expand finite summation } \\
                                    & \lambda (s, s'). \rvprfunsym{
                                        \left(
                                        \begin{array}[]{@{}l}
                                            {\ibracket{b(s)}}*\\
                                            \prrvfunsym{\left(\rvprfunsym{\left(
                                                \begin{array}[]{@{}l}
                                                    \infsum s_{n-1} @ {{\prrvfunsym{P}}}(s,s_{n-1}) * \\
                                                    { \left(
                                                        \begin{array}[]{@{}l} 
                                                            \left(
                                                            \begin{array}{@{}l}
                                                                \infsum s_{{n-1}-1}.  
                                                                \ibracket{b(s_{n-1})}*{\prrvfunsym{p}}(s_{n-1},s_{{n-1}-1})* \\
                                                                \left(
                                                                \begin{array}{@{}l}
                                                                    \infsum s_{{n-1}-2}. \ibracket{b(s_{{n-1}-1})}*{\prrvfunsym{p}}(s_{{n-1}-1}, s_{{n-1}-2})* \\\left(
                                                                    \begin{array}{@{}l}
                                                                        \vdots * \\
                                                                        \left(
                                                                        \infsum s_0. \ibracket{b(s_1)}* {\prrvfunsym{p}}(s_1, s_0) * \ibracket{\lnot b(s_0)}*\ibracket{s'=s_0} 
                                                                        \right)
                                                                    \end{array}\right)
                                                                \end{array}
                                                                \right)
                                                            \end{array}\right) \\
                                                            + \cdots + \\
                                                            \left(
                                                            \infsum s_0. \ibracket{b(s_{n-1})}* {\prrvfunsym{p}}(s_{n-1}, s_0) * \ibracket{\lnot b(s_0)}*\ibracket{s'=s_0} 
                                                            \right) \\
                                                            + \left({\ibracket{\lnot b(s_{n-1})}}\right) * {\ibracket{s'=s_{n-1}}}
                                                        \end{array} 
                                                        \right) 
                                                    }
                                                \end{array} \right)}\right)} \\
                                                + \left({\ibracket{\lnot b(s)}}\right) * {\ibracket{s'=s}}
                                            \end{array}
                                            \right)
                                        } \\
                                        = &\cmt{Law~\ref{thm:summation_add} and proofs of summable omitted } \\
                                        & \lambda (s, s'). \rvprfunsym{
                                            \left(
                                            \begin{array}[]{@{}l}
                                                {\ibracket{b(s)}}*\\
                                                \prrvfunsym{\left(\rvprfunsym{\left(
                                                    \begin{array}[]{@{}l}
                                                        { \left(
                                                            \begin{array}[]{@{}l} 
                                                                \left(
                                                                \begin{array}{@{}l}
                                                                    \infsum s_{n-1} @ {{\prrvfunsym{P}}}(s,s_{n-1}) * \\ 
                                                                    \infsum s_{{n-1}-1}.  
                                                                    \ibracket{b(s_{n-1})}*{\prrvfunsym{p}}(s_{n-1},s_{{n-1}-1})* \\
                                                                    \left(
                                                                    \begin{array}{@{}l}
                                                                        \infsum s_{{n-1}-2}. \ibracket{b(s_{{n-1}-1})}*{\prrvfunsym{p}}(s_{{n-1}-1}, s_{{n-1}-2})* \\\left(
                                                                        \begin{array}{@{}l}
                                                                            \vdots * \\
                                                                            \left(
                                                                            \infsum s_0. \ibracket{b(s_1)}* {\prrvfunsym{p}}(s_1, s_0) * \ibracket{\lnot b(s_0)}*\ibracket{s'=s_0} 
                                                                            \right)
                                                                        \end{array}\right)
                                                                    \end{array}
                                                                    \right)
                                                                \end{array}\right) \\
                                                                + \cdots + \\
                                                                \infsum s_{n-1} @ {{\prrvfunsym{P}}}(s,s_{n-1}) * \\
                                                                \infsum s_0. \ibracket{b(s_{n-1})}* {\prrvfunsym{p}}(s_{n-1}, s_0) * \ibracket{\lnot b(s_0)}*\ibracket{s'=s_0} \\ 
                                                                + 
                                                                \infsum s_{n-1} @ {{\prrvfunsym{P}}}(s,s_{n-1}) * 
                                                                \left({\ibracket{\lnot b(s_{n-1})}}\right) * {\ibracket{s'=s_{n-1}}}
                                                            \end{array} 
                                                            \right) 
                                                        }
                                                    \end{array} \right)}\right)} \\
                                                    + \left({\ibracket{\lnot b(s)}}\right) * {\ibracket{s'=s}}
                                                \end{array}
                                                \right)
                                            } \\
                                            = &\cmt{Law~\ref{thm:summation_cmult_left} and combine summation } \\
                                            & \lambda (s, s'). \rvprfunsym{
                                                \left(
                                                \begin{array}[]{@{}l} 
                                                    \sum\limits_{i=1}^{n} \left(
                                                    \begin{array}{@{}l}
                                                        \infsum s_{i-1}.  
                                                        \ibracket{b(s)}*{\prrvfunsym{P}}(s,s_{i-1})* \\
                                                        \left(
                                                        \begin{array}{@{}l}
                                                            \infsum s_{i-2}. \ibracket{b(s_{i-1})}*{\prrvfunsym{P}}(s_{i-1}, s_{i-2})* \\\left(
                                                            \begin{array}{@{}l}
                                                                \vdots * \\
                                                                \left(
                                                                \infsum s_0. \ibracket{b(s_1)}* {\prrvfunsym{P}}(s_1, s_0) * \ibracket{\lnot b(s_0)}*\ibracket{s'=s_0} 
                                                                \right)
                                                            \end{array}\right)
                                                        \end{array}
                                                        \right)
                                                    \end{array}\right) \\
                                                    + \left({\ibracket{\lnot b(s)}}\right) * {\ibracket{s'=s}}
                                                \end{array} 
                                                \right) 
                                            } \\
                                        \end{align*}
                                        This concludes the proof.
                                    \end{proof}

\begin{thm}
    \label{thm:iterdiff_top}
    Provided $P$ is a distribution, that is, $\isfinaldist(P)$. 
    \begin{align*}
        &\left(\lfun_P^b\right)^0(\ufone) = \ufone \\ 
        &\left(\lfun_P^b\right)^1(\ufone) = \lambda (s, s'). ~\rvprfunsym{\infsum s_0. \ibracket{b(s)}* {\prrvfunsym{P}}(s, s_0) + \ibracket{\lnot b(s)}*\ibracket{s'=s}}
    \end{align*}
    If $n>1$, then
    \begin{align*}
        & \left(\lfun_P^b\right)^n(\ufone) = \\ 
        & \lambda (s, s'). \rvprfunsym{
            \left(
            \begin{array}[]{@{}l} 
                \left(
                \begin{array}{@{}l}
                    \infsum s_{n-1}.  
                    \ibracket{b(s)}*{\prrvfunsym{P}}(s,s_{n-1})* \\
                    \left(
                    \begin{array}{@{}l}
                        \infsum s_{n-2}. \ibracket{b(s_{n-1})}*{\prrvfunsym{P}}(s_{n-1}, s_{n-2})* \\\left(
                        \begin{array}{@{}l}
                            \vdots * \\
                            \left(
                            \infsum s_0. \ibracket{b(s_1)}* {\prrvfunsym{P}}(s_1, s_0) 
                            \right)
                        \end{array}\right)
                    \end{array}
                    \right)
                \end{array}\right) + \hfill(\textnormal{\color{red} diff})\\
                \sum\limits_{i=1}^{n-1} \left(
                \begin{array}{@{}l}
                    \infsum s_{i-1}.  
                    \ibracket{b(s)}*{\prrvfunsym{P}}(s,s_{i-1})* \\
                    \left(
                    \begin{array}{@{}l}
                        \infsum s_{i-2}. \ibracket{b(s_{i-1})}*{\prrvfunsym{P}}(s_{i-1}, s_{i-2})* \\\left(
                        \begin{array}{@{}l}
                            \vdots * \\
                            \left(
                            \infsum s_0. \ibracket{b(s_1)}* {\prrvfunsym{P}}(s_1, s_0) * \ibracket{\lnot b(s_0)}*\ibracket{s'=s_0} 
                            \right)
                        \end{array}\right)
                    \end{array}
                    \right)
                \end{array}\right) \\
                + \left({\ibracket{\lnot b(s)}}\right) * {\ibracket{s'=s}}
            \end{array} 
            \right) 
        } 
    \end{align*}
    We note that the part marked with (\textnormal{\color{red} diff}) is the only difference of this $\left(\lfun_P^b\right)^n(\ufone)$ from $\left(\lfun_P^b\right)^n(\ufzero)$. 
\end{thm}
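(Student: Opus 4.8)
The plan is to mirror the proof of Theorem~\ref{thm:iterdiff_bot}: pin down the closed form for small $n$ by direct unfolding, establish the general shape by induction on $n$, and then read off the structural comparison with $(\lfun_P^b)^n(\ufzero)$.

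First I would dispatch the base cases. For $n=0$ we have $(\lfun_P^b)^0(\ufone)=\ufone$ directly from the definition of $\iter$. For $n=1$ I would unfold $\lfunbp(\ufone)$ via Law~\ref{thm:lfun_altdef}, obtaining $\rvprfunsym{\ibracket{b}*\prrvfunsym{(\pseq{P}{\ufone})}+\ibracket{\lnot b}*\ibracket{\II}}$. Here lies the one genuine divergence from the bottom iteration: in $(\lfun_P^b)^1(\ufzero)$ the summand $\pseq{P}{\ufzero}=\ufzero$ is annihilated by the right-zero law (Theorem~\ref{thm:prog_seq_comp} Law~\ref{thm:pseq_right_zero}), whereas $\pseq{P}{\ufone}$ survives. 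Rather than collapsing it to $\ufone$ through Law~\ref{thm:pseq_one}, I would keep it unevaluated: from Definition~\ref{def:prog_seq} together with $\prrvfunsym{\ufone}=\rfone$ (Theorem~\ref{thm:top_bot}) it equals $\rvprfunsym{\infsum v_0 @ \prrvfunsym{P}[v_0/\vv']}$; since $P$ is a distribution the inner sum is the constant $1$, hence probabilistic, so Theorem~\ref{thm:prrvfun_inverse} strips the conversion, Theorem~\ref{thm:summation_cmult_left} pushes $\ibracket{b(s)}$ inside the sum, and expanding to $\lambda(s,s')$ form via substitution (using $\II \defs v'=v$) gives precisely the stated expression for $(\lfun_P^b)^1(\ufone)$. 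Leaving the summation unexpanded is exactly what keeps the inductive structure, and therefore the final comparison, visible.

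Next I would run the induction step, assuming the formula for some $n\ge 1$ and computing $(\lfun_P^b)^{n+1}(\ufone)=\lfunbp\bigl((\lfun_P^b)^n(\ufone)\bigr)$. The manipulations are, step for step, those in the proof of Theorem~\ref{thm:iterdiff_bot}: apply Law~\ref{thm:lfun_altdef}, expand the sequential composition by Definition~\ref{def:prog_seq}, remove conversions with Theorems~\ref{thm:prrvfun_inverse} and~\ref{thm:prrvfun_inverse_ibracket}, split the outer factor $\ibracket{b(s)}*\prrvfunsym{P}(s,s_n)$ across the additive structure with Theorem~\ref{thm:summation_add}, and pull constants through the nested infinite sums with Theorem~\ref{thm:summation_cmult_left}, discharging summability (as the paper does) from $P$ being a distribution. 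The one new piece of bookkeeping is tracking the leading ``(diff)'' layer: prepending the fresh layer $\infsum s_n @ \ibracket{b(s)}*\prrvfunsym{P}(s,s_n)$ to the level-$n$ ``(diff)'' term --- the depth-$n$ nested summation whose innermost factor is $\infsum s_0 @ \ibracket{b(s_1)}*\prrvfunsym{P}(s_1,s_0)$, carrying no $\ibracket{\lnot b(s_0)}*\ibracket{s'=s_0}$ --- yields the level-$(n+1)$ ``(diff)'' term; the old $\sum_{i=1}^{n-1}$ block together with the newly created $\ibracket{\lnot b(s_n)}*\ibracket{s'=s_n}$-terminated layer reindexes to $\sum_{i=1}^{n}$; and the tail $\ibracket{\lnot b(s)}*\ibracket{s'=s}$ is untouched. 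This is the claimed formula at $n+1$.

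Finally, the ``only difference'' remark is obtained by laying this induction beside that of Theorem~\ref{thm:iterdiff_bot}: the two are identical except that the top iteration, seeded with $\ufone$ instead of $\ufzero$, never annihilates its leading layer, and that surviving layer --- which by construction never picks up an $\ibracket{\lnot b(\cdot)}$ guard --- is exactly the ``(diff)'' summand; every other term coincides verbatim. I expect the main obstacle to be the summation bookkeeping in the induction step, specifically the reindexing of the finite sum once the new layer is prepended and ensuring the summability hypotheses for Theorems~\ref{thm:summation_add} and~\ref{thm:summation_cmult_left} hold at every nesting depth, which is where the distributional property of $P$ does the work; getting the ``(diff)'' term to match the two iterations on the nose --- rather than merely up to rearrangement --- is the delicate conceptual point, and it is precisely what later feeds the proof of Theorem~\ref{thm:iterdiff}.
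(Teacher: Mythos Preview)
Your proposal is correct and follows essentially the same route as the paper's own proof: unfold the base cases via Law~\ref{thm:lfun_altdef}, keep $\pseq{P}{\ufone}$ in its summation form $\rvprfunsym{\infsum v_0 @ \prrvfunsym{P}[v_0/\vv']}$ rather than collapsing it, and then run the induction using the same toolkit (Theorems~\ref{thm:prrvfun_inverse}, \ref{thm:summation_add}, \ref{thm:summation_cmult_left}) as in Theorem~\ref{thm:iterdiff_bot}, tracking the extra ``diff'' layer. The only cosmetic difference is that the paper additionally writes out the $n=2$ case explicitly before the general induction step, whereas you start the induction directly from $n\ge 1$.
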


\begin{proof}
    We show below that $\left(\lfun_P^b\right)^n(\ufone)$ for $n=0$ to $3$ satisfies the theorem.
    \begin{align*}
        &\left(\lfun_P^b\right)^0(\ufone) = \lambda (s, s'). 0 = \ufone \\
        &\left(\lfun_P^b\right)^1(\ufone) \\
        = & \cmt{Defintion~\ref{def:lfun}} \\
        & \pcchoice{b}{\left(\pseq{P}{\ufone}\right)}{\pskip}\\ 
        = & \cmt{Law~\ref{thm:lfun_altdef}} \\
        & \rvprfunsym{{\ibracket{b}}*\prrvfunsym{\left(\pseq{P}{\ufone}\right)} + \ibracket{\lnot b} * {\ibracket{\II}}} \\
        = & \cmt{$\pseq{P}{\ufone}= \rvprfunsym{{\infsum v_0 @ \prrvfunsym{P}[v_0/\vv']}}$, see the proof of Theorem~\ref{thm:prog_seq_comp} Law~\ref{thm:pseq_one} } \\
        & \rvprfunsym{{\ibracket{b}}*\prrvfunsym{\left(\rvprfunsym{{\infsum v_0 @ \prrvfunsym{P}[v_0/\vv']}}\right)} + \ibracket{\lnot b} * {\ibracket{\II}}} \\
        = &\cmt{Expand as a function form, use $s$ and $s'$ for initial and final observation states} \\
        &\cmt{Substitution and Definition~\ref{def:uskip}, Theorem~\ref{thm:prrvfun_inverse}, and Law~\ref{thm:summation_cmult_left} } \\
         &\lambda (s, s'). ~\rvprfunsym{\infsum s_0. \ibracket{b(s)}* {\prrvfunsym{P}}(s, s_0) + \ibracket{\lnot b(s)}*\ibracket{s'=s}}\\
        &\left(\lfun_P^b\right)^2 (\ufone) \\
        = &\cmt{$\lfun^2(\ufone) = \lfun(\lfun^1(\ufone))$ and same as previous proof} \\
        & \lambda (s, s'). \rvprfunsym{
            \left(
            \begin{array}[]{@{}l}
                {\ibracket{b(s)}}*\prrvfunsym{\left(\rvprfunsym{\left(
                    \begin{array}[]{@{}l}
                        \infsum s_1 @ {{\prrvfunsym{P}}}(s,s_1) * \\
                        \prrvfunsym{\left(
                            \rvprfunsym{
                                \begin{array}[]{@{}l}
                                    \infsum s_0. \ibracket{b(s_1)}* {\prrvfunsym{P}}(s_1, s_0) 
                                    + {\ibracket{\lnot b(s_1)}} * {\ibracket{s'=s_1}}
                                \end{array} }\right)}
                            \end{array} \right)}\right)} \\
                            + \left({\ibracket{\lnot b(s)}}\right) * {\ibracket{s'=s}}
                        \end{array}
                        \right)
                    } \\
        = &\cmt{Theorem~\ref{thm:prrvfun_inverse} where the proof of $\isprob$ is omitted} \\
          &\cmt{Law~\ref{thm:summation_cmult_left} and proof of summable is omitted } \\
        & \lambda (s, s'). \rvprfunsym{
            \left(
            \begin{array}[]{@{}l}
                \infsum s_1 @ \ibracket{b(s)}*{{\prrvfunsym{P}}}(s,s_1) * 
                \left( \infsum s_0. \ibracket{b(s_1)}* {\prrvfunsym{P}}(s_1, s_0) \right) + \\
                \infsum s_1 @ \ibracket{b(s)}*{{\prrvfunsym{P}}}(s,s_1) * 
                \left( {\ibracket{\lnot b(s_1)}} * {\ibracket{s'=s_1}} \right) + \\
                {\ibracket{\lnot b(s)}} * {\ibracket{s'=s}}
            \end{array}
            \right)
        } 
\end{align*}
Assume $(\lfun_P^b)^n(\ufone)$ satisfies the theorem, we show below that $(\lfun_P^b)^{n+1}(\ufone)$ also satisfies the theorem.
\begin{align*}
    &\left(\lfun_P^b\right)^{n+1} (\ufone) \\
    = &\cmt{Assumption, $\lfun^{n+1}(\ufone) = \lfun(\lfun^n(\ufone))$, and same as previous proof} \\
    & \lambda (s, s'). \rvprfunsym{
        \left(
        \begin{array}[]{@{}l}
            {\ibracket{b(s)}}*\\
            \prrvfunsym{\left(\rvprfunsym{\left(
                \begin{array}[]{@{}l}
                    \infsum s_{n} @ {{\prrvfunsym{P}}}(s,s_{n}) * \\
            { \left(
            \begin{array}[]{@{}l} 
                \left(
                \begin{array}{@{}l}
                    \infsum s_{n-1}.  
                    \ibracket{b(s_n)}*{\prrvfunsym{P}}(s_n,s_{n-1})* \\
                    \left(
                    \begin{array}{@{}l}
                        \infsum s_{n-2}. \ibracket{b(s_{n-1})}*{\prrvfunsym{P}}(s_{n-1}, s_{n-2})* \\\left(
                        \begin{array}{@{}l}
                            \vdots * \\
                            \left(
                            \infsum s_0. \ibracket{b(s_1)}* {\prrvfunsym{P}}(s_1, s_0) 
                            \right)
                        \end{array}\right)
                    \end{array}
                    \right)
                \end{array}\right) + \\
                \sum\limits_{i=1}^{n-1} \left(
                \begin{array}{@{}l}
                    \infsum s_{i-1}.  
                    \ibracket{b(s_n)}*{\prrvfunsym{P}}(s_n,s_{i-1})* \\
                    \left(
                    \begin{array}{@{}l}
                        \infsum s_{i-2}. \ibracket{b(s_{i-1})}*{\prrvfunsym{P}}(s_{i-1}, s_{i-2})* \\\left(
                        \begin{array}{@{}l}
                            \vdots * \\
                            \left(
                            \infsum s_0. \ibracket{b(s_1)}* {\prrvfunsym{P}}(s_1, s_0) * \ibracket{\lnot b(s_0)}*\ibracket{s'=s_0} 
                            \right)
                        \end{array}\right)
                    \end{array}
                    \right)
                \end{array}\right) \\
                + \left({\ibracket{\lnot b(s)}}\right) * {\ibracket{s'=s}}
            \end{array} 
            \right) 
        }
                \end{array} \right)}\right)} \\
                + \left({\ibracket{\lnot b(s)}}\right) * {\ibracket{s'=s}}
            \end{array}
            \right)
        } \\
        = &\cmt{ Expand finite summation, Law~\ref{thm:summation_cmult_left}, same as previous proof } \\
        & \lambda (s, s'). \rvprfunsym{
            \left(
            \begin{array}[]{@{}l} 
                \left(
                \begin{array}{@{}l}
                    \infsum s_{n}.  
                    \ibracket{b(s)}*{\prrvfunsym{P}}(s,s_{n})* \\
                    \left(
                    \begin{array}{@{}l}
                        \infsum s_{n-1}. \ibracket{b(s_{n})}*{\prrvfunsym{P}}(s_{n}, s_{n-1})* \\\left(
                        \begin{array}{@{}l}
                            \vdots * \\
                            \left(
                            \infsum s_0. \ibracket{b(s_1)}* {\prrvfunsym{P}}(s_1, s_0) 
                            \right)
                        \end{array}\right)
                    \end{array}
                    \right)
                \end{array}\right) + \\
                \sum\limits_{i=1}^{n} \left(
                \begin{array}{@{}l}
                    \infsum s_{i-1}.  
                    \ibracket{b(s)}*{\prrvfunsym{P}}(s,s_{i-1})* \\
                    \left(
                    \begin{array}{@{}l}
                        \infsum s_{i-2}. \ibracket{b(s_{i-1})}*{\prrvfunsym{P}}(s_{i-1}, s_{i-2})* \\\left(
                        \begin{array}{@{}l}
                            \vdots * \\
                            \left(
                            \infsum s_0. \ibracket{b(s_1)}* {\prrvfunsym{P}}(s_1, s_0) * \ibracket{\lnot b(s_0)}*\ibracket{s'=s_0} 
                            \right)
                        \end{array}\right)
                    \end{array}
                    \right)
                \end{array}\right) \\
                + \left({\ibracket{\lnot b(s)}}\right) * {\ibracket{s'=s}}
            \end{array} 
            \right) 
        } 
\end{align*}
This concludes the proof.
\end{proof}

\begin{thm}
    \label{thm:iterdiff_eq}
    \begin{align*}
       \forall n:\nat | n \geq 1 \bullet  
\iterdiff(n, b, P) = \lambda (s, s'). \rvprfunsym{
\left(
                \begin{array}{@{}l}
                    \infsum s_{n-1}.  
                    \ibracket{b(s)}*{\prrvfunsym{P}}(s,s_{n-1})* \\
                    \left(
                    \begin{array}{@{}l}
                        \infsum s_{n-2}. \ibracket{b(s_{n-1})}*{\prrvfunsym{P}}(s_{n-1}, s_{n-2})* \\\left(
                        \begin{array}{@{}l}
                            \vdots * \\
                            \left(
                            \infsum s_0. \ibracket{b(s_1)}* {\prrvfunsym{P}}(s_1, s_0) 
                            \right)
                        \end{array}\right)
                    \end{array}
                    \right)
                \end{array}
            \right)}
    \end{align*}
\end{thm}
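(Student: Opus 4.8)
The plan is a straightforward induction on $n$ (for $n \geq 1$), carried out under the standing hypothesis $\isfinaldist(P)$ that accompanies the companion Theorems~\ref{thm:iterdiff_bot} and~\ref{thm:iterdiff_top}; this hypothesis is what makes every infinite sum below summable and keeps each intermediate real-valued expression a probability expression, so that $\prrvfunsym$ and $\rvprfunsym$ cancel via Theorems~\ref{thm:prrvfun_inverse} and~\ref{thm:prrvfun_inverse_ibracket}. The move shared by the base case and the step is to unfold one layer of $\iterdiff$ using Definition~\ref{def:iterdiff}: $\iterdiff(1, b, P) = \lfundiff(b, P, \ufone) = \pcchoice{b}{(\pseq{P}{\ufone})}{\ufzero}$ and $\iterdiff(n{+}1, b, P) = \pcchoice{b}{(\pseq{P}{\iterdiff(n,b,P)})}{\ufzero}$. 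Rewriting the conditional choice as a probabilistic choice (Theorem~\ref{thm:prog_cond_choice}, Law~\ref{thm:cchoice_pchoice}) and expanding by Law~\ref{thm:pchoice_altdef}, the $\ufzero$ branch vanishes because $\prrvfunsym{\ufzero} = \rfzero$ and the right-zero/right-unit laws of Theorem~\ref{thm:top_bot} collapse it, leaving $\rvprfunsym{\ibracket{b} * \prrvfunsym{(\pseq{P}{Y})}}$ with $Y = \ufone$ or $Y = \iterdiff(n, b, P)$.

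For the base case $n = 1$, I would reuse the computation from the proof of Theorem~\ref{thm:prog_seq_comp} Law~\ref{thm:pseq_one}, namely $\pseq{P}{\ufone} = \rvprfunsym{\infsum v_0 @ \prrvfunsym{P}[v_0/\vv']}$ (this step uses only the definition of $\pseq{}{}$ and multiplication by $\rfone$, not distributivity or $\isfinaldist$). Expanding to function form with $s, s'$ for the initial and final states, substituting, and pulling $\ibracket{b(s)}$ inside the summation by Law~\ref{thm:summation_cmult_left} gives $\lambda(s,s').\,\rvprfunsym{\infsum s_0 @ \ibracket{b(s)} * \prrvfunsym{P}(s, s_0)}$, which is exactly the claimed formula at $n = 1$, where the nested structure degenerates to a single summation (consistently with the base case of Theorem~\ref{thm:iterdiff_top}).

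For the inductive step, assume the claim for $n$, so $\iterdiff(n, b, P) = \lambda(s,s').\,\rvprfunsym{D_n(s)}$ where $D_n(s)$ is the $n$-fold nested sum in the statement (and note it does not depend on $s'$). Starting from $\rvprfunsym{\ibracket{b} * \prrvfunsym{(\pseq{P}{\iterdiff(n,b,P)})}}$, expand $\pseq{P}{\iterdiff(n,b,P)}$ by Definition~\ref{def:prog_seq} as $\rvprfunsym{\infsum v_0 @ \prrvfunsym{P}[v_0/\vv'] * \prrvfunsym{\iterdiff(n,b,P)}[v_0/\vv]}$, substitute the induction hypothesis, cancel $\prrvfunsym$ against $\rvprfunsym$ (Theorem~\ref{thm:prrvfun_inverse}, with the $\isprob(D_n)$ side-condition discharged from summability), expand to function form, and pull $\ibracket{b(s)}$ into the outer summation by Law~\ref{thm:summation_cmult_left}. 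Renaming $v_0$ to $s_n$, one is left with $\lambda(s,s').\,\rvprfunsym{\infsum s_n @ \ibracket{b(s)} * \prrvfunsym{P}(s, s_n) * D_n(s_n)}$, and by the recursive shape of the nested sum this is precisely $\lambda(s,s').\,\rvprfunsym{D_{n+1}(s)}$, closing the induction.

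The hard part is not the algebra of the nested sums — that is essentially the bookkeeping already performed in the proofs of Theorems~\ref{thm:iterdiff_bot} and~\ref{thm:iterdiff_top} — but the side-conditions silently elided there: at every nesting level one needs the relevant partial function to be summable (to invoke Laws~\ref{thm:summation_cmult_left} and~\ref{thm:summation_add} and to move between $\prrvfunsym$ and $\rvprfunsym$) and the accumulated expression to stay in $[0,1]$. Discharging these uniformly in $n$ is where $\isfinaldist(P)$ does its work, and it is worth isolating as an auxiliary lemma (``$D_n$ is a probability expression and each of its inner sums is summable'') proved by the same induction before the main argument. Finally, this theorem is the workhorse for Theorem~\ref{thm:iterdiff}: together with Theorems~\ref{thm:iterdiff_bot} and~\ref{thm:iterdiff_top}, the summands common to $(\lfun_P^b)^n(\ufone)$ and $(\lfun_P^b)^n(\ufzero)$ cancel under subtraction (Law~\ref{thm:summation_minus}), leaving exactly $D_n = \iterdiff(n, b, P)$.
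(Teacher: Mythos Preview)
Your proposal is correct and follows essentially the same route as the paper's own proof: induction on $n$, unfolding $\iterdiff$ via Definition~\ref{def:iterdiff} and $\lfundiff$, collapsing the $\ufzero$ branch of the conditional choice via Laws~\ref{thm:cchoice_pchoice} and~\ref{thm:pchoice_altdef}, handling $\pseq{P}{\ufone}$ in the base case exactly as in the proof of Law~\ref{thm:pseq_one}, and in the step expanding $\pseq{P}{\iterdiff(n,b,P)}$ by Definition~\ref{def:prog_seq}, cancelling $\prrvfunsym\circ\rvprfunsym$ via Theorem~\ref{thm:prrvfun_inverse}, and pulling $\ibracket{b(s)}$ inside the outer sum by Law~\ref{thm:summation_cmult_left}. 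The paper likewise omits the summability and $\isprob$ side-conditions that you (rightly) flag as the real bookkeeping burden.
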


\begin{proof}
   If $n=1$, then  
   \begin{align*}
      & \iterdiff(1, b, P) = \\ 
     = &\cmt{Defintions~\ref{def:iterdiff} and $\lfundiff$} \\
     & \pcchoice{b}{\left(\pseq{P}{\ufone}\right)}{\ufzero} \\
     = &\cmt{Theorem~\ref{thm:prog_cond_choice} Law~\ref{thm:cchoice_pchoice} and Theorem~\ref{thm:prob_prob_choice} Law~\ref{thm:pchoice_altdef}, and Theorem~\ref{thm:prrvfun_inverse} } \\
     & \rvprfunsym{{\ibracket{b}}*\prrvfunsym{\left(\pseq{P}{\ufone}\right)}}\\
     = & \cmt{$\pseq{P}{\ufone}= \rvprfunsym{{\infsum v_0 @ \prrvfunsym{P}[v_0/\vv']}}$ } \\
       &\cmt{Expand as a function form, use $s$ and $s'$ for initial and final observation states} \\
       &\lambda (s, s').~\rvprfunsym{\infsum s_0. \ibracket{b(s)}* {\prrvfunsym{P}}(s, s_0)}
   \end{align*}
Assume $\iterdiff(n, b, P)$ satisfies the theorem, we show below that $\iterdiff(n+1, b, P)$ also satisfies the theorem.
   \begin{align*}
      & \iterdiff(n+1, b, P) = \\ 
     = &\cmt{Defintions~\ref{def:iterdiff} and $\lfundiff$} \\
     & \pcchoice{b}{\left(\pseq{P}{\iterdiff(n, b, P)}\right)}{\ufzero} \\
     = &\cmt{Theorem~\ref{thm:prog_cond_choice} Law~\ref{thm:cchoice_pchoice} and Theorem~\ref{thm:prob_prob_choice} Law~\ref{thm:pchoice_altdef}, and Theorem~\ref{thm:prrvfun_inverse} } \\
     & \rvprfunsym{{\ibracket{b}}*\prrvfunsym{\left(\pseq{P}{\iterdiff(n, b, P)}\right)}}\\
     = &\cmt{Definition~\ref{def:prog_seq}} \\
       & \rvprfunsym{{\ibracket{b}}*\prrvfunsym{\left(\rvprfunsym{\left({\infsum v_0 @ {{\prrvfunsym{P}}}[v_0/\vv'] * {{\prrvfunsym{\left(\rvprfunsym{\iterdiff(n, b, P)}\right)}}}[v_0/\vv]}\right)}\right)}} \\
     = &\cmt{Theorem~\ref{thm:prrvfun_inverse} where the proof of $\isprob$ is omitted} \\
       & \rvprfunsym{{\ibracket{b}}*{{\left({\infsum v_0 @ {{\prrvfunsym{P}}}[v_0/\vv'] * {{{\left({\iterdiff(n, b, P)}\right)}}}[v_0/\vv]}\right)}}} \\
     = &\cmt{Law~\ref{thm:summation_cmult_left} and proof of summable is omitted } \\
       & \rvprfunsym{{\left( {{\infsum v_0 @ {\ibracket{b}}*{{\prrvfunsym{P}}}[v_0/\vv'] * {{{\left({\iterdiff(n, b, P)}\right)}}}[v_0/\vv]}}\right)}} \\
     = &\cmt{Assumption, expand as a function form, use $s$ and $s'$ for initial and final observation states} \\
      &\lambda (s, s'). \rvprfunsym{ \left(
                \begin{array}{@{}l}
                    \infsum s_{n}.  
                    \ibracket{b(s)}*{\prrvfunsym{P}}(s,s_{n})* \\
                    \left(
                    \begin{array}{@{}l}
                        \infsum s_{n-1}. \ibracket{b(s_{n})}*{\prrvfunsym{P}}(s_{n}, s_{n-1})* \\\left(
                        \begin{array}{@{}l}
                            \vdots * \\
                            \left(
                            \infsum s_0. \ibracket{b(s_1)}* {\prrvfunsym{P}}(s_1, s_0) 
                            \right)
                        \end{array}\right)
                    \end{array}
                    \right)
                \end{array}
            \right)}
   \end{align*}
   This concludes the proof of Theorem~\ref{thm:iterdiff_eq}.
\end{proof}

We now show the proof of Theorem~\ref{thm:iterdiff}:
    $\forall n:\nat \bullet {\lfunbp}^n(\ufone) - {\lfunbp}^n(\ufzero) = \iterdiff(n, b, P)$
\begin{proof}
    For $n=0$, 
    \begin{align*}
        & {\lfunbp}^0(\ufone) - {\lfunbp}^0(\ufzero) \\ 
        = &\cmt{Theorems~\ref{thm:iterdiff_bot} and \ref{thm:iterdiff_top}} \\
        & \ufone - \ufzero \\
        = &\cmt{Theorem~\ref{thm:top_bot}} \\
        & \ufone \\
        = &\cmt{Definition~\ref{def:iterdiff}} \\
        & \iterdiff(0, b, P)
    \end{align*}
    For $n=1$, 
    \begin{align*}
        & {\lfunbp}^1(\ufone) - {\lfunbp}^1(\ufzero) \\ 
        = &\cmt{Theorems~\ref{thm:iterdiff_bot} and \ref{thm:iterdiff_top}} \\
        & \lambda (s, s'). ~\rvprfunsym{\infsum s_0. \ibracket{b(s)}* {\prrvfunsym{P}}(s, s_0) + \ibracket{\lnot b(s)}*\ibracket{s'=s}} - \lambda (s, s').~\rvprfunsym{\ibracket{\lnot b(s)}*\ibracket{s'=s}} \\
        = &\cmt{Definition~\ref{def:urf_pointwise}} \\
        & \lambda (s, s'). \left(\rvprfunsym{\infsum s_0. \ibracket{b(s)}* {\prrvfunsym{P}}(s, s_0) + \ibracket{\lnot b(s)}*\ibracket{s'=s}} - \rvprfunsym{\ibracket{\lnot b(s)}*\ibracket{s'=s}}\right) \\
        = &\cmt{Definition~\ref{def:bounded_plus_minus}} \\
        & \lambda (s, s'). \ru{\left(\umax\left(0, \ur{\left(\rvprfunsym{\infsum s_0. \ibracket{b(s)}* {\prrvfunsym{P}}(s, s_0) + \ibracket{\lnot b(s)}*\ibracket{s'=s}}\right)} - \ur{\left(\rvprfunsym{\ibracket{\lnot b(s)}*\ibracket{s'=s}}\right)}\right)\right)}\\
        = &\cmt{Theorem~\ref{thm:prrvfun_inverse} where the proof of $\isprob$ is omitted} \\
        & \lambda (s, s'). \ru{\left(\umax\left(0, {\left({\infsum s_0. \ibracket{b(s)}* {\prrvfunsym{P}}(s, s_0) + \ibracket{\lnot b(s)}*\ibracket{s'=s}}\right)} - {\left({\ibracket{\lnot b(s)}*\ibracket{s'=s}}\right)}\right)\right)}\\
        = &\cmt{$\forall s \bullet \infsum s_0. \ibracket{b(s)}* {\prrvfunsym{P}}(s, s_0) \geq 0$ because $P$ is a distribution, Theorem~\ref{thm:final_distribtion}, and Definition~\ref{def:isprob} } \\
        & \lambda (s, s').~\ru{\infsum s_0. \ibracket{b(s)}* {\prrvfunsym{P}}(s, s_0)}\\
        = &\cmt{Theorem~\ref{thm:iterdiff_eq}} \\
        & \iterdiff(1, b, P)
    \end{align*}
    For $n>1$, we assume ${\lfunbp}^n(\ufone) - {\lfunbp}^n(\ufzero)= \iterdiff(n, b, P)$, then 
    \begin{align*}
        & {\lfunbp}^{n+1}(\ufone) - {\lfunbp}^{n+1}(\ufzero) \\ 
        = &\cmt{Theorems~\ref{thm:iterdiff_bot} and \ref{thm:iterdiff_top}, and same as previous proof} \\
        & \lambda (s, s'). \rvprfunsym{
                \left(
                \begin{array}{@{}l}
                    \infsum s_{n}.  
                    \ibracket{b(s)}*{\prrvfunsym{P}}(s,s_{n})* \\
                    \left(
                    \begin{array}{@{}l}
                        \infsum s_{n-1}. \ibracket{b(s_{n})}*{\prrvfunsym{P}}(s_{n}, s_{n-1})* \\\left(
                        \begin{array}{@{}l}
                            \vdots * \\
                            \left(
                            \infsum s_0. \ibracket{b(s_1)}* {\prrvfunsym{P}}(s_1, s_0) 
                            \right)
                        \end{array}\right)
                    \end{array}
                    \right)
                \end{array}\right) 
        } \\ 
        = &\cmt{Theorem~\ref{thm:iterdiff_eq}} \\
        & \iterdiff(n+1, b, P)
    \end{align*}
   This concludes the proof of Theorem~\ref{thm:iterdiff}.
\end{proof}
}

{
\subsection{Proof of Theorem~\ref{thm:cflip_iterdiff_0}}
\begin{proof}
    \begin{align*}
        & \lambda n @ \prrvfunsym{\iterdiff\left(n, c=tl, cflip\right)}(s,s') \\
        = &\cmt{Theorem~\ref{thm:iterdiff_eq} where $b=(c=tl)$ and $P=cflip$} \\
        & \lambda n @ \prrvfunsym{ \left(
\rvprfunsym{\lambda (s, s').  \left(
                \begin{array}{@{}l}
                    \infsum s_{n-1}.  
                    \ibracket{b(s)}*{\prrvfunsym{P}}(s,s_{n-1})* \\
                    \left(
                    \begin{array}{@{}l}
                        \infsum s_{n-2}. \ibracket{b(s_{n-1})}*{\prrvfunsym{P}}(s_{n-1}, s_{n-2})* \\\left(
                        \begin{array}{@{}l}
                            \vdots * \\
                            \left(
                            \infsum s_0. \ibracket{b(s_1)}* {\prrvfunsym{P}}(s_1, s_0) 
                            \right)
                        \end{array}\right)
                    \end{array}
                    \right)
                \end{array}
            \right)}
    \right)}(s,s') \\
    = &\cmt{Definition~\ref{def:coin_flip} where $cstate$ only has one variable $c$ and so $s$ is replaced by $c$} \\
    &\cmt{${\prrvfunsym{cflip}}={{1/2}*{\ibracket{c' = hd}} + {1/2} * {\ibracket{c' = tl}}}$ according to Law~\ref{thm:cflip_altdef} and Theorem~\ref{thm:prrvfun_inverse_ibracket}} \\
        & \lambda n @ \prrvfunsym{ \left(
\rvprfunsym{\lambda (c, c').  \left(
                \begin{array}{@{}l}
                    \infsum s_{n-1}.  
                    \ibracket{c=tl}*({{1/2}*{\ibracket{s_{n-1} = hd}} + {1/2} * {\ibracket{s_{n-1} = tl}}})* \\
                    \left(
                    \begin{array}{@{}l}
                        \infsum s_{n-2}. 
                        \ibracket{s_{n-1}=tl}*({{1/2}*{\ibracket{s_{n-2} = hd}} + {1/2} * {\ibracket{s_{n-2} = tl}}})* \\
                        \left(
                        \begin{array}{@{}l}
                            \vdots * \\
                            \left(
                            \infsum s_0. 
                                \ibracket{s_1=tl}*({{1/2}*{\ibracket{s_0 = hd}} + {1/2} * {\ibracket{s_0 = tl}}})
                            \right)
                        \end{array}\right)
                    \end{array}
                    \right)
                \end{array}
            \right)}
    \right)}(c,c') \\
    = &\cmt{ Every summation variable such as $s_{n-1}$ only when it is equal to $tl$, then $\ibracket{s_{n-1}=tl}=1$. } \\
     &\cmt { Otherwise, it is 0 and the whole summation is also 0 because $a*\cdots*0*\cdots*b=0$.} \\
     &\cmt { All $\infsum$ are removed because of only one state $tl$ satisfying $s_i=tl$ } \\
     &\cmt { All $\ibracket{s_i=hd} = 0$ } \\
        & \lambda n @ \prrvfunsym{ \left(
\rvprfunsym{\lambda (c, c').  \left(
                \begin{array}{@{}l}
                    \ibracket{c=tl}*({{1/2}*1})* \\
                    \left(
                    \begin{array}{@{}l}
                        1*({{1/2} * 1})* \\
                        \left(
                        \begin{array}{@{}l}
                            \vdots * \\
                            \left(
                                1*({{1/2} * 1})
                            \right)
                        \end{array}\right)
                    \end{array}
                    \right)
                \end{array}
            \right)}
    \right)}(c,c') \\
    = &\cmt{ Rewrite } \\
    & \lambda n @ \prrvfunsym{ \left( \rvprfunsym{\lambda (c, c'). \ibracket{c=tl}*(1/2)^{n}} \right)}(c,c') \\
    = &\cmt{Theorem~\ref{thm:prrvfun_inverse} where $\isprob\left({\lambda (c, c'). \ibracket{c=tl}*(1/2)^{n}}\right)$} \\
    & \lambda n @ \ibracket{c=tl}*(1/2)^{n} 
\end{align*}
So if $c=hd$, then 
\begin{align*}
\left(\lambda n @ \prrvfunsym{\iterdiff\left(n, c=tl, cflip\right)}(s,s') = \lambda n @ 0\right)
\end{align*}
Otherwise, 
\begin{align*}
\left(\lambda n @ \prrvfunsym{\iterdiff\left(n, c=tl, cflip\right)}(s,s') = \lambda n @ (1/2)^n\right)
\end{align*}
We conclude that $\forall (c,c'):cstate\cross cstate$ such that 
\begin{align*}
\left(\lambda n @ \prrvfunsym{\iterdiff\left(n, c=tl, cflip\right)}(c,c')\right) \tendsto 0
\end{align*}
according to Definition~\ref{def:tendsto}.
\end{proof}
}

{
\subsection{The necessity of $\finstatesasc\left(\lambda n @ \iter\left(n, b, P, \ufzero\right) \right)$}
\label{appendix:necessity_finstatesasc}
We show below to establish the continuity Theorem~\ref{thm:continuity_lfun_bot}, $\finstatesasc\left(\lambda n @ \iter\left(n, b, P, \ufzero\right) \right)$ is required. Here we omit the details about type conversion to make the discussion clearer.

We aim to prove a continuity theorem below and then our proof later shows that without the premise, we cannot prove such theorem.
\begin{thm} \label{thm:continuity}
	If the final state of $P$ is a distribution, then $\lfunbp(X)$ is continuous. That is, 
    for an non-empty countable increasing chain $S_0 \leq S_1 \leq S_2 \leq ... $ of type $[s]prfun$ and bound above (so has a supremum), then
	\begin{align*}
		\lfunbp\left(\thnsup n @S_n\right) = 
        \thnsup n @ \lfunbp(S_n) \tag*{($\lfunbp$ continuous)} \label{thm:Fbp_continuous}
	\end{align*}
\end{thm}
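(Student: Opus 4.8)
The plan is to unfold $\lfunbp$ using the alternative form derived in the excerpt,
\[
\lfunbp(X) = \rvprfunsym{\ibracket{b} * \prrvfunsym{\pseq{P}{X}} + \ibracket{\lnot b} * \ibracket{\II}},
\]
and to note that, since $b$ and $\lnot b$ partition the initial states, the real-valued expression inside $\rvprfunsym{\cdot}$ already takes values in $[0,1]$, so the outer conversion is inert and we may argue pointwise. The summand $\ibracket{\lnot b} * \ibracket{\II}$ does not depend on $X$, so it contributes identically to $\lfunbp\!\left(\thnsup n @ S_n\right)$ and to every $\lfunbp(S_n)$, hence also to $\thnsup n @ \lfunbp(S_n)$. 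The statement therefore reduces to showing that $X \mapsto \ibracket{b} * \prrvfunsym{\pseq{P}{X}}$, and so $X \mapsto \pseq{P}{X}$, preserves suprema of non-empty increasing chains, i.e.\ that sequential composition is continuous in its right argument.

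Next I would expand $\pseq{P}{X}$ by Definition~\ref{def:prog_seq}, so that pointwise at $(s,s')$
\[
\prrvfunsym{\pseq{P}{\thnsup n @ S_n}}(s,s') = \infsum v_0 @ \prrvfunsym{P}(s,v_0) * \left(\thnsup n @ \prrvfunsym{S_n}\right)(v_0,s'),
\]
using that $\prrvfunsym{\cdot}$ commutes with the pointwise supremum of the real-valued chain. The inclusion $\thnsup n @ \lfunbp(S_n) \leq \lfunbp\!\left(\thnsup n @ S_n\right)$ is immediate from the monotonicity of $\lfunbp$ (which holds since $P$ is a distribution), because $S_n \leq \thnsup m @ S_m$ for every $n$. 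For the reverse inequality it suffices, for fixed $s,s'$, to prove
\[
\infsum v_0 @ \prrvfunsym{P}(s,v_0) * \left(\thnsup n @ \prrvfunsym{S_n}\right)(v_0,s') \;\leq\; \thnsup n @ \left( \infsum v_0 @ \prrvfunsym{P}(s,v_0) * \prrvfunsym{S_n}(v_0,s') \right).
\]
Here $\prrvfunsym{P}(s,\cdot)$ is nonnegative and summable --- indeed it sums to $1$, since $P$ is a distribution, by Theorem~\ref{thm:final_distribtion} --- while $n \mapsto \prrvfunsym{S_n}(v_0,s')$ is nondecreasing and bounded by $1$; so each inner sum lies in $[0,1]$, the chain of inner sums is nondecreasing, and (after replacing $\thnsup$ by $\lim$ on this monotone bounded real sequence via Theorem~\ref{thm:limit_as_sup_inf}) the claimed identity is an instance of the monotone-convergence principle for series weighted by the counting measure with density $\prrvfunsym{P}(s,\cdot)$.

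The main obstacle is precisely this interchange of the infinite summation over intermediate states with the supremum over the chain index. Although it holds in principle, the argument is delicate exactly when the set of intermediate states $v_0$ at which $\thnsup n @ \prrvfunsym{S_n}(v_0,\cdot)$ differs from its starting value is infinite: one can then no longer split the sum into a trivially handled finite ``varying'' part and an unchanged tail, and the uniform-in-$v_0$ control needed to push the limit inside must be recovered from integrability rather than from a finite case split. This is why the usable theorems in the main body (Theorems~\ref{thm:lfun_limit_as_supremum}--\ref{thm:rec_unique}) carry the hypothesis $\finstatesasc\!\left(\lambda n @ \iter\left(n, b, P, \ufzero\right)\right)$, later relaxed to $\finitefinal(P)$: under it only finitely many intermediate states contribute to the difference, so the sum behaves like a finite one and the interchange is routine. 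Accordingly, in the development pursued here, dropping such a finiteness premise blocks the argument above, which is what motivates the restriction kept throughout Section~\ref{sec:rec}; the remainder of this appendix makes that stalling point explicit.
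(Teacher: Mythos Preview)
Your proposal is correct and follows essentially the same route as the paper. Both arguments isolate the continuity of $X \mapsto \pseq{P}{X}$ as the only nontrivial ingredient, expand the sequential composition as the infinite sum over intermediate states, and identify the interchange of $\infsum$ with $\thnsup$ as the step that stalls without a finiteness hypothesis; both conclude by pointing to $\finstatesasc$ (or $\finitefinal$) as the remedy adopted in Section~\ref{sec:rec}.

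Two small differences are worth noting. First, the paper decomposes $\lfunbp = F_1 \circ F_2$ with $F_1(X) = \pcchoice{b}{X}{\pskip}$ and $F_2(X) = \pseq{P}{X}$, and appeals to the fact that a composite of continuous maps is continuous, whereas you work directly with the arithmetic expansion $\rvprfunsym{\ibracket{b}\cdot\prrvfunsym{\pseq{P}{X}} + \ibracket{\lnot b}\cdot\ibracket{\II}}$; these are equivalent presentations. Second, you observe that the required interchange is an instance of monotone convergence for a series weighted by the probability density $\prrvfunsym{P}(s,\cdot)$, so the claim \emph{does} hold in principle even without finiteness; the paper does not make this remark and instead argues purely via the $\varepsilon$--$N$ definition of convergence, where one cannot exhibit a uniform witness $N$ over infinitely many intermediate states. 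Your observation is a useful addition, since it explains that the restriction is a limitation of the mechanised proof route rather than a mathematical necessity.
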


\begin{proof}
    We define  
	\begin{align*}
        & F_1(X) \defs \pcchoice{b}{X}{\pskip} \\ 
        & F_2(X) \defs \pseq{P}{X}
	\end{align*}
    So $\lfunbp(X) \defs~F_1(F_2(X))$. 
    According to \cite[Lemma 5.3]{Nielson2007}, if both $F_1$ and $F_2$ are continuous, then $\lfunbp(X)$ is also continuous. It is trivial to show that $F_1$ is continuous, so our goal is to prove $F_2$ is continuous. That is, 
	\begin{align*}
		F_2\left(\thnsup n @S_n\right) = \thnsup n @ F_2(S_n) \tag*{($F_2$ continuous)} \label{thm:F2_continuous}
	\end{align*}

    Because $S_n$ is an increasing chain and bound above, according to the monotone sequence theorem, the limit of $S_n$ is just its supremum. That is,
	\begin{align*}
        \forall (s,s') @ \left(\lambda n @ S_n (s,s')\right)  \tendsto \left(\thnsup n @S_n\right)(s,s')
	\end{align*}
where $(s,s')$ denotes the initial and final observations. 
    According to Definition~\ref{def:tendsto}, 
    \begin{align*}
        &\forall (s,s') @ \forall \epsilon:\real > 0 \bullet \exists M:\nat \bullet \forall l \geq M \bullet \left({\left(\thnsup n @S_n\right)(s,s')} - {S_l(s,s')}\right)< \epsilon \tag*{($S_n$ supremum as limit)} \label{thm:Sn_limit_sup}
    \end{align*}

    According to Theorem~\ref{thm:prog_seq_comp} Law~\ref{thm:pseq_mono}, $F_2$ is monotonic. It is trivial to show that $F_2(S_n)$ is also an increasing chain and bound above. According to the monotone sequence theorem,
	\begin{align*}
        \forall (s,s') @ \left(\lambda n @ F_2(S_n) (s,s')\right) \tendsto \left(\thnsup n @F_2(S_n)\right)(s,s')
		\tag*{($F_2(S_n)$ supremum as limit)} \label{thm:F2_Sn_limit}
	\end{align*}
    Therefore, according to the unique sequence limit theorem (if exists), to prove \ref{thm:F2_continuous}, we need to prove that 
	\begin{align*}
        \forall (s,s') @ \left(\lambda n @ F_2(S_n) (s,s')\right) \tendsto F_2\left(\thnsup n @S_n\right) (s,s')
		\tag*{($F_2\left(\thnsup n @S_n\right)$ as limit)} \label{thm:F2_Sup_Sn_limit}
	\end{align*}
    According to Definition~\ref{def:tendsto}, this is equal to prove
    \begin{align*}
        &\forall (s,s') @ \forall \varepsilon:\real > 0 \bullet \exists N:\nat \bullet \forall l \geq N \bullet |F_2(S_l)(s,s') - F_2\left(\thnsup n @S_n\right)(s,s')| < \varepsilon \\
= & \cmt{$F_2$ is monotonic and $S_l \leq \left(\thnsup n @S_n\right)$, so $F_2(S_l)(s,s') \leq F_2\left(\thnsup n @S_n\right)(s,s')$} \\
        &\forall (s,s') @ \forall \varepsilon:\real > 0 \bullet \exists N:\nat \bullet \forall l \geq N \bullet F_2\left(\thnsup n @S_n\right)(s,s') - F_2(S_l)(s,s') < \varepsilon
    \end{align*}

    We can rewrite the non-quantifier part above to 
    \begin{align*}
        & |F_2(S_l)(s,s') - F_2\left(\thnsup n @S_n\right)(s,s')| < \varepsilon \\
        = & \cmt{$F_2$ is monotonic and $S_l \leq \left(\thnsup n @S_n\right)$, so $F_2(S_l)(s,s') \leq F_2\left(\thnsup n @S_n\right)(s,s')$} \\
        &F_2\left(\thnsup n @S_n\right)(s,s') - F_2(S_l)(s,s') < \varepsilon \\
        = & \cmt{Definitions of $F_2$ and $\pseq{}{}$~\ref{def:prog_seq}} \\
        &{\infsum s_0 @ {P}(s, s_0) * {\left(\thnsup n @S_n\right)(s_0,s')}} - {\infsum s_0 @ {P}(s, s_0) * {S_l(s_0,s')}} < \varepsilon \\
        = & \cmt{Theorem~\ref{thm:summation} Law~\ref{thm:summation_minus} and proof of summable is omitted } \\
          &{\infsum s_0 @ {P}(s, s_0) * \left({\left(\thnsup n @S_n\right)(s_0,s')} - {S_l(s_0,s')}\right)} < \varepsilon
    \end{align*}

    So our goal is to prove 
    \begin{align*}
        &\forall (s,s') @ \forall \varepsilon:\real > 0 \bullet \exists N:\nat \bullet \forall l \geq N \bullet \left({\infsum s_0 @ {P}(s, s_0) * \left({\left(\thnsup n @S_n\right)(s_0,s')} - {S_l(s_0,s')}\right)}\right) < \varepsilon
    \end{align*}
    The key step in proving the goal above is to supply a witness for $N$. Based on \ref{thm:Sn_limit_sup}, we can choose a $N$ to make $\left({\left(\thnsup n @S_n\right)(s_0,s')} - {S_l(s_0,s')}\right)$ any small (say $\epsilon(s_0)$), but this cannot guarantees 
    \begin{align*}
    \left({\infsum s_0 @ {P}(s, s_0) * \epsilon(s_0)}\right) < \varepsilon
    \end{align*}
    because this is an infinite sum. In other words, the question is to find a $N$ such that this summation converges to a value less than any number $\varepsilon$. The approach we use in this paper is to assume $\finstatesasc\left(\lambda n @ \iter\left(n, b, P, \ufzero\right) \right)$, and then we can construct such $N$.

    We omit further details of this proof.

\end{proof}
}

\bibliographystyle{elsarticle-num} 
\bibliography{main}

\ifdefined \CHANGES \indexprologue{%
  This index lists for each comment the pages where the text has been modified to address the comment. Since the same page may contain multiple changes, the page number contains the index of the change in superscript to identify different changes. Finally, the page number contains a hyperlink that takes the reader to corresponding change.%
}%
\printindex[changes] \fi

\end{document}